\theoremstyle{plain}
	\newtheorem{thm}{Theorem}[chapter]
	\newtheorem{lem}[thm]{Lemma}
	\newtheorem{prop}[thm]{Proposition}
	\newtheorem{cor}[thm]{Corollary}
\theoremstyle{definition}
	\newtheorem{defn}[thm]{Definition}
	\newtheorem{exmp}[thm]{Example}
\theoremstyle{remark}
	\newtheorem*{rem}{Remark}
\newcommand{\ii}{{\rm i}}
\newcommand{\ee}{{\rm e}}
\newcommand{\vol}{{\rm vol}}
\newcommand{\tr}{{\rm tr}}
\newcommand{\Riem}{{\rm Riem}}
\newcommand{\Ric}{{\rm Ric}}
\newcommand{\R}{{\rm R}}
\newcommand{\Q}{{\rm Q}}
\newcommand{\Sol}{{\rm Sol}}
\newcommand{\csch}{{\rm csch}}
\newcommand{\sech}{{\rm sech}}
\newcommand{\arctanh}{{\rm arctanh}}
\newcommand{\supp}{{\rm supp}}
\newcommand{\x}{\mathsf{x}}
\newcommand{\W}{\mathcal{W}}
\begin{document}

\title{\textbf{Asymptotics in the time-dependent Hawking and Unruh effects}}

\author{{Benito A. Ju\'arez Aubry, MSc.} \vspace*{1cm} \\ 
	{\includegraphics[height=8cm]{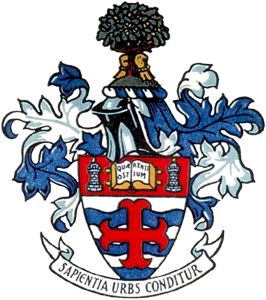}} \vspace*{1cm} \\
}
\date{Thesis submitted to the University of Nottingham \\
for the Degree of Doctor of Philosophy \vspace*{0.5cm} \\
March 2016}

\maketitle

\clearpage
\thispagestyle{empty}

\pagenumbering{roman}

\begin{abstract}
In this thesis, we study the Hawking and Unruh effects in time-dependent situations, as registered by localised spacetimes observers in several asymptotic situations.

In $1+1$ dimensions, we develop the Unruh-DeWitt detector model that is coupled to the proper time derivative of a real scalar field. This detector is insensitive to the well-known massless scalar field infrared ambiguity and has the correct massive-to-massless field limit. We then consider three scenarios of interest for the Hawking effect. The first one is an inertial detector in an exponentially receding mirror spacetime, which traces the onset of an energy flux from the mirror, with the expected Planckian late time asymptotics. The second one is the transition rate of a detector that falls in from infinity in Schwarzschild spacetime, gradually losing thermality. We find that the detector's transition rate diverges near the singularity proportionally to $r^{-3/2}$. The third one is the characterisation of the strength of divergence of the transition rate and of the (smeared) renormalised local energy density along a trajectory that approaches the future Cauchy horizon of a $(1+1)$-dimensional spacetime that generalises the non-extremal Reissner-Nordstr\"om spacetime and shares its causal structure. In both cases, the strength of the divergence as a function of the proper time is as on approaching the Schwarzschild singularity. We then comment on the limitations of our $(1+1)$-dimensional analysis as a model for the full $3+1$ treatment.

In $3+1$ dimensions, we revisit the Unruh effect and study the onset of the Unruh temperature. We treat an Unruh-DeWitt detector coupled to a massless scalar through a smooth switching function of compact support and prove that, while the Kubo-Martin-Schwinger (KMS) condition and the detailed balance of the response are equivalent in the limit of long interaction time, this equivalence is not uniform in the detector's energy gap. That is, we prove that the infinite-time and large-energy limits do not commute. We then ask and answer the question of how long one needs to wait to detect the Unruh temperature up to a prescribed large energy scale. We show that, under technical conditions on the switching function, in this large energy gap regime an adiabatically switched detector following a Rindler orbit will thermalise in a time scale that is polynomially large in the energy. We then consider an interaction between the detector and the field that switches on, interacts constantly for a long time, and then switches off. We show that a polynomially fast thermalisation cannot occur if the constant interaction time is polynomially large in the energy, with the switching tails fixed. Thus, we conclude that the details of the switching are relevant when estimating thermalisation time-scales.
\end{abstract}

\vspace*{\fill}
\begin{flushright}
\textit{A} Montse.
\end{flushright}
\vspace*{\fill}

\chapter*{Acknowledgements}

I thank my supervisor, for whom I have deep admiration, Dr. Jorma Louko, for taking me as a PhD student, for his guidance and advice throughout this work. It is hard for me to think of a supervisor who is more dedicated to his students than Dr. Louko. Many thanks for always pointing at the errors in my often flawed arguments. If I have at all become a stronger scientist, it is thanks to this. I also thank my supervisor for encouraging me to continue in academia in an evermore competitive environment.

Thanks are due to my collaborators, Prof.~J.~Fernando Barbero~G., Prof.~Christopher~J.~Fewster, Juan Margalef-Bentabol and Prof.~Eduardo~J.~S.~Villase\~nor, from whom I have learned many beautiful things about physics and mathematics while doing research together.

I thank Prof. Adrian Ottewill and Dr. Silke Weinfurtner for reading this work and suggesting many points that improved this thesis.

I would like to thank the members of the Mathematical Physics Group in the School of Mathematical Sciences with whom I have had the opportunity to discuss physics and mathematics: I thank Prof. John W. Barrett and Prof. Ivette Fuentes for assessing my research progress. I also enjoyed many insightful discussions with Prof. Kirill Krasnov and Dr. Thomas Sotiriou, and I acknowledge them for this. It is a pleasure for me to thank all of my colleagues researching gravity, postdocs and students, for our stimulating discussions and our reading groups together. I thank especially Manuel B\"arenz,  Tupac Bravo, Marco Cofano, Dr.~Hugo~R.~C. Ferreira, James Gaunt, Yannick Herfray, Jan Kohlrus, Dr.~Carlos~Scarinci and Vladimir Toussaint for sharing their knowledge with me. I hope that, but I am not sure if, I was able to share some valuable knowledge with them.

I also thank the staff in the School of Mathematical Sciences for making my stay as a student a smooth one.

On a personal note, I thank every single one of my friends who made my stay in Nottingham the happiest time. Thanks to the climbing crew for our great times bouldering, our trips and dinners together. I thank my friends in the Postgraduate New Theatre for our amazing time on and off stage. \textit{Muchas gracias} to my Mexican friends in Nottingham for bringing home to the UK. Thanks to my B50 officemates for our (big) breaks from research and thanks to all of my mathematician friends. Of course, I thank my friends who do not belong to any of the aforementioned categories. A big thank you for our time discussing life, philosophy, science and mathematics, as well as the forbidden topics: politics, religion and signature conventions.

\textit{J'aimerais remercier tr\`es fort ma copine,} Caroline\textit{, pour notre temps ensemble, qui continue \`a \^etre extraordinaire. Merci!}

\textit{Finalmente, agradezco muy cari\~nosamente a mi familia y a mis amigos de toda la vida. Much\'isimas gracias a mis padres,} Benito \textit{y} Flor\textit{, por todos los momentos felices, pero en especial por su apoyo en los momentos dif\'iciles. A mi hermana,} Montse\textit{, le debo el haber estudiado f\'isica. ?`Qu\'e m\'as puedo decir? No tengo palabras para agradecerle.}

\vfill
This work was supported by Consejo Nacional de Ciencia y Tecnolog\'ia, Mexico (CONACYT) REF 216072/311506. Early stages of this work were supported in part by Sistema Estatal de Becas del Estado de Veracruz.

\tableofcontents
\listoffigures
\listoftables

\clearpage
\thispagestyle{empty}

\pagenumbering{arabic}
\chapter{Introduction}
\label{ch:intro}

General relativity and quantum field theory are two of the greatest achievements of physics. On the one hand, general relativity describes the classical interactions between the gravitational field and classical matter at macroscopic scales, low energies and non-extreme curvature. While the first victory of general relativity, the correct calculation of the precession of the perihelion of Mercury, may seem modest today, the recent detection of gravitational waves by the collision of two black holes\cite{Abbott:2016blz} reminds us that Einstein's theory is not only elegant, but truly formidable. On the other hand, quantum field theory describes the fundamental interactions between matter at high energies, but below the Planck scale at $10^{19}$ GeV. The success of quantum field theory is manifest in the standard model of particle physics, which, in light of the recent discovery of the Higgs boson\cite{Englert:1964et, Higgs:1964ia, Guralnik:1964eu, Aad:2012tfa, Chatrchyan:2012xdj}, seems to be correct up to the Higgs scale at $10^2$ GeV. (It can be argued that phenomena such as dark matter and dark energy are still not fully understood, but the $\Lambda$CDM model remains phenomenologically viable, and neutrino physics phenomena are correctly explained by the see-saw mechanism extension of the standard model, within our experimental bounds.)

Yet, general relativity and quantum field theory remain disunited: Despite many efforts, the gravitational field has not been quantised. There are many reasons why this is so: First and most important, there are no experiments available at the energy scale at which quantum gravitational effects are expected to occur. While the experiments at the LHC have reached energies up to $10^4$ GeV, there are $15$ orders of magnitude before the Planck energy is reached, a feat that may not be attainable in the foreseeable future, if at all. Second, from a mathematical viewpoint, even at a perturbative level the theory is non-renormalisable, due to the physical dimension of the gravitational coupling constant, $G_\text{N}$, and non-perturbative techniques face difficulties due to the non-linearity of the Einstein field equations. Third, from a physical standpoint, the quantisation of gravity is the quantisation of the spacetime metric, which determines the notion of distances, time lapses and causality, thus, serious conceptual considerations have to be taken into account. Approaches in the direction of quantum gravity include string theory\cite{Green:1987sp, Becker:2007zj}, canonical quantum gravity\cite{DeWitt:1967yk, Ashtekar:1991hf, Thiemann:2007zz}, path integral approaches\cite{DeWitt:1967ub, Barrett:1999qw, Engle:2007wy} and many others (\textit{e.g.} \cite{Reuter:1996cp, Ambjorn:2012jv, Delfino:2012aj}), but no line of attack has succeeded thus far.

The point of view of \textit{quantum field theory in curved spacetimes}\cite{Birrell:1982ix, Bogoliubov:1959, Fulling:1989nb, Roman:1969, Wald:1995yp} is to take one step back and describe quantum matter, \textit{e.g.}, the quarks, leptons, gauge bosons, the Higgs and composite fields of the standard model, propagating in spacetime with a classical metric that encodes the gravitational field. We take this point of view in this thesis.

The surprising feature about quantum field theory in curved spacetime is that one can learn many things about the physics and mathematics of quantum gravity and even of flat spacetime quantum field theory. Arguably the most important physical insights in quantum gravity come from this semiclassical treatment. \textit{The Hawking effect}, \textit{i.e.}, the process whereby black holes radiate away their mass in the form of thermal radiation at infinity\cite{Hawking:1974sw, Hawking:1976ra}, indicates that there is a deep connection between gravity, thermodynamics and quantum physics. This connection expresses itself transparently in the laws of black hole mechanics\cite{Wald:1995yp, Bardeen:1973gs}: Because the horizon area of a black hole is proportional to its entropy, one can conjecture the existence of gravitational microscopic degrees of freedom at the horizon\cite{Ashtekar:1997yu, Agullo:2008yv, BarberoG.:2011zr}. A related phenomenon, which was discovered as a by-product of the study of the quantum phenomena of black holes, is the \textit{Unruh effect}, whereby a linearly uniformly accelerated observer in Minkowski spacetime will perceive a temperature proportional to their acceleration, due to the interaction with the quantum matter in the spacetime\cite{Unruh:1976db, DeWitt:1979}.

The most important mathematical insights for quantum gravity and flat spacetime quantum field theory come from the programme to formulate precisely quantum field theory in curved spacetimes. The attempt to generalise the quantum field theoretic flat spacetime formulation (see \textit{e.g.} \cite{Haag:1992hx}) led to the conclusion that concepts such as Poincar\'e invariance and particles are meaningless in general situations. There is simply no way to choose a preferred Hilbert space for the theory. Already in flat spacetime, a theory restricted to a Rindler wedge is unitarily inequivalent to the Poincar\'e invariant flat spacetime theory, based on the Minkowski vacuum state. Instead, quantum field theory, in flat spacetimes or otherwise, must be based on physical principles such as locality, covariance and causality \cite{Brunetti:2001dx, Hollands:2001nf, Fewster:2011pe}.\footnote{Field theory is under good control for globally hyperbolic spacetimes, but there is a large arena to explore for boundary value problems, especially in the context of non-trivial boundaries, or the imposition of boundary values as theory constraints. See, \textit{e.g.}, \cite{G.:2013zca, G.:2015uda} for classical aspects and \cite{G.:2015yxa} for quantisation in this context.} Thus, there is no reason to expect a theory of quantum gravity to be formulated directly in terms of a Hilbert space. It is very likely that such structure will appear only \textit{a posteriori}.

In addition to the Hawking and Unruh effects, other important discoveries and advances in the context of quantum field theory in curved spacetimes, or at least inspired by the theory, include the cosmological creation of particles\cite{Parker:1969au, Parker:1971pt, Zeldovich:1971mw}, the creation of particles by moving boundaries\cite{Davies:1976hi, Davies:1977yv}, analogue gravity \cite{Unruh:1980cg, Barcelo:2005ln, Weinfurtner:2010nu}, lightcone fluctuations in spacetime\cite{Ford:1994cr}, the detailed understanding of the Casimir effect\cite{Kay:1978zr, Dappiaggi:2014gea} and quantum energy inequalities\cite{Ford:1978qya, Fewster:1999gj}.

In this thesis, we revisit the Hawking and Unruh effects in selected contexts, which we describe below. Before that, we would like to point out that in a large portion of our work, we shall take the point of view that the Hawking and Unruh effects are detected by observers equipped with \textit{particle detectors}. To support our point of view, the statement that the Hawking effect is a process whereby a black hole \textit{radiates away particles} is ambiguous. The notion of a particle is a global one, because it makes reference to excitations of a fixed vacuum state. But, as we have mentioned, states are not unique (up to unitarity) in general curved spacetimes.\footnote{In Minkowski space one can use the Poincar\'e symmetry to select a prefer vacuum and fields are representations of the Poincar\'e group.} One can make sense of the concept of particles locally only if one interprets a particle as the absorption or emission transition in a localised test apparatus with internal states coupled to quantum fields. Such a test apparatus is called a particle detector. The dictum \textit{a particle is what a particle detector detects}\cite{Unruh:dictum} summarises this point of view.

Particle detectors are a powerful tool for analysing the experiences of observers along their spacetime worldlines. Perhaps the most prominent example is the Unruh-DeWitt detector \cite{DeWitt:1979}, a model upon which our work is based. While a large part of the literature has focused on stationary situations \cite{Takagi:1986kn, Crispino:2007eb}, particle detectors can be defined on non-stationary situations and analysed within first order perturbation theory \cite{Davies:2002bg, Schlicht:2003iy, Langlois:2005nf, Langlois:2005if, Louko:2006zv, Satz:2006kb, Louko:2007mu, Hodgkinson:2011pc, Hodgkinson:2012mr, Barbado:2012fy, Hodgkinson:2014iua, Hodgkinson:2013tsa, Ng:2014kha} and by non-perturbative techniques \cite{Raval:1995mb, Lin:2006jw, Ostapchuk:2011ud, Brown:2012pw, Bruschi:2012rx}. A review with further references can be found in \cite{Hu:2012jr}.

The aim of this work is to realise the Hawking and Unruh effects as phenomena experienced by local observers, and characterise how these effects emerge or disappear in certain regimes. We now discuss the objectives, contents and results of this work:

In Chapter \ref{ch:chap2}, we give an introduction to the quantisation of fields in curved spacetimes. The purpose of this chapter is, on the one hand, to make the thesis reasonably self-contained and, on the other hand, to introduce all the relevant notation and technical results that we shall exploit throughout the rest of the thesis. By no means is this an exhaustive review. There are many excellent introductions to the subject that deepen and expand the material that we present, such as the classic texts \cite{Birrell:1982ix, Bogoliubov:1959, Fulling:1989nb, Roman:1969, Wald:1995yp}. A reader well acquainted with quantum field theory in curved spacetimes can safely skip the reading of this chapter, with the possible exception of Section \ref{2sec:Detectors}, which motivates the part of our work in lower dimensions in this thesis.

Chapters \ref{ch:DeCo} and \ref{ch:BH} deal with quantum effects in $1+1$ dimensions. We shall work with a minimally coupled massless scalar field theory, which has the advantage of being conformal in $1+1$-dimensional spacetimes, giving full analytic control in many interesting situations. As simple as $(1+1)$-dimensional scalars are, the Wightman two-point function of the theory, which is crucial for defining (Hadamard) vacuum or thermal states, contains an infrared ambiguity.\footnote{This ambiguity is absent for Dirac fields, see \textit{e.g.} \cite{Louko:2016lfs} in the context of detectors.} While this is not a problem for uniquely defining the stress-energy tensor in a given state \cite{Fulling:1987, Kay:2000fi}, it is so for defining the vacuum polarisation. Hence, the transition probability of the Unruh-DeWitt detector, which is proportional to the (smeared) power spectrum of the vacuum polarisation (up to perturbative leading order), is ambiguous. To solve this problem, in Chapter \ref{ch:DeCo} we introduce a derivative-coupling detector \cite{Grove:1986fy, Davies:2002bg, Raval:1995mb, Raine:1991kc, Wang:2013lex}, in dimensions greater than or equal to two, which couples linearly to the derivative of the field, instead of the field itself as in the Unruh-DeWitt model, and that has an unambiguously defined transition probability. One way to interpret the coupling of such a detector can be understood by making an analogy with quantum electrodynamics. In this context, the derivative-coupling detector would not be sensitive to the gauge potential, but only to the electric field. Another way to interpret this coupling comes from the work of Grove \cite{Grove:1986fy}, who realised that such derivative-coupling detector would be sensitive to the energy flux of the quantum field to which it couples.

In Chapter \ref{ch:DeCo}, based on ref. \cite{Juarez-Aubry:2014jba}, we provide an ultraviolet-divergence-free and infrared-unambiguous formula for the response function\footnote{The term response function and transition probability are usually used interchangeably because the perturbative leading order in the transition probability of the detector is proportional to the response function, with a proportionality factor independent of the field state, spacetime and detector trajectory.} of the derivative-coupling detector along arbitrary worldlines in $1+1$ dimensions. We study the sharp-switching limit of the detector response and obtain a general formula for the sharp instantaneous transition rate along non-stationary and stationary worldlines. We then verify that the rate of the detector satisfies many desirable properties in stationary situations. Namely, we verify that the massive-to-massless limit of the field is continuous in the Minkowski vacuum and in Minkowski half-space, that the detector thermalises in a heat bath at the bath temperature, in accordance with the detailed balance version of the KMS condition and, finally, that the detector responds at the Unruh temperature, due to the Unruh effect, along linearly uniformly accelerated worldlines.

In Chapter \ref{ch:BH}, with the confidence that the derivative coupling detector has the correct massless limit and is sensitive to thermal phenomena, we study quantum effects in several black hole (and black hole motivated) spacetimes. The work that we present here is based on \cite{Juarez-Aubry:2014jba} and partly on the conference paper \cite{Juarez-Aubry:2015dla}.\footnote{At present time, a full paper that extends \cite{Juarez-Aubry:2015dla} and includes results that have first appeared in this thesis is in preparation.} 

First, in Section \ref{sec:4Classical}, we introduce the $1+1$-dimensional spacetimes that we shall be working on. Namely, a \textit{receding mirror spacetime} \textit{\`a la} Davies-Fulling\cite{Davies:1976hi, Davies:1977yv} in $1+1$ dimensions that mimics the gravitational effects of a spherically symmetric collapsing star at late times, and which asymptotes the Minkowski half-space with Dirichlet boundary conditions at early times; the $1+1$ Schwarzschild black hole and a conformal class of spacetimes that generalise the $1+1$ non-extremal Reissner-Nordstr\"om black hole, in the sense that these spacetimes share the causal structure of Reissner-Nordstr\"om, which includes the presence of Cauchy horizons. 

Second, in Section \ref{sec:4Quantisation}, we quantise the free (non-interacting) massless, minimally coupled scalar field in the aforementioned spacetimes. In the receding mirror spacetime, we obtain a state that vanishes at the mirror and that contains, in addition to the ultraviolet divergences, appropriately regularised divergences whenever two spacetime points are connected by a null ray reflected on the mirror. In the Schwarzschild and in our generalised Reissner-Nordstr\"om spacetimes, we define Unruh and Hartle-Hawking-Israel (HHI) vacua in the appropriate globally hyperbolic submanifolds of the black hole spacetimes. 

Third, in Section \ref{sec:4QuantumEffects}, we study quantum effects on these spacetimes. In the receding mirror spacetime, we compute the rate of detectors, which are inertial with respect to the mirror in the asymptotic past, at the early and late times, and find that the right-moving modes of the scalar field are detected by the derivative-coupling detector at the would-be Hawking temperature that emerges at late stages of the star collapse. The transient time behaviour is analysed numerically. In the Schwarzschild spacetime, we study the transition rate of static observers, as well as the transition rate along infalling geodesics. The transition rate is computed in the near-infinity and near-singularity regimes analytically for the Unruh and HHI vacua, and the interpolating behaviour is computed numerically, showing a loss of thermality along the infalling trajectory and a divergence of the transition rate proportional to $r^{-3/2}$ as the radial coordinate, $r$, approaches zero. In the HHI vacuum, we also consider detectors that are switched on in the white hole region and travel towards the black hole singularity, and their behaviour is analysed numerically. In our class of generalised Reissner-Nordstr\"om spacetimes, we compute the rate of divergence of the transition rate as a left-travelling geodesic observer equipped with a detector approaches the future Cauchy horizon in the Unruh and HHI vacua, and find that the strength of the divergence is as for the detector approaching the Schwarzschild singularity as a function of the proper time. We then compute the (sharply and smoothly smeared renormalised local energy density, obtained from the renormalised stress-energy tensor, along this trajectory and find agreement with the detector rate, in the sense that the energy density becomes large along the trajectory as the Cauchy horizon is approached.

Fourth, in Section \ref{sec:4Rindler}, we comment on the validity of using $1+1$-dimensional models to describe $3+1$ phenomena. We consider a $1+1$ observer equipped with a derivative-coupling detector in the Rindler vacuum and compute the divergence of the detector's transition rate as the observer approaches the future horizon of the Rindler spacetime along an orbit of the Minkowskian time translation Killing vector. We find that the transition rate diverges in a non-integrable fashion, which is stronger than the situation for the $3+1$ non-derivative Unruh-DeWitt detector. This leads us to conjecture that the transition rate divergent behaviour near a $3+1$ Reissner-Nordstr\"om Cauchy horizon may be integrable for a sharply switched detector.

In Chapter \ref{ch:Unruh}, we address the question of how long one needs to wait to detect the Unruh effect by looking at the response of particle detectors that interact with a field for a finite time, such that after a long interaction time has elapsed the detailed balance form of the KMS condition holds up to a prescribed precision at the Unruh temperature. The work that we present in this chapter is based in part on \cite{Fewster:2015dqb}.\footnote{At present time, a full paper that extends \cite{Fewster:2015dqb} and includes results that have first appeared in this thesis is in preparation.} We consider a finite-time interacting Unruh-DeWitt detector that interacts with a massless $3+1$ scalar field in the Minkowski vacuum state for a finite amount of time through an interaction Hamiltonian whose coupling strength is controlled by a smooth function of compact support. 

In Section \ref{sec5:KMS} we first show rigorously that, under general conditions, the asymptotic limit of the response function as the interaction time goes to infinity, prescribed by a time scale, $\lambda$, which rescales the detector-field interaction, satisfies the detailed balance condition if and only if the state is KMS. For the switching function, we consider both an adiabatic time scaling, which represents a long and slow switching, and a scaling that leaves the switch-on and switch-off tails of the switching function intact, but scales the constant-strength interaction time between the field and the detector. Second, we prove that, under mild assumptions, the thermal limit of the response function cannot be uniform in the detector transition energy gap. Third, we prove that in the case of the Unruh effect the above hypotheses hold and, hence, the thermal limit cannot be uniform in the detector energy.

Thus, the question of how long one waits to detect the Unruh temperature can only be answered in terms of the energy scale of the detector, \textit{i.e.}, how long one needs to wait to detect the Unruh temperature up to a prescribed large energy scale, $E$. In Section \ref{sec:slowswitching} we show that, in the case of the adiabatic scaling, one can find a class of switching functions for which the thermalisation time scale is polynomial in the large energy, \textit{i.e.}, one needs to wait a polynomially long time to detect the Unruh temperature. We then show, in Section \ref{sec:plateau}, that for the constant-interaction rescaling, there is no switching function for which the detector thermalises in a long time scale which is polynomial in the large energy. The moral is that the details of the thermalisation are in the switching of the interaction.

We believe that the estimates that we present in Chapter \ref{ch:Unruh} are relevant in the design of experiments seeking to  verify the Unruh effect by measuring the Unruh temperature with precision up to a prescribed energy scale in the detector response spectrum.

Finally, we summarise our results and make our concluding remarks in Chap. \ref{ch:conc}. 

Appendices \ref{app:DeCo}, \ref{app:BH} and \ref{app:Unruh} contain several technical computations and results that we use throughout this thesis.

We set the metric signature such that the contraction with two timelike vector fields is negative and the contraction with two spacelike vector fields is positive. We work in units in which $\hbar = c =1$, unless explicitly indicated. Complex conjugation is denoted by an overline and Hermite conjugation is denoted by an asterisk. 

Throughout this work we use the {\rm Big O} and {\rm small o notation} for asymptotic quantities. We illustrate this notation: Let $f:x \mapsto f(x)$ be a real-valued real function. We say that $f(x) = O(x)$ as $x \to 0$ if $x^{-1}f(x)$ is bounded as $x \to 0$. $f(x) = O^\infty(x)$ as $x\to 0$ if $f(x)$ falls off faster than any positive power of $x$ as $x\to 0$. $f(x) = O(1)$ if $f(x)$ is bounded in the limit under consideration. $f(x) = o(x)$ as $x \to 0$ if $x^{-1}f(x)\to 0$ as $x\to0$. Finally, $f(x) = o(1)$ if $f(x)$ vanishes in the limit under consideration.

\chapter[Elements of QFT in curved spacetimes]{Elements of quantum field theory in curved spacetimes}
\label{ch:chap2}

The aim of this chapter two-fold. First, we review quantisation in field theory. Second, we introduce the relevant technology in the mathematical theory of quantum fields. There exists an extensive literature on the subject of quantum field theory in curved spacetimes that expands and deepens the exposition that we present here. Classic texts include \cite{Birrell:1982ix, Bogoliubov:1959, Fulling:1989nb, Roman:1969, Wald:1995yp}. Recent progress on the field has been achieved from the so-called algebraic point of view. Relevant literature reviews on the topic include \cite{Bar:2009zzb, Brunetti:2015vmh, Hollands:2014eia} and references therein. For a discussion from a mathematical point of view, see for example \cite{Bar:2007zz, Derezinski:2013dra}.

In Section \ref{2sec:Classical} we review classical field theory, the theory that describes the macroscopic aspects of our universe. We introduce the procedure known as \textit{quantisation} in Section \ref{2sec:Quantum}. We take the opportunity to emphasise that the quantisation map acts on fields and, while the language of quantum field theory is appropriate to describe particle physics, particles are not the fundamental objects in the theory. Instead, quantum fields are. We explain in Section \ref{2sec:Detectors} what the relationship between quantum fields and particles is, by introducing \textit{particle detectors}.

\begin{defn}
A \textit{spacetime} is a pair $(M,g)$ consisting of a connected, real, smooth manifold, $M$, equipped with a Lorentzian metric, $g$, with signature $(-++ \ldots +)$.
\end{defn}

In calculations, we shall often make use of abstract index notation, by which we write, \textit{e.g.}, a vector field in the tangent bundle of the spacetime, $v \in TM$, as $v^a$ and a cotangent vector, $\omega \in T^*(M)$, as $\omega_a$. The metric tensor reads $g_{ab}$ in this notation.

We further demand that these spacetimes be orientable and time-orientable according to the following:

\begin{defn}
An $n$-dimensional spacetime is called an \textit{oriented spacetime}, denoted by $(M,g,\epsilon)$, if there exists a nowhere vanishing top form $\omega \in \Omega^n(M)$ with the orientation $\epsilon$ inherited from $\omega$.\footnote{We have denoted by $\Omega^p(M)$ the space of $p$-forms on the spacetime $M$.}
\end{defn}

\begin{defn}
A spacetime is called a \textit{time-oriented spacetime}, denoted by $(M,g,t)$, if there exists an everywhere timelike vector field $t \in TM$, i.e., $g(t,t) = g_{ab}t^a t^b < 0$.
\end{defn}

\begin{defn}
Let $K$ be a subset of the time-oriented spacetime $(M,g,t)$,
\begin{itemize}
\item $I^{+/-}(K)$ is the chronological future/past of $K$, i.e., the set of points $p \in M$ that can be reached by a future/past directed timelike curve that starts at a point in $K$.
\item $J^{+/-}(K)$ is the causal future/past of $K$, i.e., as above in the case of causal curves.
\item $I(K) = I^{-}(K) \cup I^{+}(K)$ and $J(K) = J^{-}(K) \cup J^{+}(K)$.
\end{itemize}
\end{defn}

We present some notation that will be useful in our discussion:

\begin{defn}
We denote:
\begin{itemize}
\item $C^\infty(M,V)$: the space of $V$-valued, smooth functions on a manifold $M$. 
\item $C_0^\infty(M, V)$: the space of $V$-valued, smooth functions of compact support on a manifold $M$.
\item $C_{\text{sc}}^\infty(M, V)$: the space of $V$-valued, smooth functions of space-like compact support, i.e., whose support is contained in $J(K)$ where $K \subset M$ is a compact subset of $M$.
\item $C_{\text{f/p}}^\infty(M,V)$: the space of $V$-valued, smooth functions of future/past compact support, i.e., if $f \in C_{\text{f/p}}^\infty(M,V)$, for any $p \in M$, $\supp (f) \cap J^{+/-}(p)$ is compact.
\end{itemize}
\end{defn}

In classical field theory, an all-important property for time-oriented spacetimes is \textit{global hyperbolicity}.

\begin{defn}
Let $(M,g,t)$ be a time-oriented spacetime. A subset $\Sigma \subset M$ is called an \textit{achronal set} if $I^+(\Sigma) \cap \Sigma = \emptyset$. Moreover, such an achronal set $\Sigma$ is called a \textit{Cauchy hypersurface} if for any $p \in M$, every past and future inextendible causal curve through $p$ intersects $\Sigma$.
\end{defn}

\begin{thm}
Let $(M,g,t)$ be a time-oriented spacetime. The following statements are equivalent:
\begin{enumerate}
\item $(M,g,t)$ is globally hyperbolic.
\item There exists a Cauchy hypersurface $\Sigma \subset M$, i.e., $J(\Sigma) = M$.
\item $(M,g,t)$ is isometric to $(\mathbb{R} \times \Sigma,g,t)$ with metric $g = −\beta dt^2 + g_t$, where $\beta \in C^\infty(M, \mathbb{R})$ is a smooth negative function, $g_t$ is a Riemannian metric on $\Sigma$ depending smoothly on $t \in \mathbb{R}$ and each $\{t\} \times \Sigma$ is a smooth spacelike Cauchy hypersurface in $M$.
\end{enumerate}
\label{thm:GlobHyp}
\end{thm}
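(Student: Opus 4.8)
The plan is to prove the equivalence of the three characterizations of global hyperbolicity by establishing the cycle of implications $(3) \Rightarrow (2) \Rightarrow (1) \Rightarrow (3)$, which is the standard route and allows each step to use the hypothesis in its most convenient form. The implication $(3) \Rightarrow (2)$ is essentially immediate: if $(M,g,t)$ is isometric to $(\mathbb{R} \times \Sigma, g, t)$ with the stated warped-product form, then one checks directly that $\{0\} \times \Sigma$ is an achronal spacelike hypersurface and that the product structure forces every inextendible causal curve — which must be monotone in the $t$-coordinate by the timelike character of $\partial_t$ relative to the metric splitting — to cross it exactly once. So $\{0\} \times \Sigma$ is a Cauchy hypersurface and $J(\Sigma) = M$ follows from the definition.

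**The middle implication.**

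For $(2) \Rightarrow (1)$, I would take the standard definition of global hyperbolicity (the causality condition together with compactness of the causal diamonds $J^+(p) \cap J^-(q)$, or equivalently strong causality plus compactness of diamonds). Given a Cauchy hypersurface $\Sigma$ with $J(\Sigma) = M$, one first argues that $M$ is causal: a closed causal curve would, by achronality of $\Sigma$ and the crossing property, have to meet $\Sigma$ and then return, contradicting achronality combined with the inextendibility/crossing structure. Then one shows compactness of the diamonds by the usual limit-curve argument: a sequence of causal curves from $p$ to $q$ has points accumulating on $\Sigma$, and a limit curve theorem (continuity of the causal structure) produces a limit causal curve, giving sequential compactness of $J^+(p) \cap J^-(q)$. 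This is the classical Geroch-type argument, and I would cite it at the level of detail appropriate to a thesis introductory chapter rather than reproduce it in full.

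**The hard step: smooth splitting.**

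The genuinely substantive implication is $(1) \Rightarrow (3)$, the existence of a smooth global time function and the resulting isometric splitting. The naive approach uses Geroch's original volume-function argument, which produces only a continuous Cauchy time function and a homeomorphism $M \cong \mathbb{R} \times \Sigma$; upgrading this to a \emph{smooth} splitting with a smooth spacelike foliation and the warped-product metric $g = -\beta\, dt^2 + g_t$ is the content of the Bernal--S\'anchez theorems, and this is where the real work lies. The plan is to invoke their construction of a smooth \emph{temporal} function (a smooth function with past-directed timelike gradient whose level sets are spacelike Cauchy hypersurfaces), then use the flow of the normalized gradient to identify $M$ diffeomorphically with $\mathbb{R} \times \Sigma_0$ for $\Sigma_0$ one level set, and finally read off that in these coordinates the metric takes the asserted form with $\beta > 0$ smooth and $g_t$ a smoothly varying Riemannian metric on $\Sigma$ (the sign convention in the excerpt writes $\beta$ as a negative function, so one sets $\beta \mapsto -\beta$ accordingly). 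I expect the main obstacle to be precisely this smoothing: Geroch's continuous time function is easy but insufficient, and a self-contained proof of smooth splitting is lengthy, so in a thesis context the honest move is to state the construction, give the geometric idea (smooth temporal function plus gradient flow), and attribute the technical smoothing to the literature rather than grind through the partition-of-unity regularization estimates.
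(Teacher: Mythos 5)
Your proposal is correct and matches the paper's treatment: the thesis does not prove this theorem but attributes it to Bernal and S\'anchez (citing B\"ar et al., Chapter 2, Theorem 1) as a strengthening of Geroch's continuous splitting to a smooth isometric one, which is exactly the structure and the deferral you propose for the hard implication $(1)\Rightarrow(3)$. The additional outline you give of $(3)\Rightarrow(2)\Rightarrow(1)$ is sound and goes slightly beyond what the paper records, but is consistent with it.
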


The proof is due to A. Bernal and M. S\'anchez. See e.g. \cite{Bar:2009zzb}, Chapter 2, Theorem 1. The proof extends the work by Geroch \cite{Geroch:1970uw}, who showed that a globally hyperbolic manifold is homeomorphic to $\mathbb{R} \times \Sigma$. This result is strengthened to show that the $M$ and $\mathbb{R} \times \Sigma$ are diffeomorphic and, furthermore, that $(M,g)$ and $(\mathbb{R} \times \Sigma, −\beta dt^2 + g_t)$ are isometric.

Spacetimes which satisfy this property support fields with well-posed initial value problems. Throughout this chapter we will present the quantisation of fields in globally hyperbolic spacetimes. The quantisation of fields on non-globally hyperbolic manifolds has been studied in the context of boundary value problems. Notable examples include the Casimir effect \cite{Kay:1978zr, Dappiaggi:2014gea} and quantisation in Davies-Fulling mirror spacetimes, first introduced in \cite{Davies:1976hi, Davies:1977yv}. We shall encounter quantum fields in non-globally hyperbolic spacetimes in Chapters \ref{ch:DeCo} and \ref{ch:BH} and we shall be able to also perform the field quantisation in this context.

$n$-dimensional globally hyperbolic spacetimes, with $n \geq 2$,  can be understood as the dynamical evolution of $n-1$ Riemannian metrics. Such interpretation is provided by the ADM formalism, to which we turn our attention. We use abstract index notation:

Suppose one selects a time function $t \in C^\infty(M)$. Cauchy surfaces $\Sigma_t \subset M$ are defined by constant $t$ surfaces, and are normal to $g^{ab} \nabla_b t|_{t = \text{const}}$. We normalise this vector field to unity and call it $n^a$.  The metric induced on $\Sigma_t$ is defined by $h_{ab} = g_{ab} + n_a n_b$.

The 3-metric $h_{ab}$ defines a projector on the surface, $P_{(\Sigma_t)}{}^a{}_b \doteq g^{ac} h_{cb}$. The projector orthogonal to the surface is defined by $n^a$, $P_{(n)}{}^a{}_b \doteq - n^a n_b$. In this way, any vector field can be decomposed as $v^a = P_{(n)}{}^a{}_b v^b + P_{(\Sigma_t)}{}^a{}_b v^b$. 

These projectors define the important \textit{Lapse function}, $N$, and \textit{Shift vector}, $N^a$. We select a global timelike vector field $t^a \in TM$, which satisfies the condition $t^a \nabla_a t = 1$. The integral curves of $t^a$ represent the flow of time in $M$. The vector field $t^a$ is decomposed in its normal and tangential components to the Cauchy surface $\Sigma_t$ as $t^a = N n^a + N^a$, where
\begin{subequations}
\begin{align}
N n^a & \doteq P_{(n)}{}^a{}_b t^b =  \left(-n_b t^b\right) n^a, \\
N^a & \doteq P_{(\Sigma_t)}{}^a{}_b t^b = g^{ac} h_{cb} t^b.
\end{align}
\label{2:LapseShift}
\end{subequations}

Finally, let us discuss the situation for non-globally hyperbolic spacetimes. Many of the most important solutions in General Relativity are not globally hyperbolic. Notably, most members of the Kerr-Newman family are not globally hyperbolic. 

By Theorem \ref{thm:GlobHyp}, non-globally hyperbolic spacetimes contain no Cauchy hypersurfaces. Instead, to any achronal surface there will be an associated manifold subset called the \textit{domain of dependence}, but this domain of dependence cannot be the whole manifold. Let us make this geometrical notion more precise.

\begin{defn}
Let $(M,g,t)$ be a time-oriented spacetime. Let $K \subset M$ be a closed, achronal set. The \textit{future domain of dependence of} $K$, denoted by $D^+(K)$, is the set of points $p \in M$, such that every past-inextendible causal curve through $p$ crosses $K$.

The \textit{past domain of dependence of} $K$, denoted by $D^-(K)$, is the set of points $p \in M$, such that every future-inextendible causal curve through $p$ crosses $K$.

The \textit{domain of dependence of} $K$ is $D(K) \doteq D^+(K) \cup D^-(K)$.
\end{defn}

The boundary of the domain of dependence of an achronal surface is called a \textit{Cauchy horizon}, according to the following

\begin{defn}
Let $K \subset M$ be a closed, achronal set in $(M,g,t)$. The \textit{future Cauchy horizon of} $K$ is the achronal surface $H^+(K) \doteq \overline{D^+(K)} \setminus I^-[D^+(K)]$, where the overline denotes the set closure. 

The \textit{past Cauchy horizon of} $K$ is the achronal surface $H^-(K) \doteq \overline{D^-(K)} \setminus I^+[D^-(K)]$.

The \textit{Cauchy horizon of} $K$ is $H(K) \doteq H^+(K) \cup H^-(K)$.
\end{defn}

\begin{rem}
An achronal surface $\Sigma \subset M$ in the spacetime $(M,g,t)$ is a Cauchy hypersurface if and only if $H(\Sigma) = \emptyset$. Conversely, every achronal surface on a non-globally hyperbolic spacetime will contain a Cauchy horizon.
\end{rem}

We now proceed to introduce classical and quantum field theory in globally hyperbolic spacetimes, where the Cauchy problem is well-posed. In non-globally hyperbolic spacetimes, the same treatment can be applied inside the domain of dependence of an achronal surface with prescribed initial data, with the caveat that the dynamics cannot be extended past the Cauchy horizon.

\section{Classical field theory}
\label{2sec:Classical}

Classical field theory explains gravity, matter and their interactions at macroscopic scales. In this section, we shall review the kinematical and dynamical aspects of field theory. The kinematics describes the observables, classical field maps, understood as sections on vector bundles. The dynamics enters in the form of the equations of motion of the theory.

One usually describes classical field theory by writing an action functional on field configuration variables and the extremisation of this functional is equivalent to satisfying the equations of motion of the field theory. Prominent examples are the Einstein-Hilbert action, which yields the Einstein field equations in vacuum upon variation, the standard model, which describes the electromagnetic and nuclear forces, and the sum of the Einstein-Hilbert action\footnote{The coupling between fermions and gravity is best achieved using the \textit{vielbein} formulation of general relativity.} and the standard model action, which describes all of the fundamental interactions in nature. 

Action functionals are useful for physicists because they allow a theorist to postulate a field theory based on symmetries, which are believed to be fundamental in nature, but actions are not essential. A physical experiment is only sensitive to the equations of motion of the theory. For example, if one wishes to describe the theory of a free scalar field, $\phi$, with mass $m$ on a fixed Lorentzian spacetime, $(M,g)$, with Riemann curvature tensor, $\Riem$, one can postulate the action functional
\begin{equation}
S_\phi[\phi] = -\frac{1}{2} \int_M d\vol(\mathsf{x}) \left[g^{ab}(\mathsf{x}) \nabla_a \phi(\mathsf{x}) \nabla_b \phi(\mathsf{x}) + (m^2 +  \xi \R(\mathsf{x})) \phi(\mathsf{x})^2 \right],
\label{2:KGaction}
\end{equation}
subject to regularity conditions on the field, where the Ricci scalar, $\R = \tr(\Ric)= g^{ab}\Ric_{ab}$, is the trace of the Ricci tensor, built from the contraction of the Riemann tensor $\Ric_{ab} = \Riem^c{}_{acb}$, and with $\xi \in \mathbb{R}$. One can ensure that the theory is invariant under diffeomorphisms by using as an integration measure the generally covariant volume element defined locally by $d\vol(x) = dx^4 [-\det(g)(x)]^{1/2}$.

\begin{rem}
The orientation of our spacetime is provided by $d^4x = \epsilon dx^1 \wedge dx^2 \wedge dx^3 \wedge dx^4$. We choose $\epsilon = 1$ as an orientation.
\end{rem}

The variation of this action produces the Klein-Gordon equation for a free scalar field with mass $m$,
\begin{equation}
\frac{\delta S}{\delta \phi}= (\Box - m^2 - \xi \R) \phi = 0.
\label{2:KG}
\end{equation}

In abstract index notation, $\Box = g^{ab}\nabla_a \nabla_b$. Eq. \eqref{2:KG} dictates the dynamics of the field, and one can simply start from this equation without specifying an action.

If one wishes to study the dynamics between the field and gravity, one simply promotes $g$ to a dynamical field in the Klein-Gordon action \eqref{2:KGaction} and adds up the Einstein-Hilbert action. The action functional (with a cosmological constant, $\Lambda$, term) is
\begin{align}
S[g, \phi] & = \frac{1}{2}\int_M d\vol(\mathsf{x}) \left[\frac{1}{8 \pi G_\text{N}}(\text{R}(\mathsf{x})-2\Lambda)  \right. \nonumber \\
&  - g^{ab}(\mathsf{x}) \nabla_a \phi(\mathsf{x}) \nabla_b \phi(\mathsf{x}) - (m^2 +  \xi \R(\mathsf{x})) \phi(\mathsf{x})^2 \Big],
\end{align}
where $G_\text{N}$ is Newton's gravitational constant. Upon variation, one obtains the eleven partial differential equations that determine the metric and the field,
\begin{subequations}
\begin{align}
\frac{1}{(- \det(g))^{-1/2}} \frac{\delta S}{\delta g^{ab}} & = \frac{1}{16 \pi G_\text{N}}(G_{ab} + \Lambda g_{ab}) - \frac{1}{2}T_{ab} = 0, \\
\frac{1}{(- \det(g))^{-1/2}} \frac{\delta S}{\delta \phi} & = (\Box - m^2 - \xi \R) \phi = 0,
\label{2:g-KG}
\end{align}
\end{subequations}
\\
where $G \doteq \Ric - (1/2)g \R$ is the celebrated Einstein tensor and $T$ is the energy momentum tensor of $\phi$, $T \doteq 2(-\det(g))^{-1/2} \left(\delta S_\phi/\delta g^{-1}\right)$.

From now on we consider the metric field $g$ to be a background field and we further suppose that there is no back-reaction on the metric. In other words, we switch off gravity, $G_{\text{N}} \to 0$. (Because $G_\text{N}$ is a dimensionful quantity, the limit must be taken using the appropriate scales of the system. For example, if the coupled matter has mass $m$, then $G_\text{N}m^2$ is the dimensionless quantity that runs to zero.) In 
this way, we can focus on the essential structures of classical field theory by using a very simple model and, in turn, make the relation between the classical and quantum mathematical structures transparent.

\subsection{Covariant formulation}

The kinematics of classical field theory on a manifold $M$ can be formulated both covariantly and canonically. In the covariant approach, the relevant space of field configurations is the covariant configuration space, which consists of fields supported in spacetime regions. In the canonical approach, the relevant space of field configurations is the canonical configuration space, which consists of field configurations on a Cauchy surface of the spacetime equipped with a Riemannian metric. Additionally, one specifies the momenta of these canonical field configurations to define a canonical phase space. While the covariant configuration space and the canonical phase space are \textit{a priori} not the same, they lead to the same space of solutions, once the dynamics are imposed. Elements in the covariant and canonical space of solutions can be identified in a one-to-one correspondence.

We denote the covariant configuration space by $\Q$. The dynamics impose a condition on this configuration space and, when the field equations are satisfied, we say that the field is on-shell. The space of on-shell field configurations is the solution space of the field theory, denoted by $\Sol \subset \Q$.

Intuitively, the configuration space of fields is the space of smooth vector-valued functions over the spacetime manifold, $\Q = C^\infty(M, V)$.\footnote{While one needs not impose \textit{a priori} such strong regularity conditions, this does not affect the discussion at this level. We will not be worried about this problem at present, and we will trust our intuition in choosing $\Q = C^\infty(M, V)$.} In addition, we introduce the space of real valued functionals on these smooth functions, which are elements of the dual space $\Q^* = (C^\infty(M, V))^*$. Functionals will become relevant in the discussion of the field observables of the theory.

These functional spaces have a geometric interpretation in terms of sections on a vector bundle, which we denote $(F, \pi, M, V)$, with additional structure, depending on the spin and gauge symmetries of the field. 

In this chapter, it suffices for our purposes to introduce this geometric technology in the context of real scalar fields on an arbitrary fixed spacetime background.

Let us start by providing a useful definition:

\begin{defn}
A (smooth) vector bundle $(F, \pi, M, V)$ is a fibre bundle consisting of:
\begin{enumerate}
\item The total space, $F$, a smooth manifold.
\item The base manifold, $M$, a smooth manifold.
\item The fibre, $V$, a vector space and a smooth manifold.
\item The projection $\pi: F \to M$, which satisfies that, for $\mathsf{x} \in M$, $\pi^{-1}(\mathsf{x}) = V_\mathsf{x}$, the fibre over $\mathsf{x}$, is isomorphic to the fibre $V$.
\item The structure group, $GL(V)$, which is the general linear group acting on the left on $V$.
\end{enumerate}
Moreover, we have that
\begin{enumerate}
\item Local trivialisations, $\{\varphi_U\}$, satisfy that, for any coordinate chart $U \subset M$, $\varphi_U(U \times V) \to \pi^{-1}(U)$ is a diffeomorphism for which $U \ni \mathsf{x} = \pi \circ \varphi_U(\mathsf{x},v)$.
\item Transition functions, $G_{ij} \in GL(V)$, satisfy that, for any two coordinate charts $U_i \subset M \supset U_j$, with $U_i \cap U_j \neq \emptyset$, for every $\mathsf{x} \in U_i \cap U_j$, we have that $\varphi_i(\mathsf{x},v) = \varphi_i(\mathsf{x},G_{ij}v)$.
\end{enumerate}
\end{defn}

The relevant vector bundle for a real scalar field on a manifold $M$ is $(F, \pi, M, \mathbb{R})$, with structure group $GL(1,\mathbb{R}) \cong \mathbb{R}$. This vector bundle is called the line bundle. The next useful definition that we need is

\begin{defn}
A \textit{smooth section} on $M$ is the smooth map $\sigma: M  \to F$, which satisfies $\pi \circ \sigma = \text{id}_M$. The space of smooth sections on $M$ is denoted by $\Gamma^\infty(M,F)$. A local section on $U \subset M$ is the smooth map $\Gamma^\infty(U,F) \ni \sigma: U  \to F$.
\end{defn}

One can also define the spaces of smooth sections in terms of their support on $M$.

\begin{defn}
We denote:
\begin{itemize}
\item $\Gamma_0^\infty(M,F)$: the space of smooth sections of compact support on a manifold $M$.
\item $\Gamma_{\text{sc}}^\infty(M, F)$: the space of smooth sections of space-like compact support, i.e., whose support is contained in $J(K)$ where $K \subset M$ is a compact subset of $M$.
\item $\Gamma_{\text{f/p}}^\infty(M,F)$: the space of smooth sections of future/past compact support, i.e., if $f \in \Gamma_{\text{f/p}}^\infty(M,F)$, for any $p \in M$, $\supp(f) \cap J^{+/-}(p)$ is compact.
\end{itemize}
\end{defn}

In view of this definition, the configuration space of the scalar field, $Q = C^\infty(M, \mathbb{R})$, is isomorphic to $\Gamma^\infty(M, F)$, since a smooth function $\phi: M \to \mathbb{R}: \textsf{x} \mapsto \phi(\textsf{x})$ can be identified with a section $\sigma_\phi: M \to F: \textsf{x} \mapsto (\textsf{x}, \phi(\textsf{x}))$ and vice-versa.

The functionals, $Q^* \supset f:Q \to \mathbb{R}: \phi \mapsto \phi(f)$, are sections on $M$ valued in the dual vector bundle, i.e.,

\begin{defn}
The dual bundle of the fibre bundle $(F, \pi, M, V)$ is the fibre bundle $(F^*, \pi^*, M, V^*)$, in which the fibre $V^*$ is the dual space of $V$.
\end{defn}

In view of this definition, a functional $Q^* \ni f:\phi \mapsto \phi(f)$ is identified with the map $\Gamma(M,F^*) \ni \sigma_f: \sigma_\phi \mapsto \langle \sigma_f, \sigma_\phi \rangle$, where $\langle \,  , \, \rangle$ is a bilinear form that defines the pairing $\langle \,  , \, \rangle: ((C^\infty(M))^*, C^\infty(M)) \to \mathbb{R}$.

In the case of linear scalar field theory, in which the configuration space is comprised of field configurations of free, non-interacting fields, the space of functionals is linear, \textit{i.e.}, for $a \in \mathbb{R}$, we have that $\phi(f+a g) = \phi(f) + a\phi(g)$, and we can identify the space of linear functionals with $C_0^\infty(\mathbb{R})$, \textit{i.e.}, $\Gamma(M,F^*) \cong C_0^\infty(\mathbb{R})$ and field configurations, $\sigma_\phi: M \to F: \textsf{x} \mapsto (\textsf{x}, \phi(\textsf{x}))$, are mapped to the real numbers by functionals, $\sigma_f: M \to F^*: \textsf{x} \mapsto (\textsf{x}, f(\textsf{x}))$ with the bilinear form
\begin{equation}
\sigma_\phi \mapsto \langle \sigma_f, \sigma_\phi \rangle = \int_M \! d\vol \,  f(\textsf{x}) \phi(\textsf{x}) = \phi(f).
\label{2:CovClassObs}
\end{equation}

This concludes the discussion on the kinematics of the classical field theory. To implement the dynamics, we need to impose equations of motion and obtain the space of solutions of a real scalar field. The space of solutions, $\Sol$, contains so-called on-shell configurations of the field. The equation of motion that we are interested in solving is the Klein-Gordon equation,
\begin{equation}
P  \phi \doteq (\Box - m^2 - \xi \R) \phi = 0,
\label{2:Pdef}
\end{equation}
subject to appropriate initial conditions.

An important property of the Klein-Gordon operator in curved spacetimes is that it is a \textit{normally hyperbolic operator}.

\begin{defn}
A second order linear partial differential operator, $P: \Gamma(M,F) \to \Gamma(M,F)$, is called \textit{normally hyperbolic} if there exists a local trivialisation on $(M,g)$ such that, for $\sigma_\phi \in \Gamma(M,F)$,
\begin{equation}
P \phi = - \left( g^{ab} \partial_a \partial_b + A^a \partial_a + B \right) \phi,
\end{equation}
where $A^{a}$ and $B$ are smooth coefficients.
\end{defn}

Normally hyperbolic operators have the important characteristic that they are, morally speaking, invertible, although not in a strict sense. Let us make this statement more precise by defining the \textit{advanced} and \textit{retarded Green operators}.

\begin{defn}
Let $M$ be globally hyperbolic spacetime and $P: \Gamma(M,F) \to \Gamma(M,F)$ a normally hyperbolic operator. A linear operator, $E^{-/+}: \Gamma_0(M,F) \to \Gamma_{f/p}$, is called an {\rm advanced}/{\rm retarded} Green operator for $P$ if it satisfies
\begin{itemize}
\item $P E^{-/+} = Id_{\Gamma_0(M,F)}$
\item $E^{-/+} P|_{\Gamma_0(M,F)} = Id_{\Gamma_0(M,F)}$
\item $\supp \left(E^{-/+} \sigma_f \right) \subset J^{-/+}( \supp (\sigma_f) )$, \text{ for all } $\sigma_f \in \Gamma_0(M,F)$.
\end{itemize}
\end{defn}

We turn our attention to the inhomogeneous Klein-Gordon equation
\begin{equation}
P \phi = f.
\label{2:InhomKG}
\end{equation}

The following theorem allows us to solve eq. \eqref{2:InhomKG}:

\begin{thm}
Let $M$ be globally hyperbolic spacetime and $P: \Gamma(M,F) \to \Gamma(M,F)$ a normally hyperbolic operator. There exist unique advanced and retarded fundamental solutions to eq. \eqref{2:InhomKG}, $\phi = E^- f$ and $\phi = E^+ f$. We say that $P$ is Green hyperbolic.
\end{thm}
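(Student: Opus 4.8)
The plan is to follow the classical route of Leray, in the refined form of Bär--Ginoux--Pfäffle: first build \emph{local} fundamental solutions on small geodesically convex domains, establish \emph{local} uniqueness by an energy estimate, and then \emph{globalise} using the causal structure supplied by global hyperbolicity — in particular the temporal splitting $M \cong \mathbb{R}\times\Sigma$ of Theorem~\ref{thm:GlobHyp}. Throughout, the support conditions in the definition of the Green operators are what make the patching work, so they must be tracked at every step.

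For the local construction I would fix a geodesically starshaped open set $\Omega$ and transplant the flat Riesz distributions $R_\pm(\alpha)$ to $\Omega$ via the exponential map, obtaining distributions $R_\pm^\Omega(\alpha,x)$ supported in $J^{\pm}_\Omega(x)$ for suitable $\alpha$ and satisfying a recursion of the form $P_{(x)} R_\pm^\Omega(\alpha+2,x) = (\text{terms in }R_\pm^\Omega(\alpha))$ together with correction factors built from the Van Vleck--Morette determinant. Making the Hadamard-type ansatz $\mathcal{R}_\pm(x) = \sum_{k=0}^{N} V_k(x,\cdot)\, R_\pm^\Omega(2+2k,\cdot) + \text{remainder}$, one solves the transport (recursion) equations along radial geodesics for the smooth Hadamard coefficients $V_k$, truncates at $N$ large enough that the remainder source is continuous, and then solves the resulting Volterra-type integral equation by a Neumann series. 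Convergence of this series on a sufficiently small $\Omega$ is the main analytic input at this stage and uses smoothness of the coefficients $A^a$, $B$ of $P$. This yields exact local advanced/retarded Green operators $E^\pm_\Omega$ on $\Gamma_0(\Omega, F)$ with $\supp(E^\pm_\Omega \sigma_f) \subset J^{\pm}_\Omega(\supp \sigma_f)$.

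Local uniqueness I would obtain from an energy estimate: if $Pu=0$ with, say, past-compact support, then integrating (the divergence of) an energy current adapted to the principal symbol of $P$ over a lens-shaped region between two Cauchy surfaces, combined with Grönwall's inequality, forces $u\equiv 0$. Hence on a globally hyperbolic domain a solution of $Pu=f$ with future- (resp. past-) compact support is unique, which pins down $E^\pm$ once existence is known.

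The remaining — and, I expect, hardest — step is globalisation. Using Theorem~\ref{thm:GlobHyp}, write $M \cong \mathbb{R}\times\Sigma$ and cover $M$ by causally compatible, geodesically convex open sets $\{\Omega_i\}$ on each of which a local Green operator exists. For $\sigma_f \in \Gamma_0(M,F)$, global hyperbolicity guarantees that $J^+(\supp\sigma_f)\cap J^-(\Sigma_t)$ is compact for each $t$, so only finitely many $\Omega_i$ meet the region where the global solution must live; a partition of unity subordinate to this cover then assembles an approximate solution whose error is again compactly supported and pushed strictly to the future, and iterating this correction converges because each step moves the support strictly forward inside a fixed compact set. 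Checking that the limit $E^+$ (and symmetrically $E^-$) satisfies $P E^\pm = \mathrm{Id}_{\Gamma_0(M,F)}$, $E^\pm P|_{\Gamma_0(M,F)} = \mathrm{Id}_{\Gamma_0(M,F)}$ and the causal support condition completes existence, while uniqueness follows from the energy estimate; hence $P$ is Green hyperbolic. The delicate points are the bookkeeping of supports in the patching and ensuring the correction iteration terminates — precisely where global hyperbolicity, rather than mere normal hyperbolicity, is indispensable.
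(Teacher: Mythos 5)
Your outline is correct and is essentially the standard argument the paper defers to (it gives no proof of its own but cites B\"ar--Ginoux--Pf\"affle, Chapter 3): local Riesz distributions and Hadamard coefficients yielding local Green operators on geodesically convex domains, uniqueness via energy estimates, and globalisation using the splitting $M \cong \mathbb{R}\times\Sigma$ from Theorem~\ref{thm:GlobHyp}. Nothing in your sketch deviates from that route, and the delicate points you flag (support bookkeeping and convergence of the forward-correction iteration) are exactly the ones handled in the cited reference.
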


The proof is standard. See for example \cite{Bar:2007zz}, Chapter 3.

We now form the \textit{advanced-minus-retarded fundamental solution} to the homogeneous Klein-Gordon equation, by setting $E \doteq E^--E^+$, and solve the Klein-Gordon equation with $\phi = E f$, $f \in C_0^\infty(M,\mathbb{R})$. Moreover, every smooth, space-like compact solution is of this form, by the properties of the support of the fundamental solutions. Thus, the space of solutions is given by the linear surjection $E: C_0^\infty(M,\mathbb{R}) \to \Sol$. In other words,

\begin{equation}
\Sol =  \left\{ \phi = E f,\text{ such that } f \in C_0^\infty(M,\mathbb{R}) \right\}
\end{equation}

We have implicitly identified smooth sections and functions in the space of solutions by $E\sigma_f = \sigma_{Ef}$. The space of solutions is a symplectic space. Let $\sigma_\phi = E \sigma_f$ and $\sigma_\psi = E \sigma_g$. We define the symplectic structure, $\Omega: \Sol \times \Sol \to \mathbb{R}$ by
\begin{equation}
\Omega(\sigma_\phi, \sigma_\psi) \doteq \langle \sigma_{f}, \sigma_{E g} \rangle = \int_M \! d\vol \left(\x \right) \, f(\x) \int_M \! d\vol \left(\x' \right) \, E\left(\x,\x'\right) g(\x') \doteq E(f,g).
\label{2:CovSymp}
\end{equation}

Eq. \eqref{2:CovSymp} implies that the observables in the covariant picture, $\Sol \to \mathbb{R}$, are precisely defined by linear functionals of the form:
\begin{equation}
\sigma_f: \phi = Eg \mapsto E(f,g).
\label{2:CovObs}
\end{equation}

The symplectic structure defines a Poisson bracket for the observables, $\{ \cdot, \cdot \}: \Sol^* \times \Sol^* \to \mathbb{R}$, by
\begin{equation}
\left\{ \langle \sigma_f, \cdot \rangle, \langle \sigma_g, \cdot \rangle \right\} = - E(f,g).
\label{2:CovPoisson}
\end{equation}

The observables equipped with the Poisson bracket form the classical algebra of observables, or \textit{Poisson algebra}, $(\Sol^*, \{\cdot, \cdot \})$.

\begin{rem}
The Dirac equation is a first order linear differential equation, hence the Dirac operator is not normally hyperbolic. Nevertheless, it is a Green hyperbolic operator and we can form the fundamental solution as above.
\end{rem}

This concludes our discussion on the classical dynamics of the Klein-Gordon field from the covariant point of view.

\subsection{Canonical formulation}

Let us start by discussing the kinematical features of the theory of classical fields from the canonical point of view. In the canonical approach we define a \textit{Lagrangian density}, whose integral yields the action functional, because it allows one to obtain the canonical momenta in the canonical phase space of the theory. We consider the action \eqref{2:KGaction} for concreteness. For a free real scalar field \eqref{2:KG} on a fixed, oriented globally hyperbolic, smooth spacetime $(M,g,\epsilon,t)$, the covariant \textit{Lagrangian density}, $\mathscr{L}: Q \to Q$, is defined by
\begin{equation}
\mathscr{L}(\phi) \doteq -\frac{1}{2}\sqrt{-\det(g)} \left[g^{ab} \nabla_a \phi \nabla_b \phi + (m^2 +  \xi \R) \phi^2\right].
\end{equation}

On a globally hyperbolic spacetime, we seek to define the \textit{canonical configuration space} on a Cauchy surface. In abstract index notation, we choose a time function $t \in C^\infty(M)$ and a global timelike vector field $t^a \in TM$, and define the canonical configuration space, $Q_\Sigma$, on a Cauchy surface, $\Sigma_t$, defined by $t = \text{const}$ and normal to the unit timelike vector field, $n^a$. The surface is equipped with a Riemannian metric $h_{ab} \,$  induced by the metric of $(M,g)$, and hence $(\Sigma_t, h)$ is a Riemannian manifold. The covariant Lagrangian density induces a Lagrangian density on every slice $\Sigma_t \subset M$. Using the ADM decomposition $g_{ab} = -n_a n_b + h_{ab} = -N^{-2}(t_a - N_a)(t_b - N_b) + h_{ab}$, this Lagrangian density is defined by
\begin{align}
\mathscr{L}\left(\phi, \dot{\phi}\right) & = \sqrt{\det(h)}  \, N \left[ -\frac{1}{N^2}\dot{\phi}^2 + \frac{2}{N^2} \dot{\phi} \left(N^b \nabla_b \phi \right) \right. \nonumber \\
& \left. - \frac{1}{N^2} \left(N^a \nabla_a \phi \right)^2 + h^{ab}  \nabla_a \phi \nabla_b \phi + \left(m^2 +  \xi \R \right) \phi^2 \right]
\end{align}
where we have defined the field velocity, $\dot{\phi}$, by $t^a \nabla_a \phi$. Because the action of the theory is given by the integral of the Lagrangian density,
\begin{equation}
S_\phi \left[\phi, \dot{\phi}\right] = \int_\mathbb{R} \! dt \, \int_{\Sigma_t} \! d^3\x \, \mathscr{L}\left(\phi, \dot{\phi}\right),
\end{equation}
it suffices to consider $C_0^\infty \left( \Sigma_t \right)$ as the canonical configuration space to ensure that the integral exists. The \textit{velocity phase space} of the theory is $C_0^\infty\left( \Sigma_t \right) \times C_0^\infty\left( \Sigma_t \right)$. 
\begin{defn}

Let $S$ be an action functional with configuration $\phi$ and velocity $\dot{\phi}$, the canonical momentum density is defined by $\pi \doteq \delta S/\delta \dot{\phi}$.
\end{defn}

The canonical momentum of $S_\phi$ is
\begin{equation}
\pi = \frac{\delta S_\phi}{\delta \dot{\phi}} = \frac{\sqrt{h}}{N} \left(\dot{\phi} - N^a \nabla_a \phi \right) = \sqrt{h} n^a \nabla_a \phi.
\end{equation}

The canonical phase space of the theory is $C_0^\infty\left( \Sigma_t \right) \times C_0^\infty\left( \Sigma_t \right)$ consisting of couples of fields and momenta, $\left(\phi, \pi \right)$, on $\Sigma_t$.

This concludes our discussion on the kinematics in the canonical approach.

We proceed to implement the dynamics. The space of solutions, $\Sol$, contains so-called on-shell configurations of the field. For a free real scalar field \eqref{2:KG} on a fixed, oriented, globally hyperbolic, smooth spacetime $(M,g,\epsilon,t)$, we consider the initial value problem of \eqref{2:g-KG} on a Cauchy surface $\Sigma_t \subset M \cong \mathbb{R} \times \Sigma_t$ with unit normal vector field $n \in TM$.

Let $\varphi, \pi \in C_0^\infty(\Sigma_t)$ be initial data on $\Sigma_t$. The solution to the initial value problem,
\begin{subequations}
\begin{align}
&P  \phi \doteq (\Box - m^2 - \xi \R) \phi = 0,  & \text{on } M, & \\
&\phi  = \varphi, & \text{on } \Sigma_t, &\\
&\partial_n \phi  = \pi, & \text{on } \Sigma_t, &
\end{align}
\label{2:IVP}
\end{subequations}
\\
specifies the field values on all of $M$. Following \cite{Wald:1995yp} (Theorem 4.1.2),

\begin{thm}
Let $(M,g,\epsilon,t)$ be globally hyperbolic with smooth spacelike Cauchy surface $\Sigma_t$. Then the initial value problem \eqref{2:IVP} is well-posed, i.e., there exists a unique solution that depends only upon data on $\Sigma_t$, is smooth and varies continuously with the initial data with respect to a suitable Sobolev topology defined on the initial data.
\end{thm}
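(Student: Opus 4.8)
The plan is to establish uniqueness and the domain-of-dependence property first, by an energy estimate, then to obtain existence by gluing short-time solutions along the foliation using that domain of dependence, and finally to read off smoothness and continuous dependence from the same estimate applied to derivatives of the solution. By Theorem~\ref{thm:GlobHyp} we may work with $M\cong\mathbb{R}\times\Sigma$, $g=-\beta\,dt^2+g_t$, each $\{t\}\times\Sigma$ a smooth spacelike Cauchy surface, so the time function $t$, the slices $\Sigma_t$, the unit normal $n^a$ and the timelike vector field $\partial_t$ are all globally available.

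For a solution $\phi$ of $P\phi=0$ I would form the stress-energy tensor $T_{ab}[\phi]$ (the tensor $T$ introduced above) and the current $J_a\doteq -T_{ab}(\partial_t)^b$. On shell $\nabla^aT_{ab}$ reduces to the mass and curvature terms, which are of lower order, so pointwise $|\nabla^aJ_a|\le C\,\rho[\phi]$, where $\rho[\phi]=\tfrac12(\partial_n\phi)^2+\tfrac12 h^{ab}\nabla_a\phi\nabla_b\phi+\tfrac12(m^2+\xi\R)\phi^2$ is the energy density and $C$ depends only on $\beta,g_t,m,\xi,\R$ on the compact region in play. Integrating $\nabla^aJ_a$ over the lens bounded by two slices $\Sigma_{t_0},\Sigma_{t_1}$ — or by a slice and a truncated causal cone — and applying the divergence theorem, the spacelike boundary terms are the energies $\mathcal{E}(t)=\int_{\Sigma_t}\rho[\phi]\,d\vol$, which are nonnegative since $\partial_t$ is timelike, while any null portion of the cone contributes a flux of favourable sign by the same timelike-contraction argument. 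Gr\"onwall's inequality then yields $\mathcal{E}(t_1)\le e^{C|t_1-t_0|}\mathcal{E}(t_0)$, and its cone version shows that $\phi$ near a point $p$ is controlled by the data on $J^-(p)\cap\Sigma$ alone.

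Uniqueness is then immediate: applied to $\phi_1-\phi_2$ with the same Cauchy data \eqref{2:IVP}, the initial energy vanishes, so $\mathcal{E}\equiv0$ and $\phi_1=\phi_2$; the cone version refines this to the statement that the value at $p$ depends only on data in $J^-(p)\cap\Sigma$. For existence I would first solve on a short slab $(t_0-\delta,t_0+\delta)\times\Sigma$ by localising, rewriting $P\phi=0$ as a first-order symmetric hyperbolic system and invoking the standard $L^2$-Sobolev existence theory (equivalently, a Galerkin/weak-solution argument in the energy norm); the a priori bound $\mathcal{E}(t)\le e^{C|t-t_0|}\mathcal{E}(t_0)$ then rules out finite-time blow-up, so the set of times to which the solution extends is open and closed, hence all of $\mathbb{R}$. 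On a non-compact $\Sigma$ one runs this construction on $J(p)\cap(\text{slab})$ for each $p$ and patches the local solutions, consistency being guaranteed by the uniqueness just proved.

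Finally, commuting covariant derivatives through the equation gives $P(\nabla\phi)=R[\phi]$ with $R[\phi]$ of order at most one in $\phi$, so inductively every $H^s_{\mathrm{loc}}$ norm of $\phi$ on each slice is dominated, via the energy estimate, by the $H^{s+1}$ norm of $(\varphi,\pi)$ on $\Sigma$; Sobolev embedding $H^s_{\mathrm{loc}}\hookrightarrow C^k$ then promotes smooth data to $\phi\in C^\infty(M)$, and the same estimates, being linear in the data, give $\|\phi\|_{H^s(\Sigma_t)}\le C(t)\,\|(\varphi,\pi)\|_{H^{s+1}(\Sigma)}$, i.e.\ continuous dependence in the Sobolev topology. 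I expect the main obstacle to be the existence half of the argument: the energy method hands us uniqueness, the domain of dependence and all the bounds almost for free, but actually producing a solution requires the genuine hyperbolic-PDE input — symmetric hyperbolic systems or a weak-solution scheme — together with the careful gluing on a possibly non-compact Cauchy surface, and that is where the real work sits.
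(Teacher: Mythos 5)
The paper does not actually prove this theorem: it states it ``following \cite{Wald:1995yp} (Theorem 4.1.2)'' and leaves the proof to the cited reference. Your sketch is the standard energy-method argument that underlies that reference (and Hawking--Ellis, B\"ar--Ginoux--Pfäffle), so there is nothing to compare against in the text itself; as an outline of the classical proof it is essentially correct, and you correctly identify that the genuine analytic content sits in the local existence step (symmetric hyperbolic reduction or a Galerkin scheme), which you invoke rather than prove --- acceptable for a sketch at the level of this chapter.

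One point deserves more care. You assert that the energy density $\rho[\phi]=\tfrac12(\partial_n\phi)^2+\tfrac12 h^{ab}\nabla_a\phi\nabla_b\phi+\tfrac12(m^2+\xi\R)\phi^2$ is nonnegative ``since $\partial_t$ is timelike,'' and that the null boundary flux has a favourable sign by the same argument. Timelikeness of $\partial_t$ only controls the kinetic part; the zeroth-order term $\tfrac12(m^2+\xi\R)\phi^2$ can be negative wherever $m^2+\xi\R<0$ (the paper allows arbitrary $\xi\in\mathbb{R}$ and makes no sign assumption on $\R$), so $\mathcal{E}(t)$ as written is not coercive and the dominant-energy-type sign of the null flux can also fail. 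The standard repair is to work with the modified energy $\mathcal{E}(t)+\int_{\Sigma_t}\phi^2\,d\vol$ (or to add a sufficiently large multiple of $\phi^2$ to $\rho$ on the compact region in play): the extra terms generated in $\nabla^aJ_a$ and in the time derivative of $\int\phi^2$ are still of lower order and are absorbed by the same Gr\"onwall step, restoring both uniqueness and the domain-of-dependence statement. With that amendment, and with the local-existence input supplied, your argument goes through.
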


The content of the theorem is that all smooth solutions are in one-to-one correspondence with a set of initial value data $(\varphi, \pi)$ on a Cauchy surface, $\Sigma_t$. We call $S: C_0^\infty(\Sigma_t) \times C_0^\infty(\Sigma_t) \to C_\text{sc}^\infty(M)$ the isomorphism that relates the solutions in the canonical and covariant picture: $S(\varphi, \pi) = \phi \in \Sol$.

The space of solutions carries the symplectic structure, $\Omega: \Sol \times \Sol \to \mathbb{R}$, defined by
\begin{equation}
\Omega((\varphi_1,\pi_1),(\varphi_2,\pi_2)) \doteq \int_{\Sigma_t} \! d^3\x \, (\pi_1 \varphi_2 - \pi_2 \varphi_1).
\label{2:CanSymp}
\end{equation}

The Poisson structure of the theory is defined from the symplectic structure. The observables are maps of the form $\Omega((f,g), \cdot): \Sol \to \mathbb{R}$. The canonical position and momentum observables are
\begin{subequations}
\begin{align}
\Omega((0,f \sqrt{h}), \cdot): (\varphi, \pi) & \mapsto \int_{\Sigma_t} \! d\vol_{\Sigma_t}(\textsf{x}) \, f(\textsf{x})  \varphi(\textsf{x}) = \varphi(f), \\
\Omega((g,0), \cdot): (\varphi, \pi) & \mapsto \int_{\Sigma_t} \! d\vol_{\Sigma_t}(\textsf{x}) \, g(\textsf{x})  n^a \nabla_a \phi(\textsf{x}) = \pi(g).
\end{align}
\label{2:CanClassObs}
\end{subequations}
\\
and the Poisson bracket, $\{ \cdot, \cdot \}: \Sol^* \times \Sol^* \to \mathbb{R}$,
\begin{equation}
\left\{\Omega((\varphi_1,\pi_1), \cdot), \Omega((\varphi_2,\pi_2), \cdot) \right\} \doteq - \Omega((\varphi_1,\pi_1),(\varphi_2,\pi_2)),
\end{equation}
yields the familiar expression
\begin{equation}
\{\varphi(f), \pi(g)\} = \int_{\Sigma_t} d\vol_{\Sigma_t} fg.
\end{equation}

We now verify that the canonical and covariant pictures give rise to the same space of solutions and are, hence, equivalent. We show the equivalence of eq. \eqref{2:CovSymp} and \eqref{2:CanSymp}.

\begin{thm}
Let $(M,g,\epsilon,t)$ be globally hyperbolic. Let $\phi_1 = Ef$ be the solution to the Klein-Gordon equation with initial data $(\varphi_1, \pi_1)$ on the Cauchy surface $\Sigma \subset M$ and $\phi_2 = Eg$ be the solution with initial data $(\varphi_2, \pi_2)$. Then,
\begin{equation}
\Omega((\varphi_1,\pi_1), (\varphi_2,\pi_2)) = \int_\Sigma \! d^3\x \, (\pi_1 \varphi_2 - \pi_2 \varphi_1) = E(f,g) = \Omega(\sigma_{\phi_1}, \sigma_{\phi_2}).
\end{equation}
\end{thm}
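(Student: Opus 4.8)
The plan is to recognise both sides of the asserted identity as the flux across $\Sigma$ of the conserved Klein--Gordon bilinear current built from $\phi_1$ and $\phi_2$, and then to compute this flux directly from $E(f,g)$ by exploiting the support properties of the retarded and advanced Green operators. The starting observation is that $P=\Box-m^2-\xi\R$ of \eqref{2:Pdef} is formally self-adjoint, so that for any two functions $u,v$ the current $W^a[u,v]\doteq v\,\nabla^a u-u\,\nabla^a v$ obeys $\nabla_a W^a[u,v]=v\,\Box u-u\,\Box v=v\,Pu-u\,Pv$, the mass and curvature terms cancelling because $uv=vu$.

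I would first record that the canonical symplectic form \eqref{2:CanSymp} does not depend on the chosen Cauchy surface. Since $\pi_i=\sqrt h\,n^a\nabla_a\phi_i$ and $d\vol_\Sigma=\sqrt h\,d^3\x$, one has $\Omega((\varphi_1,\pi_1),(\varphi_2,\pi_2))=\int_\Sigma n^a W_a[\phi_1,\phi_2]\,d\vol_\Sigma$; the current $W^a[\phi_1,\phi_2]$ is divergence free because $\phi_1=Ef$ and $\phi_2=Eg$ are solutions, and because they have spacelike compact support the divergence theorem in the region between two Cauchy surfaces produces no contribution from spatial infinity. Hence the value of this flux is the same for every Cauchy surface, and it suffices to prove the identity for one convenient choice of $\Sigma$.

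Next I would localise $E(f,g)$ at a Cauchy surface. Write $\phi_1=Ef=E^-f-E^+f$, set $K=\supp f\cup\supp g$, and choose Cauchy surfaces $\Sigma_-$ and $\Sigma_+$ with $K$ strictly to the future of $\Sigma_-$ and strictly to the past of $\Sigma_+$ --- for instance level sets of a Cauchy time function, which exists by Theorem~\ref{thm:GlobHyp}. Apply the divergence theorem on the slab $R\doteq\overline{I^+(\Sigma_-)}\cap\overline{I^-(\Sigma_+)}$ to the auxiliary current $\widetilde{W}^a\doteq W^a[E^+f,\phi_2]$. From $PE^+f=f$ and $P\phi_2=0$ one gets $\nabla_a\widetilde{W}^a=\phi_2 f$, so the bulk term is $\int_M f\phi_2\,d\vol=E(f,g)$ by \eqref{2:CovSymp}, and $\widetilde{W}^a$ has compact support on $R$ because $E^+f$ is supported in $J^+(\supp f)$, $\phi_2$ in $J(\supp g)$, and global hyperbolicity makes the intersections of these sets with the slab compact, so again no flux escapes at spatial infinity. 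The retarded solution $E^+f$, being supported in $J^+(\supp f)$, vanishes in a neighbourhood of $\Sigma_-$, so that boundary term (together with its gradient) drops; the advanced solution $E^-f$, being supported in $J^-(\supp f)$, vanishes in a neighbourhood of $\Sigma_+$, so $\phi_1=Ef=-E^+f$ there and consequently $\widetilde{W}^a=-W^a[\phi_1,\phi_2]$ on $\Sigma_+$. Assembling the boundary integrals --- with orientations fixed so that a future spacelike boundary enters the divergence theorem with a minus sign --- one obtains $E(f,g)=\int_{\Sigma_+}n^a W_a[\phi_1,\phi_2]\,d\vol_{\Sigma_+}$, which by the first step equals $\Omega((\varphi_1,\pi_1),(\varphi_2,\pi_2))$ on the originally given $\Sigma$; that it also equals $\Omega(\sigma_{\phi_1},\sigma_{\phi_2})$ is immediate from the definition \eqref{2:CovSymp}.

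I expect the one genuinely delicate ingredient to be the support bookkeeping rather than any conceptual step: one must verify, using global hyperbolicity, that the pieces of $J^+(\supp f)\cap J(\supp g)$ meeting the slab $R$ are compact (so that the divergence theorem really does produce no flux at spatial infinity), and one must keep the orientation of $\Sigma$, the direction of the future unit normal $n^a$, and the sign of the divergence theorem on a spacelike hypersurface all mutually consistent with the conventions $E=E^--E^+$ and $\pi=\sqrt h\,n^a\nabla_a\phi$ fixed in the text. It is only the matching of these signs that requires real care.
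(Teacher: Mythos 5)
Your proof is correct and follows essentially the same route as the paper's: Green's identity for the Klein--Gordon current combined with the support properties of the advanced and retarded Green operators. If anything, yours is the more careful rendition --- by working with the retarded piece $E^+f$ on a compact slab between two Cauchy surfaces you make explicit the compactness needed for the divergence theorem, whereas the paper's direct substitution $\phi_1 = E^-f$ in the integral over the non-compact region $J^-(\Sigma)$ glosses over precisely the support bookkeeping you flag at the end.
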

\begin{proof}
Let us choose a Cauchy surface $\Sigma$ such that the support of $f \in C_0^\infty$ is $\supp(f) \subset J^-(\Sigma)$. By the integration theorem of Stokes,
\begin{align}
\Omega((\varphi_1,\pi_1), (\varphi_2,\pi_2)) & = \int_\Sigma \! d^3\x \, (\pi_1 \varphi_2 - \pi_2 \varphi_1)  \nonumber \\
& = \int_\Sigma \! d^3\x \sqrt{h} n^a \, (\varphi_2 \nabla_a \varphi_1  - \varphi_1 \nabla_a \varphi_2 ) \nonumber \\
& = \int_\Sigma \! d\vol_\Sigma^a \, (\varphi_2 \nabla_a \varphi_1  - \varphi_1 \nabla_a \varphi_2 ) \nonumber \\
& = \int_{J^-(\Sigma)} \! d\vol \nabla^a \, (\phi_2 \nabla_a \phi_1  - \phi_1 \nabla_a \phi_2 ) \nonumber \\
& = \int_{J^-(\Sigma)} \! d\vol \, ( \phi_2 \Box \phi_1  - \phi_1 \Box \phi_2 ),
\end{align}
where we have written the normal volume element on $\Sigma$ as $d\vol_\Sigma^a = d^3\x \sqrt{h} n^a$ before applying Stokes' theorem. Writing $\phi_1 = E^-f$, and using the field equation, $P \phi_2 = \left(\Box - m^2 - \xi R\right) \phi_2 = 0$, 
\begin{align}
\Omega((\varphi_1,\pi_1), (\varphi_2,\pi_2)) & = \int_{J^-(\Sigma)} \! d\vol \, \left( \phi_2 P \left(E^-f\right)  - \left(E^-f\right) P \phi_2 \right) \nonumber \\
& = \int_{J^-(\Sigma)} \! d\vol  \, \phi_2  ( P E^-f) = \int_{J^-(\Sigma)} \! d\vol  \, f (Eg) \nonumber \\
& = E(f,g) = \Omega(\sigma_{\phi_1}, \sigma_{\phi_2}).
\end{align}

\end{proof}

This concludes our discussion on the classical field theory.

\section{Quantum field theory}
\label{2sec:Quantum}

In this section, we describe how to obtain a quantum field theory from a classical theory in curved spacetimes. The procedure by which this is achieved is called \textit{quantisation}, whereby we describe the microscopic structure of fields. We review the quantisation procedure for the real, free scalar field of Section \ref{2sec:Classical}. In this way, the main steps in the quantisation procedure are transparent. We emphasise the distinction between the kinematics and the dynamics of the quantum field, both in the covariant and canonical pictures.

We have analysed the classical dynamics of a linear field theory. Such theories are called \textit{free} because the equations of motion contain no field interaction (or self-interaction) terms. A word on free theories is due: Strictly speaking, free fields interact only with the gravitational field, but in many cases the assumption that a field is free is in good agreement with nature, when interactions can be neglected. Moreover, they are very important as they are the starting point for perturbative interaction theories. For example, in quantum electrodynamics, the interacting theory takes place in the Fock space $\mathscr{F}(\mathcal{H}_\text{int})$, which we define below, but in scattering processes, the asymptotic states of the theory lie in the tensor product state space of the free photon and free electron fields, $\mathscr{F}_s(\mathcal{H}_{\gamma, \text{ in}}) \otimes \mathscr{F}_a (\mathcal{H}_{\text{e, in}})$ in the asymptotic past and $\mathscr{F}_s(\mathcal{H}_{\gamma, \text{ out}}) \otimes \mathscr{F}_a(\mathcal{H}_{\text{e, out}})$ in the asymptotic future. There exist many processes, such as the formation of bound states, that cannot be described in these terms, but fields do interact very weakly in asymptotic regions in many cases.

We discuss the quantum field theory of the real, free scalar field covariantly in Subsection \ref{2:subsecQCov}, and we review the canonical approach in Subsection \ref{2:subsecQCan}.

\subsection{Covariant formulation}
\label{2:subsecQCov}

We shall start our discussion by constructing the algebra of quantum observables. Then, we shall introduce states as maps from the algebra of observables to the complex numbers. Finally, we shall comment on the \textit{locality} and \textit{covariance} properties of relativistic fields.

\subsubsection{Quantum observables}

The quantisation procedure of Dirac consists of promoting the Poisson algebra of observables, $(\Sol^*, \{\cdot, \cdot \})$, to a quantum $^*$\textit{-algebra}, with a commutator, $[\cdot, \cdot]$, promoted from the Poisson bracket, and a set of additional properties, defined axiomatically, depending on the details of the field. 

\begin{defn}
An \textit{algebra} $\mathscr{A}$ over a field $F$ is a vector space over $F$ equipped with a bilinear form $\cdot: \mathscr{A} \times \mathscr{A} \to \mathscr{A}$, which defines an associative multiplication. Moreover, an algebra $\mathscr{A}$ is said to be \textit{unital} if there exists an identity element $I \in \mathscr{A}$. Furthermore, it is said to be a $^*$\textit{-algebra}, if there exists an involution $^*: \mathscr{A} \to \mathscr{A}$.\footnote{More generally, an algebra can be defined over a ring, but this is not necessary for our purposes.}
\end{defn}

For a Klein-Gordon field, in the covariant approach, we have that the classical observables of the theory are maps from the solution space to the real numbers, e.g., $f: \Sol \to \mathbb{R}: \phi = Eg \mapsto E(f,g)$, for $f, g \in C_0^\infty(M,\mathbb{R})$. The observables of the theory are equipped with a Poisson bracket $\{f,g\} = -E(f,g)$. See eq. \eqref{2:CovObs} and \eqref{2:CovPoisson}. 

In order to accommodate the involution in the quantum algebra of observables, we will extend the classical space of observables, $\Sol^* = C_0^\infty(M,\mathbb{R})$, to admit complex-valued test functions, by extending the map $f: \Sol \to \mathbb{C}: \phi = Eg \mapsto E(f,g)$ by complex linearity (and anti-linearity). $\Sol^*_\mathbb{C} = C_0^\infty(M,\mathbb{C})$ is the complex extension. We are ready to define the quantum field theory of a Klein-Gordon field:

\begin{defn}
Let $(M,g,\epsilon,t)$ be a globally hyperbolic spacetime. A Klein-Gordon quantum field is a map $\Sol^*_\mathbb{C} \to \mathscr{A}(M): f \mapsto \Phi(f)$, where $\mathscr{A}(M)$ is the neutral Klein-Gordon field algebra. This algebra is generated by elements $\Phi(f)$ satisfying the quantisation axioms:

\begin{enumerate}
\item Linearity: $f \mapsto \Phi(f)$ is a linear map.
\item Hermiticity: $\Phi^*(f) = \Phi(\bar{f})$.
\item Field equation: $\Phi(Pf) = \Phi((\Box - m^2 - \xi \R) f) = 0$.
\item CCR: $\left[\Phi(f), \Phi(g) \right] \doteq \Phi(f) \Phi(g) - \Phi(g) \Phi(f)  = -\ii \hbar E(f,g)I$.
\end{enumerate}
\label{2:KGaxioms}
\end{defn}

In axiom 4 CCR stands for canonical commutator relations and $I$ is the algebra unit. From this point onward, we shall set $\hbar = 1$. The CCR encode the bosonic character of the field. For fermions, the commutator in axiom 4 is replaced by the anti-commutator, usually denoted in the literature by $\{\cdot, \cdot\}$ (not to be confused with the Poisson bracket) or $[\cdot, \cdot]_+$, replacing the negative sign in the definition by a positive sign. For a charged field, axiom 2 needs to be modified. The field dynamics are encoded in axiom 3.

We would like to point out that in the axiomatic approaches to quantum field theory, several axioms may be added to strengthen the definition that we have provided. Of particular importance, the \textit{locally covariant quantum field theory} axioms further require
\begin{enumerate}
\item Isotony: Let $N \subset M$, then $\mathscr{A}(N) \subset \mathscr{A}(M)$
\item Einstein causality: Let $O_1$ and $O_2$ be causally disjoint subsets of $M$, then $[\mathscr{A}(O_1), \mathscr{A}(O_2)] = \{0\}$.
\item Time-slice: Let $\Sigma \subset N \subset M$ be a Cauchy surface, then $\mathscr{A}(N) \cong \mathscr{A}(M)$.
\end{enumerate}

We shall not dwell into the extensive subject of axiomatic quantum field theory, but some commentary is due. The Wightman axioms postulate the first attempt at rigorously defining a quantum field theory in flat spacetime. The subsequent Haag-Kastler axioms first introduced the formalisation of fields as operator algebras, and emphasised the role of locality in quantum field theory \cite{Haag:1992hx}. The axiomatic approach was extended to curved spacetimes by Brunetti, Fredenhagen and Verch, emphasising the locality and covariance of the theory \cite{Brunetti:2001dx}. The flavour of this last construction is much in the spirit of \textit{category theory} \cite{MacLane:1998}, where quantisation is understood as a \textit{functor} between relevant categories and smeared fields are \textit{natural transformations}. Recently, Hollands and Wald took an axiomatic \textit{operator product expansion} point of view for defining quantum field theories in curved spacetimes \cite{Hollands:2008vx}.

The hope, which we shall realise below, is that one can construct field states on which representations of the algebra of observables act as operators. These states, in turn, provide the probabilistic interpretation of quantum field theory. We turn now to this question.

\subsubsection*{Field states}

There is a prodecure, due to Gelfand, Naimark and Segal, known as the \textit{GNS construction}, by which one can construct a Hilbert space and linear operators acting on it out of a $^*$-algebra equipped with an \textit{algebraic state}, by which we mean the following:

\begin{defn}
Let $\mathscr{A}$ be a unital $^*$-algebra. An \textit{algebraic state on} $\mathscr{A}$ is a linear map $\omega: \mathscr{A} \to \mathbb{C}$, satisfying
\begin{enumerate}
\item Normalisation: $\omega(I) = 1$.
\item Positivity: $\omega(A A^*) \geq 0$ for any $A \in \mathscr{A}$.
\end{enumerate}

Further, we require that the state be determined fully by the knowledge of all the field \textit{n-point functions}: Let $(A_1, A_2, \ldots, A_n) \subset \mathscr{A}$ be a collection of $n$ elements in $\mathscr{A}$. Then the state $\omega$ is determined by the knowledge of all the n-point functions of the form $\omega(A_1  A_2  \cdots  A_n)$ for all $n \in \mathbb{N}$.
\end{defn}

\begin{rem}
States that are fully determined by the two-point functions are called {\rm quasi-free} or Gaussian states. Vacuum states and thermal states are quasi-free.
\end{rem}

The idea is to construct a norm for the field algebra, $\mathscr{A}(M)$, out of which the kinematic Hilbert space of the theory can be constructed. This norm should turn the field algebra into a  \textit{Banach $^*$-algebra}.

\begin{defn}
An algebra $\mathscr{A}$ is called a \textit{Banach algebra} if it is an algebra over a normed vector space, with norm $|| \cdot ||$, such that for any $A, B \in \mathscr{A}$, $||A B|| \leq ||A|| ||B||$. Moreover, it is called a \textit{Banach $^*$-algebra} if it has an involution, such that $||A|| = ||A^*||$.
 
A $^*$-algebra is called a \textit{C$^*$-algebra} if it is a Banach $^*$-algebra and it satisfies that, for any $A \in \mathscr{A}$, $||A A^*|| = ||A||^2 = ||A^*||^2$.
\end{defn}

It is clear that $\omega$ does not provide the necessary structure to endow the field algebra with a norm. Indeed, we could realise $\omega$ as an inner product if and only if the strict equality in the positivity condition above held only for $A = 0$. Alas, this is not the case. Yet, algebraic states add structure in the right direction. For example, the following lemma can be shown: 

\begin{lem}
Let $\mathscr{A}$ be a $^*$-algebra and $\omega: \mathscr{A} \to \mathbb{C}$ an algebraic state. Then, $\omega(A B^*) = \overline{\omega(B^* A)}$ and the Cauchy-Schwarz inequality holds, $|\omega(A^*B)|^2 \leq \omega(A^*A) \omega (B^* B)$.
\end{lem}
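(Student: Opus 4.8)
The plan is to prove the two assertions in the lemma directly from the two defining properties of an algebraic state, normalisation and positivity, using the standard trick of applying positivity to cleverly chosen combinations of $A$ and $B$.

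First I would establish the ``conjugate symmetry'' $\omega(AB^*) = \overline{\omega(B^*A)}$. The idea is to apply the positivity axiom $\omega(CC^*) \geq 0$ to elements of the form $C = A + \mu B$ for arbitrary $\mu \in \mathbb{C}$. Expanding $\omega\bigl((A+\mu B)(A+\mu B)^*\bigr) = \omega(AA^*) + \bar\mu\,\omega(AB^*) + \mu\,\omega(BA^*) + |\mu|^2\omega(BB^*)$ and noting that this expression is real (since $\omega(CC^*) \geq 0$ is in particular real) forces the imaginary parts to cancel for all $\mu$; choosing $\mu = 1$ and $\mu = \ii$ in turn yields $\omega(AB^*) + \omega(BA^*) \in \mathbb{R}$ and $\ii\bigl(\omega(BA^*) - \omega(AB^*)\bigr) \in \mathbb{R}$, which together give $\omega(AB^*) = \overline{\omega(BA^*)}$. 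Relabelling (or substituting $A \to AB^*$-type replacements carefully) gives the stated form $\omega(AB^*) = \overline{\omega(B^*A)}$; in fact the cleanest route is to prove the symmetric identity $\omega(CD^*) = \overline{\omega(DC^*)}$ and then note the lemma's statement follows by the substitution $C = A$, $D^* = B^*$, i.e. $D = B$ — so one must simply be a little careful that the stars sit where the statement wants them, which is a matter of how one names the dummy variables rather than a mathematical difficulty.

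Next I would prove the Cauchy--Schwarz inequality $|\omega(A^*B)|^2 \leq \omega(A^*A)\,\omega(B^*B)$. Here the standard approach is to consider the non-negative quantity $\omega\bigl((\lambda A + B)^*(\lambda A + B)\bigr) \geq 0$ for all $\lambda \in \mathbb{C}$, expand it as $|\lambda|^2\omega(A^*A) + \bar\lambda\,\omega(A^*B) + \lambda\,\omega(B^*A) + \omega(B^*B) \geq 0$, use the symmetry relation just established to write $\omega(B^*A) = \overline{\omega(A^*B)}$, and then optimise over $\lambda$. If $\omega(A^*A) > 0$, setting $\lambda = -\omega(B^*A)/\omega(A^*A)$ gives the result immediately; if $\omega(A^*A) = 0$, one argues that the linear term $\bar\lambda\,\omega(A^*B) + \lambda\,\overline{\omega(A^*B)}$ must vanish identically in $\lambda$ (otherwise the expression could be made negative), forcing $\omega(A^*B) = 0$ and the inequality holds trivially.

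The main obstacle — and it is a mild one — is bookkeeping with the involution and the ordering of products inside $\omega$: unlike an inner product on a pre-Hilbert space, here $\omega(XY)$ and $\omega(YX)$ need not agree, so one must track exactly which factor carries the star and in which order the product is written, and make sure every application of positivity is to something genuinely of the form $\omega(CC^*)$ (or $\omega(C^*C)$, which is also non-negative since $(C^*)^* = C$). There is also a small subtlety in the degenerate case $\omega(A^*A)=0$ where the naive ``complete the square'' argument divides by zero; this is handled by the separate linear-term argument sketched above. Neither point requires any structure beyond the two state axioms and the algebra axioms (associativity, the involution being an anti-automorphism), all of which are available from the definitions given earlier in the text.
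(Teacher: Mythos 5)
Your argument is the standard one and is essentially what the paper intends: the paper only sketches the proof, invoking positivity and the ``projective decomposition of $A$ into orthogonal and parallel components to $B$'', which is exactly your optimal choice of $\lambda$. Both halves of your proposal --- conjugate symmetry extracted from the reality of $\omega(CC^*)$ for $C=A+\mu B$ with $\mu\in\{1,\ii\}$, and Cauchy--Schwarz from the nonnegative quadratic in $\lambda$, with the degenerate case $\omega(A^*A)=0$ handled by the vanishing of the linear term --- are correct and complete.

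The one place you should not wave your hands is the final ``relabelling''. What your computation actually yields is $\omega(AB^*)=\overline{\omega(BA^*)}$ (equivalently, for the other pairing, $\omega(A^*B)=\overline{\omega(B^*A)}$). The form printed in the lemma, $\omega(AB^*)=\overline{\omega(B^*A)}$, cannot be reached by renaming dummy variables: it would require $\omega(BA^*)=\omega(B^*A)$ for all $A,B$, and already at $A=I$ it asserts that $\omega(B^*)$ is real for every $B$, which fails for a general state on a noncommutative $^*$-algebra. The printed statement is simply a typo that mixes the two conventions for the sesquilinear form; your proof establishes the correct identity, and you should state that explicitly rather than claim the stars can be shuffled into place by relabelling. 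Everything else is fine.
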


The proof makes use of the positivity of the state and one proceeds with the standard projective decomposition of the vector $A$ into orthogonal and parallel components to $B$ to attain the Cauchy-Schwarz inequality.

The next mathematical structure that we need to introduce is the notion of an \textit{algebraic representation}.

\begin{defn}
Let $\mathscr{A}$ be a unital $^*$-algebra, a \textit{representation} of $\mathscr{A}$ is a map $\pi: \mathscr{A} \to \mathcal{L}(\mathcal{H})$, where $\mathcal{H}$ is a Hilbert space and $\pi$ is a $^*$\textit{-homomorphism} between the $^*$-algebra and the space of \textit{linear operators on} $\mathcal{H}$, \textit{i.e.}, a linear map that satisfies, for any $A, B \in \mathscr{A}$, $\pi(AB) = \pi(A) \pi(B)$ and $\pi\left(A^*\right) = \pi(A)^*$. 

Moreover, if $\mathscr{A}$ is a unital C$^*$-algebra, the representation maps elements of the algebra to \textit{bounded linear operators}, \textit{i.e.}, $\pi: \mathscr{A} \to \mathcal{BL}(\mathcal{H})$, such that, for any $v \in \mathcal{H}$ and any $A \in \mathscr{A}$, $||\pi(A)v||_\mathcal{H} \leq C||v||_\mathcal{H}$, where $C$ is a constant.
\label{2:Repn}
\end{defn}

Morally speaking, the GNS construction endows a $^*$-algebra with a Banach norm, and by Definition \ref{2:Repn}, a representation $\pi \mathscr{A}: \mathcal{H} \to \mathcal{H}$ realises the construction as operators on the Hilbert space of the theory. Let us state this precisely:

\begin{thm}[Gelfand-Naimark-Segal construction]
Let $\mathscr{A}$ be a $^*$-algebra and $\omega: \mathscr{A} \to \mathbb{C}$ an algebraic state. Then, there exists a quadruple $(\mathcal{H}_\omega, \mathcal{D}_\omega, |0\rangle_\omega, \pi_\omega)$, such that $\mathcal{H}_\omega$ is a Hilbert space, $\mathcal{D}_\omega \subset \mathcal{H}_\omega$ is a dense subspace, $\pi_\omega: \mathscr{A} \to \mathcal{L}(\mathcal{H}_\omega)$ is a representation of the algebra elements as linear operators on the Hilbert space, $|0\rangle_\omega$ is a \textit{cyclic vector}, \textit{i.e.}, a vector such that $\{ \pi_\omega(A) |0\rangle_\omega, \text{ for all } A \in \mathscr{A} \} = \mathcal{D}_\omega$, and $\omega(A) = \langle 0 | \pi_\omega(A) |0\rangle_\omega$ for every $A \in \mathscr{A}(M)$.
\label{2:GNS}
\end{thm}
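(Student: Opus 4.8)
The plan is to realize the Hilbert space as a quotient-and-completion of the algebra $\mathscr{A}$ itself, using the state $\omega$ to manufacture a pre-inner product. First I would define a sesquilinear form on $\mathscr{A}$, viewed as a complex vector space, by $\langle A, B\rangle_\omega \doteq \omega(A^* B)$. The lemma preceding the theorem guarantees that this form is Hermitian-symmetric and that the Cauchy–Schwarz inequality $|\omega(A^* B)|^2 \le \omega(A^* A)\,\omega(B^* B)$ holds, while the positivity axiom for $\omega$ gives $\langle A, A\rangle_\omega = \omega(A^* A) \ge 0$. So $\langle\cdot,\cdot\rangle_\omega$ is a positive semidefinite Hermitian form — the only defect is a possibly nontrivial null space.

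Second I would isolate that null space: set $\mathcal{N}_\omega \doteq \{ A \in \mathscr{A} : \omega(A^* A) = 0 \}$. Using Cauchy–Schwarz one checks $\mathcal{N}_\omega = \{ A : \omega(B^* A) = 0 \text{ for all } B \in \mathscr{A}\}$, which exhibits $\mathcal{N}_\omega$ as a \emph{left ideal} of $\mathscr{A}$ (if $A \in \mathcal{N}_\omega$ and $C \in \mathscr{A}$, then $\omega((CA)^*(CA)) = \omega(A^* C^* C A) = 0$ by Cauchy–Schwarz applied to the pair $(A, C^*CA)$, since $\omega(A^*A)=0$). Then $\mathcal{D}_\omega \doteq \mathscr{A}/\mathcal{N}_\omega$ carries a genuine (positive definite) inner product inherited from $\langle\cdot,\cdot\rangle_\omega$; I would write $[A]$ for the class of $A$ and define $\mathcal{H}_\omega$ to be the Hilbert-space completion of $\mathcal{D}_\omega$. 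The cyclic vector is $|0\rangle_\omega \doteq [I]$, which is well-defined since $\mathscr{A}$ is unital, and it is normalized because $\langle [I],[I]\rangle_\omega = \omega(I^* I) = \omega(I) = 1$ by the normalization axiom.

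Third I would define the representation by left multiplication: $\pi_\omega(A)[B] \doteq [AB]$. Here one must check (i) well-definedness, which is exactly the statement that $\mathcal{N}_\omega$ is a \emph{left} ideal, so $B \in \mathcal{N}_\omega \Rightarrow AB \in \mathcal{N}_\omega$; (ii) that $\pi_\omega$ is an algebra homomorphism, immediate from associativity, $\pi_\omega(AB)[C] = [ABC] = \pi_\omega(A)\pi_\omega(B)[C]$; (iii) the $^*$-property, $\langle [C], \pi_\omega(A)[B]\rangle_\omega = \omega(C^* A B) = \omega((A^* C)^* B) = \langle \pi_\omega(A^*)[C], [B]\rangle_\omega$, so $\pi_\omega(A)^* \supset \pi_\omega(A^*)$ as operators on the dense domain $\mathcal{D}_\omega$ — this is the appropriate statement at the level of a general $^*$-algebra, where operators need not be bounded. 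Cyclicity of $|0\rangle_\omega$ is then automatic: $\pi_\omega(A)|0\rangle_\omega = [AI] = [A]$, so $\{\pi_\omega(A)|0\rangle_\omega : A \in \mathscr{A}\} = \{[A] : A\in\mathscr{A}\} = \mathcal{D}_\omega$, which is dense by construction. Finally the expectation-value identity reads $\langle 0 | \pi_\omega(A) | 0\rangle_\omega = \langle [I], [A]\rangle_\omega = \omega(I^* A) = \omega(A)$, as required.

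The main obstacle — really the only nontrivial point — is verifying that $\mathcal{N}_\omega$ is a left ideal, i.e. that $\pi_\omega$ is well-defined; this is where positivity of $\omega$ is used in an essential way, via the Cauchy–Schwarz lemma applied to $\omega(A^* C^* C A)$ with $A$ in the null space. Everything else (sesquilinearity, the homomorphism property, the $^*$-property on the dense domain, normalization, cyclicity) is a routine unwinding of definitions and associativity. One should also remark that when $\mathscr{A}$ is merely a $^*$-algebra rather than a $C^*$-algebra, the operators $\pi_\omega(A)$ need not be bounded, so $\mathcal{H}_\omega$ is the completion of $\mathcal{D}_\omega$ and the $\pi_\omega(A)$ are defined a priori only on $\mathcal{D}_\omega$, with $\mathcal{D}_\omega$ an invariant common dense domain — this matches the quadruple $(\mathcal{H}_\omega, \mathcal{D}_\omega, |0\rangle_\omega, \pi_\omega)$ stated in the theorem.
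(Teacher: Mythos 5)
Your proposal is correct and follows essentially the same route as the paper's proof: build the pre-inner product $\omega(A^*B)$, show the null space is a left ideal via Cauchy--Schwarz, quotient, complete, represent by left multiplication, and take $[I]$ as the cyclic vector. If anything, you are slightly more careful than the paper on the adjoint relation holding only on the dense domain $\mathcal{D}_\omega$ for a general (non-$C^*$) $^*$-algebra, which is a welcome refinement rather than a divergence.
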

\begin{proof}
We start by endowing the algebra $\mathscr{A}$ with an inner product. Define the pairing $(A,B)_\omega \doteq \omega(A^* B)$ for any $A, B \in \mathscr{A}$. We claim that the set defined by
\begin{equation}
\mathscr{I}_\omega(M) \doteq \left\{A \in \mathscr{A}(M) \text{ such that } (A,A)_\omega = 0 \right\}
\end{equation}
is a \textit{left ideal of} $\mathscr{A}$, \textit{i.e.}, it satisfies that, for any $A \in \mathscr{A}$ and any $B \in \mathscr{I}_\omega$, the product $AB \in \mathscr{I}_\omega$. This is verified by the Cauchy-Schwarz inequality,
\begin{equation}
(AB,AB)_\omega = \omega\left(B^*A^*AB \right) \leq \omega\left(B B^*\right)\omega\left(\left(A^*AB\right)^*A^*AB \right) = 0.
\end{equation}

Moreover, $\mathscr{I}_\omega$ form a vector space, \textit{i.e.}, it is closed under scalar multiplication and vector addition, \text{i.e.}, for $A, B \in \mathscr{I}_\omega$, and $a,b \in \mathbb{C}$, $(a A + b B, a A + b B)_\omega = 0$. Again, this is guaranteed by the Cauchy-Schwarz inequality. Similarly, $\mathscr{I}_\omega^*$ is a right ideal and a vector space.

Next, we we define the vector space quotient $\mathcal{D}_\omega = \mathscr{A}/\mathscr{I}_\omega$ with elements $[A]$ defined by the equivalence relation $A \sim A'$ if $A = A' + B$ with $B \in \mathscr{I}_\omega$. We construct the inner product space $(\mathcal{D}_\omega, ([A],[B])_\mathcal{D})$ by endowing the quotient space $\mathcal{D}_\omega$ with the scalar product $([A],[B])_\mathcal{D} = (A,B)_{\omega}$. Finally, the Hilbert space completion\footnote{See, for example \cite{Reed:1981}.} of $([A],[B])_\mathcal{D} = (A,B)_{\omega}$ yields the kinematic Hilbert space of the theory, $\mathcal{H}_\omega$.

The algebra elements act on the Hilbert space $\mathcal{H}_\omega$ through the map $\pi_\omega: \mathscr{A} \to \mathcal{L}(\mathcal{H}_\omega)$,
\begin{equation}
\pi\left( A \right) \left[B \right] = \left[AB\right],
\end{equation}
which is a $^*$-homomorphism by the associative property of the algebra. Thus, $\pi$ is a representation of $\mathscr{A}$ into a set of (generally unbounded) linear operators that are defined on $\mathcal{D}_\omega \subset \mathcal{H}_\omega$. Finally, we can choose the cyclic vector using the unit element of the algebra, $|0\rangle_\omega = [I]$, such that
\begin{equation}
\omega(A) = ([I],[A])_\omega = ([I],\pi(A)[I])_\omega \doteq \langle 0 | \pi_\omega(A) 0 \rangle_\omega.
\end{equation}

This concludes the GNS construction with quadruple $(\mathcal{H}_\omega, \mathcal{D}_\omega, |0\rangle_\omega, \pi_\omega)$.
\end{proof}

To perform the GNS construction of the quantum field, we can take advantage of the Weyl formulation of the Klein-Gordon field.

\begin{defn}
The Klein-Gordon Weyl algebra,  $\mathscr{W}(M)$, is the unique, unital $^*$-algebra generated by elements $f \mapsto W(Ef)$, with $f \in \Sol^*_\mathbb{C}$, together with the relations
\begin{enumerate}
\item $W(0) = I$.
\item $W^*(Ef) = W(-Ef)$.
\item $W(Ef) W\left(Eg\right) = \exp\left(\frac{\ii}{2} E(f,g)\right) W\left( E(f+g) \right)$ for any $Ef, Eg \in \Sol_\mathbb{C}$.
\end{enumerate}
\label{2:Weyl}
\end{defn}

While the Weyl algebra does not yield the physical intuition that the field algebra does, it has the advantage that, upon choosing an algebraic state, it can be realised as a space of bounded linear operators acting on a Hilbert space using the GNS construction, which simplify the domain issues in the Hilbert space.

We define a state $\omega: W(Ef) \mapsto \mathbb{C}$. Then
\begin{equation}
(W(Ef),W(Eg))_\omega = \omega(W^*(Ef),W(Eg)) = \exp\left(-\frac{\ii}{2} E(f,g)\right) \omega \left( W\left( E(-f+g) \right) \right)
\end{equation}
already defines an inner product because the Weyl algebra is unitary in the sense that $(W(Ef),W(Ef))_\omega = 1$ for all $f \in \Sol^*_\mathbb{C}$. The Hilbert space, $\mathcal{H}_\omega$,is the completion of the Banach $^*$-algebra $(\mathscr{W}(M), (\cdot,\cdot)_\omega)$. A representation $\pi: \mathscr{W}(M) \to \mathcal{BL}(\mathcal{H}_\omega)$ is given by the left action of algebra elements. It is bounded because the Weyl algebra is unitary. The cyclic vector $|0\rangle_\omega = I$ defines
\begin{equation}
\langle 0| \pi(W(Ef)) | 0\rangle_\omega = \omega(W(Ef)).
\end{equation}

The two-point function takes the neat form
\begin{equation}
\langle 0| \pi(W(Ef))\pi(W(Eg)) | 0\rangle_\omega = \exp\left(-\frac{\ii}{2} E(f,g)\right) \omega \left( W\left( E(-f+g) \right) \right).
\end{equation}

Finally, we obtain a representation, $\pi'$, on the field algebra by using the relation
\begin{equation}
\pi(W(Ef)) = \exp\left(-\ii \pi'\Phi(f) \right).
\end{equation}

We emphasise that the GNS construction starts from considering the algebra of observables associated to a spacetime region and a linear functional from this algebra to the complex numbers. In Minkowski space, for example, selecting the algebra of observables of the whole spacetime and imposing Poincar\'e covariance, one arrives to the Minkowski vacuum state. If one considers simply, say, the right wedge of the Minkowski spacetime, this is a globally hyperbolic spacetime on its own right, and the GNS construction can be carried out starting from the algebra of observables confined to this wedge, whereby one obtains the boost-invariant Rindler vacuum by a similar procedure.

\subsection{Canonical formulation}
\label{2:subsecQCan}

The canonical formulation of quantum field theories has the advantage that the approach is more constructive. We start by defining the Klein-Gordon canonical quantum field theory by applying the Dirac quantisation to the canonical Poisson algebra of observables, and we then proceed to construct the field states.

\subsubsection*{Quantum observables}

In the canonical approach, the algebra of observables is constructed from the quantum field configurations and the momenta on a fixed spacelike slice. This leads to the so-called equal-time quantum field algebra.

\begin{defn}
Let $(M,g,\epsilon,t)$ be a globally hyperbolic spacetime and $\Sigma_t \subset M$ a Cauchy hypersurface of $M$ defined by constant $t$. The Klein-Gordon quantum observables are maps $\Sol^*_\mathbb{C} \to \mathscr{A}(\Sigma_t)$, where $\mathscr{A}(\Sigma_t)$ is the algebra generated by elements of the form $\varPhi(f)$ and $\Pi(g)$ satisfying the canonical quantisation axioms:

\begin{enumerate}
\item Linearity: $f \mapsto \varPhi(f)$ and $g \mapsto \Pi(g)$ are linear maps.
\item Hermiticity: $\varPhi^*(f) = \varPhi(\bar{f})$ and $\Pi^*(g) = \Pi(\bar{g})$.
\item Canonical commutation relations: 
\begin{subequations}
\begin{align}
\left[\varPhi(f), \Pi(g) \right]& = \ii \left(\int_{\Sigma_t} d\vol_{\Sigma_t} fg\right) I &\text{ on } \Sigma_t, \\
\left[\varPhi(f), \varPhi(g) \right]& = 0 &\text{ on } \Sigma_t, \\
\left[\Pi(f), \Pi(g) \right] & = 0 &\text{ on } \Sigma_t.
\end{align}
\end{subequations}
\end{enumerate}
\label{2:KGCanonicalaxioms}
\end{defn}

The usual \textit{creation} and \textit{annihilation operators} are defined in terms of the field and momenta operators. For the moment, they should be regarded as an abstract CCR algebra, and they will be realised as operator-valued distributions once we define the Hilbert space of the theory. Let $a \doteq (-\ii \varPhi + \Pi)/\sqrt{2}$ and $a^* \doteq (\ii \varPhi + \Pi)/\sqrt{2}$ be the annihilation and creation operators respectively, then the CCR reads
\begin{subequations}
\begin{align}
\left[a(f), a^*(g) \right]& = \left(\int_{\Sigma_t} d\vol_{\Sigma_t} fg\right) I &\text{ on } \Sigma_t, \\
\left[a(f), a(g) \right] & = 0 &\text{ on } \Sigma_t, \\
\left[a^*(f), a^*(g) \right] & = 0 &\text{ on } \Sigma_t.
\end{align}
\label{2:CCRaa*}

\end{subequations}

We proceed to construct the states of the theory.

\subsubsection*{Field states}

As we have seen above, the classical space of solutions, $\Sol$ is an infinite-dimensional symplectic manifold with symplectic structure, $\Omega$, defined by eq. \eqref{2:CanSymp}. The idea is, as usual, to complexify the space of solutions, and construct a positive inner product, by selecting a ``space of positive frequency", that leads to the Hilbert space of the theory.\footnote{Introducing a complex structure and selecting positive frequency solutions is not the only way in which the solution space can be endowed with an inner product. There exist inner products that may not be attainable from our construction. See, for example, the discussion in \cite{Wald:1995yp}.}

A complexification of Sol is attained by introducing a complex structure $J:\Sol \times \Sol \to \Sol \times \Sol$, and identifying the complexified space of solutions as $\Sol_\mathbb{C} \cong \Sol \times \Sol$. This procedure is non-unique. Indeed, the choice of a positive frequency space of solutions in the mode-sum picture comes with the choice of a complex structure, whereby the Lie-derivation of a plane-wave field mode with respect to a timelike vector field induces the action of a complex structure on the field mode.\footnote{See Chapter \ref{ch:BH}.} A selection of the complex structure can be made following physically-motivated criteria, for example, the energy minimisation condition, proposed by Ashtekar and Magnon, is suitable for stationary spacetimes where energy along spacelike slices is conserved \cite{Ashtekar:1975zn}. In general situations, for non-stationary spacetimes, the complex structure need not stem from a notion of positive frequency. See \textit{e.g.} \cite{Deutsch:1984jw} also in the context of energy minimisation.

We now introduce the \textit{holomorphic} and \textit{antiholomorphic} decomposition of the complexified space of solutions. The complex structure, on $\Sol_\mathbb{C}$, $J:\Sol_\mathbb{C} \to \Sol_\mathbb{C}$ decomposes $\Sol_\mathbb{C}$ into its holomorphic and antiholomorphic parts as follows: We perform a partition of unity , $I = I_+ + I_-$, where $I_+ \doteq (I-\ii J)/2$ and $I_- \doteq (I + \ii J)/2$. Then, the space of solutions decomposes into $\Sol_\mathbb{C} \cong \Sol_+ \times \Sol_-$, where $\Sol_+ \doteq \ker (J-\ii I)$ and $\Sol_- \doteq \ker (J + \ii I)$. 

The complex conjugation map $\phi \mapsto \overline{\phi}$ is a bijection between $\Sol_+$ and $\Sol_-$, hence, it is customary to identify $\Sol_- = \overline{\Sol_+}$. Under this identification, the complex structure defines an anti-involution in $\Sol_\mathbb{C}$, $J(\phi, \bar{\phi}) = (\, \ii \phi, \, \overline{\ii \phi} \,)$, for $\phi \in \Sol_+$ and $\overline{\phi} \in \Sol_-$.

The structure that we have introduced defines a complexified \textit{pseudo-K\"ahler space}.

\begin{defn}
A \textit{pseudo-K\"ahler space} is the quadruple $(V, \Omega, J, \nu)$, where
\begin{enumerate}
\item $V$ is a real vector space,
\item $\Omega$ is a symplectic form,
\item $J$ is an anti-involution,
\item $\nu(u,v) \doteq \Omega (u, Jv)$ is a non-degenerate symmetric form, for $u, v \in V$
\end{enumerate}

If $\nu$ is positive definite, then $(V, \Omega, J, \nu)$ is a \textit{K\"ahler space}.
\end{defn}

\begin{rem}
Item 4 in the definition above is understood weakly in infinite-dimensional spaces, such as field theory, because $\Omega$ is weakly non-degenerate.
\end{rem}

\begin{defn}
A \textit{complexified (pseudo-) K\"ahler} space is the quadruple $(V_\mathbb{C}, \Omega_\mathbb{C}, J, \nu_\mathbb{C})$, consisting of the complex vector space $V_\mathbb{C} \cong V \times \overline{V}$, the complex linear extension of $\Omega$, a complex structure $J$ and the complex linear extension of $\nu$, together with
\begin{enumerate}
\item a charged symplectic form, $(u,v) \mapsto \Omega_\mathbb{C}(\bar{u}, v)$, for $u, v \in V_\mathbb{C}$,
\item a Hermitian form $(u,v) \mapsto \nu_\mathbb{C}(\bar{u}, v) \doteq \Omega_\mathbb{C}(\bar{u}, Jv)$, for $u, v \in V_\mathbb{C}$.
\end{enumerate}
\end{defn}

Our complexified pseudo-K\"ahler space of solutions consists of the quadruple $(\Sol_\mathbb{C}, \Omega_\mathbb{C}, J, \nu_\mathbb{C})$, where the charged symplectic structure and the Hermitian form are defined by
\begin{subequations}
\begin{align}
\Omega_\mathbb{C}(\overline{(\varphi_1,\varpi_1)},(\varphi_2,\varpi_2)) & = \int_{\Sigma_t} \! d^3\x \left(\overline{\pi_1} \varphi_2 - \pi_2 \overline{\varphi_1} \right), \\
\Omega_\mathbb{C}(\overline{(\varphi_1,\pi_1)},J[(\varphi_2,\pi_2)]) & = \int_{\Sigma_t} \! d^3\x \left(\overline{\pi_1} (J\varphi_2) - (J\pi_2) \overline{\varphi_1} \right),
\end{align}
\end{subequations}
\\
respectively. 
An inner product can be defined on the subspace $\Sol_+$ from the Hermitian form defined above. Let $\phi_1, \phi_2 \in \Sol_+$, then it can be verified (by integration by parts) that
\begin{equation}
(\phi_1,\phi_2)_+ \doteq -\Omega_\mathbb{C}(\overline{(\varphi_1,\pi_1)},J[(\varphi_2,\pi_2)]) = -\ii \int_{\Sigma_t} \! d^3\x \left(\overline{\pi_1} \varphi_2 - \pi_2 \overline{\varphi_1} \right)
\label{2:Inner+}
\end{equation}
is a symmetric, positive-definite bilinear form. We call $(\Sol_+, (\cdot, \cdot)_+)$ the positive frequency space of solutions.\footnote{In full notation, $(\Sol_+, \Omega_\mathbb{C}, J, \nu_\mathbb{C})$ is a complex K\"ahler space, and we have defined the inner product \eqref{2:Inner+} by $\nu_\mathbb{C}$.}

The one-particle Hilbert space of the theory, $\mathcal{H}$ is the Hilbert space completion of the inner product space $(\Sol_+, (\cdot, \cdot)_+)$. The Hilbert space of the theory is the \textit{symmetric Fock space}, $\mathscr{F}_\text{s}(\mathcal{H}) = \oplus_{n=0}^\infty \mathcal{H}^{\odot n}$, where $\odot$ is a symmetrised tensor product and $\mathcal{H}^0 \doteq \mathbb{C}$ by convention.

The creation and annihilation operators, obeying the relations \ref{2:CCRaa*}, are realised as operator-valued distributions on $\mathscr{F}_\text{s}(\mathcal{H})$. Let $\Psi \in \mathscr{F}_\text{s}(\mathcal{H})$ be an element of the symmetric Fock space, represented in abstract index notation by
\begin{equation}
\Psi = (\psi, \psi^{a_1}, \psi^{(a_1 a_2)}, \ldots, \psi^{(a_1 a_2 \ldots a_n)}, \ldots),
\end{equation}
where all, but finitely many of the wavefunctions $\psi$ vanish.
then the action of the creation and annihilation operators is
\begin{subequations}
\begin{align}
a(\overline{f}) \Psi & = (\overline{f}_a \psi^a, \sqrt{2} \, \overline{f}_a \psi^{(a a_1)}, \sqrt{3} \, \overline{f}_a \psi^{(a a_1 a_2)}, \ldots), \\
a^*(f) \Psi & = (0, f^{a_1}\psi, \sqrt{2} \, f^{(a_1}\psi^{a_2)}, \sqrt{3} \, f^{(a_1}\psi^{a_2 a_3)}, \ldots).
\end{align}
\end{subequations}
\\
where the contractions are defined by the inner product \eqref{2:Inner+}, $\overline{f}_a \psi^a = (f, \psi)_+$. The vacuum state of the theory is defined by $a(\bar{f}) |0\rangle = 0$.

In the case of fermions, the construction differs in that the canonical space of solutions carries an inner product from the start, that can be completed into the Hilbert space of the theory. The Fock space is then constructed by taking anti-symmetric tensor products of the Hilbert space, yielding the \textit{antisymmetric Fock space} $\mathscr{F}_\text{a}(\mathcal{H})$.

\subsection{Physical states}
\label{2:subsecStates}

Physical states satisfy an important property known as the Hadamard condition, which ensures that the $n$-point functions of the observables have the correct ultraviolet behaviour. The idea is that, at sufficiently short scales, states should \textit{resemble the Minkowski state} in a precise way. In order to state the condition, we introduce Synge's world function, the half-square-geodesic distance between two points on a spacetime.

\begin{defn}
Let $(M,g)$ be a spacetime and let $\mathsf{x}_1, \mathsf{x}_2 \in M$. We call $N(\mathsf{x}_2)$ a \textit{geodesically convex neighbourhood of} $\mathsf{x}_2$, defined by the set of points connected to $\mathsf{x}_2$ by a unique geodesic. Let $\mathsf{x}_1 \in N(\mathsf{x}_2)$ and $\lambda$ be an affine parameter along the geodesic connecting $\mathsf{x}_1$ and $\mathsf{x}_2$, we define the Synge world function by
\begin{equation}
\sigma(\mathsf{x}_2,\mathsf{x}_1) = \frac{1}{2} (\lambda_2 - \lambda_1) \int_{\lambda_1}^{\lambda_2} \! d \lambda \, g_{ab}(\mathsf{x}) \frac{d\mathsf{x}^a}{d\lambda} \frac{d\mathsf{x}^b}{d\lambda},
\label{2:Synge}
\end{equation}
where $\mathsf{x}_1 = \mathsf{x}(\lambda_1)$ and $\mathsf{x}_2 = \mathsf{x}(\lambda_2)$.
\end{defn}

\begin{rem}
Along geodesic trajectories, $g_{ab}(\mathsf{x}) (d\mathsf{x}^a/d\lambda) (d\mathsf{x}^b/d\lambda)$ is a constant of motion. In particular, it is equal to $-1$ if the geodesic is timelike with proper time affine parameter ($\lambda = \tau$), $0$ if the geodesic is null and $1$ if the geodesic is spacelike with proper distance affine parameter ($\lambda = s$). Then,
\begin{subequations}
\begin{align}
\sigma\left(\mathsf{x}(\tau),\mathsf{x}(\tau')\right)  & = -\frac{1}{2} \left(\tau - \tau'\right)^2 & \text{ for timelike geodesics,}\\
\sigma\left(\mathsf{x}(\tau),\mathsf{x}(\tau')\right) & = 0 & \text{ for null geodesics,}\\
\sigma\left(\mathsf{x}(\tau),\mathsf{x}(\tau')\right) & = + \frac{1}{2} \left(s - s'\right)^2 & \text{ for spacelike geodesic.}
\end{align}
\label{2:sigma}
\end{subequations}
\end{rem}

We are ready to define the Hadamard property. We will restrict our attention to quasi-free states, \textit{i.e.}, states that are fully defined by the two-point function.

\begin{defn}
Let $\Phi$ be a scalar field on $(M,g)$, a spacetime of dimension $n$. A quasi-free field state $|\phi\rangle$ is called \textit{Hadamard} if its two-point function is of the form
\begin{equation}
\langle \phi | \Phi(\mathsf{x})\Phi(\mathsf{x}') | \phi \rangle = H(\mathsf{x},\mathsf{x}') + W(\mathsf{x},\mathsf{x}')
\end{equation}
for $\mathsf{x}' \in N(\mathsf{x})$ as $\mathsf{x} \to \mathsf{x}'$, where $W$ is a regular, state-dependent biscalar and $H$ is a state-independent bidistribution, called the \textit{Hadamard parametrix}, and is defined by
\begin{subequations}
\begin{align}
H(\mathsf{x},\mathsf{x}') & = - \frac{1}{4 \pi} \left[U(\mathsf{x},\mathsf{x}') \ln\left(\sigma_\epsilon(\mu^2 \mathsf{x},\mathsf{x}') \right) \right],  \text{ if }n = 2, \\
H(\mathsf{x},\mathsf{x}') & = - \frac{\Gamma(n/2-1)}{2(2 \pi)^{n/2}} \left[\frac{U(\mathsf{x},\mathsf{x}')}{\left(\sigma_\epsilon(\mathsf{x},\mathsf{x}') \right)^{n/2-1}}  + V(\mathsf{x},\mathsf{x}') \ln\left(\mu^2 \sigma_\epsilon(\mathsf{x},\mathsf{x}') \right) \right], \nonumber \\ &  \text{ if }n \neq 2 \text{ is even and} \\
H(\mathsf{x},\mathsf{x}') & = - \frac{\Gamma(n/2-1)}{2(2 \pi)^{n/2}}\left[\frac{U(\mathsf{x},\mathsf{x}')}{\left(\sigma_\epsilon(\mathsf{x},\mathsf{x}') \right)^{n/2-1}} \right],  \text{ if }n \text{ is odd.} 
\end{align}
\label{2:Hparametrix}
\end{subequations}

Here, $\sigma_\epsilon$ is the regularised Synge world function, which replaces \eqref{2:sigma}, along timelike geodesics according to the $\ii \epsilon$ prescription
\begin{equation}
\sigma_\epsilon\left( \mathsf{x}(\tau),\mathsf{x}(\tau') \right)   = -\frac{1}{2} \left(\tau - \tau' - \ii \epsilon \right)^2
\label{2:regsynge}
\end{equation}
and eq. \eqref{2:Hparametrix} is understood distributionally as $\epsilon \to 0_+$. The biscalars $U$ and $V$ are regular and have a (non-unique\footnote{See, \textit{e.g.}, \cite{Ottewill:2009uj}.}) asymptotic expansion
\begin{subequations}
\begin{align}
U & = \sum_{i=0}^\infty U_j(\x,\x') \sigma^j(\x,\x') \\
V & = \sum_{i=0}^\infty V_j(\x,\x') \sigma^j(\x,\x')
\end{align}
\end{subequations}
\\
with coefficients given by a covariant Taylor expansion,
\begin{subequations}
\begin{align}
U_j(\x,\x') & = u_(j)(\x) + \sum_{k=1}^\infty \frac{(-1)^k}{k!} u_{(j)a_1 a_2 \ldots a_k}(\x) \sigma^{;a_1}(\x,\x') \ldots \sigma^{;a_k}(\x,\x') \\
V_j(\x,\x') & = v_(j)(\x) + \sum_{k=1}^\infty \frac{(-1)^k}{k!} v_{(j)a_1 a_2 \ldots a_k}(\x) \sigma^{;a_1}(\x,\x') \ldots \sigma^{;a_k}(\x,\x'),
\end{align}
\end{subequations}
\\
given by recursion relations that guarantee that the equations of motion be on-shell, supplemented with 
the boundary condition $U_0(\mathsf{x},\mathsf{x}) = 1$. Above, we have used the semicolon notation to denote covariant differentiation. 
\label{2:defHadamard}
\end{defn}

The asymptotic expansions of the biscalars $U$,  $V$ and $W$ together with their boundary conditions have been worked out in detail by D\'ecanini and Folacci \cite{Decanini:2005eg}.

In the language of Green bi-distributions, informally known as Green functions, the two-point function is usually called the \textit{Wightman function}, which we denote by $\mathcal{W}\left(\x, \x'\right) \doteq \langle \phi | \Phi(\mathsf{x})\Phi(\mathsf{x}') | \phi \rangle$. An excellent reference for the use of Green functions in quantum field theory is \cite{Fulling:1989nb}.

A reformulation of the Hadamard condition was worked out by Radzikowski using microlocal analysis, and dubbed a wave front set spectral condition \cite{Radzikowski:1996pa}. Wald and Hollands used the microlocal {\rm wave front spectral condition} to prove the existence and uniqueness of local covariant time ordered products of quantum fields in curved space-time \cite{Hollands:2001nf}. For our purposes, it suffices to use Definition \ref{2:defHadamard}.

\subsection{Thermal states}

We now introduce an important class of quasi-free \textit{stationary} states that, in addition to being Hadamard, satisfy a special property at the level of the two-point function, called the KMS condition, introduced by Haag \cite{Haag:1967sg}, based on the pioneering studies of Kubo, Martin and Schwinger, who studied the mathematical properties of statistical systems in equilibrium \cite{Kubo:1957mj, Martin:1959jp}.

Let us make this statement precise.

\begin{defn}
Let $(M,g)$ be a stationary, globally hyperbolic $n$-dimensional spacetime with global timelike Killing vector $\xi$. The spacetime can be foliated with respect to $\xi$ by introducing the time function $\xi^a \nabla_a t = 1$. We denote points on the $(n-1)$-dimensional hypersurface normal to $g^{ab} \nabla_b t |_{t = t_1}$, where $t_1$ is constant, by $\x_1 \in \Sigma_{t_1}$. Let $|\psi \rangle$ be a quasi-free state of positive frequency with respect to $\xi$. 

The state is $|\psi \rangle$ is \textit{stationary with respect to} $\xi$ if the Wightman function of $|\psi \rangle$ satisfies the distributional relation $\mathcal{W}_\psi \left(\left(t_1,\x_1 \right), \left(t_2,\x_2 \right)\right) = \mathcal{W}_\psi \left(\left(t_1-t_2,\x_1 \right), \left(0,\x_2 \right)\right)$. The notation $\mathcal{W}_\psi \left(t_1-t_2; \x_1, \x_2 \right) \doteq \mathcal{W}_\psi \left(\left(t_1-t_2,\x_1 \right), \left(0,\x_2 \right)\right)$ is standard.
\end{defn}

We are ready to give the definition of a KMS state. From now on we set Boltzmann's constant to be $k_\text{B}=1$.

\begin{defn}
\label{2:defKMS}
Let $(M,g)$ be a stationary, globally hyperbolic spacetime. Let $|\psi \rangle$ be a stationary, quasi-free state with Wightman function (mapping pairs of spacetime points to) $\mathcal{W}_\psi(s; \x_1,x_2)$, with $s = t_1 - t_2$. Furthermore, let $\mathcal{W}_\psi$ be holomorphic on the complex strip $S = \left\{s: -\beta < Im(s) < 0 \right\}$ of the complex $s$-plane. We call $|\psi \rangle$ a \textit{thermal KMS state temperature} $T = 1/\beta$, where $\beta > 0$, if it satisfies the \textit{KMS condition}, 
\begin{equation}
\mathcal{W}_\psi(s; \x_1,x_2) = \mathcal{W}_\psi(-(s- \ii \beta); \x_1,x_2).
\label{2:KMS}
\end{equation}
\end{defn}

The KMS condition describes thermal states in the sense that it is satisfied by statistical states that are defined by thermal partition functions. In quantum field theory we take eq. \eqref{2:KMS} as the definition of the KMS condition for quasi-free, stationary states. 

\subsection{Conformal states}
\label{2sec:CFT}

Conformal transformations have a rich group-theoretic structure, as a subgroup of the diffeomorphism group, and a richer algebraic structure, in terms of vector fields generating local conformal transformations. The situation is especially important in $2$ (Euclidean) or $1+1$ (Lorentzian) dimensions, where all spacetimes, say with zero cosmological constant, are conformally related. We do not attempt to give a detailed account of this immense field of research, but we should mention that in $1+1$ dimensions, the group of orientation-preserving conformal diffeomorphisms is isomorphic to the product of two copies of the group of orientation-preserving diffeomorphisms on the circle, $\text{Conf}(\mathbb{R}^{1,1}) \cong \text{Diff}_+(\mathbb{S}) \times \text{Diff}_+(\mathbb{S})$. As a result, the Lie algebra generating $1+1$ conformal transformations must be infinite dimensional, as vector fields on the circle generate $\text{Diff}_+(\mathbb{S})$.  We refer the reader to \cite{Schottenloher:2008zz, Blumenhagen:2009zz} for more a more extense and precise account of the group- and algebraic-theoretic issues of conformal field theories.

Classical and quantum field theories that enjoy invariance under \textit{conformal transformations} are called \textit{conformal field theories}. They are important because much of the analysis is significantly simpler in conformal field theories, and many toy models with conformal invariance serve to provide physical intuition in more general situations.\footnote{Indeed, in this thesis we work out a variety of quantum phenomena with the aid of conformal methods in $1+1$ dimensions.} 

\begin{defn}
Let $(M,g)$ and $(M,\tilde{g})$ be oriented spacetimes, and let $\psi: M \to M$ be a conformal transformation with \textit{conformal factor} $\Omega$, \textit{i.e.}, such that $\psi^* \tilde{g} = \Omega^2 g$. Let $\phi$ be a classical on-shell field on $(M,g)$. We say that the field theory solved by $\phi$ is a \textit{conformal field theory} if $\psi^* \tilde{\phi} = \Omega^s \phi$, $s \in \mathbb{R}$, and $\tilde{\phi}$ is a solution to the field theory on spacetime $(M,\tilde{g})$. The real number $s$ is called the \textit{conformal weight of} $\phi$.
\end{defn}

Many important physical theories are conformally invariant on their own right. In $3+1$ dimensions, Maxwell's equations are an example. In $1+1$ dimensions, the decomposition $\text{Conf}(\mathbb{R}^{1,1}) \cong \text{Diff}_+(\mathbb{S}) \times \text{Diff}_+(\mathbb{S})$ gives rise to the decoupling of \textit{left}- and \textit{right}-\textit{movers} in field theory, which provides analytic control in many interesting situations.\footnote{Another relevant example in $1+1$ dimensions is string theory. The dynamics of the string is is a $1+1$ conformal field theory. This, in turn, allows string theorists to exploit a great deal of mathematical technology stemming from conformal field theory for stringy applications.}

A relevant example for this work is that of a $(1+1)$-dimensional minimally coupled, massless scalar field theory. Let $(M,g)$ be an $n$-dimensional spacetime, and $\phi$ a Klein-Gordon field on $M$ defined by the theory $P_\text{cf} \phi = (\Box - \xi_\text{cf}\R) \phi = 0$, where $\xi_\text{cf} \doteq (n-2)/[4(n-1)]$. The choice of $\xi_\text{cf}$ is called \textit{conformal coupling} because it renders the field theory into a conformal field theory. Indeed, under a conformal transformation, $\psi$, $P_\text{cf}( \tilde{\phi}) = 0$, with $\psi^*\tilde{\phi} = \Omega^{(2-n)/2} \phi$ if $P_\text{cf} \phi = 0$. This follows from the conformal transformation of the Ricci scalar. See, for example, the classic monograph of Birrell and Davies \cite{Birrell:1982ix} for the detailed form of the Ricci scalar transformation. It follows that in $1+1$ dimensions, the minimally coupled massless field is conformally coupled.

Because the on-shell solutions of a conformal field theory are related under conformal transformations, so is the quantum theory, as per the constructions that we have presented above. In particular, the quantum states of the theory after a conformal transformation are related to the states of the untransformed theory.

Let $\tilde{a} = (-\ii \tilde{\Phi} + \tilde{\Pi})/\sqrt{2})$ be the annihilation operator in the quantum field theory that comes from the quantisation of $P_\text{cf}(\Omega^{(2-n)/2} \phi) = 0$. This annihilation operator defines a vacuum state $\tilde{a} |\tilde{0}\rangle = 0$. We similarly define a vacuum state $|0\rangle$ by the action of the annihilation operator $a = (-\ii \Phi + \Pi)/\sqrt{2})$ coming from the theory $P_\text{cf} \phi = 0$.

The relation between the two vacua can be obtained from the Hilbert space construction that we have defined above. The annihilation operators are related by $\tilde{a}(f) = a\left(\psi^* f\right) = a \left(\Omega^{(2-n)/2} f\right)$. This relation is expressed succintly for quasi-free states, in terms of field operators

\begin{defn}
Let $|\tilde{0}\rangle$ and $|0 \rangle$ be two vacuum states, and let their two-point functions be related by
\begin{equation}
\langle \tilde{0}| \tilde{\Phi}(f) \tilde{\Phi}(g)  |\tilde{0}\rangle = \langle 0| \Phi\left(\Omega^{(2-n)/2} f\right) \Phi\left(\Omega^{(2-n)/2} g\right)  |0 \rangle,
\end{equation} 
then, we call $|\tilde{0}\rangle$ a conformal vacuum with respect to $|0 \rangle$.
\end{defn}


\section{Particles in quantum field theory and detectors}
\label{2sec:Detectors}

An important result of (finite-dimensional) quantum mechanics is that we can choose to describe a quantum system using the triple $(\mathcal{H}_1, \mathscr{A}_1, \pi_1)$, for example, by choosing a state $\omega_1$ and carrying out the GNS construction, or we can choose the triple $(\mathcal{H}_2, \mathscr{A}_2, \pi_2)$, and the two descriptions are equivalent in a precise sense. This is the content of the Stone-von Neumann theorem:

\begin{thm}[Stone-von Neumann]
Let $(V,\Omega)$ be a finite dimensional symplectic space and let $(\mathcal{H}_1, \mathscr{A}, \pi_1)$ and $(\mathcal{H}_2, \mathscr{A}, \pi_2)$ be irreducible, strongly continuous,\footnote{This point is technical, but not central for our discussion.} unitary representations of the Weyl relations. Then there exists a unitary map $U: \mathcal{H}_1 \to \mathcal{H}_2$, such that for any $\pi_1(A(v)):\mathcal{H}_1 \to \mathcal{H}_1$, $U \pi_1(A)(v) U^{-1}:\mathcal{H}_2 \to \mathcal{H}_2$, \textit{i.e.}, $(\mathcal{H}_1, \mathscr{A}, \pi_1)$ and $(\mathcal{H}_2, \mathscr{A}, \pi_2)$ are unitarily equivalent.
\label{2:S-vN}
\end{thm}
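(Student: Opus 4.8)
The plan is to establish the Stone--von Neumann theorem along the classical lines, exploiting the finite dimensionality of $(V,\Omega)$ to reduce to the Heisenberg group and then to an explicit intertwiner. First I would fix a symplectic basis of $(V,\Omega)$, so that $V \cong \mathbb{R}^{2n}$ with the standard symplectic form; this lets me write the Weyl operators as $\pi_k(W(v)) = \exp\left(\ii\,\pi_k(\Phi(v))\right)$ for $v \in V$, where the generators $\pi_k(\Phi(q_j))$ and $\pi_k(\Phi(p_j))$ formally satisfy the canonical commutation relations. The strong continuity hypothesis lets me invoke Stone's theorem to pass between the bounded Weyl operators and their (generally unbounded) self-adjoint generators, so that the CCR hold on a common dense invariant domain.

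Next I would construct a candidate cyclic (``vacuum'') vector in each representation. The standard device is to smear the Weyl operators against a Gaussian weight on $V$: define $P_k \doteq \int_V d^{2n}v\; \mathrm{e}^{-\|v\|^2/4}\,\pi_k(W(v))$, where $\|\cdot\|$ is the Euclidean norm associated to the chosen symplectic basis. A direct computation using the Weyl relation $W(u)W(v) = \mathrm{e}^{\frac{\ii}{2}\Omega(u,v)} W(u+v)$ shows that $P_k$ is, up to a positive normalisation, a projection, and that it is nonzero and has rank one in each irreducible representation. Choosing unit vectors $|0\rangle_k$ in the ranges of $P_1$ and $P_2$, one checks that the ``matrix coefficient'' $v \mapsto \langle 0|_k\, \pi_k(W(v))\, |0\rangle_k$ equals the same Gaussian $\mathrm{e}^{-\|v\|^2/4}$ in both representations, independently of $k$.

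Then I would define the intertwiner. Since each representation is irreducible, $|0\rangle_k$ is cyclic, so vectors of the form $\pi_k(W(v))|0\rangle_k$ span a dense subspace of $\mathcal{H}_k$. Define $U$ on this dense set by $U\,\pi_1(W(v))|0\rangle_1 \doteq \pi_2(W(v))|0\rangle_2$ and extend by linearity. Using the Weyl relations to expand inner products $\langle \pi_1(W(u))0\,|\,\pi_1(W(v))0\rangle_1$ in terms of the common matrix coefficient above, one sees that $U$ preserves inner products on the dense domain, hence extends to an isometry $\mathcal{H}_1 \to \mathcal{H}_2$; the same argument applied to $U^{-1}$ (built symmetrically) shows $U$ is unitary. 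Finally, $U\,\pi_1(W(v))\,U^{-1} = \pi_2(W(v))$ on the dense domain by construction and hence everywhere by continuity, which after differentiating in $v$ gives the stated conjugation relation for the generators $\pi_1(\Phi(v))$.

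The main obstacle I anticipate is the analytic bookkeeping around the Gaussian-smeared projection $P_k$: making rigorous sense of the operator-valued integral, verifying that $P_k^* = P_k$ and $P_k^2 \propto P_k$ via the Weyl relations, and --- crucially --- showing $P_k \neq 0$ and that its range is exactly one-dimensional in an irreducible representation. This last point is where irreducibility is really used, and it requires the observation that $\pi_k(W(v)) P_k \pi_k(W(v))^{-1}$ depends on $v$ only through a phase times a translated Gaussian, so the range of $P_k$ is an invariant subspace structure that forces rank one. The strong-continuity hypothesis is exactly what is needed to control these integrals and to apply Stone's theorem; without it one only gets a Borel structure and the argument breaks, which is why the footnote flags it as technical but not central to the conceptual content.
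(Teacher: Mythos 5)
The paper does not actually prove this theorem: it states the Stone--von Neumann result as a classical fact and immediately moves on to explain why its hypotheses fail for field theories, so there is no in-text proof to compare against. Judged on its own, your proposal is the standard von Neumann argument via the Gaussian-smeared ``vacuum projection'', and it is essentially sound. The load-bearing identity you gesture at but should state explicitly is $P_k\,\pi_k(W(v))\,P_k = \mathrm{e}^{-\|v\|^2/4}\,P_k$ (valid when $\|\cdot\|$ is chosen compatible with $\Omega$, e.g.\ the Euclidean norm in a symplectic basis); from it, rank one follows because for orthogonal $\xi,\eta$ in the range of $P_k$ one gets $\langle \xi, \pi_k(W(v))\eta\rangle = \mathrm{e}^{-\|v\|^2/4}\langle\xi,\eta\rangle = 0$ for all $v$, and irreducibility then forces $\xi=0$. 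Two small corrections to your accounting of hypotheses: irreducibility is used only at this rank-one step (and to get cyclicity of $|0\rangle_k$), whereas $P_k\neq 0$ is where strong (equivalently weak) continuity does the work --- one shows that $P_k=0$ would force the symplectic Fourier transform of $v\mapsto \mathrm{e}^{-\|v\|^2/4}\langle\xi,\pi_k(W(v))\eta\rangle$ to vanish identically, hence $\langle\xi,\pi_k(W(v))\eta\rangle\equiv 0$, contradicting $W(0)=I$. Finally, since the theorem as stated concerns the representations of the algebra elements rather than only the Weyl unitaries, you are right that the last step is Stone's theorem applied to $t\mapsto U\pi_1(W(tv))U^{-1}=\pi_2(W(tv))$, which transfers the conjugation relation to the self-adjoint generators.
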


The field theories that we have described above do not satisfy the hypotheses of the Stone-von Neumann theorem. Elements of the space of solutions are labelled by space-time points. Hence, the space of solutions is infinite dimensional. This means, from the covariant point of view, that different algebraic state choices will lead to unitarily inequivalent theories. From the canonical point of view, this non-unique choice is traced back to the selection of a time function, that encompasses the notion of positive energy, which is, in turn, equivalent to the choice of a complex structure in the space of solutions.

The failure of the Stone-von Neumann theorem in infinite-dimensional systems brings in an ambiguity to the choice of a preferred Hilbert space in quantum field theory. While one may be guided to select a space of states, out of the infinitely many inequivalent choices, based on the symmetries of spacetime, in generic, curved spacetimes, spacetime isometries provide no guideline, in addition to the Hadamard condition. As a consequence, particles cannot be fundamental objects in the description of nature.

On this line, the correct definition of particles is operational and hence only makes sense for interacting systems. More precisely, one can interact with a field through a measuring apparatus that is coupled to the field. The system consisting of the apparatus and the field will evolve unitarily and the state transitions in the measurement apparatus are identified with the absorption or emission of field quanta. Because such measurement apparatus detect particles, they go by the name of \textit{particle detectors}.

There is an extensive literature on particle detector models. See, for example, \cite{Hu:2012jr} and references therein. A simple, yet powerful detector model is the so-called Unruh-DeWitt detector \cite{Unruh:1976db, DeWitt:1979}. A large part of this thesis is devoted to the analysis of this model and modifications thereof. For the moment, let us concentrate on situations where the field has a well-defined Wightman function. In the Unruh-DeWitt model one considers a Klein-Gordon quantum field $\Phi$ coupled to a two-level point-like particle detector on a spacetime $(M,g)$. In the regime that we are considering there is no back-reaction from the field or the detector on the spacetime. The Hamiltonian of the system is given by $H = H_{\Phi} \otimes I_{\text{D}} + I_{\Phi} \otimes H_{\text{D}} + H_{\text{int}}$, where $H_{\Phi}$ is the Hamiltonian operator of the scalar field, $H_{\text{D}}$ is the detector Hamiltonian, and the interaction Hamiltonian is given by
\begin{equation}
H_{\text{int}}(\tau) = c \chi(\tau) \Phi(\mathsf{x}(\tau)) \otimes \mu(\tau),
\label{Hint}
\end{equation}
where $c$ is a coupling constant, $\mu$ is the monopole moment operator of the detector and $\chi \in C_0^\infty(\mathbb{R})$ is a smooth switching function of compact support that controls the interaction of the field and the detector along the worldline of the detector. $\chi$ is a function of the detector proper time, $\tau$, and vanishes for times less than $\tau_i$ and greater than $\tau_f$, \textit{i.e.}, $\text{supp}(\chi) \subset [\tau_i, \tau_f]$.

The space of states of the field is given by the symmetric Fock space $\mathscr{F}_\text{s}(\mathcal{H}_{\Phi})$. The detector Hilbert space, $\mathcal{H}_\text{D}$, consists of a two-dimensional space spanned by the energy eigenvectors as follows: The detector Hamiltonian can be written in terms of the creation and annihilation operators of the detector, $d^*$ and $d$, and the detector energy gap, $E$. It is given by $H_{\text{D}} = E d^* d$. Then the detector Hilbert space is spanned by the energy eigenvectors $\lbrace |0 \rangle, |1 \rangle \rbrace$ with eigenvalues $H_{\text{D}} |0\rangle = 0 |0\rangle$ and $H_{\text{D}} |1 \rangle = E |1 \rangle$. For $E > 0$ we call $|0 \rangle$ the ground state and $|1 \rangle$ the excited state. The situation is the opposite for $E<0$. The space of states of the coupled system is $\mathscr{F}_\text{s}(\mathcal{H}_\Phi) \otimes \mathcal{H}_\text{D}$. Notice that when $E < 0$ the Hamiltonian is negative, but bounded from below.

The key idea is to compute the transition probability of the detector evolving from an initial state $| i \rangle \in \mathcal{H}_\text{D}$ at proper time $\tau_i$ to a final state $| f \rangle$ at time $\tau_f$. Such transition is interpreted in terms of field quanta. For example, if $E > 0$ and initially $| i \rangle = | 0 \rangle$, finding at a later time $\tau_f$ that $| f \rangle = | 1 \rangle$ is interpreted as the absorption of a particle of energy $E$ \cite{Wald:1995yp}. 

If one supposes that the system is weakly coupled, where the coupling constant $c$ is smaller than any other scale in the problem, and in the initial state $ |\phi_i \rangle \otimes | 0 \rangle \in \mathscr{F}_\text{s}(\mathcal{H}_\Phi) \otimes \mathcal{H}_\text{D}$, where $|\phi_i \rangle$ is a Hadamard state, it is possible to compute, perturbatively in $c$, the transition probability of finding the detector in the state $| 1 \rangle$ at proper time $\tau_f$. 

The evolution of the detector-field model in the interaction picture is given by the pull-back of the Schr\"odinger equation along the detector worldline
\begin{equation}
H_\text{int}(\tau)\Big[| \phi \rangle \otimes | d \rangle \Big]  = \ii \partial_\tau \Big[ | \phi \rangle \otimes | d \rangle \Big].
\label{2:Schr}
\end{equation}
subject to the initial conditions $| \phi \rangle \otimes | d \rangle = |\phi_i \rangle \otimes | 0 \rangle$ at $\tau = \tau_i$.

The solution to eq. \eqref{2:Schr} is given in terms of a unitary operator,
\begin{equation}
U(\tau,\tau_i): |\phi_i \rangle \otimes | 0 \rangle \mapsto |\phi \rangle \otimes | d \rangle = U(\tau,\tau_i) |\phi_i \rangle \otimes | 0 \rangle.
\label{2:UnitSchr}
\end{equation} 

Combining eq. \eqref{2:UnitSchr} and \eqref{2:Schr}, one obtains the Tomonaga-Schwinger equation for the unitary operator $U$,
\begin{equation}
H_\text{int}(\tau)U(\tau,\tau_i)  = \ii \partial_\tau U(\tau,\tau_i),
\label{2:Tomo-Schw}
\end{equation}
which is understood at the level of the operator algebra on the states, subject to the initial conditions $U(\tau_i,\tau_i) = I = I_\Phi \otimes I_\text{D}$. The pertubative solution is obtained by transforming eq. \eqref{2:Tomo-Schw} into an integral eq. and iterating, to obtain
\begin{align}
U(\tau,\tau_i) & = I -\ii \int_{\tau_i}^\tau \! d\tau' H_\text{int}(\tau')U(\tau',\tau_i) \nonumber \\
& = I + \frac{(-\ii)^k}{k!} \sum_{k = 1}^\infty \int_{\tau_i}^\tau \! d\tau_1 d\tau_2 \ldots d\tau_k \, T\left[H_\text{int}(\tau_1) \cdots  H_\text{int}(\tau_k)\right],
\end{align}
where $T$ is the time ordering operator, \textit{i.e.}, the operators under time ordering are arranged from past to future acting on the left. The \textit{S-matrix} is defined as $S(\tau_f, \tau_i) \doteq U(\tau_f, \tau_i)$,
\begin{equation}
S(\tau_f,\tau_i) = I_\Phi \otimes I_\text{D} - \ii c \int_{\tau_i}^{\tau_f} d\tau \, \chi(\tau) \Phi(\tau) \otimes \mu(\tau) + O\left(c^2\right).
\end{equation}

The scattering amplitude of a transition in the detector subsystem is given by the trace over the final states of the field, which are not being measured by an observer equipped with a particle detector,
\begin{equation}
\mathcal{P}\left(| 1 \rangle \, \Big| \, |\phi_i \rangle \otimes | 0 \rangle \right)(E,\tau_i, \tau_f) = \sum_n \left| \langle \phi_n | \otimes \langle 1 | S(\tau_f,\tau_i) |\phi_i \rangle \otimes | 0 \rangle \right|^2.
\end{equation}

To leading order in $c$, 
\begin{align}
| \langle \phi_n | \otimes & \langle 1 | T \left(\ee^{-\ii H_\text{int}[\chi]}\right) |\phi_i \rangle \otimes | 0 \rangle |^2  \nonumber \\ 
& = \left| \langle \phi_n | \otimes \langle 1 | \Big( I_\Phi \otimes I_\text{D} - \ii H_\text{int}[\chi] + O\left( c^2 \right)\Big)  |\phi_i \rangle \otimes | 0 \rangle \right|^2 \nonumber \\
 & =  \langle \phi_n| \otimes \langle 1  | \left( -\ii c \int \! d\tau'' \, \chi(\tau'') \Phi(\tau'') \otimes \mu(\tau'') + O\left(c^2\right) \right) | \phi_i \rangle \otimes |0 \rangle \nonumber \\
 & \times \langle \phi_i| \otimes \langle 0  | \left( \ii c \int \! d\tau' \, \chi(\tau') \Phi(\tau') \otimes \mu(\tau') + O\left(c^2\right) \right) | \phi_n \rangle \otimes |1 \rangle.
\end{align}

We have used the fact that observables are self-adjoint operators. Writing $\mu$ in the interaction representation, $\mu(\tau_f) = \ee^{\ii H_D (\tau_f-\tau_i)} \mu(\tau_i) \ee^{-\ii H_D (\tau_f-\tau_i)}$ and using the relation
\begin{equation}
\sum_n \langle \phi_i |\Phi(\tau') | \phi_n \rangle \langle \phi_n |\Phi(\tau'') | \phi_i \rangle = \langle \phi_i | \Phi(\tau') \Phi(\tau'') | \phi_i \rangle,
\end{equation}
one obtains that to leading order in $c$
\begin{align}
& \mathcal{P}\left(| 1 \rangle \, \Big| \, |\phi_i \rangle \otimes | 0 \rangle \right) (E,\tau_i, \tau_f) = c^2 \left| \langle 1 | \mu(\tau_i) | 0 \rangle\right|^2 \nonumber \\
& \times \left( \int \! d\tau'' \int \! d \tau' \ee^{-\ii E(\tau'-\tau'')}
 \chi(\tau') \chi(\tau'') \langle \phi_i | \Phi(\tau') \Phi(\tau'') | \phi_i \rangle \right).
\label{Probability}
\end{align}

It follows from eq. \eqref{Probability} that all the dependence in the initial field state, trajectory of the detector and the spacetime comes from the term inside the parentheses, while the rest of the expression on the right hand side of eq. \eqref{Probability} contains only the internal details of the detector. In most of our work we shall set the state to be initially in an appropriate Hadamard vacuum state. In this case, the field dependence comes only through the pull-back of the Wightman function of the field along the detector worldline, $\langle \phi_i | \Phi(\tau') \Phi(\tau'') | \phi_i \rangle = \mathcal{W}\left(\mathsf{x}\left(\tau'\right), \mathsf{x}\left(\tau''\right)\right)$. Because vacuum states are quasi-free states, the Wightman function fully determines the state. This leads to defining the response function of the detector, $\mathcal{F}$, which is equal to the double Fourier transform of the Wightman function of the field weighted by the switching function and describes the (smeared) power spectrum of the field vacuum noise. It is given by
\begin{equation}
\mathcal{F}(E,\tau_f,\tau_i) \doteq \int_{-\infty}^\infty \! d\tau' \, \int_{-\infty}^\infty \! d\tau'' \, \chi(\tau') \chi(\tau'') \, \ee^{-\ii E (\tau'-\tau'')} \mathcal{W}\left(\mathsf{x}\left(\tau'\right), \mathsf{x}\left(\tau''\right)\right).
\label{ResponseFn}
\end{equation}

Because $\mathcal{P}\left(| 1 \rangle \, \Big| \, |\phi_i \rangle \otimes | 0 \rangle \right)(E,\tau_i, \tau_f) = c^2 \left| \langle 1 | \mu(\tau_i) | 0 \rangle\right|^2 \mathcal{F}(E,\tau_f,\tau_i)$, it is customary to use the terms response and transition probability interchangeably in the jargon of weakly coupled particle detectors. For smooth, timelike trajectories, the pull-back of the Wightman function is a distribution in $\mathbb{R} \times \mathbb{R}$ and eq. \eqref{ResponseFn} is well-defined with $\chi$ playing the role of a test function \cite{Fewster:1999gj, Junker:2001gx, Hormander:1990, Hormander:1994}. Given a family of functions $\mathcal{W}_\epsilon$ that converges to the distribution $\mathcal{W}$ as $\epsilon \rightarrow 0_+$, $\mathcal{F}$ is evaluated by replacing $\mathcal{W}_\epsilon$ in \eqref{ResponseFn} and then taking the limit $\epsilon \rightarrow 0_+$.

\subsection{KMS states and the detailed balance condition}

From the point of view of a local observer along their spacetime worldline, the KMS condition can be realised by making use of particle detectors. The detailed balance form of the KMS condition associates the temperature measured by a stationary particle detector, with the response of such detector and the energy gap,

\begin{defn}
A function $\mathcal{G}: \mathbb{R} \to \mathbb{R}^+$ is said to obey the \textit{detailed balance condition} if
\begin{equation}
\beta = \frac{1}{E} \ln \left( \frac{\mathcal{G}(-E)}{\mathcal{G}(E)}\right).
\label{2:DBC}
\end{equation}
\end{defn}

In the language of detectors, a series of arguments relate $\mathcal{G}$ to the response of a stationary detector in a stationary state \cite{Birrell:1982ix, Wald:1995yp, Unruh:1976db, DeWitt:1979, Takagi:1986kn} that measures a temperature $T = \beta^{-1}$. We give a heuristic relation between the detailed balance condition and the KMS condition, which should not be taken as a proof, but rather as a motivation. We will make the relation precise and provide a proof in Chapter \ref{ch:Unruh}.

For a stationary detector, the response is given by
\begin{equation}
\mathcal{F}(E,\tau_f,\tau_i) \doteq \int_{-\infty}^\infty \! ds \, \int_{-\infty}^\infty \! d\tau \, \chi(\tau) \chi(\tau-s) \, \ee^{-\ii E s} \mathcal{W}\left(s\right).
\label{ResponseFnStat}
\end{equation}

If one wishes to make the interaction of such detector constant at all times, $\chi \rightarrow 1$, the response function diverges. This divergence can be written formally as 
\begin{equation}
\mathcal{F}(E) =  \int_{-a}^a \! d\tau \, \int_{-\infty}^\infty \! ds \, \ee^{-\ii E s} \mathcal{W}(s)
\label{Feternal}
\end{equation}
keeping in mind that $a \to \infty$. Yet, heuristically, the ratio $\mathcal{F}(-E)/\mathcal{F}(E)$ is finite and equal to\footnote{Alternatively, one can use the \textit{instantaneous transition rate} $\dot{\mathcal{F}}$, where the dot stands for a proper time derivative, which is finite, to obtain the relation on the right hand side of the equation above. We shall do this in Chapters \ref{ch:DeCo} and \ref{ch:BH}.}
\begin{align}
\frac{\mathcal{F}(-E)}{\mathcal{F}(E)} =  \left( \int_{-\infty}^\infty \! dr \, \ee^{-\ii E r} \mathcal{W}(r -\ii \epsilon) \right)^{-1} \int_{-\infty}^\infty \! ds \, \ee^{\ii E s} \mathcal{W}(s -\ii \epsilon).
\end{align}

If we assume that $\mathcal{W}$ is KMS, then the contour of integration on the numerator can be pushed down by $\ii (\beta - 2 \epsilon)$ inside the complex analytic strip $-\beta < \text{Im}(s) < 0$,
\begin{equation}
\int_{-\infty}^\infty \! ds \, \ee^{\ii E s} \mathcal{W}(s -\ii \epsilon) = \int_{-\infty}^\infty \! ds \, \ee^{\ii E [s - \ii (\beta - 2 \epsilon)]} \mathcal{W}(s - \ii \beta + \ii \epsilon).
\end{equation}

Changing variables to $q = -s$ and using the KMS condition $\mathcal{W}(q- \ii \epsilon) = \mathcal{W}(-q-\ii \beta +\ii \epsilon)$, eq. \eqref{2:DBC} follows. Following the same logic, if \eqref{2:DBC} holds, then $\mathcal{W}$ must satisfy the KMS condition.

We will make this heuristic argumentation precise in Chapter \ref{ch:Unruh}. We feel, however, that it is important to discuss the detailed balance form of the KMS condition already at this stage, as we will make use of it in Chapters \ref{ch:DeCo} and \ref{ch:BH}.

\subsection{Particles in $d \geq 2$ dimensions}

There are serious difficulties in extending the Unruh-DeWitt model to $1+1$ dimensions for a massless scalar field. Any Hadamard massless scalar state in $1+1$ dimensions contains infrared ambiguities. A way to see this is by taking the na\"ive massless limit of the Wightman function of a massive $1+1$ Klein-Gordon field in $1+1$ Minkowski spacetime. Let us start our discussion by analysing the ultraviolet behaviour of Hadamard states in $1+1$ dimensions.

The two-point function Hadamard form is, as per Def. \ref{2:defHadamard},
\begin{equation}
\langle \phi | \Phi(\mathsf{x})\Phi(\mathsf{x}') | \phi \rangle = - \frac{1}{4 \pi} \left[V(\mathsf{x},\mathsf{x}') \ln\left(\sigma_\epsilon(\mathsf{x},\mathsf{x}') \right) \right] + W(\mathsf{x},\mathsf{x}'),
\label{2:Had1+1vV-M}
\end{equation}
where, the bidistribution $V$ has the $1+1$ asymptotic expansion
\begin{equation}
V(\mathsf{x}, \mathsf{x}') = \sum_{n=0}^\infty U_n(\mathsf{x}, \mathsf{x}') \sigma^n(\mu \mathsf{x}, \mathsf{x}')
\end{equation}
with the condition
\begin{equation}
U_0(\mathsf{x}, \mathsf{x}') = \Delta^{1/2}(\mathsf{x}, \mathsf{x}') = \left[1 + O(\sigma)\right]^{1/2},
\end{equation}
where $\Delta$ is the \textit{van Vleck-Morette determinant}. Redefining the regular piece in \eqref{2:Had1+1vV-M}, the short-distance behaviour of the two-point function is
\begin{equation}
\langle \phi | \Phi(\mathsf{x})\Phi(\mathsf{x}') | \phi \rangle = - \frac{1}{4 \pi} \left[ \ln\left(\sigma_\epsilon(\mathsf{x},\mathsf{x}') \right) \right] + W(\mathsf{x},\mathsf{x}').
\label{2:Had1+1}
\end{equation}

Next, we obtain the $1+1$ massive Klein-Gordon Minkowski Wightman function. A standard strategy to do so is to use the formal plane-wave integral representation of the quantum field, in terms of annihilation and creation operators. This yields,
\begin{equation}
\mathcal{W}(t-t'; x - x') = \langle 0| \Phi(t,x) \Phi(t',x') |0\rangle = \int_{-\infty}^\infty \! \frac{d k}{4 \pi \omega_k} \, \exp\left(-\ii \omega_k (\Delta t-\ii \epsilon) + \ii k \Delta x\right),
\end{equation}
where $\omega_k = \left(k^2 + m^2 \right)^{1/2}$ and $(t,x)$ are Minkowskian coordinates. The expression above can be massaged and integrated into a modified Bessel function of the second kind, for example, following \cite{Hodgkinson:2013tsa},
\begin{equation}
\mathcal{W}(\Delta t; \Delta x) = \frac{1}{2\pi} K_0\left[m\sqrt{\left(\Delta x\right)^2 - \left(\Delta t -\ii \epsilon\right)^2} \right].
\label{2:K0Wightman}
\end{equation}

The $\ii \epsilon$ prescribes the analytic continuation to the timelike case $|\Delta t| > |\Delta x|$,
\begin{equation}
\mathcal{W}(\Delta t; \Delta x) = \frac{1}{2\pi} K_0\left[\ii \, \text{sgn}(\Delta t) m\sqrt{\left(\Delta t \right)^2 - \left(\Delta x\right)^2} \right],
\end{equation}
where sgn denotes the signum function,
\begin{align}
\text{sgn}(\tau) \doteq & 
    \begin{cases}
      -1, & \text{if}\ \tau < 0, \\
      0, & \text{if}\ \tau = 0, \\
      1, & \text{if}\ \tau > 0.
    \end{cases}    
\end{align}

Choosing the branch of the Bessel function following the analytic continuation formulas in \cite{NIST},
\begin{align}
K_\nu(z) & = 
    \begin{cases}
      + \frac{\ii \pi}{2} \ee^{+\ii \pi \nu/2} H_\nu^{(1)} \left(z \ee^{+\ii \pi /2} \right), & \text{if}\ - \pi < \text{Arg}(z) \leq \frac{\pi}{2}, \\
      - \frac{\ii \pi}{2} \ee^{-\ii \pi \nu/2} H_\nu^{(2)} \left(z \ee^{-\ii \pi /2} \right), & \text{if}\ - \frac{\pi}{2} \leq \text{Arg}(z) < \pi,
    \end{cases}
\label{BesselsAnalytic}   
\end{align}
where $H_\nu^{1}$ and $H_\nu^{2}$ are Hankel functions, we have for timelike separations.
\begin{align}
\mathcal{W}(\Delta t; \Delta x) = & 
    \begin{cases}
       -\frac{\ii}{4} H_0^{(2)}\left( m\sqrt{\left(\Delta t \right)^2 - \left(\Delta x\right)^2} \right), & \text{ if } + \Delta \tau > |\Delta x|, \\[10pt]
       +\frac{\ii}{4} H_0^{(1)}\left( m\sqrt{\left(\Delta t \right)^2 - \left(\Delta x\right)^2} \right), & \text{ if }  - \Delta \tau > |\Delta x|.
    \end{cases}
\label{2:K0WightmanTimelike} 
\end{align}

On the one hand, eq. \eqref{2:K0Wightman} and \eqref{2:K0WightmanTimelike} define a Hadamard state in $1+1$ dimensions in the coincidence limit, with the appropriate logarithmic ultraviolet behaviour. On the other hand, viewing, \textit{e.g.}, \eqref{2:K0Wightman} as a function of $m$, the limit $m \to 0$ is not defined, as it also diverges logarithmically at arbitrary (spacelike) separations. An additive constant is needed to make sense of this limit. Let us introduce the dimensionful constant $\mu > 0$. The limit
\begin{equation}
\lim_{m \to 0} \mathcal{W}(\Delta t; \Delta x) - \frac{1}{2\pi} \ln [m \ee^\gamma/(2\mu)] = -\frac{1}{2\pi} \ln\left[\mu \sqrt{\left(\Delta x\right)^2 - \left(\Delta t -\ii \epsilon\right)^2} \right]
\label{2:lnWightman}
\end{equation}
exists, but cannot be extended to $\mu = 0$. Above, $\gamma$ is Euler's constant. Because we have added a constant piece, the limit continues to be Hadamard, as is explicit in eq. \eqref{2:lnWightman}. Nevertheless, the massless limit possesses an additive ambiguity, and justifying the presence of this \textit{ad hoc} constant is problematic.

A way to go around this issue is to restrict the matter available for detection and work only with massive fields. This, however, spoils the advantages of conformal field theory in $1+1$ dimensions. We argue that a better approach is to introduce a coupling between the field and the detector in such a way that the interaction between the detector and massless fields in $1+1$ dimensions is well-defined and unambiguous. This would, in turn, allow one to perform calculations of interest with analytic control. Indeed, one can argue that no physical process should register this ambiguity. 

Let $(M,g)$ be a spacetime with dimensions $d = 1 + n$, $n \in \mathbb{N}$. A local and covariant element of the algebra of observables of $\Phi$ is given by an appropriate smearing of the operator $\nabla_v \Phi$, where $v \in TM$ is a test vector field and $\nabla_v$ is the covariant derivative along the vector field $v$. Along the worldline of the detector, a natural coupling between the monopole moment of the detector and $\nabla_v \Phi$ comes by setting $v = d \mathsf{x}/d \tau \doteq \dot{\mathsf{x}}$. In other words, the choice for this test vector along the worldline of the detector is merely given by the vector field which is everywhere tangent to the path of the detector parametrised by the proper time. Any other vector field choice would require the detector to have a spatial physical size. The derivative coupling is then given by
\begin{equation}
H_\text{int} = c \chi(\tau)  \nabla_{\dot{\mathsf{x}}} \Phi(\mathsf{x}(\tau)) \otimes \mu(\tau).
\label{DeCo}
\end{equation}

An alternative expression is $H_\text{int} = c \chi(\tau) \dot{\Phi}(\mathsf{x}(\tau)) \otimes \mu(\tau)$. This alternative representation has the advantage of making it transparent that the detector is sensitive to \textit{changes of the field along the worldline} rather than the value of the field itself.

The derivative-detector coupling has first been introduced in \cite{Grove:1986fy} in order to examine the relation between energy fluxes and particle production in receding mirror spacetimes. In \cite{Grove:1986fy}, Grove argues that the particle response of a derivative coupling detector is related to the (positive) energy flux coming from a receding boundary. We will turn to this issue in Chapter \ref{ch:BH}. 

In our context, the hope, that will be realised in Chapter \ref{ch:DeCo}, is that the derivative-coupling detector is insensitive to the infrared ambiguity. In this way, we admit couplings with massless fields in $1+1$ dimensions, and one is able to use the full machinery of conformal field theory. This allows us to examine quantum field theoretic effects analytically in $(1+1)$-dimensional models of many otherwise intractable situations in full $(3+1)$-dimensional spacetimes. We present several applications of this model in $(1+1)$-dimensional black hole spacetimes in Chapter \ref{ch:BH}. 

\chapter{Particles in $(1+1)$-dimensional spacetimes}
\label{ch:DeCo}

Let us begin this chapter by recapitulating our discussion on the nature of particles in quantum field theory. In curved spacetimes, given the unitarily inequivalent representations of the algebras of observables, there is no preferred vacuum state to which one can associate a canonical one-particle Hilbert space. Different vacua will lead to different notions of particles. A solution to this problem is to make the notion of particle operational, by defining particles as transitions in \textit{particle detectors}. The simplest particle detector models are so-called Unruh-DeWitt pointlike detectors. A difficulty with this model is that it is ambiguous for massless $1+1$ scalar fields. In $1+1$ dimensions we have argued that a derivative-coupling detector with interaction Hamiltonian
\begin{equation}
H_\text{int} = c \chi(\tau) \dot{\Phi}(\mathsf{x}(\tau)) \otimes \mu(\tau),
\label{3:DeCoHint}
\end{equation}
and which is sensitive to the changes of the field values along the detector worldline, is a more suitable definition.

The transition probability can be computed as in the Unruh-DeWitt detector model, described in Chapter~\ref{ch:chap2}, and one arrives at eq. \eqref{Probability} but making the replacement $\mathcal{W}(\tau',\tau'') \rightarrow \mathcal{A}(\tau',\tau'')$, where
\begin{equation}
\mathcal{A}(\tau',\tau'') \doteq \partial_{\tau'} \partial_{\tau''} \mathcal{W}(\tau',\tau''),
\label{A}
\end{equation}
in the response function formula. One obtains
\begin{equation}
\mathcal{F}(E,\tau_f,\tau_i) \doteq \int_{-\infty}^\infty \! d\tau' \, \int_{-\infty}^\infty \! d\tau'' \, \chi(\tau') \chi(\tau'') \, \ee^{-\ii E (\tau'-\tau'')} \mathcal{A}\left(\tau',\tau''\right),
\label{3:DeCoResponse}
\end{equation}
where the derivatives of the pullback of the Wightman function are understood distributionally.

In this chapter, we develop this model \eqref{3:DeCoHint} for $(1+1)$-dimensional spacetimes. First, in Section~\ref{sec:3Coindicence} we write the response in a way that the singularities at the diagonal, $\tau' = \tau''$, are replaced by locally integrable functions. Second, in Section~\ref{sec:3rate} we take the sharp-switching limit of formula \eqref{3:DeCoResponse}. The response is ill-defined in the sharp-switching limit, but the instantaneous transition probability per unit time $\dot{\mathcal{F}}$ is finite and all the distributional singularities at the coincidence limit remain under control represented by locally integrable functions. This renders the transition rate formula into a powerful tool for studying many situations in a variety of spacetimes supporting fields in suitable quantum states and along different detector trajectories, inertial or otherwise. Third, in Section~\ref{sec:3checks} we verify that the detector rate satisfies the following desirable properties: The sharp transition rate reduces to the intuitive formal expression that yields the spontaneous emission of particles in Minkowski spacetime \ref{subsec:DeCoStatic}, the rate has the finite massless limit \ref{subsec:DeComassless} and the detector thermalises in a heat bath and along linearly uniformly trajectories at the Unruh temperature \ref{subsec:DeCotherm}.

\section[Isolating the coincidence limit]{The response function: isolating the coincidence limit}
\label{sec:3Coindicence}

Physical states in quantum field theory satisfy the Hadamard property. This allows one to know in detail the singularity structure of $\mathcal{W}(\tau',\tau'')$ as $\tau' \rightarrow \tau''$. In $1+1$ dimensions, it suffices to write $\mathcal{W} = (\mathcal{W} - \mathcal{W}_\text{sing}) + \mathcal{W}_\text{sing}$, where $\mathcal{W}_\text{sing}$ is the locally integrable function
\begin{align}
\mathcal{W}_\text{sing}(\tau',\tau'') \doteq & 
    \begin{cases}
      - \frac{\ii}{4}\text{sgn}(\tau'-\tau'') - \frac{1}{2\pi}\ln\left|\tau'-\tau''\right|, & \text{if}\ \tau' \neq \tau'', \\
      0, & \text{if}\ \tau'=\tau'',
    \end{cases}
\label{Wsing}   
\end{align}
and sgn denotes the signum function. In higher dimensions, $\mathcal{W}_\text{sing}$ needs to approximate the Hadamard parametrix with a higher precision. The parameters $\tau'$ and $\tau''$ in eq. \eqref{Wsing} refer to the proper time along an arbitrary worldline, while the parameters $\tau'$ and $\tau''$ in the Hadamard parametrix refer to the proper time along a geodesic curve. For a $1+1$ derivative-coupling detector eq. \eqref{Wsing} captures the ultraviolet behaviour along any (geodesic or otherwise) worldline. For the four-dimensional situation, see \cite{Louko:2007mu} and for higher dimensions, see \cite{Hodgkinson:2011pc}.

Correspondingly, we define

\begin{subequations}
\begin{align}
\mathcal{F}_\text{reg}(E,\tau_f,\tau_i) & \doteq \int_{-\infty}^\infty  \! d\tau' \, \int_{-\infty}^\infty \! d\tau'' \, \chi(\tau') \chi(\tau'') \, \ee^{-\ii E (\tau'-\tau'')} \nonumber \\
& \times \partial_{\tau'}\partial_{\tau''} \left[\mathcal{W}\left(\tau',\tau''\right) - \mathcal{W}_\text{sing}\left(\tau',\tau''\right) \right], \label{Freg} \\
\mathcal{F}_\text{sing}(E,\tau_f,\tau_i) & \doteq \int_{-\infty}^\infty  \! d\tau' \, \int_{-\infty}^\infty \! d\tau'' \, \chi(\tau') \chi(\tau'') \, \ee^{-\ii E (\tau'-\tau'')} \partial_{\tau'}\partial_{\tau''}\mathcal{W}_\text{sing}\left(\tau',\tau''\right), \label{Fsing}
\end{align}
\end{subequations}
\\
where the derivatives are understood in the distributional sense and such that $\mathcal{F} = \mathcal{F}_\text{reg} + \mathcal{F}_\text{sing}$.

The Hadamard short distance form of the Wightman function, shown in eq. \eqref{2:Had1+1},  implies that both $\mathcal{W}(\tau',\tau'')$ and $\partial_{\tau'} \partial_{\tau''} \left[\mathcal{W}\left(\tau',\tau''\right) - \mathcal{W}_\text{sing}\left(\tau',\tau''\right) \right]$ are represented in a neighbourhood of $\tau' = \tau''$ by locally integrable functions, \textit{i.e.}, functions that are integrable in an open neighbourhood around any point lying on $\tau' = \tau$ in $\mathbb{R}^2$. This renders the integral \eqref{Freg} free of distributional contributions at $\tau' = \tau''$ and the integral can be split over the subdomains $\tau' > \tau''$ and $\tau' < \tau''$.
\begin{align}
\mathcal{F}_\text{reg}(E,\tau_f,\tau_i) & = \int_{-\infty}^\infty \! d\tau' \, \int_{-\infty}^{\tau'} \! d\tau'' \, \chi(\tau') \chi(\tau'') \, \ee^{-\ii E (\tau'-\tau'')} \nonumber \\
& \times \partial_{\tau'}\partial_{\tau''} \left[\mathcal{W}\left(\tau',\tau''\right) - \mathcal{W}_\text{sing}\left(\tau',\tau''\right) \right] \nonumber \\
& + \int_{-\infty}^\infty \! d\tau'' \, \int_{-\infty}^{\tau''} \! d\tau' \, \chi(\tau') \chi(\tau'') \,  \Big[ \ee^{-\ii E (\tau'-\tau'')} \nonumber \\
& \times \partial_{\tau'}\partial_{\tau''} \left[\mathcal{W}\left(\tau',\tau''\right) - \mathcal{W}_\text{sing}\left(\tau',\tau''\right) \right] \Big]^{\text{c.c.}}.
\end{align}

Letting $u \in \mathbb{R}$ and $0<s<\infty$, we perform the change of variables $\tau'' = u$ and $\tau' = u-s$ in the subdomain $\tau'<\tau''$, while in the subdomain $\tau' >\tau''$ we set $\tau' = u$ and $\tau'' = u-s$,
\begin{align}
\mathcal{F}_\text{reg}(E,\tau_f,\tau_i) & = 2 \int_{-\infty}^\infty \! du \, \int_{0}^{\infty} \! ds \, \chi(u) \chi(u-s) \, \text{Re} \left[\ee^{-\ii E s} \left(\mathcal{A}(u,u-s) + \frac{1}{2 \pi s^2} \right) \right],
\label{FregFinal}
\end{align}
where we have used the property $\mathcal{W}(\tau',\tau'') = \overline{\mathcal{W}(\tau'',\tau')}$ and $\mathcal{W}_\text{sing}(\tau',\tau'') = \overline{\mathcal{W}_\text{sing}(\tau'',\tau')}$, which follows from the definition of $\mathcal{W}_\text{sing}$ in \eqref{Wsing}, and used the definition of $\mathcal{A}$ as given in eq. \eqref{A}.

To evaluate $\mathcal{F}_\text{sing}$, we integrate \eqref{Fsing} by parts and we perform the split $\mathcal{F}_\text{sing} = \mathcal{F}_\text{sing,1} + \mathcal{F}_\text{sing,2}$, where
\begin{subequations}
\begin{align}
\mathcal{F}_\text{sing,1}(E,\tau_f,\tau_i) & \doteq -\frac{\ii}{4} \int_{-\infty}^\infty  \! d\tau' \, \int_{-\infty}^\infty \! d\tau'' \, \partial_{\tau'} \partial_{\tau''} \left(\chi(\tau') \chi(\tau'') \, \ee^{-\ii E (\tau'-\tau'')} \right)\text{sgn}(\tau'-\tau'') \label{Fsing1}\\
\mathcal{F}_\text{sing,2}(E,\tau_f,\tau_i) & \doteq  - \frac{1}{2\pi} \int_{-\infty}^\infty  \! d\tau' \, \int_{-\infty}^\infty \! d\tau'' \, \partial_{\tau'} \partial_{\tau''} \left(\chi(\tau') \chi(\tau'') \, \ee^{-\ii E (\tau'-\tau'')} \right)\ln\left|\tau'-\tau''\right| \label{Fsing2}
\end{align}
\end{subequations}

The right hand side of eq. \eqref{Fsing1} can be integrated immediately over $\tau''$ and one obtains

\begin{align}
\mathcal{F}_\text{sing,1}(E,\tau_f,\tau_i) & = -\frac{\ii}{2} \int_{-\infty}^\infty  \! d\tau' \, \partial_{\tau'} \left( \chi(\tau') \ee^{-\ii E \tau'} \right)  \chi(\tau') \ee^{\ii E \tau'} \nonumber \\
& = -\frac{\ii}{2} \int_{-\infty}^\infty  \! d\tau' \, \left[\chi'(\tau') - \ii E \chi'(\tau') \right]  \chi(\tau') = -\frac{E}{2} \int_{-\infty}^\infty  \! du \, [\chi(u)]^2,
\label{Fsing1Final}
\end{align}
where we have noted that
\begin{equation}
\int_{-\infty}^\infty \! du \, \chi'(u) \chi(u) = \frac{1}{2} \int_{-\infty}^\infty \! du \, \frac{d}{du}[\chi(u)]^2 = 0.
\end{equation}

For $\mathcal{F}_\text{sing,2}$ we follow a strategy similar to our analysis of $\mathcal{F}_\text{reg}$. We split the integrals in the subintervals $\tau' < \tau''$ and $\tau' > \tau''$. We perform the change of variables $\tau'' = u$ and $\tau' = u-s$ in the subdomain $\tau'<\tau''$ and $\tau' = u$ and we set $\tau'' = u-s$ in the subdomain $\tau' >\tau''$, where $u \in \mathbb{R}$ and $0<s<\infty$. Eq. \eqref{Fsing2} becomes
\begin{align}
\mathcal{F}_\text{sing,2}(E,\tau_f,\tau_i) = - \int_0^\infty \! \frac{ds}{\pi} \, \ln s \int_{-\infty}^\infty \! du \, \text{Re} \left[ \partial_u \left(\ee^{-\ii E u} \chi(u) \right) \partial_{u-s} \left( \ee^{\ii E (u-s)} \chi(u-s) \right)\right],
\end{align}
where we have integrated by parts the distributional derivatives acting on $\mathcal{W}_\text{sing}$. Now yet again performing an integration by parts in the $u$-derivative,
\begin{align}
\mathcal{F}_\text{sing,2}(E,\tau_f,\tau_i) & = \frac{1}{\pi} \int_0^\infty \! ds \, \ln s \int_{-\infty}^\infty \! du \, \text{Re} \left[ \left(\ee^{-\ii E u} \chi(u) \right) \partial^2_{u-s} \left( \ee^{\ii E (u-s)} \chi(u-s) \right)\right] \nonumber \\
& = \frac{1}{\pi} \int_0^\infty \! ds \, \ln s \frac{d^2}{ds^2} \int_{-\infty}^\infty \! du \, \text{Re} \left[ \ee^{- \ii E s} \chi(u) \chi(u-s)  \right] \nonumber \\
& = - \frac{1}{\pi} \int_0^\infty \! \frac{ds}{s} \, \frac{d}{ds} \left(\cos(Es) \int_{-\infty}^\infty \! du \,  \chi(u) \chi(u-s)\right),
\label{Fsing2IBP}
\end{align}
where in the last equality we have performed an integration by parts in the $s$-variable. The boundary term in $s = 0$ vanishes because $\cos(Es) \int_{-\infty}^\infty \! du \,  \chi(u) \chi(u-s)$ is even and, hence, has odd derivative. This also makes the integral over $s$ convergent at small $s$ in the last expression of \eqref{Fsing2IBP}. Writing $\chi(u) \chi(u-s) = [\chi(u)]^2 -\chi(u)[\chi(u)-\chi(u-s)]$ gives
\begin{align}
\mathcal{F}_\text{sing,2}(E,\tau_f,\tau_i) & =  \frac{E}{\pi} \int_0^\infty \! ds \, \frac{\sin(Es)}{s} \int_{-\infty}^\infty \! du \,  [\chi(u)]^2 \nonumber \\
& + \frac{1}{\pi} \int_0^\infty \! \frac{ds}{s} \, \frac{d}{ds} \left(\cos(Es) \int_{-\infty}^\infty \! du \,  \chi(u)[\chi(u) - \chi(u-s)]\right).
\label{Fsing2Final}
\end{align}

The $s$-integral in the first term on the right hand side of eq. \eqref{Fsing2Final} is standard, $\int_0^\infty \! ds \, \sin(Es)/s = (\pi/2)\text{sgn}(E)$ and it is clear that the second term is convergent at small $s$ by integrating by parts over $s$. We can now combine \eqref{Fsing1Final} and \eqref{Fsing2Final} into
\begin{align}
\mathcal{F}_\text{sing}(E,\tau_f,\tau_i) & = -E \Theta(-E) \int_{-\infty}^\infty \! du \,  [\chi(u)]^2 \nonumber \\
& +\frac{1}{\pi} \int_0^\infty \! ds \, \frac{\cos(Es)}{s^2} \int_{-\infty}^\infty \! du \,  \chi(u)[\chi(u) - \chi(u-s)].
\label{FsingFinal}
\end{align}

An alternative expression for $\mathcal{F}_{\text{sing}}$ can be obtained from \eqref{Fsing1Final} by writing $\cos(Es) = 1 -[1-\cos(Es)]$ and using the standard integral
\begin{equation}
\int_0^\infty \! ds \, \frac{1-\cos(Es)}{s^2} = \frac{\pi |E|}{2},
\label{AltRep}
\end{equation}
whereby one obtains
\begin{align}
\mathcal{F}_\text{sing}(E,\tau_f,\tau_i) & = -E \int_{-\infty}^\infty \! du \,  [\chi(u)]^2 + \frac{1}{\pi} \int_0^\infty \!  \frac{ds}{s^2}  \,\int_{-\infty}^\infty \! du \,  \chi(u)[\chi(u) - \chi(u-s)] \nonumber \\
& +\frac{1}{\pi} \int_0^\infty \! ds \, \int_{-\infty}^\infty \! du \,  \chi(u)\chi(u-s) \frac{1- \cos(Es)}{s^2}.
\label{FsingAlt}
\end{align}

It is clear that in the full response,
\begin{align}
\mathcal{F}(E,\tau_f,\tau_i) & = -E \Theta(-E) \int_{-\infty}^\infty \! du \,  [\chi(u)]^2 \nonumber \\
& +\frac{1}{\pi} \int_0^\infty \! ds \, \frac{\cos(Es)}{s^2} \int_{-\infty}^\infty \! du \,  \chi(u)[\chi(u) - \chi(u-s)] \nonumber \\
& +  2 \int_{-\infty}^\infty \! du \, \int_{0}^{\infty} \! ds \, \chi(u) \chi(u-s) \, \text{Re} \left[\ee^{-\ii E s} \left(\mathcal{A}(u,u-s) + \frac{1}{2 \pi s^2} \right) \right]
\label{F1+1}
\end{align}
or alternatively
\begin{align}
\mathcal{F}(E,\tau_f,\tau_i) & = -E \int_{-\infty}^\infty \! du \,  [\chi(u)]^2 + \frac{1}{\pi} \int_0^\infty \!  \frac{ds}{s^2}  \,\int_{-\infty}^\infty \! du \,  \chi(u)[\chi(u) - \chi(u-s)] \nonumber \\
& +\frac{1}{\pi} \int_0^\infty \! ds \, \int_{-\infty}^\infty \! du \,  \chi(u)\chi(u-s) \frac{1- \cos(Es)}{s^2} \nonumber \\
& + 2 \int_{-\infty}^\infty \! du \, \int_{0}^{\infty} \! ds \, \chi(u) \chi(u-s) \, \text{Re} \left[\ee^{-\ii E s} \left(\mathcal{A}(u,u-s) + \frac{1}{2 \pi s^2} \right) \right],
\label{F1+1alt}
\end{align} 
all the dependency on the state of the field is in \eqref{FregFinal}, while \eqref{FsingFinal} (or alternatively \eqref{FsingAlt}) contain purely switching details.


\section{The sharp switching limit and the transition rate}
\label{sec:3rate}

In this section, we consider the sharp switching limit of the derivative-coupling detector in $(1+1)$ dimensions. We consider a family of switching functions given by 
\begin{equation}
\chi(u) = h_1\left(\frac{u - \tau_0 + \delta}{\delta}\right)\times h_2 \left(\frac{-u + \tau + \delta}{\delta}\right),
\label{chi}
\end{equation}
where $\delta$, $\tau_0$ and $\tau$ are real positive parameters such that $\tau > \tau_0$ and where $h_1(x)$ and $h_2(x)$ are smooth, non-negative functions that satisfy $h_1 = h_2 = 0$ for $x < 0$ and $h_1 = h_2 = 1$ for $x >1$. The support of the switching function is compact, $\text{supp}(\chi) = [\tau_i,\tau_f]$ with $\tau_i \doteq \tau_0 - \delta$ and $\tau_f \doteq \tau + \delta$. The detector switches on smoothly during a time interval $\delta$ according to the profile of $h_1$, has a constant detector-field interaction for a duration $\Delta \tau \doteq \tau-\tau_0$ and switches off smoothly according to the profile of $h_2$ during a time interval $\delta$.

The sharp switching limit is obtained as $\delta \rightarrow 0$. At this point, we can exploit the similarity between the $(1+1)$ derivative coupling response \eqref{F1+1alt} with the $(3+1)$ Unruh-DeWitt response in \cite{Louko:2007mu}. Provided that $\mathcal{A}(\tau',\tau'')$ is free of distributional singularities located at  $(\tau',\tau'') = (\tau_0,\tau)$, in which case the singularities are located at the endpoints of the integration interval, one can follow the analysis leading to equations (4.4) and (4.5) in \cite{Louko:2007mu} to find that, for the $(1+1)$ derivative coupling response,
\begin{align}
\mathcal{F}(E,\tau,\tau_0) & = -\frac{E \Delta \tau}{2} + 2 \int_{\tau_0}^\tau  \! du \, \int_0^{u - \tau_0} \! ds \, \text{Re} \left[ \ee^{-\ii E s} \mathcal{A}(u,u-s) + \frac{1}{2 \pi s^2} \right] \nonumber \\
& + \frac{1}{\pi} \ln\left( \frac{\Delta \tau}{\delta} \right) + C + O(\delta),
\label{Fsharp}
\end{align}
where $C$ is a constant that depends only on $h_1$ and $h_2$.

The sharp response function \eqref{Fsharp} diverges logarithmically as $\delta \rightarrow 0$. This divergence depends purely on the switching and is independent of the quantum state and the detector worldline. The transition rate $\dot{\mathcal{F}}(E,\tau,\tau_0) \doteq d\mathcal{F}(E,\tau,\tau_0)/d\tau$ remains finite as $\delta \rightarrow 0$, and in this limit is given by
\begin{align}
\dot{\mathcal{F}}(E,\tau,\tau_0) & = -\frac{E}{2} + 2 \int_0^{\Delta \tau} \! ds \, \text{Re} \left[ \ee^{-\ii E s} \mathcal{A}(\tau,\tau-s) + \frac{1}{2 \pi s^2} \right] + \frac{1}{\pi \Delta \tau}. 
\label{Fdot}
\end{align}

Writing $1 = \cos(Es) +[1-\cos(Es)]$, using the sine integral in the notation of \cite{NIST},
\begin{subequations}
\begin{align}
\text{Si}(z) & \doteq \int_0^z \! dt \, \frac{\sin(z)}{z}, \\
\text{si}(z) & \doteq - \int_z^\infty \! dt \, \frac{\sin(z)}{z} = \text{Si}(z) -\frac{\pi}{2},
\end{align}
\end{subequations}
\\
and the oddness of the sine integral, we can express \eqref{Fdot} with the alternative representation
\begin{align}
\dot{\mathcal{F}}(E,\tau,\tau_0) & = -E \Theta(-E) + \frac{1}{\pi} \left[ \frac{\cos(E\Delta \tau)}{\Delta \tau} + |E| \text{si}(|E|\Delta \tau) \right]\nonumber \\
& + 2 \int_0^{\Delta \tau} \! ds \, \text{Re} \left[ \ee^{-\ii E s} \left( \mathcal{A}(\tau,\tau-s) + \frac{1}{2 \pi s^2} \right)\right].
\label{FdotAlt}
\end{align}

In practical situations, it is useful to consider when the switch-on has taken place in the asymptotic past. As $\tau_0 \rightarrow -\infty$, we have that
\begin{align}
\lim_{\tau_0 \rightarrow -\infty} \dot{\mathcal{F}}(E,\tau, \tau_0) & = -E \Theta(-E) + 2 \int_0^{\infty} \! ds \, \text{Re} \left[ \ee^{-\ii E s} \left( \mathcal{A}(\tau,\tau-s) + \frac{1}{2 \pi s^2} \right)\right].
\label{FdotAsymp}
\end{align}

From this point onward, we shall use the notation $\dot{\mathcal{F}}(E,\tau) \doteq \lim_{\tau_0 \rightarrow -\infty} \dot{\mathcal{F}}(E,\tau, \tau_0)$, whereby suppressing the $\tau_0$ dependence in the arguments of $\mathcal{F}$ indicates that we are considering a (sharp) switch-on in the asymptotic past.

A final remark is that in this long interaction time limit, $\tau_0 \rightarrow -\infty$ the transition rate formula \eqref{FdotAsymp} is valid even when $\mathcal{A(\tau',\tau'')}$ has isolated singularities located at $\tau' \neq \tau''$ and $\mathcal{A}$ cannot be represented by an integrable function in a neighbourhood around such singular points. This can occur, for example, when one is dealing with quantum field theory in manifolds with boundary. Suppose one has a point in the detector timelike worldline parametrised by a proper time $\tau_f$ which can be traced back along a null ray that is reflected at the boundary to a point $\tau_p \in J^-(\tau_f)$ in the worldline. In this case, $\mathcal{A}(\tau_f,\tau_p)$ will be singular. An example of this situation will be encountered below in Chapter~\ref{ch:BH} in the context of a receding mirror spacetime. Along a null curve, a similar situation can occur when photons reintersect their own worldline, for example, when photons orbit a black hole.

\section{The transition rate: stationary case}
\label{sec:3checks}

In this section, we would like to verify a series of reasonable checks on the derivative-coupling detector in stationary situations for detectors that have been switched on sharply in the asymptotic past, $\delta \rightarrow 0$ and $\tau_0 \rightarrow -\infty$.

First, we verify that the transition rate formula in this limit is given precisely by the formula obtained by formally setting $\chi \rightarrow 1$ and factoring out the infinite detection time \cite{Birrell:1982ix}, which yields the spontaneous emission spectrum in Minkowski space. Second, we verify the massive-to-massless consistency limit. This is important because in $1+1$ dimensions the Wightman function has an infrared ambiguous additive constant for massless fields. The hope, that will be realised in this work, is that, because in the derivative coupling model the response depends only on derivatives of the Wightman function, $\mathcal{A}(\tau',\tau'') = \partial_{\tau'} \partial_{\tau''} \mathcal{W}(\tau',\tau'')$, one can verify that the massless limit of the response function satisfies the consistency relation $\lim_{m \rightarrow 0} \dot{\mathcal{F}}_m(E) = \dot{\mathcal{F}}_0(E)$. Finally, we shall demonstrate that the derivative coupling detector thermalises. To this end, we present two examples. We consider a detector submerged in a heat bath and the uniformly accelerated detector interacting with the Minkowski vacuum.

\subsection{Stationary formula}
\label{subsec:DeCoStatic}

Starting out from eq. \eqref{FdotAsymp} and using $A(\tau', \tau'') = \overline{A(\tau'',\tau')}$,
\begin{align}
\dot{\mathcal{F}}(E,\tau) = -E \Theta(-E) + 2 \int_0^{\infty} \! ds \, & \left[ \ee^{-\ii E s} \left( \mathcal{A}(\tau,\tau-s) + \frac{1}{2 \pi s^2} \right) \right. \nonumber \\
& + \left. \ee^{\ii E s} \left( \mathcal{A}(\tau-s,\tau) + \frac{1}{2 \pi s^2} \right)\right].
\end{align}

This formula holds in general. Now, using the translation invariance of the stationary Wightman function to write $\mathcal{A}(\tau,\tau-s) = \mathcal{A}(s,0)$ and $\mathcal{A}(\tau-s,\tau) = \mathcal{A}(-s,0)$, we can extend the limits of integration of eq. \eqref{FdotAsymp} to
\begin{equation}
\dot{\mathcal{F}}(E,\tau) = -E \Theta(-E) + \int_{-\infty}^{\infty} \! ds \, \ee^{-\ii E s} \left( \mathcal{A}(s,0) + \frac{1}{2 \pi s^2} \right),
\label{StaticEpsilon}
\end{equation}
where each term inside the integral has distributional singularities.

We can now evaluate the second term in eq. \eqref{StaticEpsilon}, by deforming the integration contour into the complex lower half-plane near $s = 0$. From the residue theorem and Jordan's lemma,
\begin{subequations}
\begin{align}
\oint_{\gamma_+} \! dz \, \frac{\ee^{-\ii E z}}{z^2} & = \oint_{\gamma_+} \! dz \, \frac{-\ii E}{z} = 2 \pi E,  \hspace{0.2cm} E < 0, \\
\oint_{\gamma_-} \! dz \, \frac{\ee^{-\ii E z}}{z^2} & = 0,  \hspace{0.2cm} E \geq 0,
\end{align}
\end{subequations}
\\
where the closed path $\gamma_+$ closes following an arc at infinity on the upper complex plane enclosing the point $z = 0$ and $\gamma_-$ closes following an arc at infinity on the lower complex plane. We obtain the standard formula
\begin{equation}
\dot{\mathcal{F}}(E,\tau) = \int_{-\infty}^{\infty} \! ds \, \ee^{-\ii E s} \mathcal{A}(s-\ii \epsilon,0),
\label{DeCoStat}
\end{equation}
which is usually obtained by formally setting $\chi = 1$.

\subsection{Massless limit}
\label{subsec:DeComassless}

In this subsection, we shall show that the transition rate is continuous in the mass of the field, $m \geq 0$, for a static detector in Minkowski space. Because of the importance of boundary conditions in field theory, we perform this analysis also for the Minkowski half-space with Dirichlet and Neumann boundary conditions.

Let $(M,g)$ be the $(1+1)$-dimensional Minkowski spacetime. In the standard Minkowski coordinates, $(t,x)$, the spacetime metric tensor reads $g = -dt^2 + dx^2$. The Minkowski half-space is the submanifold $\tilde{M}$ of $M$ specified by $x > 0$, equipped with the metric $\tilde{g}$ inherited from $g$, i.e. $\tilde{g} = g|_{\tilde{M}}$.

In the Minkowski spacetime, the field of mass $m \geq 0$ is in the Minkowski vacuum state $|0\rangle$ which is of positive frequency with respect to the global timelike Killing vector $\xi = \partial_t$. In the Minkowski half-space, the $m \geq 0$ field is in the vacuum state $|\tilde{0}\rangle$ which is also of positive frequency with respect to $\xi$. We consider a detector following the static worldline
\begin{equation}
\mathsf{x}(\tau) = (\tau, d),
\end{equation}
where $d > 0$ is a positive constant devoid of geometric significance in $M$. In $\tilde{M}$ $d$ represents the distance of the detector from the boundary at $x = 0$. The static detector is switched on in the asymptotic past, such that formula \eqref{DeCoStat} gives its transition rate. We now show that $\dot{\mathcal{F}}$ is continuous in the limit $m \rightarrow 0$.


The classical equation of motion on $(\tilde{M},g)$ for a massive scalar field is given by $P f \doteq \left(\Box - m^2\right) \phi = 0$ subject to the boundary conditions
\begin{subequations}
\begin{align}
\text{Dirichlet:}&  \hspace*{0.2cm} \phi = 0, \hspace*{0.2cm} \text{at}  \hspace*{0.1cm} x = 0, \label{Dirichlet}\\
\text{Neumann:}&  \hspace*{0.2cm} \partial_x \phi = 0, \hspace*{0.2cm} \text{at}  \hspace*{0.1cm} x = 0. \label{Neumann}
\end{align} 
\label{BdyHalfSpace}
\end{subequations}

The quantum field, understood as an operator valued distribution, satisfies the equations of motion via the condition
\begin{equation}
\Phi(P f) = \int_{\tilde{M}} d\text{vol}_g [\left(\Box - m^2\right) f] \Phi = 0,
\end{equation}
with $f \in C_0^\infty(\tilde{M})$ satisfying eq. \eqref{Dirichlet} or \eqref{Neumann} for Dirichlet or Neumann boundary conditions respectively. In practice, the best way to obtain the state in the Minkowski half-space is to use the method of images with respect to the Minkowski massive Wightman function,
\begin{equation}
\langle 0 | \Phi(\mathsf{x}) \Phi(\mathsf{x}') | 0 \rangle = \frac{1}{2 \pi} K_0 \left( m \sqrt{(\Delta x)^2 - (\Delta t - \ii \epsilon)^2} \right),
\label{MWm}
\end{equation}
where $K_0$ is the modified Bessel function of the second kind, and the expression is understood as a distribution as $\epsilon \rightarrow 0^+$.

The Wightman function in $(\tilde{M},\tilde{g})$ is equal to
\begin{equation}
\langle \tilde{0} | \Phi(\mathsf{x}) \Phi(\mathsf{x}') | \tilde{0} \rangle = \langle 0 | \Phi(\mathsf{x}) \Phi(\mathsf{x}') | 0 \rangle + \frac{\eta}{2 \pi} K_0 \left( m \sqrt{(x + x')^2 - (\Delta t - \ii \epsilon)^2} \right),
\label{HSWm}
\end{equation}
where $\eta = -1$ for Dirichlet and $\eta = +1$ for Neumann boundary conditions. The image term in \eqref{HSWm} is most efficiently obtained by studying the fundamental solutions of the wave equation. The causal propagator, $E = E^+-E^-$, satisfies $PEf = 0$ for $f \in C_0^\infty(\tilde{M})$. It follows that every solution of $P\phi = 0$ subject to the boundary conditions is of the form $\phi = Ef$, i.e.,
\begin{equation}
P_\mathsf{x} \phi(\mathsf{x}) = \int_{\tilde{M}} d\text{vol}(\mathsf{x}') \left(\Box - m^2\right) E(\mathsf{x},\mathsf{x}') f(\mathsf{x}') = 0.
\end{equation}

Denoting $E_M$ the Causal propagator in Minkowski, the Dirichlet boundary conditions,
\begin{equation}
\phi(t,0) = \int_{\tilde{M}} \! dt'\, dx' \, E((t,0),(t',x')) f(t',x') = 0,
\end{equation}
imply that $E((t,0),(t',x')) = 0$. It follows that $E((t,x),(t',x')) = E_M((t,x),(t',x')) - E_M((t,-x),(t',x'))$, which accounts for the image term in eq. \eqref{HSWm} with $\eta = -1$.

Similarly, the Neumann boundary conditions,
\begin{equation}
\partial_x \phi(t,0) = \int_{\tilde{M}} \! dt'\, dx' \, \partial_x E((t,0),(t',x')) f(t',x') = 0,
\end{equation}
imply that $\partial_x E((t,0),(t',x')) = 0$. It follows that $E((t,x),(t',x')) = E_M((t,x),(t',x')) + E_M((t,-x),(t',x'))$, which accounts for the image term in eq. \eqref{HSWm} with $\eta = +1$.

As we have discussed in Chapter~\ref{ch:chap2}, the na\"ive massless limit of the Minkowski Wightman function diverges. However, the massless limit of $\langle 0 | \Phi(\mathsf{x})\Phi(\mathsf{x}') | 0 \rangle - (2\pi)^{-1} \ln\left[m e^\gamma/(2\mu)\right]$, where $\gamma$ is Euler's constant and $\mu$ is a positive constant with inverse length dimensions, remains finite as $m \rightarrow 0$ \cite{NIST}. We define the massless Minkowski Wightman function as

\begin{align}
\langle 0 | \Phi(\mathsf{x}) \Phi(\mathsf{x}') |0 \rangle_{0,\mu} & \doteq \lim_{m \rightarrow 0} \langle 0 | \Phi(\mathsf{x}) \Phi(\mathsf{x}') | 0 \rangle - (2\pi)^{-1} \ln\left[m e^\gamma/(2\mu)\right] \nonumber \\
& = -\frac{1}{2 \pi} \ln \left[ \mu \sqrt{(\Delta x)^2 - (\Delta t - \ii \epsilon)^2} \right].
\label{MW0}
\end{align}

Similarly, we define the massless half-space Wightman function as

\begin{align}
\langle \tilde{0} | \Phi(\mathsf{x}) \Phi(\mathsf{x}') |\tilde{0} \rangle_{0,\mu} \doteq \langle 0 | \Phi(\mathsf{x}) \Phi(\mathsf{x}') |0 \rangle_0 -\frac{\eta}{2 \pi} \ln \left[ \mu \sqrt{(x+x')^2 - (\Delta t - \ii \epsilon)^2} \right].
\label{HSW0}
\end{align}

Notice that the way in which we have defined the massless Wightman functions makes the massless limit discontinuous. What we wish to argue is that this mass discontinuity is irrelevant for an observer who interacts with the matter field using a particle detector coupled via \eqref{3:DeCoHint}. To support this argument, we proceed to show that the instantaneous rate along a static worldline in the Minkowski half-space is continuous in the massless limit. More precisely

\begin{thm}
Let $\dot{\mathcal{F}}_m$ be the instantaneous transition rate of a sharply switched static \eqref{DeCoStat} detector following the trajectory $(t,x) = (\tau,d)$ in the Minkowski half-space and coupled through \eqref{DeCo} in the sharp switching limit, to a massive Klein-Gordon field of mass $m$ subject to Dirichlet or Neumann boundary conditions \eqref{BdyHalfSpace} in the state \eqref{HSWm}
\begin{equation}
\langle \tilde{0} | \Phi(\mathsf{x}) \Phi(\mathsf{x}') | \tilde{0} \rangle = \langle 0 | \Phi(\mathsf{x}) \Phi(\mathsf{x}') | 0 \rangle + \frac{\eta}{2 \pi} K_0 \left( m \sqrt{(x + x')^2 - (\Delta t - \ii \epsilon)^2} \right).
\end{equation}

Concordantly, let $\dot{\mathcal{F}}_0$ be the instantaneous transition rate of such detector coupled instead to a massless scalar field in the state \eqref{HSW0}
\begin{align}
\langle \tilde{0} | \Phi(\mathsf{x}),\Phi(\mathsf{x}') |\tilde{0} \rangle_{0,\mu} \doteq \langle 0 | \Phi(\mathsf{x}),\Phi(\mathsf{x}') |0 \rangle_0 -\frac{\eta}{2 \pi} \ln \left[ \mu \sqrt{(\Delta x)^2 - (\Delta t - \ii \epsilon)^2} \right].
\end{align}

It then holds that $\lim_{m \rightarrow 0} \dot{\mathcal{F}}_m = \dot{\mathcal{F}}_0$.
\label{thm:DecoMassless}
\end{thm}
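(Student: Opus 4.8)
The plan is to push everything down to the static worldline, turn the two-point functions into functions of the single variable $s=\tau'-\tau''$, and then exploit the one structural fact that makes the derivative coupling special: since $\mathcal{F}$ depends on $\mathcal{W}$ only through $\mathcal{A}=\partial_{\tau'}\partial_{\tau''}\mathcal{W}$, the $m$-dependent additive constant that obstructs the $m\to0$ limit of $\mathcal{W}$ is annihilated before it can enter $\dot{\mathcal{F}}$.

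First I would pull back \eqref{HSWm} and \eqref{HSW0} along $\mathsf{x}(\tau)=(\tau,d)$. Both states are invariant under $\partial_t$, so the pulled-back two-point function is a function $F_m(s)$ (resp.\ $F_0(s)$) of $s$ alone, and it splits as a ``direct'' piece $\tfrac{1}{2\pi}K_0\!\bigl(m\sqrt{-(s-\ii\epsilon)^2}\bigr)$ (resp.\ $-\tfrac{1}{2\pi}\ln[\mu\sqrt{-(s-\ii\epsilon)^2}]$) plus $\eta$ times an ``image'' piece $\tfrac{1}{2\pi}K_0\!\bigl(m\sqrt{4d^2-(s-\ii\epsilon)^2}\bigr)$ (resp.\ $-\tfrac{1}{2\pi}\ln[\mu\sqrt{4d^2-(s-\ii\epsilon)^2}]$), the image singularities sitting at $s=\pm 2d$, i.e.\ at the reflected null rays. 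Here $\mathcal{A}(\tau,\tau-s)=-F''(s)$, so the stationary formula \eqref{DeCoStat} applies in the form used in \S\ref{subsec:DeCoStatic}. A preliminary observation is that for the massless state the direct part of $\mathcal{A}$ is exactly $-\tfrac{1}{2\pi}(s-\ii\epsilon)^{-2}$, i.e.\ it equals the universal short-distance counterterm, so $\dot{\mathcal{F}}_0$ receives from it only the $-E\,\Theta(-E)$ contribution of \S\ref{subsec:DeCoStatic}; the same counterterm appears inside $\mathcal{A}_m$, its contribution is again $-E\,\Theta(-E)$, and it therefore cancels in $\dot{\mathcal{F}}_m-\dot{\mathcal{F}}_0$.

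The core step is to control the remainder $\dot{\mathcal{F}}_m-\dot{\mathcal{F}}_0=\int_{-\infty}^\infty ds\,\ee^{-\ii E s}\,(\mathcal{A}_m-\mathcal{A}_0)(\tau,\tau-s)$. Inserting the small-argument expansion $K_0(z)=-\bigl(\ln(z/2)+\gamma\bigr)I_0(z)+\sum_{k\ge1}(z/2)^{2k}H_k/(k!)^2$, one sees that $K_0(m\sqrt{w(s)})$ equals the massless log $-\tfrac12\ln w(s)$, plus the $s$-independent constant $-\ln(m/2)-\gamma$, plus a correction that is $O(m^2 w\ln(m^2 w))$ together with all its $s$-derivatives on compacts in $w$; applying $-\partial_s^2$ kills the constant (this is the infrared ambiguity disappearing) and leaves a difference $\mathcal{A}_m-\mathcal{A}_0$ that is $O(m^2)$-suppressed. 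Concretely I would show: (i) near $s=0$ and $s=\pm 2d$ the leading singular behaviour of $\mathcal{A}_m$ coincides with that of $\mathcal{A}_0$ for every $m$, so $\mathcal{A}_m-\mathcal{A}_0$ is, near those points, an explicit $O(m^2)$ multiple of an at-worst $(s\mp2d-\ii\epsilon)^{-1}$/logarithmic kernel — harmless as an $\epsilon\to0^+$ oscillatory integral and $\to0$ as $m\to0$; (ii) on bounded $s$-sets away from $0,\pm2d$ the difference is bounded by $C(m)\to0$ times an integrable function, so dominated convergence applies; (iii) for large $|s|$ both arguments $-(s-\ii\epsilon)^2$ and $4d^2-(s-\ii\epsilon)^2$ are timelike, $K_0$ becomes Hankel-like with only $|s|^{-1/2}$ decay, so the tail of $\int \ee^{-\ii E s}\mathcal{A}_m$ must be rendered absolutely convergent by one integration by parts in $s$ against $\ee^{-\ii E s}$ (valid for $E\neq\pm m$, extended to all $E$ by continuity), after which the bound again carries a positive power of $m$. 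Combining (i)--(iii) gives $\dot{\mathcal{F}}_m-\dot{\mathcal{F}}_0\to0$.

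The main obstacle I expect is precisely this large-$|s|$ control together with the interchange of the $m\to0$ and $\epsilon\to0^+$ limits: because the static worldline lies at timelike separation from both itself and its mirror image for all $s\neq0,\pm2d$, the relevant Bessel functions oscillate rather than decay exponentially, the naive termwise limit of the correction involves polynomials in $s$, and near $s=\pm2d$ the surviving $(s\mp2d-\ii\epsilon)^{-1}$ pieces are not absolutely integrable uniformly in $\epsilon$; making the argument rigorous therefore needs the stationary-phase/integration-by-parts estimate and the explicit $O(m^2)$ prefactor on the leftover singular kernels, rather than a single appeal to dominated convergence. As a consistency check at the end, $\dot{\mathcal{F}}_0$ is manifestly independent of the scale $\mu$ in \eqref{HSW0} (the $\mu$ enters $\mathcal{W}_0$ only additively and is differentiated away), exactly as the statement $\lim_{m\to0}\dot{\mathcal{F}}_m=\dot{\mathcal{F}}_0$ requires.
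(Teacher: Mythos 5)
Your argument is correct in outline but takes a genuinely different route from the paper. The paper does not estimate the difference $\dot{\mathcal{F}}_m-\dot{\mathcal{F}}_0$ at the level of the integrands: it evaluates both transition rates in closed form. Concretely, it splits $\dot{\mathcal{F}}_m$ into the direct (``reg'') and image (``dist'') pieces, integrates by parts twice to trade $K_0''$ for $K_0$, deforms the contour, and invokes the tabulated Laplace transforms of $H_0^{(2)}$ and $I_0$ (Gradshteyn--Ryzhik 6.611.1 and 6.611.7, analytically continued in the Laplace parameter) to obtain $\dot{\mathcal{F}}_m(E)=\bigl[1+\eta\cos\bigl(2d\sqrt{E^2-m^2}\bigr)\bigr]E^2\,(E^2-m^2)^{-1/2}\,\Theta(-E-m)$; the massless rate follows from elementary residues, and the limit $m\to0$ is then read off from the explicit formulas. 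Your route --- kill the $s$-independent $\ln m$ constant with the two derivatives, cancel the Hadamard counterterm between the two rates, and control the remainder via the small-argument expansion of $K_0$ plus dominated convergence on compacts and oscillatory-integral estimates in the tails --- avoids the tabulated integrals entirely and would generalize to stationary situations where no closed form is available, at the price of substantially more analysis. Two places in your sketch need tightening if you carry it out: (a) in the tail $|s|>R$ the difference of the direct pieces is $\mathcal{A}_m^{\rm direct}+\tfrac{1}{2\pi s^2}$, and the $s^{-2}$ counterterm contributes $O(1/R)$ with no power of $m$, so the claim that the tail bound ``carries a positive power of $m$'' is not literally true --- you need the standard two-step limit, first $R$ large and then $m$ small; (b) the ``extension to all $E$ by continuity'' past $E=\pm m$ is neither needed nor available uniformly, since $\dot{\mathcal{F}}_m$ genuinely blows up like $(E^2-m^2)^{-1/2}$ as $E\to -m$ from below --- but the theorem is pointwise in $E$, so for fixed $E\neq0$ one simply takes $m<|E|$. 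Your closing observation that $\dot{\mathcal{F}}_0$ is $\mu$-independent is correct and is exactly the consistency the theorem encodes.
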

\begin{proof}
The massive Wightman function,
\begin{align}
\langle \tilde{0} | \Phi(\mathsf{x}) \Phi(\mathsf{x}') | \tilde{0} \rangle & = \frac{1}{2 \pi} K_0 \left( m \sqrt{(\Delta x)^2 - (\Delta t - \ii \epsilon)^2} \right) \nonumber \\
& + \frac{\eta}{2 \pi} K_0 \left( m \sqrt{(x + x')^2 - (\Delta t - \ii \epsilon)^2} \right),
\end{align}
can be pulled back along the static detector trajectory and differentiated twice to yield
\begin{align}
\mathcal{A}(s,0) = -\frac{d^2}{ds^2} \left[ \frac{1}{2 \pi} K_0 \left( \ii m \sqrt{ (s - \ii \epsilon)^2} \right) + \frac{\eta}{2 \pi} K_0 \left( m \sqrt{4d^2 - (s - \ii \epsilon)^2} \right) \right].
\label{Am}
\end{align}

Using the representation \eqref{FdotAsymp} of the static transition rate, and inserting eq. \eqref{Am} into this equation, we see that the formula for the sharp, static transition rate splits into the integral of a regular function and an integral with distributional contributions, which we denote
\begin{subequations}
\begin{align}
\dot{\mathcal{F}}^\text{reg}_m(E) & \doteq -E \Theta(-E) + 2 \int_0^{\infty} \! ds \, \text{Re} \left[ \ee^{-\ii E s} \left( \frac{m^2}{2 \pi} K_0'' \left( \ii m s \right) + \frac{1}{2 \pi s^2} \right)\right] \\
\dot{\mathcal{F}}^\text{dist}_m(E) & \doteq \frac{\eta}{2 \pi} \int_{-\infty}^\infty \! ds \, \ee^{-\ii E s}  \frac{d^2}{ds^2} K_0 \left( m \sqrt{4d^2 - (s - \ii \epsilon)^2} \right) 
\end{align}
\end{subequations}
\\
such that $\dot{\mathcal{F}}_m = \dot{\mathcal{F}}^\text{reg}_m + \dot{\mathcal{F}}^\text{dist}_m$. A prime denotes a derivative with respect to the argument. Following the steps in Appendix~\ref{sec:app3:HalfSpace}, we find that
\begin{equation}
\dot{\mathcal{F}}_m(E) = \left[\frac{E^2}{\sqrt{E^2 - m^2}} + \frac{\eta E^2}{\sqrt{E^2 - m^2}} \cos\left(2d \sqrt{E^2-m^2} \right)\right] \Theta(-E - m).
\label{3:Fm}
\end{equation}

In the massless case, inserting
\begin{equation}
\mathcal{A}_0(s,0) = -\frac{d^2}{ds^2} \left[-\frac{1}{2 \pi} \ln \left[ \mu \sqrt{- (s - \ii \epsilon)^2} \right]  -\frac{\eta}{2 \pi} \ln \left[ \mu \sqrt{4d^2 - (s - \ii \epsilon)^2} \right]\right].
\label{A0}
\end{equation}
into \eqref{StaticEpsilon} one obtains
\begin{equation}
\dot{\mathcal{F}}_0(E) = -E \Theta(-E) - \frac{\eta}{2 \pi} \int_{-\infty}^\infty \! ds \, \ee^{-\ii E s} \frac{d^2}{ds^2}\ln \left[ \mu \sqrt{4d^2 - (s - \ii \epsilon)^2} \right].
\end{equation}

This integral is evaluated by elementary complex integration methods. We detail this in Appendix~\ref{sec:app3:HalfSpacem0}. One obtains
\begin{equation}
\dot{\mathcal{F}}_0(E) = \left[-E - \eta E \cos(2d E)\right] \Theta(-E).
\label{3:F0}
\end{equation}

It follows from eq. \eqref{3:Fm} and \eqref{3:F0} that
\begin{equation}
\lim_{m \rightarrow 0} \dot{\mathcal{F}}_m(E) = \dot{\mathcal{F}}_0(E).
\end{equation}
\end{proof}

\begin{cor}
Let instead $\dot{\mathcal{F}}_m$ be the instantaneous transition rate of a static detector coupled to a massive Klein-Gordon field in the Minkowski vacuum state, and let  $\dot{\mathcal{F}}_0$ be the instantaneous transition rate of such detector coupled to a massless Klein Gordon field. Then $\lim_{m \rightarrow 0} \dot{\mathcal{F}}_m = \dot{\mathcal{F}}_0$.
\label{cor:DecoMassless}
\end{cor}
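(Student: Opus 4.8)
The plan is to read the corollary off the proof of Theorem \ref{thm:DecoMassless}, since the Minkowski vacuum is exactly the no-image (effectively $\eta=0$) specialisation of the half-space set-up. The pulled-back, twice-differentiated Wightman function of a static detector in the Minkowski vacuum is the $\eta$-independent term already appearing in \eqref{Am}, namely
\begin{equation}
\mathcal{A}(s,0) = -\frac{1}{2\pi}\frac{d^2}{ds^2} K_0\!\left(\ii m \sqrt{(s-\ii\epsilon)^2}\right) = \frac{m^2}{2\pi} K_0''\!\left(\ii m (s-\ii\epsilon)\right).
\end{equation}
Inserting this into the sharp static rate \eqref{DeCoStat} (equivalently \eqref{FdotAsymp}) reproduces precisely the piece denoted $\dot{\mathcal{F}}^\text{reg}_m$ in the proof of Theorem \ref{thm:DecoMassless}; the image contribution $\dot{\mathcal{F}}^\text{dist}_m$, which is the only place where $\eta$ and the distance $d$ enter, is simply absent. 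Hence no new integral has to be computed: one sets $\eta=0$ in \eqref{3:Fm} and in \eqref{3:F0}.

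This gives, from \eqref{3:Fm},
\begin{equation}
\dot{\mathcal{F}}_m(E) = \frac{E^2}{\sqrt{E^2 - m^2}}\,\Theta(-E-m),
\end{equation}
and, from \eqref{3:F0},
\begin{equation}
\dot{\mathcal{F}}_0(E) = -E\,\Theta(-E).
\end{equation}
It then remains only to take the limit $m\to 0_+$. For fixed $E<0$ one has $\Theta(-E-m)=1$ as soon as $m<|E|$, and $E^2/\sqrt{E^2-m^2}\to |E| = -E$; for $E\geq 0$ both expressions vanish identically. Therefore $\lim_{m\to 0}\dot{\mathcal{F}}_m(E) = \dot{\mathcal{F}}_0(E)$, which is the claim.

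There is no serious obstacle here; the computational content is entirely inherited from Appendices \ref{sec:app3:HalfSpace} and \ref{sec:app3:HalfSpacem0}, and if a self-contained derivation were wanted I would instead evaluate \eqref{DeCoStat} directly by pushing the contour into the lower half-plane and using the standard representation of $K_0$, which merely re-derives \eqref{3:Fm}--\eqref{3:F0}. The one point I would state explicitly is that the convergence $\dot{\mathcal{F}}_m\to\dot{\mathcal{F}}_0$ is pointwise in $E$ and not uniform: just below threshold the massive rate behaves like $(-E-m)^{-1/2}$ before the step function switches it off, whereas $\dot{\mathcal{F}}_0$ stays finite there, so no single bound controls the approach for all $E$. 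This is fully consistent with, and anticipates, the failure of limits to commute that is central to Chapter \ref{ch:Unruh}.
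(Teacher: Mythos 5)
Your proof is correct and follows essentially the same route as the paper: the corollary is obtained by setting $\eta=0$ in \eqref{3:Fm} and \eqref{3:F0}, so that only the non-image piece $\dot{\mathcal{F}}^\text{reg}_m$ survives, and the limit $m\to 0$ is then immediate pointwise in $E$. Your added observation that the convergence is pointwise but not uniform (owing to the $(-E-m)^{-1/2}$ behaviour just below threshold) is accurate and goes slightly beyond what the paper records, but it is not needed for the statement as given.
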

\begin{proof}
Setting $\eta = 0$,
\begin{subequations}
\begin{align}
\dot{\mathcal{F}}_m(E) & = \frac{E^2}{\sqrt{E^2 - m^2}}  \Theta(-E - m), \\
\dot{\mathcal{F}}_0(E) & = -E \Theta(-E),
\end{align}
\end{subequations}
\\
and the result follows.
\end{proof}

\subsection{Thermalisation}
\label{subsec:DeCotherm}

We now verify that the derivative coupling detector in $1 + 1$ dimensions responds to a thermal state by the detailed balance condition. In Minkowski spacetime there are two situations of prime interest. First, an inertial detector immersed in a heat bath. Second, a uniformly accelerated detector that registers the Unruh effect.

\subsubsection{Inertial detector in a thermal bath}

We consider a massless $1+1$ field in Minkowski space in a thermal state with positive temperature $T$. The Wightman function can be obtained from the resummation of the formal non-convergent sum
\begin{equation}
-\frac{1}{4\pi} \sum_{n = -\infty}^\infty \ln \left[\mu\left((\Delta x)^2 - (\Delta t - \ii \epsilon + \ii n T)^2 \right) \right].
\label{3:formalSum}
\end{equation}

Differentiating \eqref{3:formalSum} with respect to $\Delta x$ term by term yields a convergent sum that converges to the elementary function defined by
\begin{equation}
g(T, \Delta x, \Delta t) \doteq \frac{T}{4} \left[\coth\left(\pi T(-\Delta x + \Delta t - \ii \epsilon) \right) - \coth\left( \pi T (\Delta x + \Delta t - \ii \epsilon) \right) \right].
\label{3:convergentSum}
\end{equation}

The function $g$ has a tower of infinite periodic singularities in the complex plane located at $\Delta x = \pm \Delta t - \ii(n/T \mp \epsilon)$, $n \in \mathbb{Z}$, with period $1/T$. We define the thermal Wightman function (up to the $1+1$ massless ambiguity) as the improper integral of this elementary function, fixing the arbitrary $\Delta t$-dependent constant in such a way that our definition is a Green function of the massless Klein-Gordon equation, and requiring evenness for $\Delta t$ for $(\Delta x)^2 - (\Delta t)^2 > 0$, \textit{i.e.},
\begin{align}
\langle T | \Phi(x) \Phi(x') | T \rangle  \doteq & \int \! d(\Delta x) \, g(T, \Delta x, \Delta t) + \text{const}(\Delta t) \nonumber \\
 = &-\frac{1}{4 \pi} \ln\left\{ \sinh\left[ \pi T (\Delta x + \Delta t - \ii \epsilon ) \right] \right\} \nonumber \\
& -\frac{1}{4 \pi} \ln\left\{ \sinh\left[ \pi T (\Delta x - \Delta t + \ii \epsilon ) \right] \right\}.
\label{3:ThermalWightman}
\end{align}

\begin{rem}
The thermal two-point function has periodic logarithmic singularities at $\Delta t = \Delta x + \ii(n/T \mp \epsilon)$ on the left-moving sector and at $\Delta t = - \Delta x - \ii(n/T \mp \epsilon)$ on the right-moving sector.
\end{rem}

We consider an inertial detector with worldline $\x(\tau) = (\tau \cosh \lambda, - \tau \sinh \lambda)$, where $\lambda \in \mathbb{R}$ is the detector's rapidity parameter with respect to the rest frame of the heat bath. We consider a switch-on in the asymptotic past. The $\mathcal{A}$ bi-distribution is obtained by differentiating twice the pullback of the Wightman function, and the transition rate is given by the stationary formula \eqref{DeCoStat}. The transition rate can be written in terms of the left- and right-moving contributions as $\dot{\mathcal{F}}(E) = \dot{\mathcal{F}}_{T_+}(E) + \dot{\mathcal{F}}_{T_-}(E)$, with
\begin{align}
\dot{\mathcal{F}}_{T_\pm}(E) \doteq -\frac{1}{16 \pi} \int_{-\infty}^{\infty} \! ds \, \ee^{-\ii E s}  \frac{(2 \pi T_\pm)^2}{\sinh^2 \left[\pi T_\pm\left(s-\ii \epsilon\right) \right]},
\label{3:DeCoThermal}
\end{align}
where $T_\pm \doteq \ee^{\pm \lambda} T$.

\begin{thm}
Each of the the left- and right-moving parts of the transition rate, defined by $\dot{\mathcal{F}}_{T_-}$ and $\dot{\mathcal{F}}_{T_+}$ in \eqref{3:DeCoThermal}, satisfies the detailed balance condition at temperature, $T_- \doteq \ee^{- \lambda} T$ and $T_+ \doteq \ee^{+ \lambda} T$ respectively.
\label{thm:DecoHeatBath}
\end{thm}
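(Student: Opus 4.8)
The plan is to reduce the statement to a single elementary Fourier integral and evaluate it in closed form. Since $\dot{\mathcal{F}}_{T_+}$ and $\dot{\mathcal{F}}_{T_-}$ in \eqref{3:DeCoThermal} have exactly the same functional form, with $T$ replaced by $T_+=\ee^{+\lambda}T$ and $T_-=\ee^{-\lambda}T$ respectively, it suffices to fix a generic $T>0$, set
\begin{equation}
\dot{\mathcal{F}}_{T}(E) \doteq -\frac{1}{16\pi}\int_{-\infty}^{\infty}\! ds\;\ee^{-\ii E s}\,k_T(s-\ii\epsilon), \qquad k_T(z)\doteq\frac{(2\pi T)^2}{\sinh^2(\pi T z)},
\end{equation}
and show that this function obeys the detailed balance condition \eqref{2:DBC} at temperature $T$; applying the conclusion with $T=T_+$ and $T=T_-$ then yields the two halves of the theorem.

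First I would record the analytic structure of $k_T$: it is $\ii/T$-periodic and even, with only double poles, located on the imaginary axis at $z=\ii n/T$, $n\in\mathbb{Z}$, and near each of them $k_T(z)=4\,(z-\ii n/T)^{-2}+O(1)$, with no simple-pole term (because $1/\sinh^2$ is even). The $\ii\epsilon$ moves every pole off the real axis. For $E\neq0$ one closes the contour in the lower half-plane when $E>0$ and in the upper half-plane when $E<0$; Jordan's lemma applies since $k_T$ decays exponentially along these arcs. Each enclosed double pole $s_n$ contributes a residue $-4\ii E\,\ee^{-\ii E s_n}$, and summing the resulting geometric series — which converges precisely because of the sign matching between the chosen half-plane and the exponent — gives, for every real $E\neq0$,
\begin{equation}
\dot{\mathcal{F}}_{T}(E)=\frac{E}{2\left(\ee^{E/T}-1\right)}>0 .
\label{plan:closedform}
\end{equation}

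With \eqref{plan:closedform} in hand the theorem is immediate: $\dot{\mathcal{F}}_{T}$ is positive on $\mathbb{R}\setminus\{0\}$, and
\begin{equation}
\frac{\dot{\mathcal{F}}_{T}(-E)}{\dot{\mathcal{F}}_{T}(E)}=\frac{-E/(\ee^{-E/T}-1)}{E/(\ee^{E/T}-1)}=\ee^{E/T},
\end{equation}
so $E^{-1}\ln\!\left[\dot{\mathcal{F}}_{T}(-E)/\dot{\mathcal{F}}_{T}(E)\right]=1/T$, which is exactly \eqref{2:DBC} with $\beta=1/T$. An alternative route, closer in spirit to the earlier heuristic argument, avoids evaluating the integral altogether: $k_T$ is analytic on the strip $-1/T<\operatorname{Im}z<0$, so in $\dot{\mathcal{F}}_{T}(-E)$ the contour may be pushed down by $1/T-2\epsilon$ without crossing a pole, and then the identities $k_T(z-\ii/T)=k_T(z)$ (periodicity) and $k_T(-z)=k_T(z)$ (evenness) return the integral defining $\dot{\mathcal{F}}_{T}(E)$ multiplied by $\ee^{E/T}$; this still requires $\dot{\mathcal{F}}_{T}(E)\neq0$, which again follows from \eqref{plan:closedform}.

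The main obstacle will be the contour bookkeeping around the $\ii\epsilon$-regularisation: keeping track of which poles lie in the chosen half-plane (the $n=0$ pole at $z=\ii\epsilon$ sits just above the real axis and must be included when $E<0$ but excluded when $E>0$), and checking that the discarded $O(1)$ and higher Laurent terms contribute nothing to the residues. The remaining ingredients — Jordan's lemma, resummation of the geometric series, and the algebraic identity for the ratio — are routine.
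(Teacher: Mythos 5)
Your proposal is correct, and it reaches the same Planckian closed form $\dot{\mathcal{F}}_{T}(E)=E/[2(\ee^{E/T}-1)]$ as the paper, but by a different route. The paper's proof is a one-line contour shift: it deforms the integration line to the midline of the analyticity strip (the statement ``$s=-\ii\pi/a+r$'' there is a leftover from the Unruh case and should read $s=-\ii/(2T_\pm)+r$) and then quotes the tabulated integral 3.985 of Gradshteyn--Ryzhik to produce the closed form. You instead close the contour in the appropriate half-plane, sum the residues over the full tower of double poles of $1/\sinh^{2}$ at $s=\ii(n/T+\epsilon)$, and resum the geometric series; your residue $-4\ii E\,\ee^{-\ii E s_n}$, the pole bookkeeping for the $n=0$ pole sitting at $\ii\epsilon$, and the final ratio $\ee^{E/T}$ are all right. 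What your approach buys is self-containedness (no table lookup) and an explicit exhibition of where the Bose factor comes from (the geometric sum over image poles); what it costs is a genuine, if routine, technicality you gloss over: $k_T$ does \emph{not} decay along arcs near the imaginary axis, where the poles accumulate, so ``Jordan's lemma'' must be implemented with arcs of radii $R_N=(N+\tfrac12)/T$ (or rectangles between poles) on which $\sinh^{2}$ is bounded away from zero and the decay of $\ee^{-\ii E s}$ does the work. Your second, period-shift argument (push the contour down by $1/T-2\epsilon$ and use periodicity plus evenness of $k_T$) is essentially the rigorous version of the Chapter~2 heuristic and is arguably the cleanest route of the three, needing only the nonvanishing of $\dot{\mathcal{F}}_T(E)$, which either closed-form computation supplies.
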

\begin{proof}
We deform the contour integral in \eqref{3:DeCoThermal} to $s = -\ii \pi /a + r$, where $r \in \mathbb{R}$ and, using formula 3.985 in \cite{Gradshteyn:2007}, we find
\begin{equation}
\dot{\mathcal{F}}_{T_\pm}(E) = \frac{E}{2\left(\ee^{E/T_\pm} - 1\right)}.
\label{DeCoHeatPlanckian}
\end{equation}

The detailed balance condition,
\begin{equation}
\frac{1}{T} = \frac{1}{E} \ln \left( \frac{\dot{\mathcal{F}}_\pm(-E)}{\dot{\mathcal{F}}_\pm(E)} \right) = \frac{1}{T_\pm}
\end{equation}
is satisfied at the correct temperatures.
\end{proof}

The content of the theorem is that the rate of response of a detector (or ensemble thereof) that is travelling inertially with respect to the bath will detect left-moving and right-moving particles at a temperature $T$, compensated by the appropriate Doppler shift due to the relative velocity of the detector with respect to the bath.   

A detector that is static with respect to the bath will thermalise exactly at the heat bath temperature, as can be seen by examining the total rate of the detector,
\begin{equation}
\dot{\mathcal{F}}(E) = \frac{E}{2} \left(\frac{1}{\ee^{E/T_+} - 1} + \frac{1}{\ee^{E/T_-} - 1}\right),
\label{3:ThermalBathRate}
\end{equation}
which simplifies in the special case $\lambda = 0$ to
\begin{equation}
\dot{\mathcal{F}}(E) = \frac{E}{\ee^{E/T} - 1}.
\label{3:DecoThermalStat}
\end{equation}

Eq. \eqref{3:DecoThermalStat} conforms fully with the $3+1$ expectation, where an Unurh-DeWitt detector traveling statically in a heat bath thermalises at the heat bath temperature.

\subsubsection{Unruh effect}

Let again $(M,g)$ be $1+1$ Minkowski spacetime and $\Phi$ be a massless scalar field in the Poincar\'e invariant Minkowski vacuum. The trajectory of a uniformly accelerated detector confined to the right Rindler wedge is the integral curve of the boost Killing vector field $\xi \doteq t \partial_x + x \partial_t$, given by the solution of the equation $\dot{\mathsf{x}}(\tau) = \xi(\mathsf{x}(\tau))$. The solution, parametrised by the proper time normalised to $g(\dot{\mathsf{x}},\dot{\mathsf{x}}) = -1$, is
\begin{equation}
\mathsf{x}(\tau) = \left(a^{-1} \sinh(a \tau), a^{-1} \cosh(a \tau)\right),
\end{equation}
where $a>0$ is the magnitude of the proper acceleration. By construction, the trajectory is stationary along $\xi$. 

With the detector switch-on pushed to the asymptotic past, the transition rate is independent of time and we can use the stationary formula \eqref{DeCoStat}. Differentiating twice the pullback of the Wightman function and using eq. \eqref{DeCoStat} we have that
\begin{equation}
\dot{\mathcal{F}}(E) = \int_{-\infty}^{\infty} \! ds \, \ee^{-\ii E s} \left(-\frac{a^2}{8 \pi \sinh^2 (a(s-\ii \epsilon)/2)} \right).
\label{DeCoUnruh}
\end{equation}

\begin{thm}
The transition rate \eqref{DeCoUnruh} satisfies the detailed balance condition at the Unruh temperature, $T_\text{U} \doteq a/2 \pi$.
\label{thm:DecoUnruh}
\end{thm}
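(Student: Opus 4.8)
The strategy is to evaluate the integral \eqref{DeCoUnruh} in closed form and then verify the detailed balance condition \eqref{2:DBC} directly. The computation is the exact analogue of the one in the proof of Theorem~\ref{thm:DecoHeatBath}: up to an overall numerical factor the integrand of \eqref{DeCoUnruh} coincides with the single-sector thermal integrand of \eqref{3:DeCoThermal} under the identification $\pi T_\pm = a/2$, so in fact $\dot{\mathcal{F}}(E) = 2\,\dot{\mathcal{F}}_{T_\pm}(E)\big|_{\pi T_\pm = a/2}$. The factor $2$ has a transparent meaning: along a Rindler orbit the left- and right-moving sectors are \emph{both} thermal at the \emph{same} temperature $T_\text{U}$, whereas the boosted heat-bath detector of Theorem~\ref{thm:DecoHeatBath} sees the two sectors Doppler-shifted to $T_+ \neq T_-$.

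Concretely, I would first record the analytic structure of the integrand: for $\epsilon > 0$ the function $s \mapsto -a^2\big/\bigl(8\pi\sinh^2(a(s-\ii\epsilon)/2)\bigr)$ is meromorphic with double poles at $s = \ii\epsilon + 2\pi\ii n/a$ ($n \in \mathbb{Z}$), is periodic with period $2\pi\ii/a$, has no poles on the real axis, and decays like $\ee^{-a|\mathrm{Re}\,s|}$ uniformly in the strip $-\pi/a \le \mathrm{Im}\,s \le 0$; in particular the integral \eqref{DeCoUnruh} is an ordinary convergent integral for each $\epsilon > 0$, and the $s = 0$ behaviour $-1/(2\pi s^2)$ that was cancelled against the $(2\pi s^2)^{-1}$ counterterm in \eqref{StaticEpsilon} stays regulated. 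I would then shift the contour down by half a period to $\mathrm{Im}\,s = -\pi/a$, crossing no pole, and put $s = r - \ii\pi/a$; using $\sinh\bigl(a(s-\ii\epsilon)/2\bigr) = -\ii\cosh\bigl(a(r-\ii\epsilon)/2\bigr)$ the integrand becomes proportional to $\ee^{-\pi E/a}\,\ee^{-\ii Er}\cosh^{-2}(ar/2)$, and the elementary integral $\int_{-\infty}^{\infty}\ee^{-\ii Er}\cosh^{-2}(ar/2)\,dr = 4\pi E\big/\bigl(a^2\sinh(\pi E/a)\bigr)$ (equivalently formula 3.985 of \cite{Gradshteyn:2007}, exactly as for \eqref{DeCoHeatPlanckian}) yields the Planckian form
\begin{equation}
\dot{\mathcal{F}}(E) = \frac{E}{\ee^{2\pi E/a}-1} = \frac{E}{\ee^{E/T_\text{U}}-1}, \qquad T_\text{U} = \frac{a}{2\pi}.
\end{equation}
Finally I would form the ratio $\dot{\mathcal{F}}(-E)/\dot{\mathcal{F}}(E) = -\bigl(\ee^{E/T_\text{U}}-1\bigr)\big/\bigl(\ee^{-E/T_\text{U}}-1\bigr) = \ee^{E/T_\text{U}}$, whence $E^{-1}\ln\bigl(\dot{\mathcal{F}}(-E)/\dot{\mathcal{F}}(E)\bigr) = 1/T_\text{U}$, i.e.\ \eqref{2:DBC} holds with $\beta = 1/T_\text{U}$.

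There is essentially no obstacle here: the statement is a corollary of the contour computation already carried out in Section~\ref{sec:3checks}. The only points needing a little care are the bookkeeping of the $\ii\epsilon$ prescription under the contour shift (verifying that no pole is crossed and that the $s=0$ singularity remains controlled) and keeping track of the overall factor $2$ relative to the single-sector heat-bath result; neither affects the conclusion, since the detailed-balance ratio is insensitive to positive multiplicative constants.
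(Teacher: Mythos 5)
Your proposal is correct and follows essentially the same route as the paper: deform the contour to $s = r - \ii\pi/a$, evaluate via the standard integral (formula 3.985 of \cite{Gradshteyn:2007}), obtain $\dot{\mathcal{F}}(E) = E/(\ee^{2\pi E/a}-1)$, and read off detailed balance from the ratio. The extra bookkeeping you supply (pole locations, decay in the strip, and the factor-of-two relation to the single-sector heat-bath result) is accurate but not needed beyond what the paper records.
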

\begin{proof}
As before, we deform the contour integral to $s = -\ii \pi /a + r$, where $r \in \mathbb{R}$ and, using formula 3.985 in \cite{Gradshteyn:2007}, we find
\begin{equation}
\dot{\mathcal{F}}(E) = \frac{E}{\ee^{2 \pi E/a} - 1},
\label{DeCoUnruhPlanckian}
\end{equation}
which satisfies the detailed balance condition at the Unruh temperature,
\begin{equation}
\frac{1}{T} = \frac{1}{E} \ln \left( \frac{\dot{\mathcal{F}}(-E)}{\dot{\mathcal{F}}(E)} \right) = \frac{2\pi}{a}.
\end{equation}
\end{proof}

The form of eq. \eqref{DeCoUnruhPlanckian} is the usual Planckian spectrum encountered in $3+1$ dimensions for a non-derivative detector coupled to a massless field along uniformly accelerated trajectories. Moreover, it is known that in even dimensions the spectrum arising from the Unruh effect is bosonic, whereas in odd dimensions it is fermionic \cite{Takagi:1986kn}, and result also conforms to this expectation.

With the confidence that the $1+1$ detector that we have introduced satisfies the conditions verified in Section~\ref{sec:3checks}, we are set to examine phenomena on curved spacetimes and non-stationary situations in Chapter~\ref{ch:BH}.

\chapter{Quantum aspects of black holes}
\label{ch:BH}

The study of quantum field theory in black hole spacetimes was pioneered by Hawking, with the discovery that black holes emit thermal radiation at a temperature proportional to their surface gravity \cite{Hawking:1974rv, Hawking:1974sw}. Understanding this radiation has been an effort at the forefront of a number of physical investigations, mainly because of the thermal character of radiation. Suppose a pure state is defined on a Cauchy hypersurface at early stages of the collapse of a star that will form a black hole. Once the black hole has formed, the state defined in the asymptotic future infinity appears to be thermal, because there exist correlations with field modes inside the black hole. The future infinity is not Cauchy in the whole spacetime and the modes inside the black hole region need to be traced. As a result of the collapse, the state will appear to have evolved from a pure state to a mixed state. This apparent pure to mixed evolution for an observer that only has access to the exterior region of the black hole should not be alarming. It simply indicates that there exist field modes in the interior black hole region to which one has no access from the exterior region \cite{Wald:1995yp}. 

Nevertheless, the conventional wisdom affirms that if the back-reaction of such black hole radiation onto the spacetime is considered, the black hole will radiate all of its mass and the information, initially encoded in a pure state, will be lost in the form of thermal radiation in the asymptotic future. This unresolved problem bears the name of the information paradox \cite{Hawking:1976ra}. The resolution of this paradox has stimulated much research. We simply mention non-exhaustively a few lines of attack. These include the proposal of brick-wall conditions at the horizon \cite{'tHooft:1996tq}, quantum gravity resolutions \cite{Ashtekar:2008jd}, non-unitary quantum dynamics  and firewalls \cite{Almheiri:2012rt}. More recently, it has been proposed that the information of infalling matter is stored at the black hole horizon as \textit{soft hair} \cite{Hawking:2016msc}. 

The objective of this chapter is to put forward several calculations in $1+1$ dimensions that allow analytical control in many interesting situations that occur in black hole spacetimes, especially in strongly time-dependent situations, in which analytic control in $3+1$ dimensions is more difficult. The message that we wish to convey is that many of the interesting features of full $(3+1)$-dimensional black holes survive in the $(1+1)$-dimensional setting, where the conformal symmetry of the theory allows one to compute analytically interesting quantities, such as the transition rate of a detector and the renormalised stress-energy tensor. An important remark is that the issue of back-reaction is a subtle one in $1+1$ dimensions, because the Einstein tensor vanishes identically, and the Einstein field equations are trivial. We do not deal with this issue, but one can certainly venture in this direction. 

In Section \ref{sec:4Classical}, we discuss classical aspects of black holes. First, we introduce a receding mirror spacetime \textit{\`a la} Davies-Fulling \cite{Carlitz:1986nh, Davies:1976hi, Davies:1977yv, Good:2013lca, Wang:2013lex} that mimics the process by which a spherically symmetric collapsing star forms a Schwarzschild black hole. We consider a mirror receding to the left in $1+1$ Minkowski spacetime and impose Dirichlet boundary conditions along the mirror trajectory, in such a way that the asymptotic past matches the Minkowski half-space. The motion of the mirror produces a shift in the classical modes that mimics the redshift of field modes in a collapsing star at late times. Second, we introduce the $1+1$ Schwarzschild spacetime and discuss the classical features of a massless scalar field on this spacetime. Third, we consider the $(1+1)$-dimensional Reissner-Nordstr\"om spacetime. This is especially interesting due to the appearance of future and past Cauchy horizons, a feature that is shared by the astrophysically relevant $(3+1)$-dimensional Kerr black hole. 

Sections \ref{sec:4Quantisation} and \ref{sec:4QuantumEffects} deal with quantum aspects of black holes. In Section \ref{sec:4Quantisation} we shall perform the quantisation of massless scalar fields in the spacetimes introduced in Section \ref{sec:4Classical}. We use conformal techniques that allow one to obtain the two-point function directly. This is the main advantage of massless $1+1$ dimensional field theories. In the case of the receding mirror spacetime, we choose a field state that resembles the Minkowski half-space vacuum in the asymptotic past. In the Schwarzschild and generalised Reissner-Nordstr\"om spacetimes, we are concerned with massless fields in the Unruh and Hartle-Hawking-Israel vacua.

In Section \ref{sec:4QuantumEffects} we study several quantum effects of black holes. The main result in the receding mirror spacetime is the thermalisation of the rate of inertial detectors in the asymptotic future, by a radiation flux coming from the receding mirror. In the $1+1$ Schwarzschild spacetime we verify that static detectors thermalise due to the Hawking radiation, and we verify the loss of thermality of inertial infalling detectors. Their asymptotic behaviour, both near infinity and as they approach the curvature sigularity, is computed analytically, while the interpolating behaviour is calculated numerically. We also consider the rate of an inertial detector that switches on near the white hole singularity and follows a ``straight up" trajectory across the bifurcation point of the Killing horizons into the black hole and analyse numerically the transition rate as a function of the detector's energy gap, $E$, and of the detector's position relative to the white hole and black hole singularities. In the $1+1$ Reissner-Nordstr\"om spacetime we calculate the strength at which the detector's transition rate diverges as the detector approaches the Cauchy horizon, as well as the renormalised local energy along the worldline of such detector.

Finally, in Section \ref{sec:4Rindler}, we discuss the validity and limitations of the $1+1$ models that we have examined in the light of our expectations in $3+1$ situations.

Throughout this chapter we set $G_\text{N}=1$, but continue to work in the regime of no back-reaction of the fields on the metric.

\section{Classical aspects of black holes in $1+1$ dimensions}
\label{sec:4Classical}

We now introduce the classical geometry of the aforementioned spacetimes. In Subsection \ref{subsec:4Mirror}, we present the receding mirror spacetime that we use to model the collapse of a star in $1+1$ dimensions. In Subsection \ref{subsec:4Schw} we introduce the $1+1$ Schwarzschild black hole. In Subsection \ref{subsec:4RN} we generalise the $(1+1)$-dimensional Reissner-Nordstr\"om black hole to a family of black hole spacetimes which have the same causal structure. We refer to this class of spacetimes as the generalised Reissner-Nordstr\"om black hole.

In this chapter, we deal with asymptotically flat spacetimes, that are, additionally, \textit{strongly asymptotically predictable spacetimes}\footnote{We follow Wald \cite{Wald:1984rg}, chap. 11 and 12, in our definition, but see also the classical monograph by Hawking and Ellis, chap. 9 \cite{Hawking:1974rv}.}.

\begin{defn}
An asymptotically flat spacetime $(M,g,t)$ with a conformal map to the spacetime $(\tilde{M}, \tilde{g},\tilde{t})$, $f: M \to f(M) \subset \tilde{M}$, $f^*\tilde{g}|_{f(M)} = \Omega^2  g$, $\tilde{t} = f_*t$, is \textit{strongly asymptotically predictable} if there exists an open region $\tilde{U} \subset \tilde{M}$, such that $\overline{M \cap J^-\left(\mathscr{I}^+\right)} \subset \tilde{U}$ and $(\tilde{U}, \tilde{g})$ is a globally hyperbolic spacetime.
\end{defn}

We are obliged to provide the following important definition:

\begin{defn}
Let $(M,g,t)$ be a strongly asymptotically predictable spacetime. $(M,g,t)$ is said to contain a \textit{black hole} if $M$ is not contained in the causal past of future null infinity, $J^-\left(\mathscr{I}^+\right)$. We call $B \doteq M \setminus J^-\left(\mathscr{I}^+\right)$ the black hole region of the spacetime and the event horizon is the intersection of the boundary of the causal past of future null infinity with $M$, i.e., $H \doteq \partial J^-\left(\mathscr{I}^+\right) \cap M$.
\end{defn}

This definition formalises the intuition that black holes are regions of spacetime from which classical signals cannot escape to infinity.

A final technical point that we need to clarify is that the spacetimes that we introduce in this chapter are \textit{geodesically convex}. This property, which strengthens the notion of geodesic neighbourhoods introduced in Def. \ref{2:Synge}, will allow us to perform asymptotic analysis unambiguously, in the spacetimes that we will define below, in the situations in which we are interested.

\begin{defn}
A spacetime $(M,g)$ is said to be a geodesically convex set if, given any two points in $M$, there is a unique geodesic between the two points lying in $M$.
\end{defn}

\subsection{The receding mirror spacetime}
\label{subsec:4Mirror}

Let $(M,g)$ be the Minkowski spacetime in $1+1$ dimensions. The spacetime metric can be written in terms of the global null coordinates $g = -dudv$, where $u \doteq t-x$ and $v \doteq t+x$. We consider a mirror moving in this spacetime with a trajectory given by
\begin{equation}
v = p(u) = -\frac{1}{\kappa} \ln\left(1+ \ee^{-\kappa u}\right),
\label{mirror}
\end{equation}
where $\kappa$ is a positive constant. Writing eq. \eqref{mirror} as $\ee^{-\kappa v} = 1 + \ee^{-\kappa u}$ allows one to parametrise the trajectory with the proper time $\tau \in (-\infty, 0)$ by
\begin{subequations}
\begin{align}
u = -\frac{2}{\kappa} \ln \left[ \sinh(-\kappa \tau/2)\right], \\
v = -\frac{2}{\kappa} \ln \left[ \cosh(-\kappa \tau/2)\right].
\end{align}
\label{4:uvmirror}
\end{subequations}

The proper velocity, $\dot{\mathsf{x}}$, and proper acceleration $\ddot{\mathsf{x}}$ are towards the negative $x$ coordinate. The dots represent derivatives with respect to the proper time. The magnitude of the acceleration, $|\ddot{\mathsf{x}}| = \sqrt{-g\left(\ddot{\mathsf{x}},\ddot{\mathsf{x}}\right)}$, is equal to $|\ddot{\mathsf{x}}| = (\kappa/2)  \csch(\kappa \tau/2) \sech(\kappa \tau/2) = \kappa \csch(\kappa \tau)$. At early times, $\tau \rightarrow - \infty$, the trajectory is asymptotically inertial, $|\ddot{\mathsf{x}}| \rightarrow 0$, with a proper acceleration vanishing exponentially fast. At late times, the mirror recedes to infinity as $\tau \rightarrow 0_-$ and the trajectory asymptotes the null line $v = 0$ from below. The proper acceleration diverges as $-1/\tau$. A spacetime diagram is shown in Fig. \ref{RecMir}.

\begin{figure}[!ht]
\centering
 \includegraphics[width=0.7\textwidth]{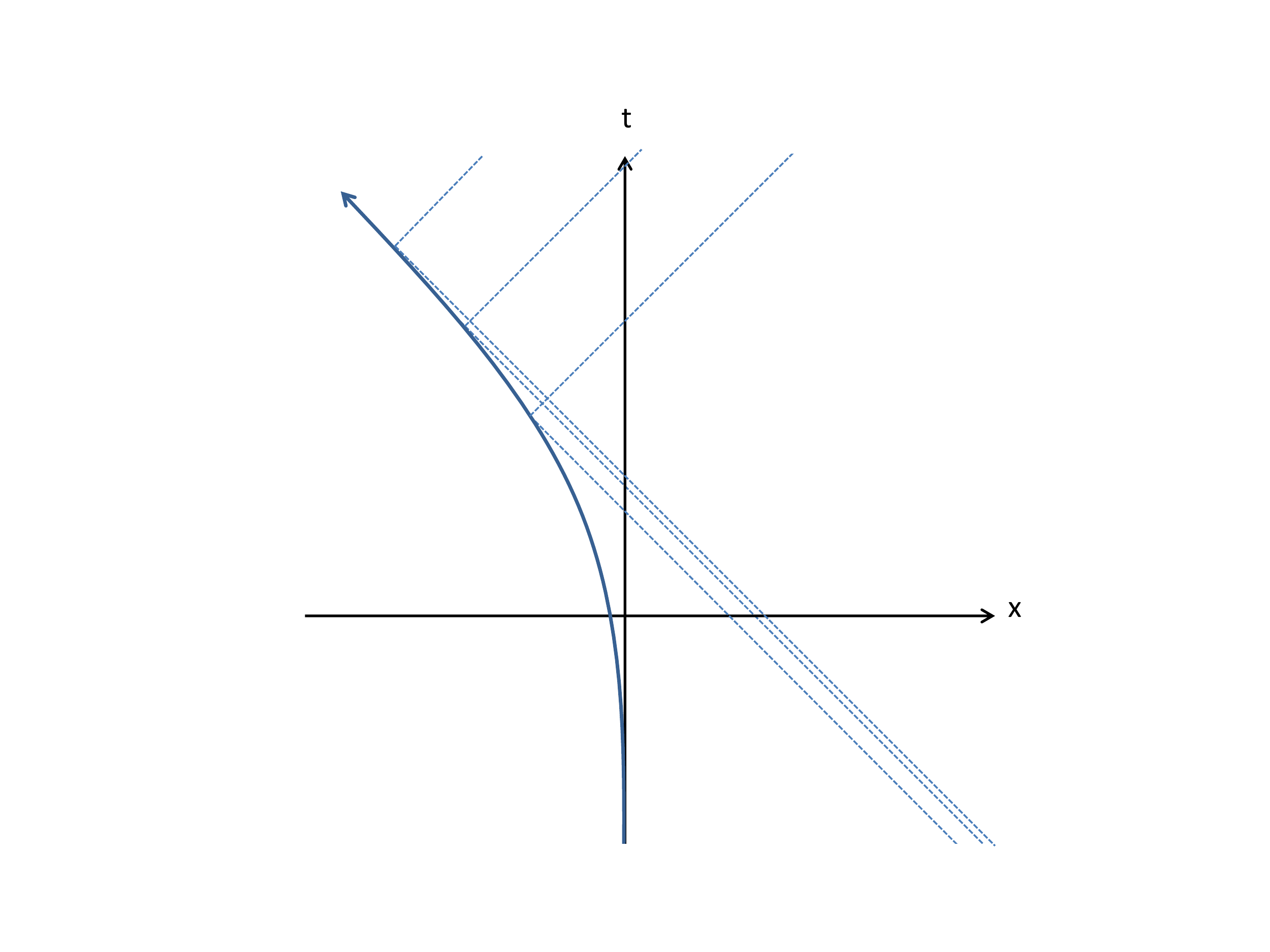}
\caption[Receding mirror spacetime]{Minkowski spacetime with a mirror receding according to eq. \eqref{mirror}. The receding mirror spacetime corresponds to the right hand side of the mirror. Dashed lines show
a selection of null geodesics that bounce off the mirror.}
\label{RecMir}
\end{figure}

We now consider the spacetime $\left(M',g\right)$ which consists of the Minkowski spacetime submanifold at the right of the mirror equipped with the inherited Minkowski metric. This is the receding mirror spacetime.

\subsection{The $1+1$ Schwarzschild black hole}
\label{subsec:4Schw}

The $(1+1)$-dimensional Schwarzschild spacetime is constructed by analogy with the $(3+1)$-dimensional counterpart. The $3+1$ Schwarzschild black hole consists of the real manifold $\mathbb{R}^2 \times S^2$ equiped with a metric $g_{3+1}$ which in the exterior region, $r > 2M$, is locally given by
\begin{equation}
g_{3+1}|_{r>2M}(t,r,\theta,\phi) = -F(r) dt^2 + F(r)^{-1} dr^2 + r^2 d\Omega^2(\theta,\phi),
\end{equation}
where $d\Omega^2$ is the Riemannian metric on $S^2$ and $F(r) = 1-2M/r$. $M > 0$ is a length parameter  proportional to the ADM mass\footnote{The ADM mass or ADM energy of a spacetime was introduced in \cite{Arnowitt:1959ah}. See also standard texts, such as \cite{Wald:1984rg}.} of the black hole . The $1+1$ Schwarzschild black hole is obtained by dropping the $S^2$-dependence in the $3+1$ metric. In other words, it consists of the manifold $(M,g)$ with $M = \mathbb{R}^2$ and metric in the exterior region given by
\begin{equation}
g|_{r>2M}(t,r) = -F(r)dt^2 + F(r)^{-1} dr^2.
\end{equation}

The exterior region is static with Killing vector field $\xi \doteq \partial_t$. An event horizon is located at $r = 2M$ and the spacetime is regular across the event horizon $r_+$ as is evident by introducing the Eddington-Finkelstein time coordinate $v \doteq t + r^*$, where $dr^* \doteq dr/F(r)$ is the tortoise coordinate. The surface gravity $\kappa$ with respect to $\xi = \partial_v$ satisfies the eigenvalue equation $\nabla_\xi \xi = \kappa \xi$ with solution $\kappa = 1/(4M)$. 

Using $\kappa$ we can perform the Kruskal extension of the Schwarzschild spacetime. We introduce null coordinates in the exterior region $(u, v) \doteq (t-r^*, t+r^*)$. We cover the region $r > 0$ with the Kruskal-Szekeres chart $(U,V) \doteq (-\exp(-\kappa u), \exp(\kappa v))$. We extend $(U,V)$ to cover four regions as indicated in Table \ref{Table:Kruskal}.

\begin{table}[h]
\begin{center}
\begin{tabular}{ l l c r }
  & Region & sgn$(U)$ & sgn$(V)$ \\
  \hline
  I: Exterior &$2M < r$ & $-1$ & $+1$ \\
  II: Black hole &$0< r < 2M$ & $+1$ & $+1$ \\
  III: Isometric exterior &$2M < r$ & $+1$ & $-1$ \\
  IV: White hole &$0< r < 2M$ & $-1$ & $-1$ 
\end{tabular}
\caption[Kruskal-Szekeres coordinates for the Schwarzschild extension]{Coordinate charts of the Kruskal extension of the Schwarzschild black hole.}
\label{Table:Kruskal}
\end{center}
\end{table}

Figure \ref{SchwConformal} depicts the four regions in a conformal diagram. The new manifold, $\mathbb{R}^2 \times S^2$, is the Kruskal extension of the Schwarzschild black hole. The spacetime contains a bifurcate Killing horizon, $H^- \cup H^+$, at $r = 2M$, generated by the Kruskal Killing vector $\xi \doteq (4M)^{-1}(-U \partial_U + V \partial_V)$. $\xi$ is timelike in the regions I and III and spacelike in regions II and IV, and takes the form $\xi = \partial_t$ in region I. We call $H^-$, the null surface generated by $\partial_U$, the past horizon and $H^+$, the null surface generated by $\partial_V$, the future horizon. We continue to denote the extension by the pair $(M,g)$. The metric is given locally by
\begin{equation}
g(U,V) = F(r)\left( \kappa ^2 U V \right)^{-1} dU dV
\label{4:SchwKruskalMetric}
\end{equation}
in the whole Kruskal manifold, except for $UV = 0$, where the metric is understood to be given by the limit of eq. \eqref{4:SchwKruskalMetric}. $H^+$ is located at $U = 0$ and $H^-$ is located at $V=0$.

\begin{figure}
\begin{center}
\includegraphics[height=8cm]{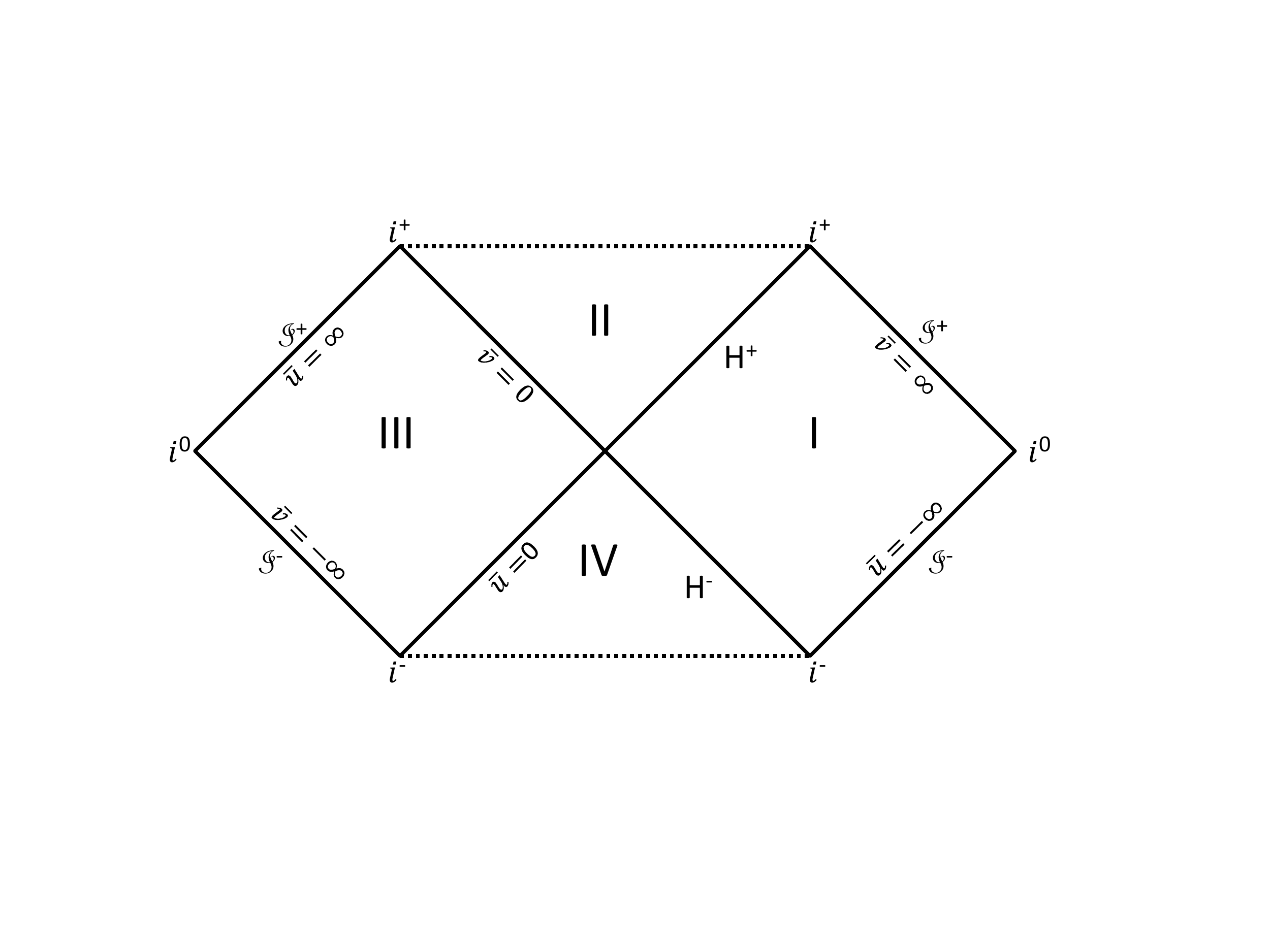}
\caption[Conformal diagram of the maximally extended Schwarzschild spacetime.]{Conformal diagram of the maximally extended Schwarzschild black hole. The bifurcate Killing horizon at $r=2M$ is $H^+ \cup H^-$. The singularities in regions II and IV are indicated by dotted lines. The exterior and interior black hole regions correspond to regions I and II respectively, while III is the isometric copy of region I and IV is the white hole region.}
\label{SchwConformal}
\end{center}
\end{figure}

\subsubsection{Geodesics}

A revision of the geodesics in the Schwarzschild spacetime will be useful for our computations below. If one starts out in region I and falls freely into region II along a timelike trajectory, it is well known that the geodesics, in Schwarzschild coordinates, are given by the integral curves of
\begin{subequations}
\begin{align}
\dot{t} &= \frac{\varepsilon}{F(r)}, \label{t-eq} \\
\dot{r} &= -\sqrt{\varepsilon^2-F(r)}. \label{r-eq}
\end{align}
\label{4:Schweqmotion} 
\end{subequations}

The dot stands for a derivative with respect to the proper time along the geodesic path, in region I. The sign of $\dot{r}$ indicates that the trajectory is falling into the black hole region. The solutions to \eqref{4:Schweqmotion} can be continued to region II using appropriate coordinates, \textit{e.g.}, Kruskal coordinates. 

We recall a convenient parametrisation of the geodesic solutions. Let $\varepsilon > 1$, \textit{i.e.}, the geodesic falls in from infinity with initial non-vanishing speed $\left(1-\varepsilon^{-2}\right)^{1/2}$ with respect to the Killing vector $\xi$. The geodesic can be parametrised as
\begin{subequations}
\begin{align}
\tau & = \frac{M}{\left(\varepsilon^2 - 1\right)^{3/2}}(\sinh \chi - \chi), \\
r & = \frac{M}{\left(\varepsilon^2 - 1\right)}(\cosh \chi - 1), \\
t & = \frac{M \varepsilon}{\left(\varepsilon^2 - 1 \right)^{3/2}} \left[\sinh\chi +  \left(2\varepsilon^2 - 3 \right)\chi \right] + 2M \ln \left( \frac{-\tanh(\chi/2) + \left(1 - \varepsilon^{-2} \right)^{1/2}}{-\tanh(\chi/2) - \left(1 - \varepsilon^{-2} \right)^{1/2}} \right), \label{4:SchwGeodesicsE>1t}
\end{align}
\label{4:SchwGeodesicsE>1}
\end{subequations}
\\ 
where $\chi \in (-\infty,0)$ and $\tau \in (-\infty,0)$, so that the trajectory starts in the asymptotic past as $\chi \to - \infty$ and approaches the singularity as $\chi \to 0$, when the proper time of the observer $\tau \to 0$. The horizon-crossing occurs at $\chi = \chi_h \doteq -2 \arctanh \left[\left( 1 - \varepsilon^{-2} \right)^{1/2} \right]$. Eq. \eqref{4:SchwGeodesicsE>1t} only holds in region I and the parametrisation in region II is obtained by introducing appropriate coordinates.

Let $\varepsilon = 1$, \textit{i.e.}, the geodesic falls into the black hole from infinity with initial vanishing speed with respect to $\chi$. The solution to the geodesic equations in region I \eqref{4:Schweqmotion} is
\begin{subequations}
\begin{align}
r & = 2M[-3 \tau/(4M)]^{2/3}, \\
t & = \tau - 4M[-3\tau/(4M)]^{1/3} + 2M \ln \left( \frac{[-3\tau/(4M)]^{1/3}+1}{[-3\tau/(4M)]^{1/3}-1} \right), \label{4:SchwGeodesicE=1t}
\end{align}
\label{4:SchwGeodesicsE=1}
\end{subequations}
\\
where $\tau \in (-\infty,0)$, the horizon-crossing occurs at $\tau = \tau_h \doteq -4M/3$ and the singularity is approached as $\tau \to 0$. The solution in region II is obtained by introducing appropriate coordinates in eq. \eqref{4:SchwGeodesicE=1t}.

Let $0< \varepsilon < 1$, \textit{i.e.}, the geodesic attains a finite maximum radial coordinate $2M < r < \infty$. The geodesic can be parametrised as
\begin{subequations}
\begin{align}
\tau & = \frac{M}{\left(1 - \varepsilon^2\right)^{3/2}}(\sin \eta + \eta), \label{4:SchwGeodesicsE<1tau}\\
r & = \frac{M}{\left(1 - \varepsilon^2\right)}(\cos \eta +1), \label{4:SchwGeodesicsE<1r}\\
t & = \frac{M \varepsilon}{\left(1 - \varepsilon^2 \right)^{3/2}} \left[\sin\eta +  \left(3 - 2\eta^2 \right)\eta \right] + 2M \ln \left( \frac{1 + \left(\varepsilon^{-2}-1 \right)^{1/2} \tan (\eta/2) }{1 - \left(\varepsilon^{-2}-1 \right)^{1/2} \tan (\eta/2)} \right), \label{4:SchwGeodesicsE<1t}
\end{align}
\label{4:SchwGeodesicsE<1}
\end{subequations}
\\
where $\eta \in (-\pi, \pi)$, so that the trajectory starts at the white hole singularity as $\eta \to -\pi$ and ends at the black hole singularity as $\eta \to \pi$. The trajectory reaches its maximum $r$-value, $r = r_\text{max} \doteq 2M/\left(1-\varepsilon^2\right)$ at proper time $\tau = 0$, and the total proper time elapsed between the singularities is $2\pi M \left(1-\varepsilon^2 \right)^{-3/2}$. The horizon-crossings occur at $\eta = \pm \eta_h$, where $\eta_h \doteq 2 \arctan \left[ \left(\varepsilon^{-2} -1 \right)^{1/2} \right]$. Eq. \eqref{4:SchwGeodesicsE<1t} only holds in region I and the parametrisation in region II is obtained by introducing appropriate coordinates.

Finally, let $\varepsilon = 0$, for the family of geodesics that stay in regions IV and II only, crossing through the bifurcation point $U = V = 0$, which are of the form
\begin{equation}
U = V = \sin(\eta/2) \exp \left[\frac{1}{2} \cos^2(\eta/2) \right],
\label{4:SchwGeodesicsWH}
\end{equation}
where $\eta \in (-\pi,\pi)$, and with $\tau$ and $r$ given by eq. \eqref{4:SchwGeodesicsE<1tau} and \eqref{4:SchwGeodesicsE<1r} respectively.

This concludes our discussion of the classical aspects of the $1+1$ Schwarzschild black hole.

\subsection{The $1+1$ generalised Reissner-Nordstr\"om black hole}
\label{subsec:4RN}

We introduce a class of $(1+1)$-dimensional spacetimes that generalise the Reissner-Nordstr\"om black hole. Let $(M,g)$ be a $(1+1)$-dimensional black hole spacetime with metric given in the exterior region by
\begin{equation}
g(t,r) = -F(r) dt^2 + F(r)^{-1} dr^2,
\end{equation}
where $r \in (0,\infty)$ is a radial coordinate and $t \in (-\infty,\infty)$ is a time coordinate and with $F$ a smooth, real-valued function\footnote{One does not need impose such regularity conditions on $F$, but for concreteness we consider only smooth functions.} satisfying in terms of the global $r$ coordinate
\begin{subequations}
  \begin{align}
    F(r)&
    \begin{cases}
      > 0, & \text{if}\ r>r_+, \\
      < 0, & \text{if}\ r_-<r<r_+, \\
      > 0, & \text{if}\ r<r_-.
    \end{cases}
    \\
    F'(r) & 
    \begin{cases}
      > 0, & \text{if}\ r=r_+, \\
      < 0, & \text{if}\ r=r_-,
    \end{cases}    
  \end{align}
  \label{FBH}
\end{subequations}
\\
with $0 < r_- < r_+$. The exterior region is static with Killing vector field $\xi \doteq \partial_t$. Given that $F$ is continuous, the spacetime contains two horizons, at $r = r_+$ and $r = r_-$. The spacetime is regular across the horizons $r= r_+$ and $r = r_-$. This is evident by introducing the Eddington-Finkelstein time coordinate $v \doteq t + r^*$, where $dr^* \doteq dr/F(r)$ is the tortoise coordinate. The surface gravities associated to $r_-$ and $r_+$, $\kappa_-$ and $\kappa_+$ respectively, are $F'(r_-)/2 = \kappa_- < 0 < \kappa_ + = F'(r_+)/2$. A primed quantity, as usual, indicates a derivative with respect to the argument. A useful piece of notation for computational purposes is to introduce the functions $f$ and $g$ defined by $(r-r_+)g(r) \doteq F(r) \doteq (r-r_-) f(r)$. Notice that $\kappa_- = f(r_-)/2$ and $\kappa_+ = g(r_+)/2$.

The details of the spacetime under consideration depend on the specifics of the function $F$. As an example, the $1+1$ non-extremal Reissner-Nordstr\"om black hole is a member of this family with $F(r)  = (r-r_+)(r-r_-)/r^2$ and $r_\pm  = M \pm (M^2-Q^2)^{1/2}$, where $M$ is associated with the total mass of the spacetime, $Q$ is associated with the total charge and $0 < Q^2 < M^2$. Specifying $F$ is not essential for our calculations and we shall deal with the whole class of spacetimes at once.

Let us now perform the Kruskal extension of this spacetime. We continue to call the extension $(M,g)$ with a slight abuse of notation, but we believe that no confusion may arise from this. Let us start by introducing null coordinates in the exterior region $(u, v) \doteq (t-r^*, t+r^*)$. One can then cover the region $r > r_-$ by introducing the Kruskal-Szekeres chart $(U,V) \doteq (-\exp(-\kappa_+ u), \exp(\kappa_+ v))$. The pair $(U,V)$ can be extended to cover four regions as indicated in Table \ref{exterior}.

\begin{table}[h]
\begin{center}
\begin{tabular}{ l l c r }
  & Region & sgn$(U)$ & sgn$(V)$ \\
  \hline
  I: Exterior &$r_+ < r$ & $-1$ & $+1$ \\
  II: Black hole &$r_-< r < r_+$ & $+1$ & $+1$ \\
  I': Isometric exterior &$r_+ < r$ & $+1$ & $-1$ \\
  II': White hole &$r_-< r < r_+$ & $-1$ & $-1$ 
\end{tabular}
\caption[Kruskal-Szekeres coordinates for $r_-<r<\infty$ in the generalised Reissner-Nordstr\"om extension.]{Kruskal-Szekeres coordinates in the region $r_-<r<\infty$ of the generalised $1+1$ Reissner-Nordstr\"om extension.}
\label{exterior}
\end{center}
\end{table}

The analytic extension contains the Killing vector $\xi \doteq \kappa_+ (-U \partial_U + V \partial_V)$. A bifurcate Killing horizon is located at $r_+$, $H^\text{P} \cup H^\text{F}$, where $H^\text{P}$ is generated by $\partial_U$ and $H^\text{F}$ by $\partial_V$ (see Fig. \ref{RNConformal}). We refer to the open subset of $M$ covered by the charts $(U,V)$ as $M_\text{H} \doteq \text{I} \cup H^\text{P} \cup \text{II} \cup \text{I'} \cup H^\text{F} \cup \text{II'}$. Notice that $(M_\text{H},g|_{M_\text{H}})$ is a spacetime on its own right, where the metric can be written locally as
\begin{equation}
g|_{M_\text{H}}(U,V) = F(r)\left( \kappa_+^2 U V \right)^{-1} dU dV.
\end{equation}

Again, the metric at $UV = 0$ is understood in the limiting sense. 

Moreover, it is globally hyperbolic and conformal to the $1+1$ Minkowski spacetime with a conformal factor locally given by

\begin{equation}
\Omega^2(U,V)|_{M_\text{H}} = - F(r)\left( \kappa_+^2 U V \right)^{-1}.
\label{Omega2}
\end{equation}

In order to cover the region $0 < r<r_-$ one can introduce null coordinates $(\tilde{u},\tilde{v}) \doteq (\tilde{r}^*-\tilde{t}, \tilde{r}^*+\tilde{t})$ in region II, where $\tilde{r}^*$ is the tortoise coordinate of the future increasing time coordinate $\tilde{r}$, $d\tilde{r}^* = d\tilde{r}/F(\tilde{r})$ and $\tilde{t}$ is a spatial coordinate that increases towards the right. Then, the Kruskal-Szekeres coordinates for $r_-<r<r_+$ are $(U_-,V_-) \doteq ( -\exp(-\kappa_- \tilde{u}), -\exp(-\kappa_- \tilde{v}))$. The pair $(U_-,V_-)$ can be used to cover the interior regions indicated in Table \ref{interior}.

\begin{table}[h]
\begin{center}
\begin{tabular}{ l l c r }
  & Region & sgn$(U_-)$ & sgn$(V_-)$ \\
  \hline
  II: Past &$r_- < r < r_+$ & $-1$ & $-1$ \\
  III: Left &$0 < r < r_-$ & $+1$ & $-1$ \\
  III': Right &$0 < r < r_-$ & $-1$ & $+1$ \\
  II{'}{'}{'}: Future &$r_- < r < r_+$ & $+1$ & $+1$ 
\end{tabular}
\caption[Kruskal-Szekeres coordinates for $0 < r <r_+$ in the generalised Reissner-Nordstr\"om extension.]{Kruskal-Szekeres coordinates in the region $0 < r <r_+$ of the generalised $1+1$ Reissner-Nordstr\"om extension.}
\label{interior}
\end{center}
\end{table}

The bifurcate Killing horizon at $r_-$, $H_-^\text{P} \cup H_-^\text{F}$, is generated by $\partial_{U_-}$ and $\partial_{V_-}$. The two sets of Kruskal coordinates relate neatly in region II by the transition functions $(U_-,V_-)=(-U^{\kappa_-/\kappa_+},-V^{\kappa_-/\kappa_+})$. In the case of Reissner-Nordstr\"om, one can find a left timelike curvature singularity at $U_-V_- = -1$ in region III and a right timelike curvature singularity at $U_-V_- = -1$ in region III' corresponding to the global coordinate $r = 0$. 

The future region II{'}{'}{'} in Table \ref{interior} defines a white hole region isometric to the region II' described in Table \ref{exterior}. The procedure now can be repeated \textit{ad infinitum} to obtain an infinite tower of copies of the regions here described into the future. An analogous procedure provides the extension into the past. See Figure \ref{RNConformal} for details.

\begin{figure}[ht]
\begin{center}
\includegraphics[height=8cm]{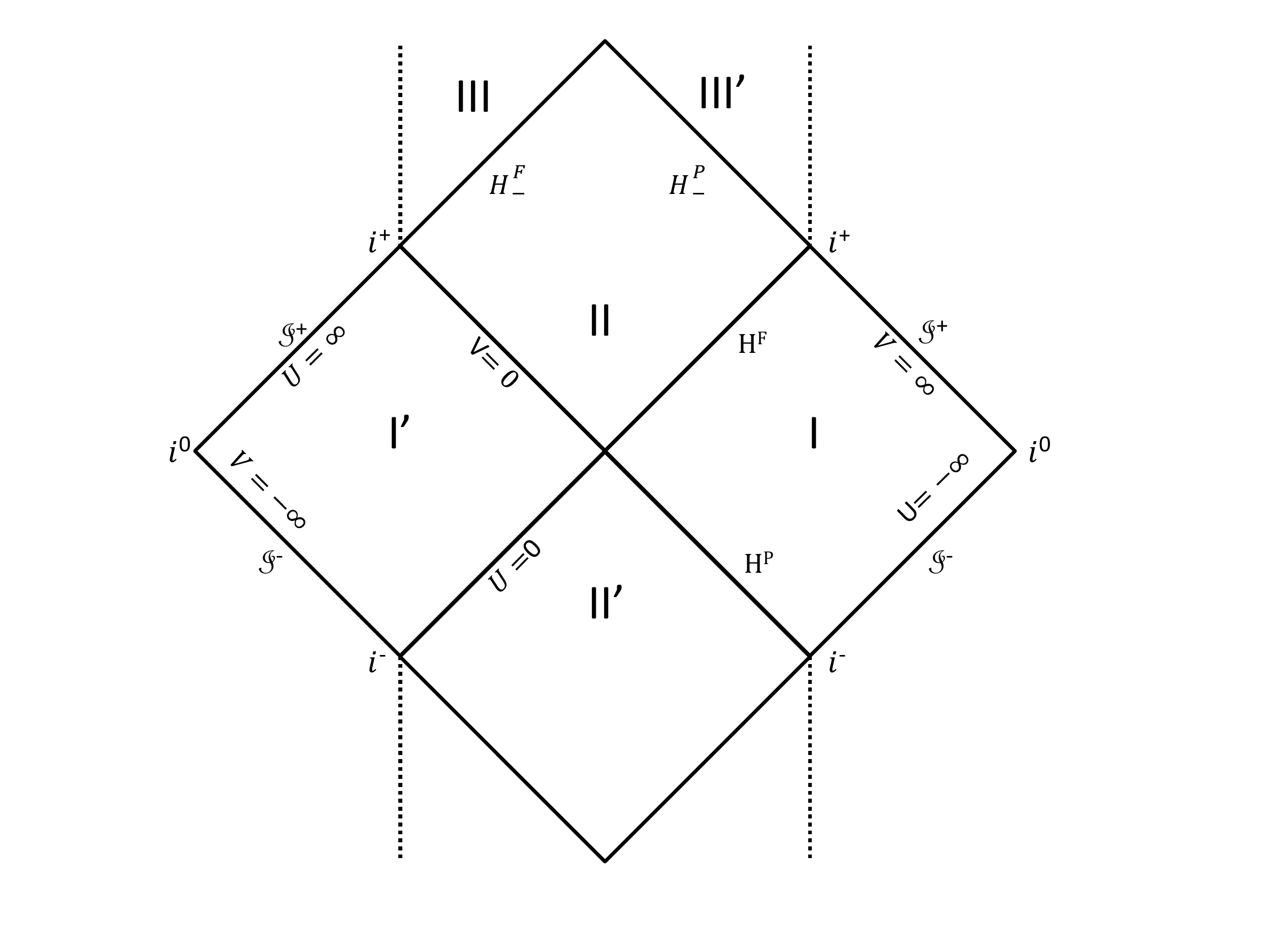}
\caption[Part of the conformal diagram of the maximally extended generalised $1+1$ non-extremal Reissner-Nordstr\"om spacetime]{Part of the conformal diagram of the maximally extended generalised $1+1$ Reissner-Nordstr\"om non-extremal black hole. The bifurcate Killing horizon at $r_+$ is $H^\text{P} \cup H^\text{F}$. The bifurcate Killing horizon at $r_-$ is $H^\text{P}_- \cup H^\text{F}_-$ and part of it is displayed in the future of the diagram. In the case of Reissner-Nordstr\"om, the singularities in regions III and III' are indicated by dotted lines, but the spacetime needs not contain singularities in general. The diagram extends to the past and future.}
\label{RNConformal}
\end{center}
\end{figure}

For the purposes of our analysis, we will make use of the future-directed, left-moving geodesic equations. If one starts out in region I and falls freely into region II and continues to region III across the Cauchy horizon, the geodesic in region I is given by the integral curve of
\begin{subequations}
\begin{align}
\dot{t} &= \frac{\varepsilon}{F(r)}, \label{4:RNt-eq} \\
\dot{r} &= -\sqrt{\varepsilon^2-F(r)}, \label{4:RNr-eq}
\end{align}
\label{4:RNeqmotion} 
\end{subequations}
\\
where the dot stands for a derivative with respect to the proper time along the geodesic path. Eq. \eqref{4:RNeqmotion} are extended to regions II and III using the appropriate coordinate charts. In particular, near the horizon, the chart $(\tilde{r}, \tilde{t})$, defined in region II is more convenient. We have that
\begin{subequations}
\begin{align}
\dot{\tilde{t}} &= -\frac{\varepsilon}{F(r)}, \label{4:RNt2-eq} \\
\dot{\tilde{r}} &= \sqrt{\varepsilon^2-F(r)}. \label{4:RNr2-eq}
\end{align}
\label{4:RNeqmotion2} 
\end{subequations}

The way in which we have defined the coordinates in region II is such that the time coordinate $\tilde{r}$ increases towards the future and $\tilde{r} = \text{const}$ represents spacelike hyperbolas, while $\tilde{t}$ increases towards the right, with $\tilde{t} = \text{const}$ defining timelike hyperbolas.

An interesting property of the Reissner-Nordstr\"om spacetime in $3+1$ dimensions and of the class of $1+1$ spacetimes that we describe here, is that they are not globally hyperbolic. Indeed, many of the most important solutions in General Relativity are not globally hyperbolic. Notably, most members of the Kerr-Newman family are among these solutions. Hence, understanding whether Cauchy horizons form by physical processes is a long standing research topic in mathematical relativity.

If we consider a spacelike surface, $\Sigma$, which connects $i^0$ in region I, on the right, with $i^0$ in region I', on the left, a future Cauchy horizon separates region II from regions III and III'. More precisely, the future Cauchy horizon, which we denote by $\mathcal{C}^\text{F}$, is located at $\mathcal{C}^{\text{F}} = J^-(P_-) \cap (H^{\text{F}}_- \cup H^{\text{P}}_-)$, where $P_-$ is the bifurcation point at $r_-$. The situation is similar in the past of region II', where there is a past Cauchy horizon $\mathcal{C}^\text{P}$, and the full Cauchy horizon is $\mathcal{C} = \mathcal{C}^\text{F} \cup \mathcal{C}^\text{P}$

Eq. \eqref{4:RNeqmotion2} detector that approaches the future Reissner-Nordstr\"om-like Cauchy horizon located at $r_-$ as it moves towards the left in region II. It is useful to define $\mathcal{C}^{\text{FL}} \doteq J^-(P_-) \cap H^{\text{F}}_-$ as the left portion of the future horizon, since we are interested in trajectories that cross this region of $\mathcal{C}^{\text{F}}$. See Fig. \ref{RNdetector}.

\begin{figure}[ht]
\begin{center}
\includegraphics[height=8cm]{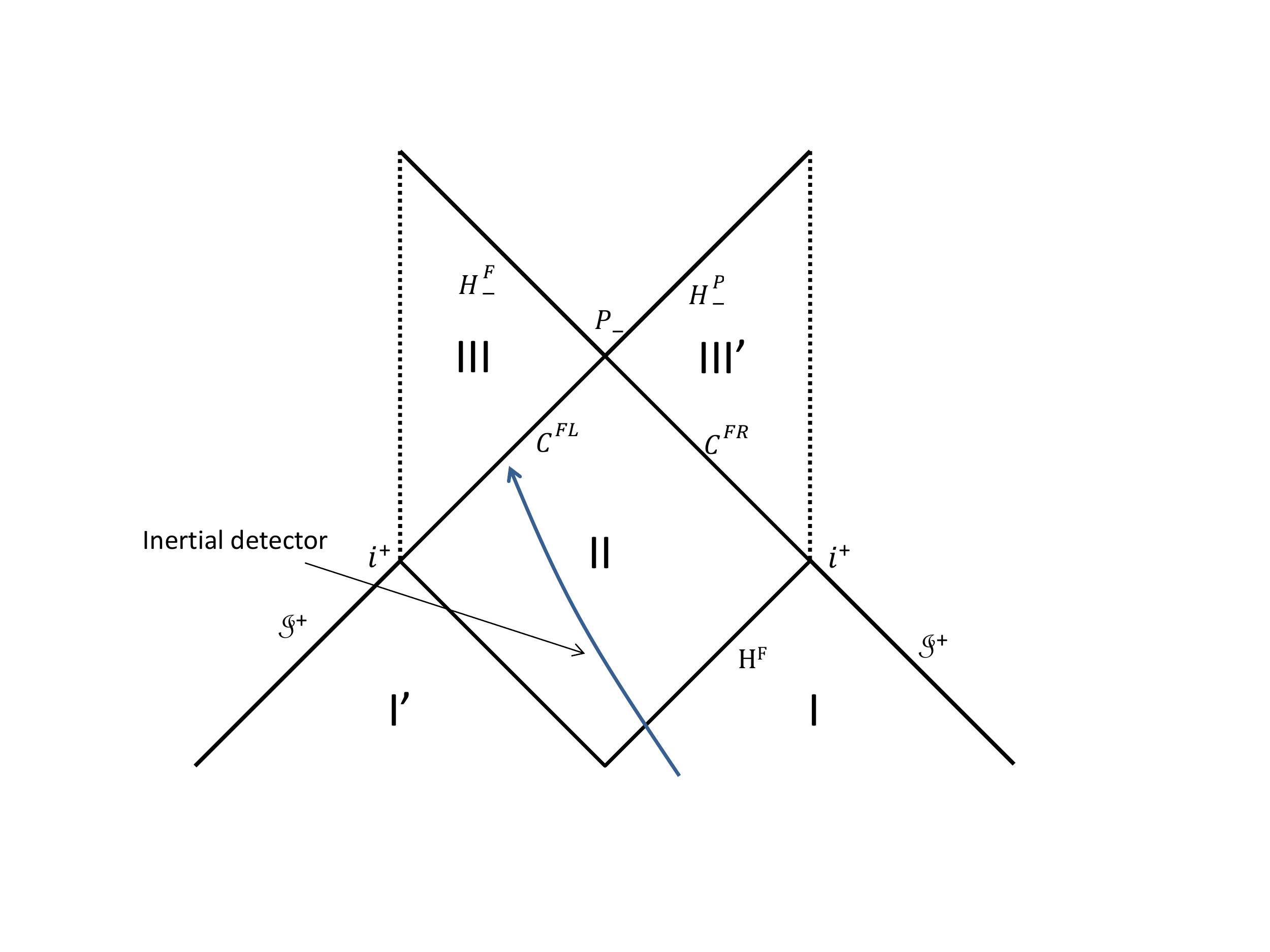}
\caption[Detector approaching the generalised $1+1$ Reissner-Nordstr\"om Cauchy horizon.]{Detector approaching the generalised $1+1$ Reissner-Nordstr\"om Cauchy horizon.}
\label{RNdetector}
\end{center}
\end{figure}

\section{Quantisation in $(1+1)$-dimensional black holes}
\label{sec:4Quantisation}

Quantum field theory is under full control for free fields on globally hyperbolic spacetimes. This means that when working on non-globally hyperbolic spacetimes one has to be careful in defining a globally hyperbolic region where a quantum field theory can be appropriately constructed. If a timelike boundary is present, imposing boundary conditions may suffice to supplement for the lack of global hyperbolicity. In the case of the receding mirror, the introduction of boundary conditions introduces distributional singularities to the two-point function of the theory away from the short distance limit, whenever two points are connected by a null ray that is reflected at the boundary. In the $1+1$ Reissner-Nordstr\"om-like spacetime we perform the quantisation in suitable globally hyperbolic submanifolds of the Kruskal extension.

In each of the spacetimes introduced in Section \ref{sec:4Classical}, we deal with a massless, minimally-coupled scalar field. In the case of the receding mirror spacetime, we construct the relevant state by imposing that the state in the asymptotic past coincide with the Minkowski half-space, subject to Dirichlet boundary conditions at the mirror.

For the Schwarzschild spacetime, we consider two relevant quasi-free states that can be constructed. The first relevant state in the Schwarzchild spacetime is the Hartle-Hawking-Israel (HHI) state \cite{Hartle:1976tp, Israel:1976ur}, which is regular in the whole Kruskal manifold. For a massless scalar field, the HHI state is obtained by imposing boundary conditions for the field modes at the bifurcate Killing horizon, such that in $H^+ \cup H^-$ the modes are given by plane waves with respect to the affine parameter of the horizon generators. This state was first rigorously constructed by Kay and Wald in the seminal paper\cite{Kay:1988mu}. The second relevant state is the Unruh state\cite{Unruh:1976db}. Let $H_{\text{ev}} \doteq (J^+(P) \cap H^+) \setminus P$, where $P$ is the bifurcation point of the bifurcate Killing horizon at $r=2M$. The Unruh state is defined to be regular in the globally hyperbolic region $M_\text{BH} \doteq \text{I} \cup H_{\text{ev}} \cup \text{II}$ of the Kruskal manifold, but not in the whole Kruskal extension. For a massless scalar field, one can construct this state by imposing plane wave solutions at the Cauchy surface $H^\text{-} \cup \mathscr{I}^-$ using the affine parameter at the past horizon $H^\text{P}$ and the advanced time at past null infinity. The Unruh state corresponds to a scalar field on a black hole that has formed from collapse. See \cite{Dappiaggi:2009fx} for a recent rigorous construction of the Unruh state. 

In the case of the Reissner-Nordstr\"om black hole we also have suitable HHI and Unruh vacua. We can define a Hartle-Hawking-Israel (HHI) state to be regular in the globally hyperbolic submanifold $M_\text{H} = \text{I} \cup H^\text{P} \cup \text{II} \cup \text{I'} \cup H^\text{F} \cup \text{II'}$, which we have introduced above, equipped with the induced metric. Contrary to the Schwarzschild case, the HHI state does not cover the full Kruskal extension of the manifold\footnote{Another HHI state can be defined to be regular in the region $\text{II} \cup H_-^\text{P} \cup \text{III} \cup \text{II''} \cup H_-^\text{F} \cup \text{III'}$ if appropriate boundary conditions are set at $U_- V_- = -1$ to compensate for the fact that this region is not globally hyperbolic. However, if we are concerned about an observer coming from region I, the choice in the main text is the relevant one.}. For the Unruh state, we let $H_{\text{ev}} \doteq (J^+(P_+) \cap H^\text{F}) \setminus P_+$, where $P_+$ is the bifurcation point of the bifurcate Killing horizon at $r_+$ in the Kruskal extension of the generalised Reissner-Nordstr\"om black hole. The Unruh state is defined to be regular in the globally hyperbolic region $M_\text{U} \doteq \text{I} \cup H_{\text{ev}} \cup \text{II}$, but not in the whole globally hyperbolic submanifold $M_\text{H}$. For a massless scalar field, one can construct this state by constructing plane wave solutions at the hypersurface $H^\text{P} \cup \mathscr{I}^-$, using the affine parameter at the past horizon and the advanced time at past null infinity. 

\subsection{Quantisation in the receding mirror spacetime}
\label{subsec:4MirrorQ}

We wish to solve the boundary value problem on $(M',g)$, where $M'$ is the manifold to the right of the moving mirror in Minkowski spacetime, given by $\Box  \phi = \partial_u \partial_v \phi = 0$ on $M'$, subject to Dirichlet boundary condition along the mirror trajectory. The solution is $\phi(u,v) = f(u) + g(v)$ subject to $f(u) = - g(v)$ at the mirror. A mode solution satisfying the boundary conditions is
\begin{equation}
\phi_k(u,v) = \alpha_k(v) \ee^{-\ii k v} + \beta_k(v) \ee^{\ii k v} -\left[ \alpha_k(p(u)) \ee^{-\ii k p(u)} + \beta_k(p(u)) \ee^{\ii k p(u)} \right].
\end{equation}
where $\alpha_k$ and $\beta_k$ are functions, which we can take to be smooth. We select the positive frequency subspace with respect to $\xi = \partial_t$ in the asymptotic past, by introducing the complex structure $J: \Sol \times \Sol \to \Sol \times \Sol$, such that $J^2 = -I$, and selected the positive frequency modes as the subspace $(\mathcal{L}_{\partial_t} + kJ) \phi_k = (\partial_t + \ii k) \phi_k = 0$, whereby we set $\beta_k = 0$.

Next, we normalise the modes by the \textit{Dirac orthonormality condition}. To this end, we introduce a wave-packet profile function in Fourier space. Let $f_{k} \in L^2(\mathbb{R}, \Theta(k) dk)$ be sharply peaked about $k = k_p$ and of unit-norm. Then, setting $\alpha_k = (4\pi k)^{-1/2}$, we write the solution to the boundary value problem as linear combinations of Dirac-orthonormal wave packets of the form,
\begin{equation}
\varphi = \phi_k (f_k) = \int_0^\infty \! dk \, f_k \phi_k,
\label{4:phiMirror}
\end{equation}
satisfying
\begin{align}
-\ii \left\{\phi_k\left(f_k\right), \phi_{k'}\left(f_{k'}\right) \right\} = \int_{\Sigma_t}  \! dx \, \left[ \dot{\phi}_k\left(f_k\right) \phi_{k'}^*\left(f_{k'}\right) - \phi_k\left(f_k\right) \dot{\phi}_{k'}^*\left(f_{k'}\right) \right] = 1,
\label{4:DiracOrthonormality}
\end{align}
where we have introduced the canonical momenta $\dot{\phi}$.

The field quantisation prescription gives
\begin{align}
\varPhi(f) & = \ii a\left(\overline{f \varphi} \right) - \ii a^* \left(f \varphi \right), \\
\Pi(g) & = \ii a\left(\overline{g \dot{\varphi}} \right) - \ii a^* \left(g \dot{\varphi} \right)
\end{align}
and the CCR follow from \eqref{4:DiracOrthonormality}.

The action of a creation operator on the vacuum state of the theory, $|0_\text{H}\rangle$, produces a wave packet as in expression \eqref{4:phiMirror} in the one-particle space, $|k\rangle \in \mathcal{H}$. The full Hilbert space of the theory is the symmetric Fock space $\mathscr{F}_s(\mathcal{H})$. The two-point function of the theory is given by
\begin{equation}
\mathcal{W}\left(\x, \x' \right) = \langle 0 | \Phi(\mathsf{x}) \Phi(\mathsf{x}') | 0 \rangle = -\frac{1}{4 \pi} \ln \left[\frac{(p(u)-p(u')-\ii \epsilon)(v-v' -\ii \epsilon)}{(v-p(u')-\ii \epsilon)(p(u)-v'-\ii \epsilon)} \right],
\label{4:WightmanRecMir}
\end{equation}
understood in the distributional sense. The $\ii \epsilon$ prescribes the ultraviolet behaviour of the two-point function.

\subsection{Quantisation in the $1+1$ Schwarzschild black hole}
\label{subsec:4SchwQ}

We proceed to obtain the two relevant states for the Schwarzschild black hole. The first one is the Hartle-Hawking-Israel (HHI) state, which is regular in the whole Kruskal extension of the Schwarzschild spacetime. The second one is the Unruh vacuum, regular in regions I, II and across the future horizon.

\subsubsection*{Hartle-Hawking-Israel state}

The Schwarzschild Kruskal extension is conformally flat in $1+1$ dimensions. Let $(M_\text{M}, g_\text{M})$ denote Minkowski spacetime. The map $\psi: M_{\text{M}} \to M$, $g = \Omega^2 f^*g_{\text{M}}$ is conformal with $\Omega^2(U,V) = - F(r)\left( \kappa_+^2 U V \right)^{-1}$, where $UV = 0$ is understood in the limiting sense. The Klein-Gordon equation for a minimally coupled $1+1$ scalar field is conformal, and given by
\begin{equation}
\Box\phi = \Omega^{-2}(U,V) \partial_U \partial_V \phi = 0,
\end{equation}
where $(U,V)$ are the Schwarzschild Kruskal coordinates.

The subspace of solutions of Dirac orthonormal left and right-moving wave packets, which are of positive frequency with respect to the generators of the bifurcate Killing horizon, contains elements of the form
\begin{subequations}
\begin{align}
\varphi^\text{L} & = \phi^\text{L}_k \left(f^{\text{L}}\right) = \int_{k_0}^\infty \frac{dk}{(4 \pi k)^{1/2}} f^{\text{L}}_k \ee^{-\ii k V}, \\
\varphi^\text{R} & = \phi^\text{R}_k \left(f^{\text{R}}\right) = \int_{k_0}^\infty \frac{dk}{(4 \pi k)^{1/2}} f^{\text{R}}_k \ee^{-\ii k U},
\end{align}
\label{4:WPSchwHHI}
\end{subequations}
\\
where $f^{\text{L}}_k$ and $f^{\text{R}}_k$ are sharply peaked around the values $k_\text{L}$ and $k_\text{R}$ respectively. We have introduced the infrared cut-off $k_0>0$ in eq. \eqref{4:WPSchwHHI} because the integral is not defined all the way down to $k = 0$. Linear combinations of $\varphi^\text{R}$ and $\varphi^\text{R}$ produce solutions, $\varphi$, with left and right-moving contributions of positive frequency with respect to $\partial_V$ and $\partial_U$ respectively.

The canonical quantisation produces operator valued distributions on the left-moving sector
\begin{subequations}
\begin{align}
\varPhi_\text{L}(f) & = \ii a\left(\overline{f_\text{L} \varphi} \right) - \ii a^* \left(f_\text{L} \varphi \right), \\
\Pi_\text{L}(g) & = \ii a\left(\overline{g_\text{L} \dot{\varphi}} \right) - \ii a^* \left(g_\text{L} \dot{\varphi} \right),
\end{align}
\label{4:HHILeftAlg}
\end{subequations}
\\
and on the right-moving sector
\begin{subequations}
\begin{align}
\varPhi_\text{R}(f) & = \ii a\left(\overline{f_\text{R} \varphi} \right) - \ii a^* \left(f_\text{R} \varphi \right), \\
\Pi_\text{R}(g) & = \ii a\left(\overline{g_\text{R} \dot{\varphi}} \right) - \ii a^* \left(g_\text{R} \dot{\varphi} \right),
\end{align}
\label{4:HHIRightAlg}
\end{subequations}
\\
generate the left and right CCR algebras. The full CCR algebra of the theory is the tensor product of the algebras generated by \eqref{4:HHILeftAlg} and \eqref{4:HHIRightAlg}. The creation and annihilation operators are defined as explained in Chapter \ref{ch:chap2} on each sector of the theory, $a_\text{L}$, $a^*_\text{L}$ for left-movers and $a_\text{R}$ and $a^*_\text{R}$ for right-movers. The annihilation operators extend as $a_\text{L} \otimes I$ and $I \otimes a_\text{R}$ in the full CCR algebra and the creation operators extend similarly. 

The Fock space of the theory is defined from the action of the (extended) creation and annihilation operators. The vacuum state of the theory is defined by the condition $a_\text{L} \otimes I |0_\text{H}\rangle = I \otimes a_\text{R} |0_\text{H}\rangle = 0$. The action on the vacuum of a creation operator, given by a linear combination of the extensions of $a^*_\text{L}$ and $a^*_\text{R}$, produces a wave packet with decoupled left and right-movers, as in \eqref{4:WPSchwHHI}, in the one-particle Hilbert space of the theory, $|k_\text{L} \rangle_\text{H} \otimes |k_\text{R} \rangle_\text{H} \in \mathcal{H}_\text{H} = \mathcal{H}_\text{L} \otimes \mathcal{H}_\text{R}$. The wave packets propagate in the whole of the Kruskal manifold. The full Hilbert space of the theory is the factorisable symmetric Fock space $\mathscr{F}_s(\mathcal{H}_\text{H}) \cong \mathscr{F}_s(\mathcal{H}_\text{L}) \oplus \mathscr{F}_s(\mathcal{H}_\text{R})$.

Finally, the two-point function of the state can be obtained by the conformal methods introduced in Chapter \ref{ch:chap2}, Subsection \ref{2:subsecStates}, whereby one obtains that
\begin{equation}
\mathcal{W}_\text{H}(\mathsf{x},\mathsf{x'}) = -\frac{1}{4 \pi} \ln \left[(\epsilon+i \Delta U) (\epsilon + i \Delta V) \right] + \text{div},
\label{4:WightmanSchwHHI}
\end{equation}
where div is the infrared ambiguity for $1+1$ massless scalars. Because the vacuum state is quasi-free, the two-point function determines the state. The $\ii \epsilon$ prescribes the ultraviolet behaviour and the branch of the logarithm is understood as in our discussion in Chapter \ref{ch:chap2}. (See eq. \eqref{2:lnWightman}.)

\subsubsection*{Unruh state}

For the Unruh state, we restrict our attention to the submanifold $M_\text{BH} = \text{I} \cup H_{\text{ev}} \cup \text{II}$ of the Schwarzschild Kruskal extension, which can be conformally embedded in Minkowski spacetime. Thus, the conformal techniques are at our disposal. The Klein-Gordon equation implies
\begin{equation}
\partial_U \partial_v \phi = 0
\end{equation}
and we take the subspace of plane wave solutions given by linear combinations of
\begin{subequations}
\begin{align}
\varphi^\text{R} & = \phi^\text{R}_k \left(f^{\text{R}}\right) = \int_{k_0}^\infty \frac{dk}{(4 \pi k)^{1/2}} f^{\text{R}}_k \ee^{-\ii (4M k) U}, \\
\varphi^\text{L} & = \phi^\text{L}_k \left(f^{\text{L}}\right) = \int_{k_0}^\infty \frac{dk}{(4 \pi k)^{1/2}} f^{\text{L}}_k \ee^{-\ii k v},
\end{align}
\label{4:WPSchwUnruh}
\end{subequations} 

Again, we have introduced the infrared cut-off $k_0>0$ in eq. \eqref{4:WPSchwUnruh}. Eq. \eqref{4:WPSchwUnruh} has left-travelling modes which are plane waves at past null infinity and a red-shifted right moving contribution. The shift in the right-moving mode is to be understood as the shift that a Minkowskian mode would suffer after crossing the centre of a collapsing star at late times. In the conformal diagram that we have presented there is no such collapse, but the choice of modes provides the correct mathematical description for intepreting these modes as coming from a star at a late stage of the collapse. See the discussion below eq. (2.29) in \cite{Unruh:1976db}.

The Dirac quantisation is as before and we obtain on the left-moving sector
\begin{subequations}
\begin{align}
\varPhi_\text{L}(f) & = \ii a\left(\overline{f_\text{L} \varphi} \right) - \ii a^* \left(f_\text{L} \varphi \right), \\
\Pi_\text{L}(g) & = \ii a\left(\overline{g_\text{L} \dot{\varphi}} \right) - \ii a^* \left(g_\text{L} \dot{\varphi} \right),
\end{align}
\label{4:UnruhLeftAlg}
\end{subequations}
\\
and on the right-moving sector
\begin{subequations}
\begin{align}
\varPhi_\text{R}(f) & = \ii a\left(\overline{f_\text{R} \varphi} \right) - \ii a^* \left(f_\text{R} \varphi \right), \\
\Pi_\text{R}(g) & = \ii a\left(\overline{g_\text{R} \dot{\varphi}} \right) - \ii a^* \left(g_\text{R} \dot{\varphi} \right),
\end{align}
\label{4:UnruhRightAlg}
\end{subequations}
\\
which generate the CCR algebra $\mathcal{A}(M_\text{BH})$ by the tensor product of the algebras generated by \eqref{4:UnruhLeftAlg} and \eqref{4:UnruhRightAlg}. 

As before, the vacuum state of the theory is defined by the condition $a_\text{L} \otimes I |0_\text{H}\rangle = I \otimes a_\text{R} |0_\text{U}\rangle = 0$, and the one-particle Hilbert pace by the action of creation operators on the vacuum, $  (\alpha \, a^*_\text{L}\left(f_\text{L} \varphi \right) \otimes I )(  \beta \, I \otimes a^*_\text{R} \left(f_\text{R} \varphi \right) )  |0_\text{U}\rangle = \alpha \, \beta \, |k_\text{L}\rangle_\text{U} \otimes |k_\text{R} \rangle_\text{U} \in \mathcal{H}_\text{U}$, with $\alpha, \beta \in \mathbb{C}$. The state produces a wave packet that has a left-moving contribution indistinguishable from a Minkowskian plane wave packet, and a shifted-frequency right-mover, as discussed above. The waves move inside $M_\text{U}$ and do not reach regions III and IV of the conformal diagram \ref{SchwConformal}. The full Hilbert space of the theory is the factorisable symmetric Fock space $\mathscr{F}_s(\mathcal{H}_\text{U}) \cong \mathscr{F}_s(\mathcal{H}_\text{L}) \oplus \mathscr{F}_s(\mathcal{H}_\text{R})$.

The two-point function that determines the state can be obtained by the conformal methods introduced in Chapter \ref{ch:chap2}, \ref{2:subsecStates}, whereby one obtains that
\begin{equation}
\mathcal{W}_\text{U}(\mathsf{x},\mathsf{x'}) = -\frac{1}{4 \pi} \ln \left[(\epsilon+i \Delta U) (\epsilon + i \Delta v) \right] + \text{div},
\label{4:WightmanSchwUnruh}
\end{equation}
where, again, div is the infrared ambiguity for $1+1$ massless scalars. The awkwardness in the dimensions in eq. \eqref{4:WightmanSchwUnruh} stems from this ambiguity.

\subsection{Quantisation in the $1+1$ generalised Reissner-Nordstr\"om black hole}
\label{subsec:4RNQ}

We proceed to obtain the HHI and Unruh states for the generalised Reissner-Nordstr\"om spacetimes discussed above.

\subsubsection*{Hartle-Hawking-Israel state}

The HHI state is regular in the submanifold $(M_\text{H},g|_\text{H})$, covered by the Kruskal Reissner-Nordstr\"om-like coordinates, $(U,V)$, defined in regions I, I', II and II', and across the bifurcate Killing horizon that separates these regions in Fig.~\ref{RNConformal}. The state is a conformal vacuum with respect to the Minkowski vacuum, with conformal factor $\Omega^2(U,V) = - F(r)\left( \kappa_+^2 U V \right)^{-1}$. Once more, the Klein-Gordon equation for a minimally coupled $1+1$ scalar field is conformal, and given by
\begin{equation}
\Box\phi = \Omega^{-2} \partial_U \partial_V \phi = 0.
\end{equation}

The subspace of solutions of Dirac orthonormal wave packets, with left-moving component of positive frequency with respect to the generator of the future horizon and right-moving component of positive frequency with respect to the generator of the past horizon, is generated by linear combinations of
\begin{subequations}
\begin{align}
\varphi^\text{R} & = \phi^\text{R}_k \left(f^{\text{R}}\right) = \int_{k_0}^\infty \frac{dk}{(4 \pi k)^{1/2}} f^{\text{R}}_k \ee^{-\ii (4M k) U}, \\
\varphi^\text{L} & = \phi^\text{L}_k \left(f^{\text{L}}\right) = \int_{k_0}^\infty \frac{dk}{(4 \pi k)^{1/2}} f^{\text{L}}_k \ee^{-\ii k V}
\end{align}
\label{4:WPRNHHI}
\end{subequations}

Here, $k_0>0$ is an infrared cut-off. The left-moving contribution of the wave-packages that propagate from region I into the future will eventually reach the future horizon, $\mathcal{C}^\text{F}$, and escape into region III, where the evolution of the wave will not be determined by data on any Cauchy surface, $\Sigma_\text{H}$, of the spacetime $(M_\text{H},g|_\text{H})$. This is so because $\Sigma_\text{H}$ is not a Cauchy surface of the whole Kruskal extension. Similarly, the right-moving contribution of the wave-packages that propagate from region I' into the future will eventually reach the future horizon, $\mathcal{C}^\text{F}$, and escape into region III'.

The canonical quantisation produces operator valued distributions, which map smooth functions with compact support on surfaces inside $M_\text{H}$ to the operators
\begin{subequations}
\begin{align}
\varPhi_\text{L}(f) & = \ii a\left(\overline{f_\text{L} \varphi} \right) - \ii a^* \left(f_\text{L} \varphi \right), \\
\Pi_\text{L}(g) & = \ii a\left(\overline{g_\text{L} \dot{\varphi}} \right) - \ii a^* \left(g_\text{L} \dot{\varphi} \right),
\end{align}
\label{4:HHILeftAlgRN}
\end{subequations}
\\
on the left moving sector, and 
\begin{subequations}
\begin{align}
\varPhi_\text{R}(f) & = \ii a\left(\overline{f_\text{R} \varphi} \right) - \ii a^* \left(f_\text{R} \varphi \right), \\
\Pi_\text{R}(g) & = \ii a\left(\overline{g_\text{R} \dot{\varphi}} \right) - \ii a^* \left(g_\text{R} \dot{\varphi} \right),
\end{align}
\label{4:HHIRightAlgRN}
\end{subequations}
\\
on the right-moving sector. The tensor product of the left and right CCR algebras, generated by the relations \eqref{4:HHILeftAlgRN} and \eqref{4:HHIRightAlgRN} respectively, yields the algebra $\mathcal{A}(M_\text{H})$.

The annihilation operators define the vacuum by $a_\text{L} \otimes I |0_\text{H}\rangle = I \otimes a_\text{R} |0_\text{H}\rangle = 0$. The action of a creation operator on the vacuum state of the theory, $|0_\text{H}\rangle$, produces a wave packet in the one-particle Hilbert space $\mathcal{H}_\text{H}$. The waves originate inside region $M_\text{H}$ and propagate towards the future. They eventually cross the Cauchy horizon and escape into regions III and III' of the conformal diagram \ref{RNConformal} in such a way that the dynamical evolution of the waves in the future of the Cauchy horizon is not determined by initial data on $M_\text{H}$. The full Hilbert space is the factorisable symmetric Fock space $\mathscr{F}_s(\mathcal{H}_\text{H}) \cong \mathscr{F}_s(\mathcal{H}_\text{L}) \oplus \mathscr{F}_s(\mathcal{H}_\text{R})$.

Finally, the two-point function is defined in the interior of $M_\text{H}$ and determines the HHI state. It is given by
\begin{equation}
\langle 0_\text{H} | \phi(\mathsf{x}) \phi(\mathsf{x'}) | 0_\text{H} \rangle = \mathcal{W}_\text{H}(\mathsf{x},\mathsf{x'}) = -\frac{1}{4 \pi} \ln \left[(\epsilon+i \Delta U) (\epsilon + i \Delta V) \right] + \text{div},
\label{4:RNHHI}
\end{equation}
where div is the infrared ambiguity for $1+1$ massless scalars.

\subsubsection*{Unruh state}

For the Unruh state, we restrict our attention to the submanifold $M_\text{U} = \text{I} \cup H_{\text{ev}} \cup \text{II}$ of the Kruskal extension in the Reissner-Nordstr\"om-like manifolds. The Klein-Gordon equation implies
\begin{equation}
\partial_U \partial_v \phi = 0
\end{equation}
and we take the subspace of plane wave solutions given by linear combinations of left-movers of positive frequency with respect to the advanced time in the asymptotic past, and of right-movers of positive frequency with respect to the generator of the past horizon,
\begin{subequations}
\begin{align}
\varphi^\text{R} & = \phi^\text{R}_k \left(f^{\text{R}}\right) = \int_{k_0}^\infty \frac{dk}{(4 \pi k)^{1/2}} f^{\text{R}}_k \ee^{-\ii k U}, \\
\varphi^\text{L} & = \phi^\text{L}_k \left(f^{\text{L}}\right) = \int_{k_0}^\infty \frac{dk}{(4 \pi k)^{1/2}} f^{\text{L}}_k \ee^{-\ii k v}
\end{align}
\label{4:WPRNUnruh}
\end{subequations}

Again, we have introduced the infrared cut-off $k_0>0$ in eq. \eqref{4:WPSchwUnruh}. The left-moving contribution of the wave packets that propagate from region I into the future will eventually reach the future horizon, $\mathcal{C}^\text{F}$, and escape into region III of Fig. \ref{RNConformal}, where the dynamical evolution of the wave will be undetermined by the initial conditions on $H^\text{P} \cup \mathscr{I}^-$. Similarly, right-moving wave packets in region II can escape into region III'.

The Dirac quantisation is as before and maps smooth functions with compact support on surfaces inside $M_\text{U}$ to the operators
\begin{subequations}
\begin{align}
\varPhi_\text{L}(f) & = \ii a\left(\overline{f_\text{L} \varphi} \right) - \ii a^* \left(f_\text{L} \varphi \right), \\
\Pi_\text{L}(g) & = \ii a\left(\overline{g_\text{L} \dot{\varphi}} \right) - \ii a^* \left(g_\text{L} \dot{\varphi} \right),
\end{align}
\label{4:UnruhLeftAlgRN}
\end{subequations}
\\
on the left moving sector, and 
\begin{subequations}
\begin{align}
\varPhi_\text{R}(f) & = \ii a\left(\overline{f_\text{R} \varphi} \right) - \ii a^* \left(f_\text{R} \varphi \right), \\
\Pi_\text{R}(g) & = \ii a\left(\overline{g_\text{R} \dot{\varphi}} \right) - \ii a^* \left(g_\text{R} \dot{\varphi} \right),
\end{align}
\label{4:UnruhRightAlgRN}
\end{subequations}
\\
on the right-moving sector. The tensor product of the left and right CCR algebras, generated by the relations \eqref{4:UnruhLeftAlgRN} and \eqref{4:UnruhRightAlgRN} respectively, yields the algebra $\mathcal{A}(M_\text{U})$.

The annihilation operators define the vacuum by $a_\text{L} \otimes I |0_\text{H}\rangle = I \otimes a_\text{R} |0_\text{U}\rangle = 0$. The action of a creation operator on the vacuum state of the theory, $|0_\text{U}\rangle$, produces a wave packet in the one-particle Hilbert space $\mathcal{H}_\text{U}$. The waves originate inside region $M_\text{U}$ and propagate towards the future, enventually crossing the Cauchy horizon, as discussed above. See Fig. \ref{RNConformal}. The full Hilbert space is the factorisable symmetric Fock space $\mathscr{F}_s(\mathcal{H}_\text{U}) \cong \mathscr{F}_s(\mathcal{H}_\text{L}) \oplus \mathscr{F}_s(\mathcal{H}_\text{R})$.

Finally, the (quasi-free) state, which is regular inside $M_\text{U}$, is obtained from conformal methods from the two-point function, as before,
\begin{equation}
\langle 0_\text{U} | \phi(\mathsf{x}) \phi(\mathsf{x'}) | 0_\text{U} \rangle = \mathcal{W}_\text{U}(\mathsf{x},\mathsf{x'}) = -\frac{1}{4 \pi} \ln \left[(\epsilon+i \Delta U) (\epsilon + i \Delta v) \right] + \text{div}.
\label{4:RNUnruh}
\end{equation}

As before, the awkwardness in the dimensionality in eq. \eqref{4:RNUnruh} stems from the infrared ambiguity. This concludes our quantum field theoretic constructions.

\section{Quantum effects of black holes}
\label{sec:4QuantumEffects}

In this section, we examine a series of quantum phenomena that occur in the spacetimes introduced above, in strongly time-dependent situations, such as along non-stationary trajectories and with non-stationary boundary conditions. A large part of our computations deal with the experience of observers, carrying particle detectors along their worldlines, which register particles by interacting with quantum fields in the states that we have constructed above. We also exemplify how the rate of particle detectors is in agreement with our expectations by analysing the local renormalised energy density, along the worldline of an observer, and comparing this to the transition rate, in the context of the generalised Reissner-Nordstr\"om black hole. Finally, we comment on the dependence of our results on the spacetime dimension.

\subsection{Effects in the receding mirror spacetime}
\label{subsec:4MirrorE}

An interesting question to ask is how thermality arises during the collapse of a star and eventual black hole formation for quantum fields in the vicinity of the collapsing star. We examine this question in our receding mirror model. We compute the instantaneous transition rate, $\dot{\mathcal{F}}$, of a derivative-coupling detector both in the asymptotic past and in the asymptotic future. We anticipate that the rate in the asymptotic past will be, to leading order, the rate of a detector in the Minkowski half-space. In the asymptotic future, we shall verify that right-moving modes coming from the mirror will be of thermal character to leading order. We then study the interpolating behaviour numerically to analyse the transient regime in the process of emergence of thermality.

The main result of Section \ref{subsec:4MirrorQ} provides the Wightman function in the receding mirror spacetime,
\begin{equation}
\langle 0 | \phi(\mathsf{x}) \phi(\mathsf{x}') | 0 \rangle = -\frac{1}{4 \pi} \ln \left[\frac{(p(u)-p(u')-\ii \epsilon)(v-v' -\ii \epsilon)}{(v-p(u')-\ii \epsilon)(p(u)-v'-\ii \epsilon)} \right],
\label{4:RecMirWightman}
\end{equation}
where $p(u) = -(1/\kappa) \ln\left(1+\ee^{-\kappa u}\right)$ and the null coordinates are $(u,v) = (t-x, t+x)$. The $\ii \epsilon$ prescription controls the ultraviolet behaviour. The $(1+1)$-dimensional infrared problems are cured by virtue of the Dirichlet boundary condition and eq. \eqref{4:RecMirWightman} is free of infrared divergences.

We now examine the transition rate of a detector, switched on in the asymptotic past, which follows a static trajectory along the integral of $\partial_t$.

\subsubsection*{Static with respect to mirror in the distant past}

We consider a static detector at fixed distance $d$ from the mirror in the asymptotic past, with trajectory $\mathsf{x}(\tau) = (\tau, d)$. The pullback of the twice-differentiated Wightman function along the worldline is
\begin{align}
\mathcal{A}\left(\tau',\tau''\right) = \partial_{\tau'} \partial_{\tau''} \mathcal{W}(\tau',\tau'') = & -\frac{1}{4 \pi} \left( \frac{p'\left(u'\right) p'\left(u''\right)}{\left[p\left(u'\right) - p \left(u''\right) -\ii \epsilon \right]^2} + \frac{1}{\left(v' - v'' - \ii \epsilon \right)^2} \right. \nonumber \\
& \left. -\frac{p'\left(u''\right)}{\left[v' - p\left(u''\right) -\ii \epsilon \right]^2} - \frac{p'\left(u'\right)}{p\left(u'\right)-v''- \ii \epsilon}\right).
\label{4:RecMirA}
\end{align}

As above, primes denote derivatives with respect to the argument. Using \eqref{4:RecMirA} and \eqref{FdotAsymp}, the transition rate is given by $\dot{\mathcal{F}}(E,\tau) = \dot{\mathcal{F}}_0(E,\tau) + \dot{\mathcal{F}}_1(E,\tau) + \dot{\mathcal{F}}_2(E,\tau)$, where
\begin{subequations}
\begin{align}
\dot{\mathcal{F}}_0(E,\tau) & \doteq -E \Theta (-E) + \frac{1}{2 \pi} \int_0^\infty \! ds \, \cos(Es) \left( - \frac{p'\left(\tau-d\right) p'\left(\tau-s-d\right)}{\left[p\left(\tau-d\right) - p \left(\tau-s-d\right) \right]^2} + \frac{1}{s^2}\right) \label{4:StaticRecMirF0} \\
\dot{\mathcal{F}}_1(E,\tau) & \doteq \frac{1}{2 \pi} \int_0^\infty \! ds \,  \frac{\cos(Es)  p'\left(\tau-s-d\right)}{\left[\tau + d - p \left(\tau-s-d\right)\right]^2} \label{4:StaticRecMirF1} \\
\dot{\mathcal{F}}_2(E,\tau) & \doteq \frac{1}{2 \pi} \int_0^\infty \! ds \, \text{Re} \left( \frac{\ee^{-\ii E s} p'(\tau-d)}{\left[p(\tau-d) - \tau +s - d - \ii \epsilon \right]^2} \right) \label{4:StaticRecMirF2}.
\end{align}
\end{subequations}

We have removed the regulator from eq. \eqref{4:StaticRecMirF0} and \eqref{4:StaticRecMirF1} because the integrand is free of singularities. In eq. \eqref{4:StaticRecMirF2} the $\ii \epsilon$-prescription is kept due to the singularities that occur whenever two timelike-separated points along the static trajectory are connected by a null geodesic that is reflected at the boundary. This represents no problem because the switch-on of the detector occurs in the asymptotic past, so two points connected by a null ray reflected on the mirror can never be endpoints of the worldline.

In Appendix \ref{app:BH}, we show that at early and late times along the detector trajectory, the response function of the detector is respectively
\begin{subequations}
\begin{align}
\dot{\mathcal{F}}(E,\tau) & = -E \left[1 - \cos(2dE) \right] \Theta(-E) + O\left(\ee^{\kappa \tau} \right), & \text{ as } \tau \rightarrow - \infty, \label{4:StaticRecMirPast} \\
\dot{\mathcal{F}}(E,\tau) & = -\frac{E}{2}  \Theta(-E) + \frac{E}{2 \left(\ee^{2 \pi E/\kappa} -1\right)} + \textit{o}(1), & \text{ as } \tau \rightarrow + \infty. \label{4:StaticRecMirFuture}
\end{align}
\label{4:StaticRecMir}
\end{subequations}

It is odd that as $d \to \infty$ one does not recover the Minkowski vacuum spectrum. We suspect that this is due to the $1+1$ ambiguity of massless fields.

\subsubsection*{Travelling towards mirror in the distant past}

We now bring our attention to an inertial detector moving towards the mirror, at speed $\tanh \lambda > 0$, along the trajectory $\mathsf{x}(\tau) = \left( \tau \cosh \lambda, -\tau \sinh \lambda \right)$. It can be verified from eq. \eqref{4:uvmirror} that the mirror trajectory and the detector's trajectory do not intersect and, thus, the detector never approaches the boundary of the spacetime.

Using \eqref{4:RecMirA} and \eqref{FdotAsymp}, the transition rate is given by $\dot{\mathcal{F}}(E,\tau) = \dot{\mathcal{F}}_0(E,\tau) + \dot{\mathcal{F}}_1(E,\tau) + \dot{\mathcal{F}}_2(E,\tau)$, where
\begin{subequations}
\begin{align}
\dot{\mathcal{F}}_0(E,\tau) & = -E \Theta (-E) + \frac{1}{2 \pi} \int_0^\infty \! ds \, \cos(Es) \left( \frac{\ee^{2\lambda} p'\left( \ee^\lambda \tau \right) p'\left(\ee^\lambda (\tau-s) \right)}{\left[p\left(\ee^\lambda \tau\right) - p \left(\ee^\lambda (\tau-s)\right) \right]^2} + \frac{1}{s^2}\right) \label{4:BoostRecMirF0} \\
\dot{\mathcal{F}}_1(E,\tau) & = \frac{1}{2 \pi} \int_0^\infty \! ds \,  \frac{\cos(Es)  p'\left(\ee^\lambda (\tau-s) \right)}{\left[\ee^{-\lambda} \tau + d - p \left(\ee^\lambda (\tau-s) \right)\right]^2} \label{4:BoostRecMirF1} \\
\dot{\mathcal{F}}_2(E,\tau) & = \frac{1}{2 \pi} \int_0^\infty \! ds \, \text{Re} \left( \frac{\ee^{-\ii E s} p'\left( \ee^\lambda \tau \right)}{\left[p\left(\ee^\lambda \tau \right) - \ee^{-\lambda} (\tau - s) - \ii \epsilon \right]^2} \right) \label{4:BoostRecMirF2}.
\end{align}
\end{subequations}

As before, only eq. \eqref{4:BoostRecMirF2} needs the $\ii \epsilon$ regulator. In Appendix \ref{app:BH} we show that the response is at early and late times
\begin{subequations}
\begin{align}
\dot{\mathcal{F}}(E,\tau) & = -E \left[1 - \ee^{2\lambda} \cos\left(2\tau \sinh \lambda \ee^\lambda E \right) \right] \Theta(-E) + O\left(\tau^{-1} \right), & \text{ as } \tau \rightarrow - \infty, \label{4:BoostRecMirPast} \\
\dot{\mathcal{F}}(E,\tau) & = -\frac{E}{2}  \Theta(-E) + \frac{E}{2 \left(\ee^{2 \pi \ee^{-\lambda} E/\kappa} -1\right)} + \textit{o}(1), & \text{ as } \tau \rightarrow + \infty. \label{4:BoostRecMirFuture}
\end{align}
\label{4:BoostRecMir}
\end{subequations}

\subsubsection*{Onset of thermality}

We comment on the asymptotic form of the transition rates \eqref{4:StaticRecMir} and \eqref{4:BoostRecMir}.

For the detector that follows a static worldline with respect to the mirror in the asymptotic past, we see that at early times the transition rate is the same as in the Minkowski half-space with Dirichlet boundary conditions \eqref{3:F0} to leading order, with an exponentially suppressed correction term. At late times, the contribution from the left-moving field modes yields a Minkowskian spectrum, while the right-travelling modes carry an energy flux to the right \cite{Birrell:1982ix, Davies:1976hi, Davies:1977yv, Carlitz:1986nh, Good:2013lca} producing a Planckian spectrum at temperature $T = \kappa/(2 \pi)$. This result is consistent with the one reported for the non-derivative Unruh-DeWitt detector \cite{Birrell:1982ix}.

For the detector that moves towards the mirror, the static distance is replaced by the time-dependent distance $-\tau \sinh \lambda$, the energy is blue-shifted by a Doppler factor $\ee^\lambda$, and the amplitude in the oscillatory contribution to the response carries a factor $\ee^{2 \lambda}$. In the asymptotic future, the temperature detected from the right-movers is Doppler blue-shifted by the factor $\ee^\lambda$.

The onset of thermality from early to late times is examined numerically in Figs. \ref{fig:mirror3d} and \ref{fig:mirrorasympt}, for the detector that is static in the asymptotic past. Numerical analyses indicate that the late time rate is reached via a ring-down oscillation with period $2 \pi/\kappa$, but we have not attempted to examine this oscillation analytically. The proper time along the detector worldline at which the transition rate oscillations become important is in agreement with an increase in the stress-energy radiation flux, which crosses the detector worldline, moving away from the mirror,
\begin{equation}
\langle T_{uu}(u) \rangle^{\text{ren}} = \frac{1}{12 \pi} [p'(u)]^{1/2} \partial_u^2 [p'(u)]^{-1/2}.
\end{equation} 

At late times, the radiation flux becomes constant and the transition rate oscillations become damped, leading to a constant transition rate at late times. Compare Figs. \ref{fig:mirror3d} and \ref{fig:mirrorasympt} with \ref{fig:EnergyFlux}

\begin{figure}[!ht]
\centering
\includegraphics[height=8cm]{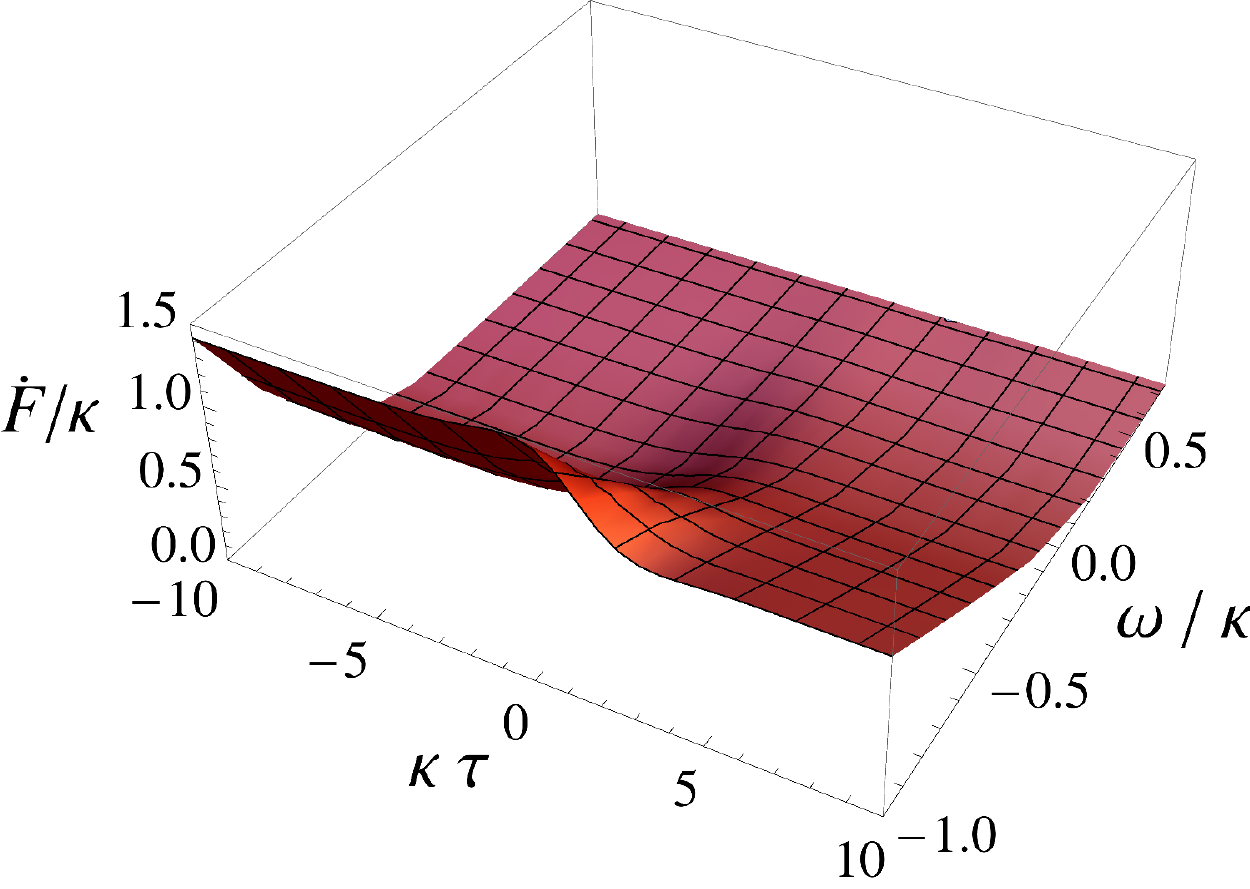}
\caption[Transition rate in the receding mirror spacetime.]{Perspective plot of the transition rate $\dot{\mathcal{F}}(E,\tau)$ for the detector that is static with respect to the mirror in the asymptotic past \eqref{4:StaticRecMir} with $d = 1/\kappa$ and $E = \omega$ (in units $\hbar = 1$).}
\label{fig:mirror3d}
\end{figure}

\begin{figure}[!ht]
\centering
\begin{tabular}{cc}
\includegraphics[width=0.45\textwidth]{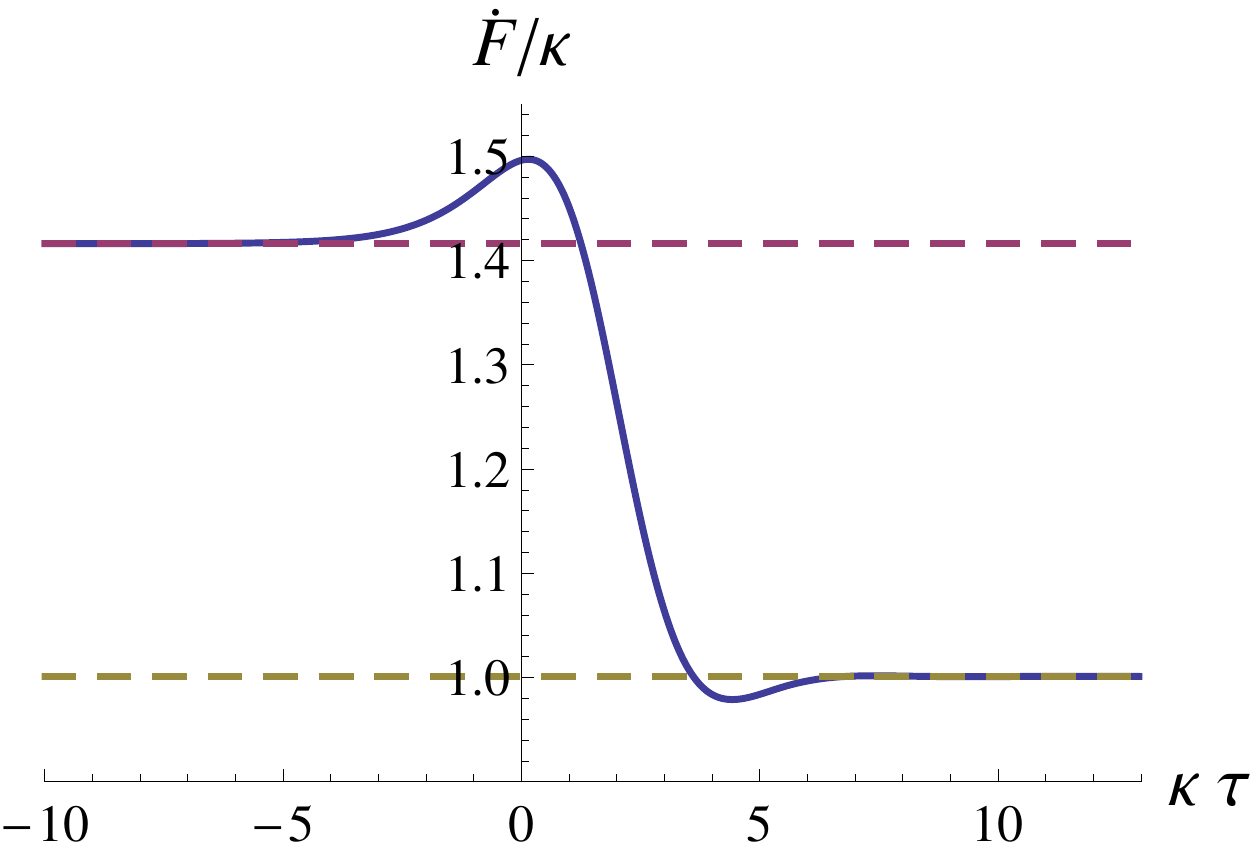}& 
\includegraphics[width=0.45\textwidth]{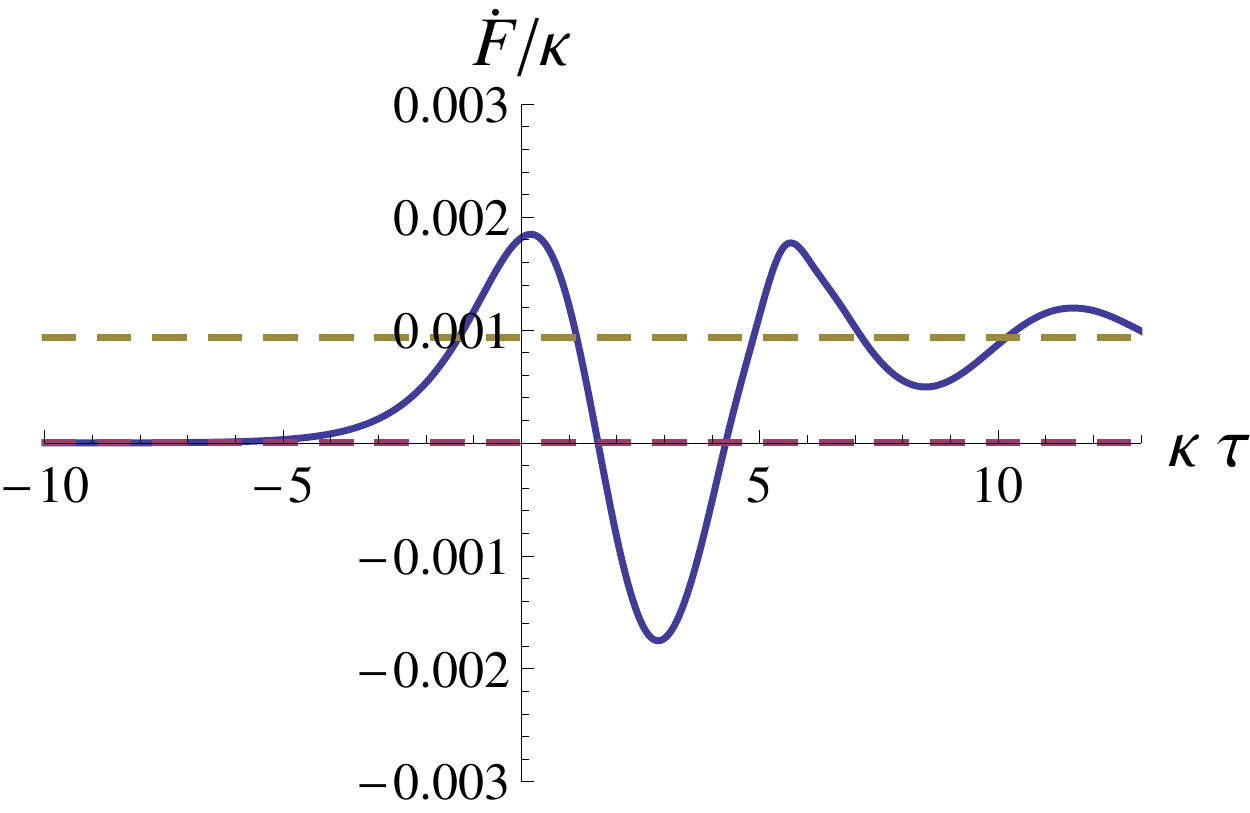} \\
(a) & (b)
\end{tabular}
\caption[Asymptotic early and late-time behaviour of the transition rate in the receding mirror spacetime.]{Cross-sections of the plot in Fig. \ref{fig:mirror3d} at (a) $E = - \kappa$ and (b) $E = + \kappa$. The asymptotic values are represented by the dashed lines.}
\label{fig:mirrorasympt}
\end{figure}

\begin{figure}[!ht]
\centering
\includegraphics[height=7cm]{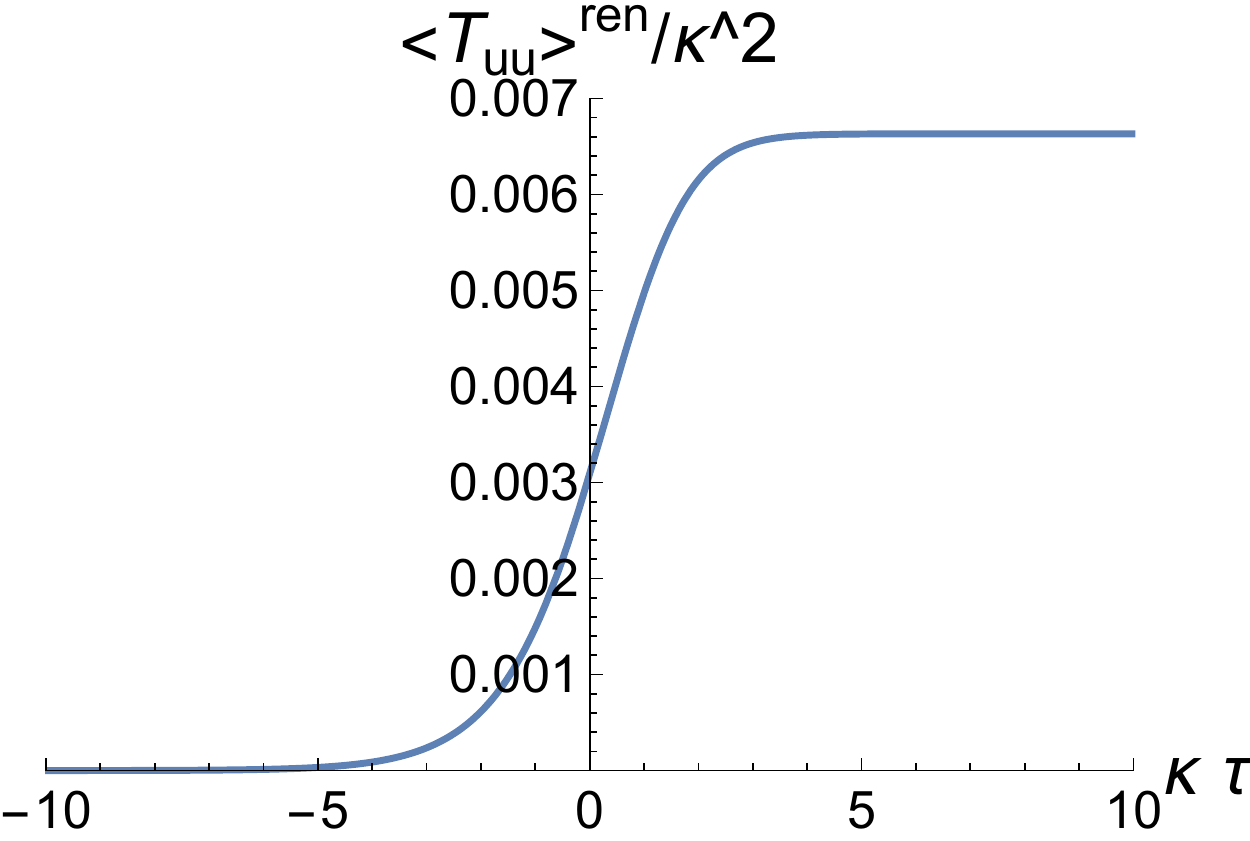}
\caption[Energy flux in the receding mirror spacetime.]{Right-moving energy flux crossing the trajectory \eqref{4:StaticRecMir} with $d = 1/\kappa$.}
\label{fig:EnergyFlux}
\end{figure}

\subsection{Effects in the $1+1$ Schwarzschild black hole}
\label{subsec:4SchwE}

We consider the effect of a massless scalar field in $(1+1)$ Schwarzschild spacetime in the Hartle-Hawking-Israel and Unruh vacua on particle detectors. We shall briefly discuss the situation for a static detector at fixed radius in the exterior region, switched on sharply in the asymptotic past, which registers an Unruh temperature due to the acceleration that it requires to remain static, which coincides with the Hawking temperature weighted by the appropriate Tolman factor. A key difference with $3+1$ black holes is that in $1+1$ dimensions there is no black hole mining: The high energy modes of the quantum field are radiated away, as there is no potential angular momentum barrier that reflects back high angular momentum modes, as in the $3+1$ situation.\footnote{The author thanks Prof. Adrian Ottewill for pointing this out.} We then proceed to inspect the interesting case of an inertial, geodesically-infalling detector. First, we deal with the asymptotic rate near the infinity. In this case, the detector registers an Unruh temperature proportional to the surface gravity of the black hole, as in the full $3+1$ case. In the case of the HHI state, the rate satisfies the KMS condition, while in the Unruh state case only the modes coming from the black hole are thermal. Next, we analyse the non-thermal, divergent behaviour near the singularity and characterise the divergence rate of the detector. The loss of thermality of the infalling detector is computed numerically and we find that the thermal character of the response is lost before the detector crosses the horizon. Finally, we examine numerically the response of the detector that is switched on in the white hole region, in the future of the white hole singularity, and falls geodesically to the black hole through the bifurcation point in the Schwarzschild spacetime. The rate diverges as the switch-on time approaches the white hole singularity.

\subsubsection*{Static detector}

Consider a detector following the trajectory $r = R >2M$ in region I. Calculating the derivatives of the pull-back of the Wightman function, we can use formula \eqref{DeCoStat} from Chapter \ref{ch:DeCo} with
\begin{subequations}
\begin{align}
\mathcal{A}_\text{H}\left(\tau',\tau''\right) & = \frac{1}{4\pi} \left[\frac{\dot{U}\left(\tau'\right) \dot{U}\left(\tau''\right)}{\left[\epsilon + \ii \left( U\left( \tau' \right) - U\left(  \tau'' \right) \right) \right]^2} + \frac{\dot{V}\left(\tau'\right) \dot{V}\left(\tau''\right)}{\left[\epsilon + \ii \left( V\left( \tau' \right) - V\left(  \tau'' \right) \right) \right]^2}\right], \label{4:SchwHHIA} \\
\mathcal{A}_\text{U}\left(\tau',\tau''\right) & = \frac{1}{4\pi} \left[\frac{\dot{U}\left(\tau'\right) \dot{U}\left(\tau''\right)}{\left[\epsilon + \ii \left( U\left( \tau' \right) - U\left(  \tau'' \right) \right) \right]^2} + \frac{\dot{v}\left(\tau'\right) \dot{v}\left(\tau''\right)}{\left[\epsilon + \ii \left( v\left( \tau' \right) - v\left(  \tau'' \right) \right) \right]^2}\right], \label{4:SchwUA}
\end{align}
\label{4:SchwA}
\end{subequations}
\\
in the HHI state and Unruh states respectively and along the static trajectories, $t'-t'' = (\tau'-\tau'')/(1-2M/R)^{1/2}$, $r' = r'' = R$. 

In Appendix \ref{app:BH} we show that
\begin{subequations}
\begin{align}
\dot{\mathcal{F}}_\text{H}(E) & = \frac{E}{\ee^{E/T_{\text{loc}}}-1}, \label{4:StaticHHI} \\
\dot{\mathcal{F}}_\text{U}(E) & = -\frac{E}{2}\Theta(-E) + \frac{E}{2\left(\ee^{E/T_{\text{loc}}}-1\right)}, \label{4:StaticUnruh}
\end{align}
\end{subequations}
\\
where $T_\text{loc} = 1/[(8\pi M)(1-2M/R)^{1/2}]$ is defined in the appendix. Thus, in the HHI state, the transition rate is thermal and satisfies the detailed balance condition at local temperature $T_\text{loc}$, which is proportional to the surface gravity of the black hole weighted by the local Tolman factor. The detector coupled to the field in the Unruh state perceives a temperature coming from the right movers, while the left moving modes detected are indistinguishable from those in the Minkowski vacuum state.


\subsubsection*{Inertial detector near the infinity}

We consider detectors falling geodesically from infinity with $\varepsilon >1$ and $\varepsilon = 1$, eq. \eqref{4:SchwGeodesicsE>1} and \eqref{4:SchwGeodesicsE=1} respectively, which have been switched on in the asymptotic past. We consider the cases of interactions with fields in the HHI and Unruh states. In this early time regime, we find in Appendix \ref{app:BH} that the instantaneous transition rates are given by

\begin{subequations}
\begin{align}
\dot{\mathcal{F}}_\text{H}(E) & = \frac{E}{2\left(\ee^{E/T_-}-1\right)} + \frac{E}{2\left(\ee^{E/T_+}-1\right)} + \textit{o}(1), \label{4:InftyHHI} \\
\dot{\mathcal{F}}_\text{U}(E) & = -\frac{E}{2}\Theta(-E) + \frac{E}{2\left(\ee^{E/T_+}-1\right)} + \textit{o}(1), \label{4:InftyUnruh}
\end{align}
\label{4:InfinitySchw}
\end{subequations}
\\
where $T_\pm = \ee^{\pm \lambda} / (8\pi M)$ is the Doppler-shifted local temperature measured by the infalling detectors, with $\lambda = \arctanh\left[\left(1- \varepsilon^{-2} \right)^{1/2}\right]$. In the case $\varepsilon = 1$, $T_+ = T_- = 1/(8\pi M)$, as there is no Doppler shift experienced.

The results \eqref{4:InfinitySchw} conform fully with our expectations. The HHI transition rate is in agreement with the rate of a detector immersed in a heat bath in Minkowski spacetime \eqref{3:ThermalBathRate}, and satisfies the detailed balance condition at the Hawking temperature, $T = 1/(8\pi M)$, at infinity. The left-moving and right-moving contributions are Doppler-shifted to account for the detector velocity at infinity. 

The Unruh transition rate registers the Hawking effect coming from the black hole, and contains a blue shift due to the ingoing velocity of the detector. Left-moving modes coming from the asymptotic infinity are indistinguishable from the Minkowski modes. The response of the detector coincides with the late time asymptotic behaviour in the receding mirror spacetime \eqref{4:BoostRecMirFuture}, confirming the interpretation that the Unruh state is the state of a quantum field produced by the collapse of a star.

\subsubsection*{Inertial detector near the singularity}

We consider the transition rate in the HHI and Unruh vacua as the detector approaches the black hole singularity. We allow for all non-negative values of $\varepsilon$ and consider an arbitrary switch-on time. For $0 \leq \varepsilon < 1$, the switch-on in the HHI vacuum takes place after the detector has emerged from the white hole singularity and the switch-on in the Unruh vacuum takes place after the detector has crossed the past horizon.

We find in Appendix \ref{app:BH} that the near-singularity behaviour is, to leading order
\begin{equation}
\dot{\mathcal{F}}(E,\tau) = \frac{1}{8\pi M} \left[\left(\frac{2M}{r(\tau)}\right)^{3/2} + \frac{1+\varepsilon^2}{2}\left(\frac{2M}{r(\tau)}\right)^{1/2} \right] + O(1).
\label{4:NearSing}
\end{equation}
for both the HHI and Unruh vacua. The leading divergence is generic. The dependence on the trajectory enters as a subleading divergence. The state-dependent part is $O(1)$. In terms of the proper time along the trajectory, reaching the singularity at $\tau = \tau_\text{sing}$, the leading term is $1/[6\pi(\tau_\text{sing} - \tau)]$.


\subsubsection*{Intermediate regime: loss of thermality}

We consider the transition rate as a detector falls freely into the black hole singularity from infinity, while interacting with fields in the HHI or Unruh vacua. Numerical evidence of how the loss of thermality takes place for the $\varepsilon = 1$ trajectory in the HHI and Unruh vacua is presented in Figures \ref{fig:HHomega} and \ref{fig:Unruhomega} respectively. 
Our numerics show that the Planckian form of the transition rate is lost already in region I, well before the detector crosses the horizon. 

\begin{figure}
\centering
\begin{tabular}{c}
\includegraphics[height=5cm]{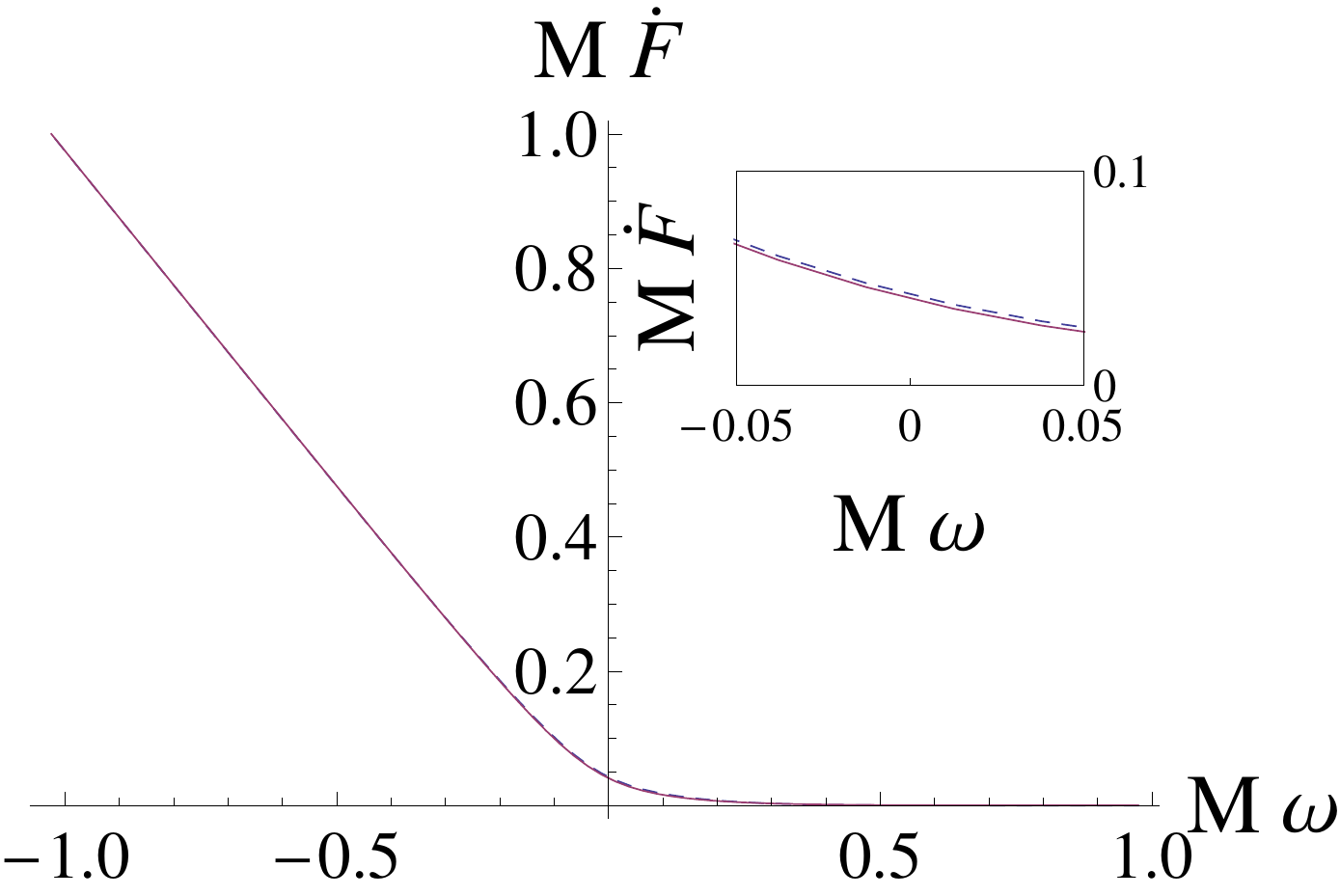} \\ (a) \\ \includegraphics[height=5cm]{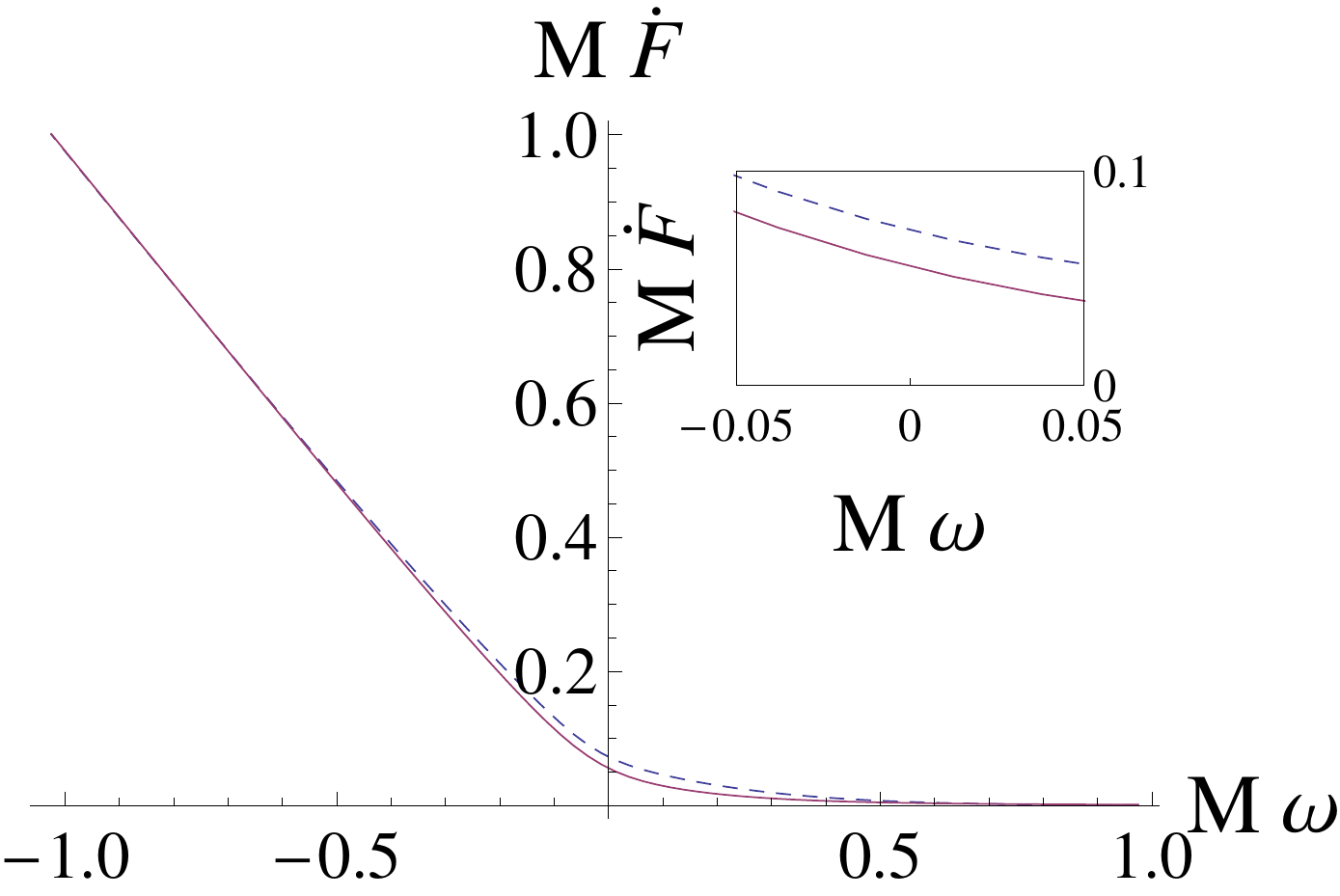} \\ (b) \\
\includegraphics[height=5cm]{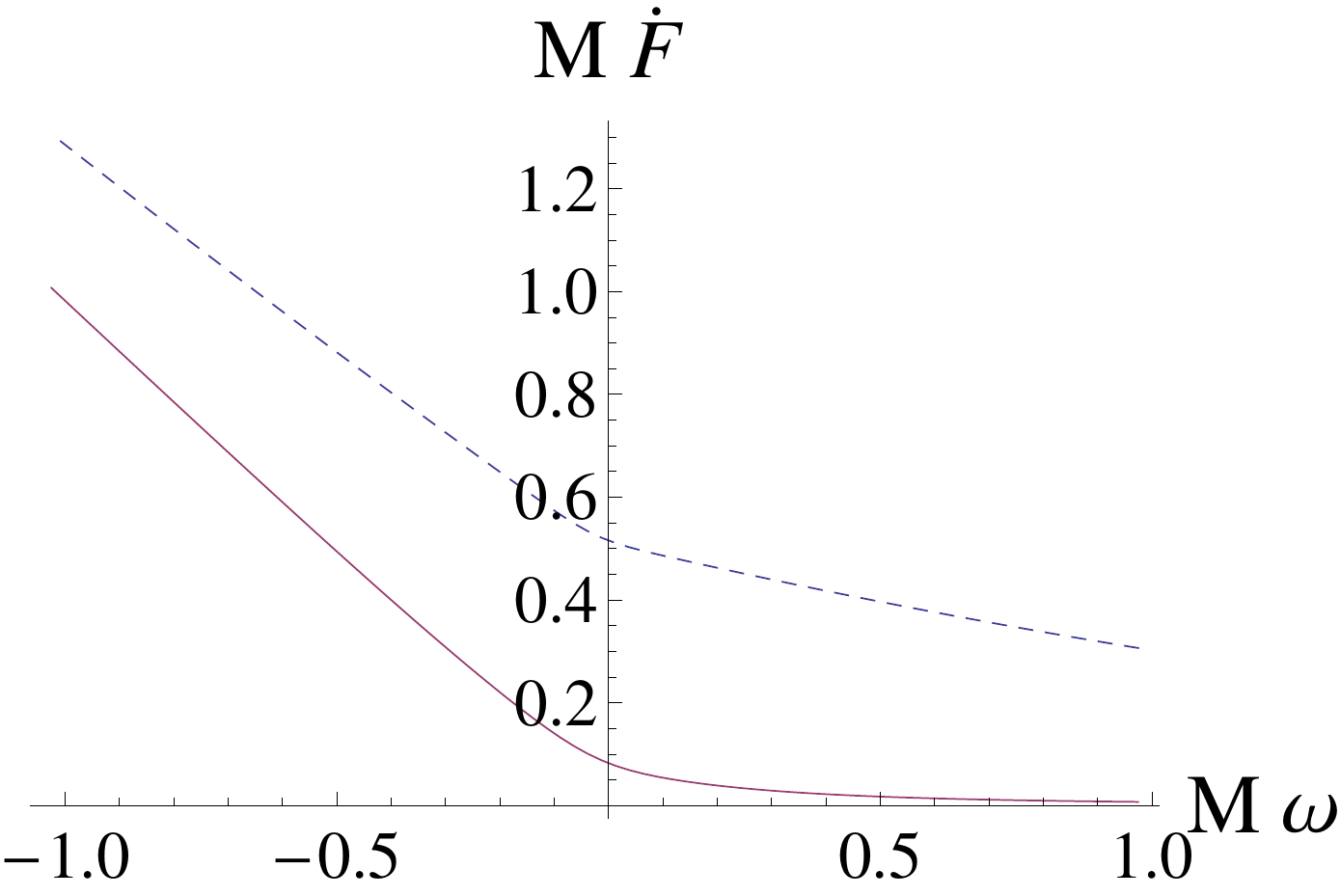} \\ (c)
\end{tabular}
\caption[Loss of thermality along infalling geodesic in the HHI state in Schwarzschild]{The solid (red) curve shows $M\dot{\mathcal{F}}$ as a function of $M\omega$, with $\omega = E$, setting $\hbar = 1$, the frequency of a detector transition, for the $\varepsilon = 1$ geodesic in the HHI vacuum state at times (a) $\tau = -10 M$, (b) $\tau = -3.5 M$ and (c) $\tau = -1.5 M$. The dashed (blue) curve shows $M$ times the Minkowski thermal bath rate \eqref{3:ThermalBathRate} at the local Hawking temperature $T_\text{loc} = \lambda_\text{loc}/\left[(8\pi M)\left(1-2M/r\right)^{1/2}\right]$ against $M\omega$, where $\lambda_\text{loc} = \arctanh \left[ \left(2M/r \right)^{1/2} \right]$ is the Doppler factor of a detector moving at rapidity $\left(2M/r \right)^{1/2}$ in Minkowski spacetime. The curve discrepancy shows how the Planckian character of the transition rate is lost as $\tau$ approaches the value at the horizon-crossing, where the solid curve remains finite, but the dashed curve increases towards $+\infty$.}
\label{fig:HHomega}
\end{figure}

\begin{figure}
\centering
\begin{tabular}{c}
\includegraphics[height=5cm]{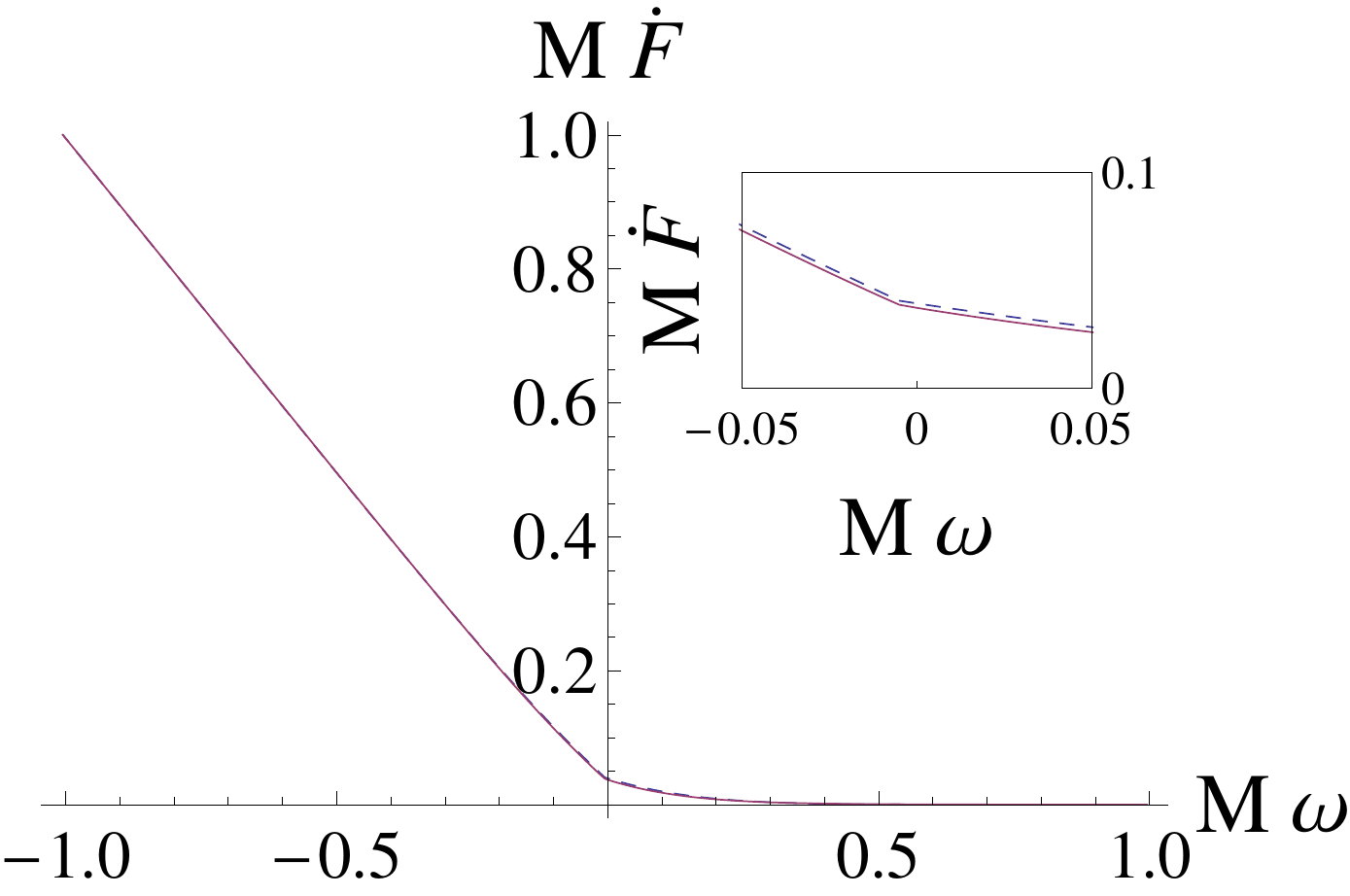} \\ (a) \\ 
\includegraphics[height=5cm]{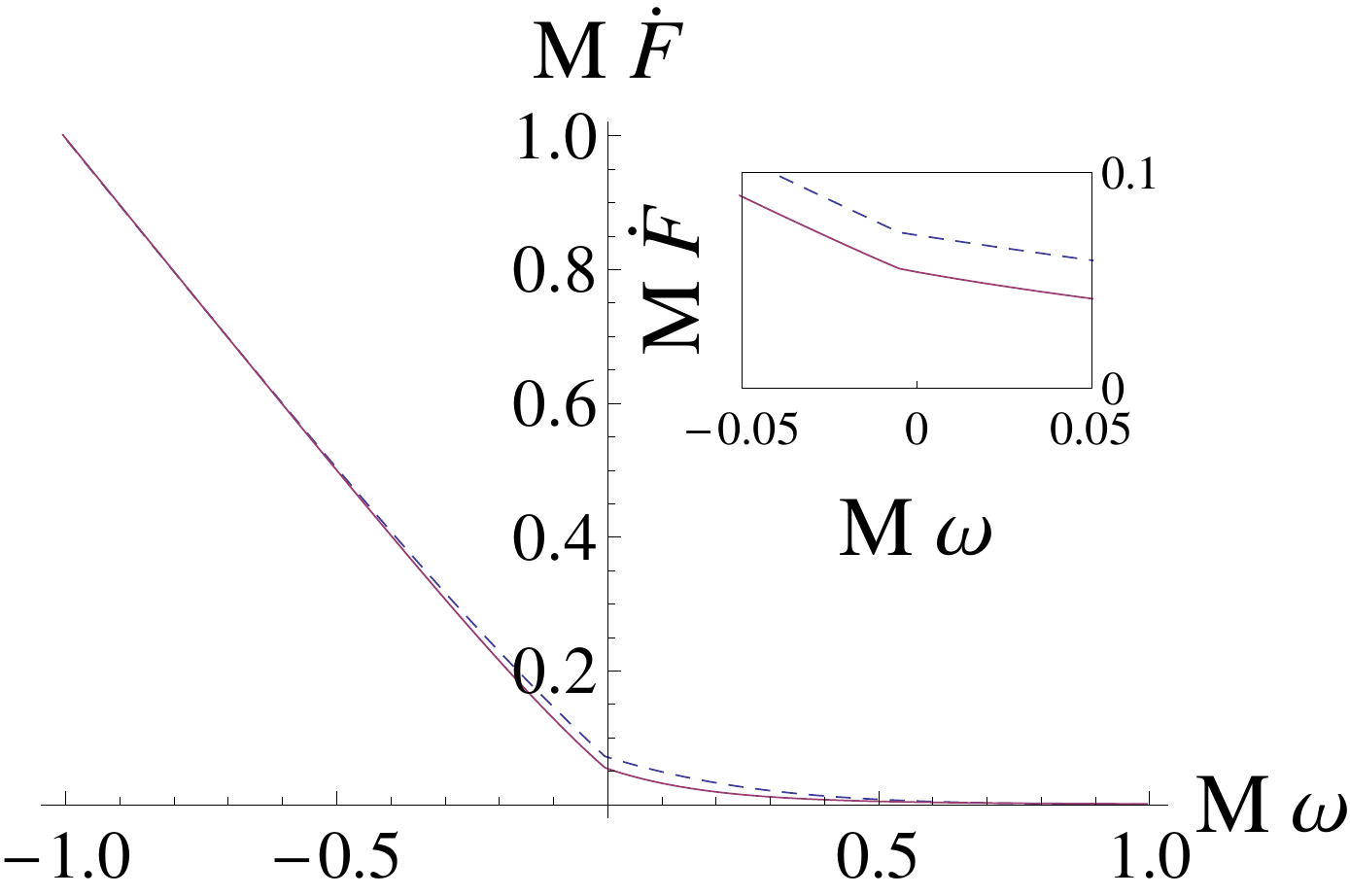} \\ (b) \\
\includegraphics[height=5cm]{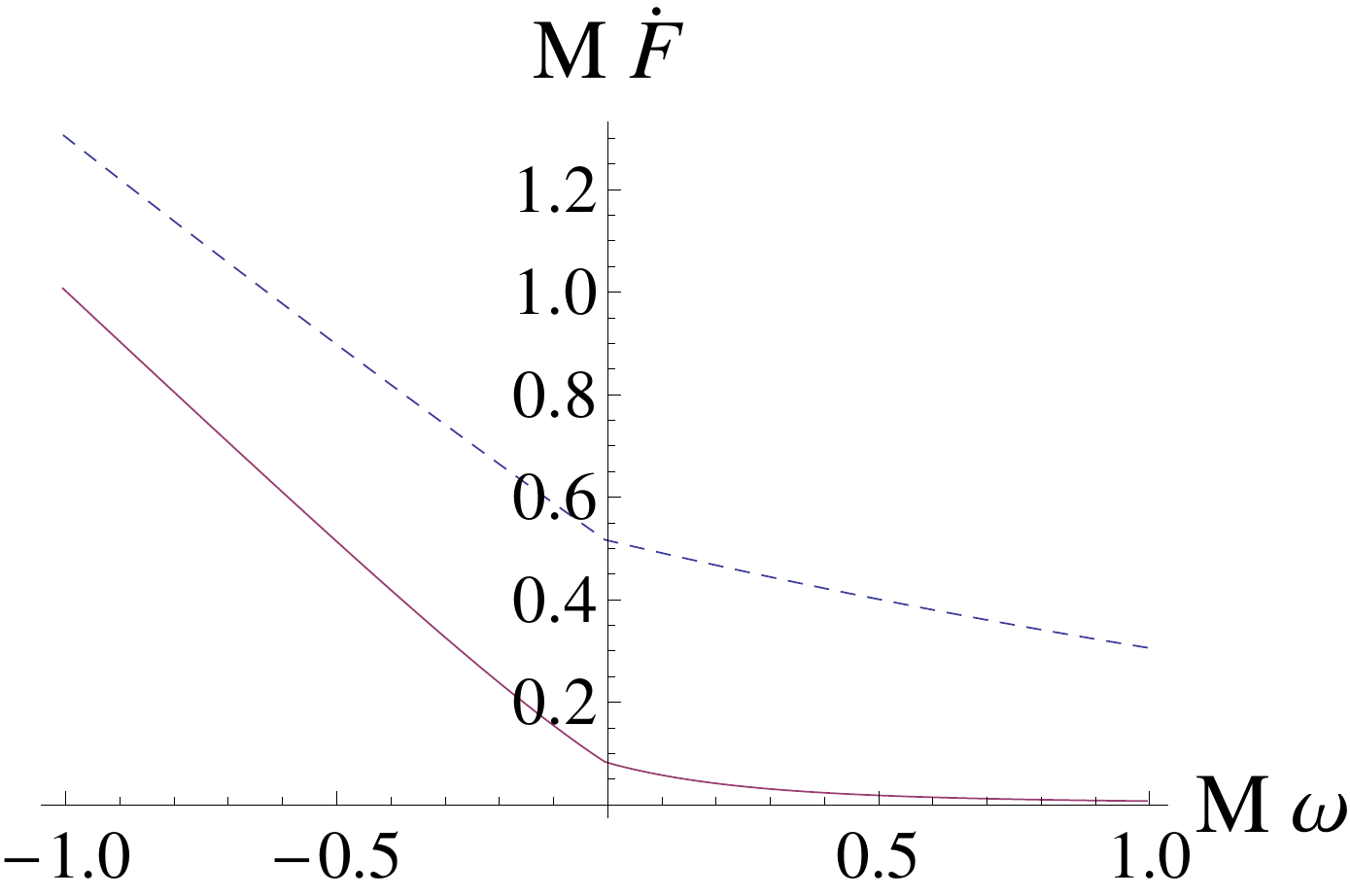} \\ (c)
\end{tabular}
\caption[Loss of thermality along infalling geodesic in the Unruh state in Schwarzschild]{The solid (red) curve is as in Fig. \ref{fig:HHomega}, this time in the Unruh vacuum. The dashed (blue) curve shows $M$ times the Minkowski thermal bath rate for right-movers at temperature $T_\text{loc}$ and time the Minkowski vacuum transition rate for left-movers. The curve discrepancy shows how the Planckian character of the transition rate is lost as $\tau$ approaches the value at the horizon-crossing, where the solid curve remains finite, but the dashed curve increases towards $+\infty$. Note that the dashed curve has discontinuous slope at $\omega = 0$.} 
\label{fig:Unruhomega}
\end{figure}

The numerical evidence also confirms that the magnitude of the transition rate increases as the detector falls into the black hole and becomes unbounded as the detector falls into the singularity. Again, for the $\epsilon = 1$ infalling detector, our numerics agree with the asymptotic divergent behaviour obtained for the transition rate near the singularity. This is shown in Fig. \ref{fig:HHIdiv} for the HHI vacuum. 

\begin{figure}
\centering
\begin{tabular}{cc}
\includegraphics[width=0.45\textwidth]{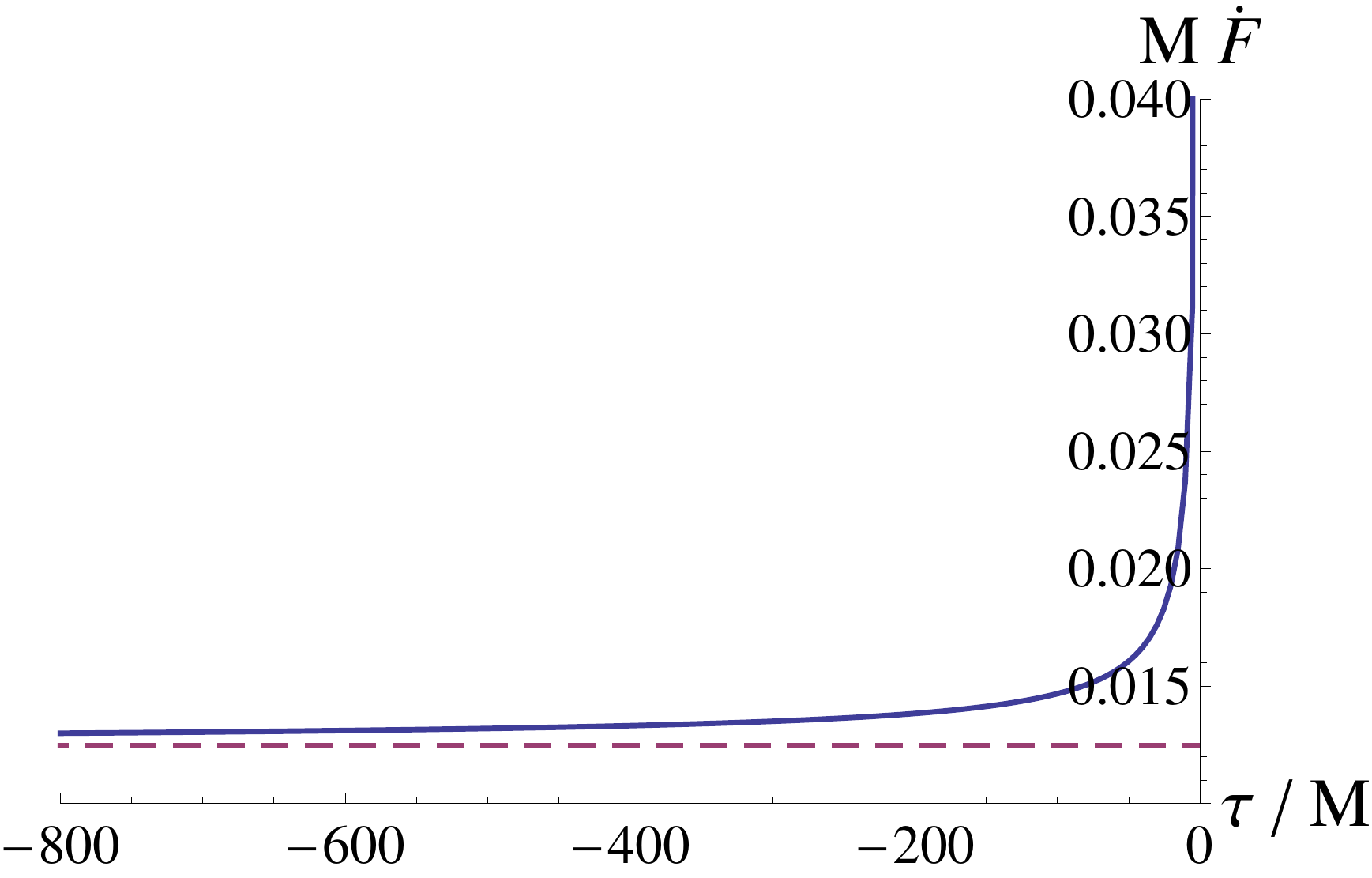}& 
\includegraphics[width=0.45\textwidth]{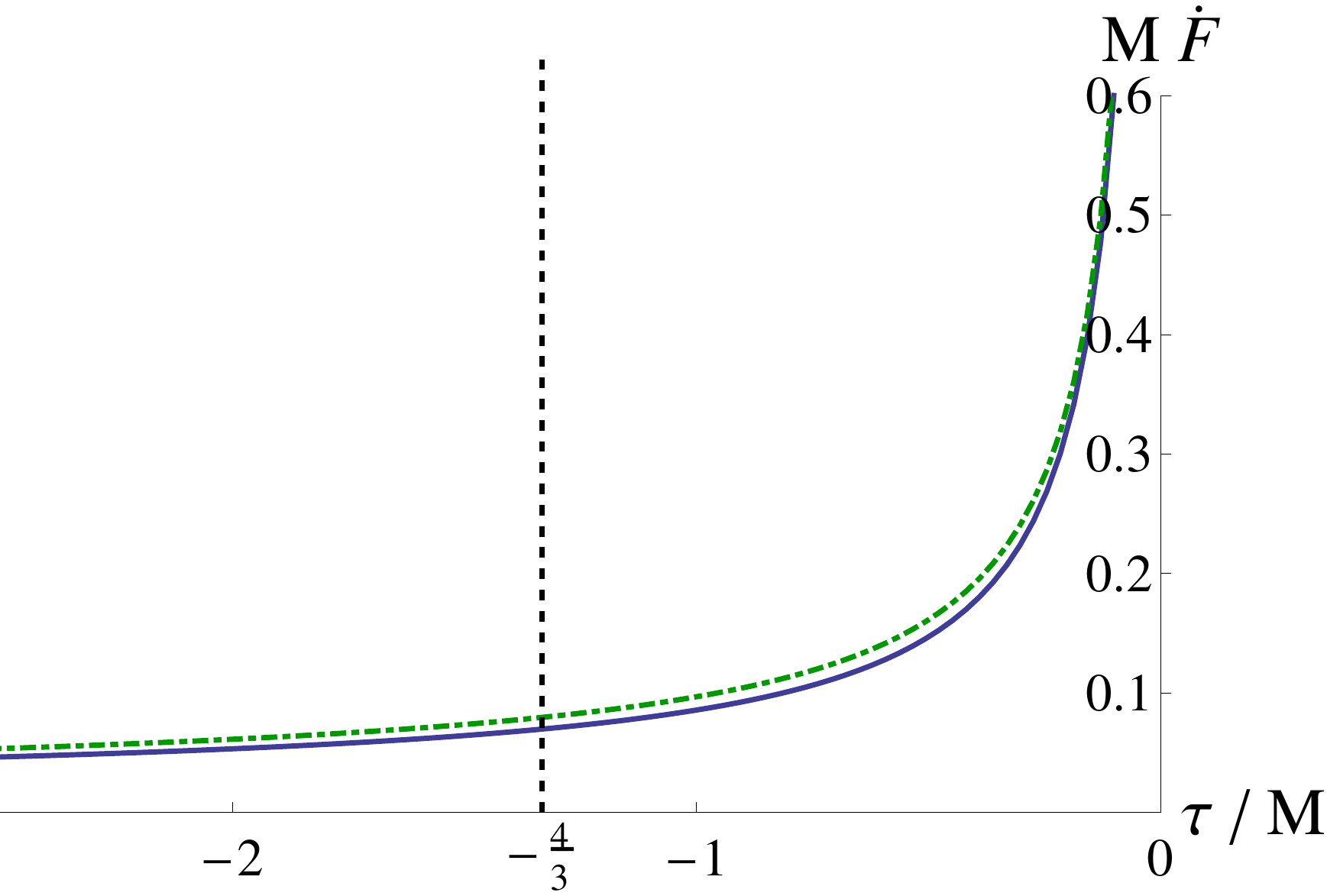} \\
(a) & (b)
\end{tabular}
\caption[Asymptotic behaviour of the transition rate near infinity and near the singularity in the HHI state in Schwarzschild]{(a) The solid (blue) curve shows $M \dot{\mathcal{F}}(1/(4\pi M))$ as a function of $\tau/M$ for the $\varepsilon = 1$ trajectory in the HHI vacuum. The dashed (red) line shows the asymptotic value $1/\left[ 4\pi \left( \ee^2 - 1 \right) \right]$ in the distant past. (b) The solid (blue) curve shows a close-up of (a) near the horizon-crossing at $\tau/M = -4/3$. The dash-dotted (green) curve shows the $\tau$-dependent terms included in the $\tau \to 0$ asymptotic expression \eqref{4:NearSing}.} 
\label{fig:HHIdiv}
\end{figure}

Finally, let the field be in the HHI state, and consider the $\varepsilon = 0$ trajectory that goes from the white hole region directly into the black hole region through the bifurcation point.  The geodesic equation is \eqref{4:SchwGeodesicsWH}, such that the parametrisation \eqref{4:SchwGeodesicsE<1} holds with $\varepsilon = 0$. We consider a detector switch-on at parameter $\eta = -9\pi/10$, near the white hole singularity at $\eta = -\pi$. Fig \ref{fig:HHIWhiteHole} shows a perspective plot of the transition rate as a function of the transition frequency and the proper time along the trajectory. The plot shows an inverse linear divergence as the proper time approaches the switch-on time (cf. eq. \eqref{Fdot}), as well as the divergence that occurs as the detector approaches the black hole singularity.

\begin{figure}
\begin{center}
\includegraphics[height=8cm]{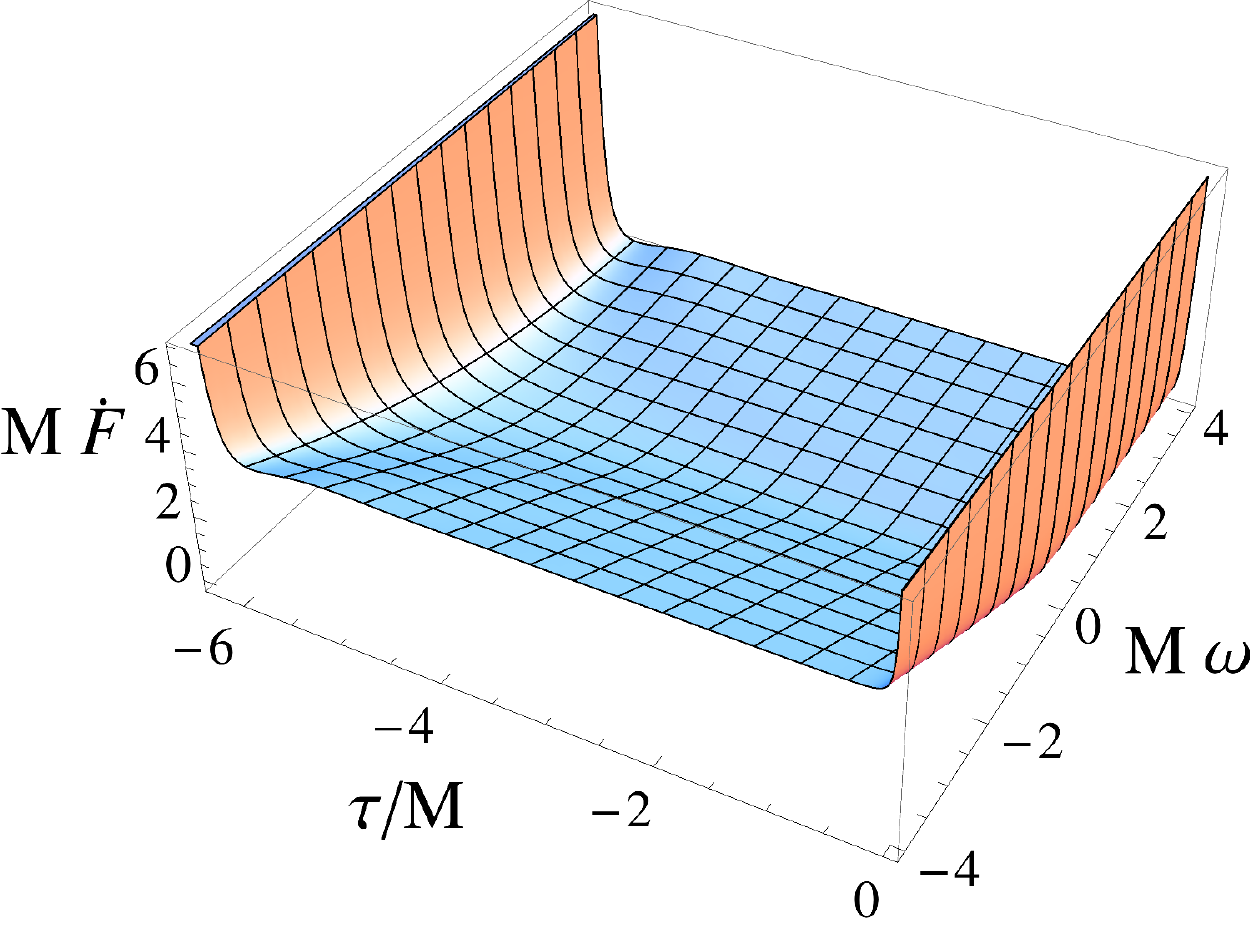}
\caption[Transition rate for $\varepsilon = 0$ geodesic in the HHI state in Schwarzschild]{Transition rate of a detector following the $\varepsilon = 0$ geodesic in the HHI vacuum. The white hole singularity is located at $\tau = -\pi M$ ($\eta = -9/10$), the switch-on occurs at $\tau \approx 3.136 M$ ($\eta = -9 \pi/10$). The transition rate diverges as the switch-on time is approached, as well as near the black hole singularity.}
\label{fig:HHIWhiteHole}
\end{center}
\end{figure}

A final comment is due regarding numerical applications of the $1+1$ derivative-coupling detector described in Chapter \ref{ch:DeCo}. The regulator-free formulas allow one to implement numerics directly at the level of the responses and rates. If one seeks to examine the thermal character of a response, it suffices to numerically compute the quantity
\begin{equation}
T_\text{KMS}(E,\tau,\tau_0) \doteq \frac{E}{\ln \left( \dot{\mathcal{F}}(-E,\tau,\tau_0)/\dot{\mathcal{F}}(E,\tau,\tau_0) \right)}.
\end{equation}

If $T_\text{KMS}$ is approximately constant for a range of values of $E$ at fixed $\tau$, the transition rate would approximately satisfy the KMS condition in such range of $E$-values, with, possibly $\tau$-dependent, approximate temperature $T_\text{KMS}$.

We have performed this test in the case of the $\varepsilon = 0$ trajectory in Fig. \ref{fig:HHIWhiteHole}, and we have found no range of values at which $T_\text{KMS}$ would be approximately independent of $E$ within our space of parameters.

\subsection{Effects in the $1+1$ generalised Reissner-Nordstr\"om black hole}
\label{subsec:4RNE}

As we have discussed before, causality is protected by global hyperbolicity. Thus, given that many important solutions in general relativity are non-globally hyperbolic, it is not surprising that a great deal of effort has been spent in understanding the stability of Cauchy horizons in General Relativity. 

Early attempts to understand this problem lead to the \textit{strong cosmic censorship conjecture} \cite{Simpson:1973ua}, which states that for generic initial data the spacetime is inextendible beyond the maximal Cauchy development. It turns out the strong cosmic censorship conjecture is incorrect. Given polynomially decaying initial data for a dynamical Maxwell-Einstein-scalar system settling to a Reissner-Nordstr\"om black hole, the spacetime is extendible with a $C^0(M,\mathbb{R})$-metric, falsifying the conjecture. However, not all the physical invariants are finite and, in particular, the Hawking mass goes to infinity. This is known as the mass inflation scenario \cite{Poisson:1989zz, Poisson:1990eh, Dafermos:2002ka, Dafermos:2003wr, Costa:2014zha}. On the other hand, Chandrasekhar and Hartle \cite{Chandrasekhar:1982} have shown that the (electromagnetic or gravitational) classical radiation felt by an observer diverges as the Reissner-Nordstr\"om horizon is approached.

Thus far, all of the above is concerned with classical aspects of the problem of causality. We wish to address the quantum experience of observers as they attempt to cross a Cauchy horizon. In particular, we consider the interaction of such observers with a scalar field, in the limit of no back-reaction, as the observers attempt to cross the horizon.

We analyse the experiences of an inertial observer who attempts to fall through the Cauchy horizon in our $(1+1)$-dimensional generalised nonextremal Reissner-Nordstr\"om black hole. First, we compute the transition rate measured by the derivative-coupling Unruh-DeWitt particle detector that is coupled to a scalar field in the HHI or in the Unruh state, as the observer carrying the detector approaches the future Cauchy horizon. Second, we compute the local near-horizon renormalised energy density along the infalling worldline. We shall find that both quantities diverge proportionally to $1/(\tau_h - \tau)$, where $\tau$ is the
observer's proper time and $\tau_h$ is the value at horizon-crossing, for the detector coupled to the field in the HHI and Unruh states, and for the energy density in the HHI vacuum. This divergent behaviour is, in both cases, independent of the initial conditions of the geodesic, and similar to the one encountered above for the detector approaching the $1+1$ Schwarzschild singularity \eqref{4:NearSing}.

\subsubsection*{Inertial detector near the Cauchy horizon in the HHI state}

Let us start by computing the transition rate of the geodesic detector (see \eqref{4:RNeqmotion} and \eqref{4:RNeqmotion2}) when the field is found in the HHI state. Using the pullback along the geodesic of the HHI Wightman function \eqref{4:RNHHI} in the transition rate formula \eqref{Fdot}, we obtain
\begin{align}
\dot{\mathcal{F}}_\text{H}(E, \tau, \tau_0) & = -\frac{E}{2} + \frac{1}{\pi \Delta \tau} + 2\int_0^{\Delta \tau} \! ds \, \left(\frac{1- \cos(Es)}{2 \pi s^2} \right) \nonumber \\
& + \int_{\tau_0}^\tau \! d\tau' \, \cos\left[E\left( \tau-\tau' \right)\right] \, \partial_{\tau'} \left( \partial_\tau \W\left(\tau,\tau'\right)  + \frac{1}{2 \pi \left( \tau - \tau' \right)}\right),
\label{4:HHILimPre}
\end{align}
where we have written $\mathcal{A}(\tau,\tau') = \partial_\tau \partial_{\tau'} \W (\tau,\tau')$.

All but the last terms on the right hand side of \eqref{4:HHILimPre} are $O(1)$ as $\tau' \to \tau$. Integrating the last term by parts one can write
\begin{align}
\dot{\mathcal{F}}_\text{H}(E, \tau, \tau_0) & = - 2 \cos\left(E \Delta \tau\right)\partial_\tau \mathcal{W}_\text{H}(\tau,\tau_0)\nonumber \\
& + \lim_{\tau' \rightarrow \tau }2 \cos\left(E(\tau-\tau')\right) \left[\partial_\tau \mathcal{W}_\text{H}(\tau,\tau')+\frac{1}{2 \pi (\tau-\tau')} \right] \nonumber \\
& - 2 \int_{\tau_0}^\tau \! d \tau' \, E \sin \left(E(\tau-\tau')\right) \left[\partial_\tau \mathcal{W}_\text{H}(\tau,\tau')+\frac{1}{2 \pi (\tau-\tau')} \right] + O(1),
\label{4:HHIlim}
\end{align}
\\
where $\tau_0$ is a fixed point along the trajectory inside region II and the order $O(1)$ is with respect to the quantity $(\tau_h - \tau)$, which is small as the detector approaches the Cauchy horizon at proper time $\tau_h$. We show in Appendix \ref{app:BH} that the contribution of the boundary terms is important and scales as $1/(\tau_h-\tau)$, near the Cauchy horizon,
\begin{align}
\dot{\mathcal{F}}_\text{H}(E, \tau, \tau_0) & = -\left[ \frac{1+( 2 \cos(E \Delta \tau)-1)(\kappa_+/\kappa_-)}{4 \pi} + O\left((\tau_--\tau)^{-\kappa_+/\kappa_-}\right) \right] (\tau_h-\tau)^{-1}
 \nonumber \\
& - 2 \int_{\tau_0}^\tau \! d \tau' \, E \sin \left(E(\tau-\tau')\right) \left[\partial_\tau \mathcal{W}_\text{H}(\tau,\tau')+\frac{1}{2 \pi (\tau-\tau')} \right] + O(1) \nonumber \\
& = -\left[ \frac{1+(2 \cos(E \Delta \tau)-1)(\kappa_+/\kappa_-)}{4 \pi} + O\left((\tau_--\tau)^{-\kappa_+/\kappa_-}\right) \right] (\tau_h-\tau)^{-1}
 \nonumber \\
& + \left[ \frac{E}{2\pi} \frac{\dot{U}(\tau)}{U(\tau)} \int_{\tau_0}^\tau \! d \tau' \, \frac{ \sin \left(E(\tau-\tau')\right)}{1-U(\tau')/U(\tau)} + O(1) \right].
\label{4:HHIlim2}
\end{align}

In the last line of eq. \eqref{4:HHIlim2}, the factor multiplying the integral term contributes to leading order as $\dot{U}(\tau)/U(\tau) = -(\kappa_+/\kappa_-)(\tau_h-\tau)^{-1} + O(1)$. The integral term on the last line of eq \eqref{4:HHIlim2}, which we denote
\begin{equation}
I(E, \tau,\tau_0) \doteq \int_{\tau_0}^\tau \! d \tau' \, \frac{ \sin \left(E(\tau-\tau')\right)}{1-U(\tau')/U(\tau)},
\label{4:RNIint}
\end{equation}
is estimated in Appendix \ref{app:BH} to be
\begin{align}
I(E, \tau, \tau_0) & = - \frac{1 - \cos(E \Delta \tau)}{E} + \mathit{o}(1).
\label{4:RNintResult}
\end{align}

Collecting the terms together, we find that
\begin{equation}
\dot{F}_\text{H}(E,\tau,\tau_0) = -\frac{1}{4 \pi} \left(1 +\frac{\kappa_+}{\kappa_-} +\textit{o}(1) \right) \frac{1}{\tau - \tau_h}.
\label{4:RNHHIFdotfinal}
\end{equation}

This concludes our computation for the detector interacting with the field in the HHI state. It can be read off that the independence on the geodesic parameter $\varepsilon>0$ makes the result independent of the initial details of the trajectory to leading order. This is a feature that is shared with a geodesic detector approaching the spacelike Schwarzschild curvature singularity in $1+1$ in the HHI vacuum. Moreover, the leading degree of divergence is of the same order \eqref{4:NearSing}.

This calculation suggests that something special occurs whenever $\kappa_+ = - \kappa_-$ as the coefficient in \eqref{4:RNHHIFdotfinal} becomes $\textit{o}(1)$. This observation can be strengthened. Indeed, a direct calculation shows that in this case $\dot{F}_\text{H}(E,\tau,\tau_0) = O(1)$ and the horizon can be crossed. Black holes with the property $\kappa_+ = -\kappa_-$ are called lukewarm black holes. To achieve this feature one relies of the fine tuning of the parameters of a de-Sitter Reissner-Nordstr\"om black hole with the cosmological constant \cite{Romans:1991nq}.

Comments are in order concerning the cases $\varepsilon < 0$ and $\varepsilon = 0$. It is clear that for $\varepsilon < 0$ the situation is the same as above, except in this case the divergence will arise from the right-moving sector of the Wightman function. The case $E = 0$ describes a straight-up trajectory from the white hole into the black hole region crossing through the bifurcation point of the bifurcate Killing horizon, located at Kruskal coordinates $ U = V = 0$. In this case, the rate of divergence is stronger, and in fact twice as $\eqref{4:RNHHIFdotfinal}$ cf. eq. \ref{4App:FdotUp}, as both the left-moving and right-moving sectors of the field contribute strongly as $\tau \rightarrow \tau_h$.

\subsubsection*{Inertial detector near the Cauchy horizon in the Unruh state}

We consider now the detector approaching $\mathcal{C}^\text{FL}$, the left portion of the future Killing horizon, along the left-moving inertial trajectory, as it interacts with a field in the Unruh state. If $\varepsilon \geq 1$, we consider the switch-on time to take place at an arbitrary initial time $\tau_i$. If $0< \varepsilon <1$, the detector is switched on after it has left region II'. If we consider $\tau_0$ as the proper time along the trajectory at a point inside region II, we can write
\begin{align}
\dot{F}_\text{U}(E, \tau, \tau_0) & = \dot{F}_\text{H}(E,\tau,\tau_0) \nonumber \\
& + 2 \int_0^{\Delta \tau} \! ds \, \text{Re} \Big[ \ee^{-\ii E s} \partial_\tau \partial_{\tau-s} \Big( \mathcal{W}_\text{U}(\tau, \tau-s) - \mathcal{W}_\text{H}(\tau, \tau-s) \Big) \Big].
\label{4:RNUnruhFdot}
\end{align}

The integral on the right hand side of eq. \eqref{4:RNUnruhFdot} is finite by virtue of the Hadamard property. Integrating by parts, the boundary values,
\begin{align}
\left[-2 \cos(Es) \partial_\tau \Big( \mathcal{W}_\text{U}(\tau, \tau-s) - \mathcal{W}_\text{H}(\tau, \tau-s) \Big)\right]_{0}^{\Delta \tau}
\end{align}
give a term of order $O(1)$. The lower evaluation is understood as a limit. Moreover, the integral
\begin{equation}
\partial_\tau \Big(\mathcal{W}_\text{U}(\tau, \tau-s) - \mathcal{W}_\text{H}(\tau, \tau-s)\Big) = -\frac{1}{4 \pi} \left(\frac{\dot{v}(\tau)}{v(\tau)-v(\tau-s)} - \frac{\dot{V}(\tau)}{V(\tau)-V(\tau-s)} \right)
\end{equation}
is $O(1)$ in $(\tau_h-\tau)$ near $\mathcal{C}^{\text{FL}}$. Hence, the transition rate is the same as in the Hartle-Hawking scenario, to leading order, and the transition rate diverges, to leading order, at the same rate as the future horizon is approached,
\begin{equation}
\dot{\mathcal{F}}(E,\tau,\tau_0) = -\frac{1}{4 \pi} \left(1 +\frac{\kappa_+}{\kappa_-} +\textit{o}(1) \right) \frac{1}{\tau - \tau_h}.
\label{4:RNDetectorRate}
\end{equation}

\subsubsection*{Local energy density near the Cauchy horizon}

Having examined the transition rate of a sharply switched, derivative-coupling detector in $1+1$ dimensions, let us turn to the question of what the renormalised energy density, $\langle \mathcal{\rho} \rangle^\text{ren}$, along the worldline of a geodesically-infalling observer is, as they approach the region $\mathcal{C}^{\text{FL}}$ of the future Cauchy horizon. 

\begin{defn}
Let $\gamma:\mathbb{R} \rightarrow M: \tau \mapsto \gamma(\tau)$, be a timelike path, where $\tau$ is the proper time along the worldline. The \textit{renormalised energy density} along the worldline $\gamma$ is the observable defined by
\begin{equation}
\langle \rho \rangle^\text{ren}[\chi  \dot{\gamma} \otimes  \dot{\gamma}] \doteq \int \! d \tau' \, \chi(\tau') \dot{\gamma}^a(\tau') \otimes \dot{\gamma}^b(\tau') \langle \mathcal{T}_{ab}(\tau') \rangle^\text{ren},
\label{4:rhodef}
\end{equation}
where $\chi$ is a test function of compact support and $\langle \mathcal{T}_{ab} \rangle^\text{ren}$ is the renormalised stress-energy tensor of the matter in the spacetime.
\end{defn}

Typically, $\chi$ is considered to be smooth, but we shall deal with both sharp and smooth test functions. If an observer is equipped with an apparatus that measures the stress tensor along their worldline, the switching profile controls the switch-on and switch-off of such an apparatus. This can be considered a toy model of a detector that couples to the stress-energy tensor of the matter field.

We are interested in computing $\langle \rho \rangle^\text{ren}$ in the massless scalar $1+1$ HHI state along a geodesically infalling worldline that crosses the Cauchy horizon across $\mathcal{C}^\text{FL}$, given by the integral curve of eq. \eqref{4:RNeqmotion} and \eqref{4:RNeqmotion2}, in regions I and II respectively. We shall show that, along this trajectory, there is no significant difference between the estimates that we obtain for sharp or smooth test functions. In both cases, the divergence will be inverse linear in proper time as the observer approaches $\mathcal{C}^\text{FL}$, with the coefficients differing by a constant factor.

We remind ourselves, from Section \ref{sec:4Quantisation}, that the region covered by the coordinates $(U,V)$ in the generalised Reissner-Nordstr\"om is a spacetime $(M_\text{H},g_\text{H})$, which is conformal to Minkowski spacetime, with a metric given by $g_\text{H} = -\Omega^2(U,V) dU dV$ with
\begin{equation}
\Omega^2(U,V) =  -F(r)\left(\kappa_+^2 U V\right)^{-1}
\end{equation}
and the HHI state is a conformal vacuum with respect to the Minkowski vacuum. Then, the stress-energy tensor can be obtained using the conformal techniques in $1+1$ dimensions, developed by Davies, Fulling and Unruh \cite{Davies:1976ei} and Wald \cite{Wald:1978ce}. The stress-energy tensor reads
\begin{equation}
\langle T_{ab}(\mathsf{x}) \rangle^{\text{ren}}_{\text{H}} = \Xi_{ab}(\mathsf{x})-\frac{\text{Ric}(\mathsf{x})}{48 \pi} g_{ab}, 
\label{4:Tren}
\end{equation}
where in the null Kruskal-Szekeres coordinates, $(U,V)$, we have that
\begin{subequations}
\begin{align}
\Xi_{UU} &= -(1/12\pi) \Omega \partial_U^2 \Omega^{-1}, \\
\Xi_{VV} &= -(1/12\pi) \Omega \partial_V^2 \Omega^{-1}, \\
\Xi_{UV} &=  \Xi_{VU} = 0,
\end{align}
\label{4:RNXi}
\end{subequations}
\\
where $\Ric$ is the Ricci scalar. 

\begin{rem}
The generalised $1+1$ Reissner-Nordstr\"om black hole is not Ricci-flat. The Ricci scalar, given in terms of the $r$-coordinate by $\Ric(r) = -F''(r)$, is non-zero, but it is regular across the Cauchy horizon.
\end{rem}

Using eq. \eqref{4:RNXi}, in the vicinity of the Cauchy horizon, the energy density along the inertial worldline takes the form
\begin{equation}
\langle \rho \rangle^\text{ren}[\chi  \dot{\gamma} \otimes  \dot{\gamma}] = \int \! d \tau' \, \chi(\tau') \left( \dot{U}(\tau')^2 \Xi_{UU}(\tau') + \dot{V}(\tau')^2 \Xi_{VV}(\tau') \right) + O(1).
\label{4:RNWorldlineTab}
\end{equation}

We consider the sharp function $\chi_0(\tau') = \Theta(\tau-\tau')\Theta(\tau'-\tau_0)$ with compact support, $\text{supp}(\chi_0) = [\tau_0,\tau]$, and let $\tau \rightarrow \tau_h$ with $\tau_h$ the Cauchy-horizon-crossing proper time along the inertial worldline. We show in Appendix \ref{app:BH} that the energy density along the worldline is
\begin{equation}
\langle \rho \rangle^\text{ren}[\chi_0  \dot{\gamma} \otimes  \dot{\gamma}] = -\frac{(1+ \kappa_+ / \kappa_-)^2}{12 \pi }\left[1 + O\left((\tau_h-\tau) \ln \left(1 - \frac{\tau}{\tau_h} \right)\right)\right]\frac{1}{\tau - \tau_h},
\label{4:RNSharpEnergy}
\end{equation}
which is inverse linear in the small quantity $(\tau_h - \tau)$, positive and independent of the initial details of the trajectory to leading order.

We now consider the smooth test function introduced in Chapter~\ref{ch:DeCo}, defined by eq. \eqref{chi}, as
\begin{equation}
\chi(\tau') = h_1\left(\frac{(\tau+\delta)-\tau'}{\delta} \right) h_2 \left(\frac{\tau'-(\tau_0 - \delta)}{\delta} \right).
\end{equation}

In this case, we take the switching time to be the finite time interval  $\delta \doteq \tau_h - \tau$. As before, $h_1$ and $h_2$ are non-negative smooth functions that satisfy $h_1(x) = h_2(x) = 0$ for $x \leq 0$ and $h_1(x) = h_2(x) = 1$ for $x \geq 1$. The support of the test function is $[\tau_0 - \delta, \tau_h]$. The switching time $\delta$ becomes short (but finite), as the switch-off time $\tau$ approaches the horizon-crossing time $\tau_h$. 

It is convenient to define the $C_0^\infty(\mathbb{R})$ test functions
\begin{subequations}
\begin{align}
\chi_{\text{off}}(\tau') & \doteq h_1\left(\frac{\tau_h - \tau'}{\tau_h-\tau}\right)\Theta(\tau'-\tau), \\
\chi_{\text{on}}(\tau') & \doteq h_2\left(\frac{\tau'-\tau_0-\tau_h+\tau}{\tau_h-\tau}\right)\Theta(\tau_0 -\tau'),
\end{align}
\end{subequations}
\\
and write $\chi = \chi_{\text{on}} + \chi_0 + \chi_{\text{off}}$. The energy density along the worldline is then
\begin{equation}
\langle \rho \rangle^\text{ren}[\chi  \dot{\gamma} \otimes  \dot{\gamma}] = \langle \rho \rangle^\text{ren}[\chi_{\text{on}} \, \dot{\gamma} \otimes  \dot{\gamma}] + \langle \rho \rangle^\text{ren}[\chi_0  \dot{\gamma} \otimes  \dot{\gamma}] + \langle \rho \rangle^\text{ren}[\chi_{\text{off}} \, \dot{\gamma} \otimes  \dot{\gamma}].
\end{equation}

The contribution of $\langle \rho \rangle^\text{ren}[\chi_{\text{on}} \, \dot{\gamma} \otimes  \dot{\gamma}] = O(1)$ and the contribution of $\langle \rho \rangle^\text{ren}[\chi_0 \, \dot{\gamma} \otimes  \dot{\gamma}]$ is given by eq. \eqref{4:RNSharpEnergy}. The contribution of $\langle \rho \rangle^\text{ren}[\chi_{\text{off}} \, \dot{\gamma} \otimes  \dot{\gamma}]$ is calculated in Appendix \ref{app:BH} and we obtain that
\begin{equation}
\langle \rho \rangle^\text{ren}[\chi  \dot{\gamma} \otimes  \dot{\gamma}] = -\frac{(1 + \kappa_+ / \kappa_-)^2}{12 \pi } \left[\int_0^1 \! dx \, \frac{h'_1(x)}{x} + O\left((\tau_h-\tau) \ln \left(1 - \frac{\tau}{\tau_h} \right)\right)\right]\frac{1}{\tau_h - \tau}.
\label{4:RNSmoothEnergy}
\end{equation}

As claimed, the inverse linear divergent behaviour is preserved. The leading coefficient depends on the details of the switching and, in particular, a rapid switching $h_1'(x)/x \gg 1$ will give rise to a large constant coefficient multiplying the leading divergence.

\section{Spacetime dimension and the response function}
\label{sec:4Rindler}

In Section \ref{sec:4QuantumEffects} we have studied several quantum effects that occur due to the curvature of black hole spacetimes, including the stimulation of particle detectors by black hole radiation in time-dependent situations, as well as the divergence of the rate of inertial detectors as they approach the Schwarzschild singularity \eqref{4:NearSing}, and the divergence rate of detectors and of the renormalised local energy density of inertial observers approaching the future Cauchy horizon in Reissner-Nordstr\"om-type spacetimes \eqref{4:RNDetectorRate} in the HHI and Unruh states.

We now discuss how the estimates that we have obtained in Section \ref{sec:4QuantumEffects} depend on the spacetime dimension: The logarithmic singularity structure of the twice-differentiated $1+1$ Hadamard two-point function, given by the Hadamard parametrix \eqref{2:Hparametrix}, diverges as $1/\left(\sigma_\epsilon(\mathsf{x},\mathsf{x}') \right)^{2}$. In other words, the singularity structure is of the same order of that of the $3+1$ Wightman function. This means that the analytic results that we obtain in $1+1$ dimensions does indeed contain \textit{qualitative} information about the full $3+1$ behaviour, and can be used as a reference for what to expect in more complicated situations. Yet, concluding that our analysis contains \textit{quantitative} information about the $3+1$ is erroneous, as we shall now demonstrate.

Consider a derivative-coupling detector following the static trajectory, given in Minkowskian coordinates by $(t,x) = (\tau, \tau_h)$, as it interacts with a field in the Rindler state in the $1+1$ dimensional right Rindler wedge, considered as a spacetime on its own right $(M^{1+1}_\text{R},g^{1+1})$, with metric $g = - \exp(2 a \xi) du_\text{R} dv_\text{R}$, where the Rindler null coordinates $(u_R,v_R) \doteq (\eta - \xi,\eta + \xi)$ are related to the Minkowskian coordinates by
\begin{subequations}
\begin{align}
\eta & = a^{-1} \text{arctanh}(t/x), \\
\xi & = a^{-1} \ln \left[ a(x^2-t^2)^{1/2} \right].
\end{align}
\end{subequations}

Here, $a$ is an arbitrary acceleration parameter.

Suppose the detector is switched on at time $\tau_0$. The trajectory of the switched-on detector will cross the future horizon of the Rindler wedge at proper time $\tau_h$, and the transition rate of such detector can be computed analytically close to the horizon, at proper time $\tau$, to leading order in the small quantity $(\tau_h - \tau)$. See Appendix \ref{app:BH}. The result is
\begin{equation}
\dot{\mathcal{F}}_{1+1}(E, \tau, \tau_0) = -\frac{1}{4 \pi} \left[1 + O\left(\left[\ln\left(1-\frac{\tau}{\tau_h}\right)\right]^{-1}\right) \right] \frac{1}{\tau_h-\tau}.
\label{4:1+1}
\end{equation}

A similar calculation can be performed in the $3+1$ (right) Rindler spacetime, $(M_\text{R},g)$, with a metric given locally by $g = - \exp(2 a \xi) du_\text{R} dv_\text{R} + dy^2 + dz^2$. A detector following the worldline defined by $(t,x,y,z) = (\tau,\tau_h,0,0)$ in Minkowskian coordinates, which has been switched on at proper time $\tau_0$ inside the Rindler wedge, and interacting with a field in the Rindler state, will cross the future Rindler horizon at time $\tau_h$. The transition rate close to the horizon is \cite{Louko:2007mu}
\begin{equation}
\dot{\mathcal{F}}_{3+1}(E, \tau, \tau_0) = \frac{1}{2 \pi^2 \tau_h } \left[\frac{1}{4}\ln\left(1 - \frac{\tau}{\tau_h} \right) + \frac{1}{2} \ln \left[-\ln\left(1-\frac{\tau}{\tau_h}\right) +  O(1)\right]	 \right].
\label{4:3+1}
\end{equation}

In view of equations \eqref{4:1+1} and \eqref{4:3+1} it is clear that the $1+1$ detector model that we are considering diverges more rapidly than the $3+1$ Unruh-DeWitt detector at the level of the transition rate. Moreover, at the level of the transition probability
\begin{subequations}
\begin{align}
\mathcal{F}_{1+1}(E, \tau, \tau_0)&= \frac{1}{4 \pi} \ln \left(1 - \frac{\tau}{\tau_h} \right) + O \left(\ln \left[-\ln\left(1-\frac{\tau}{\tau_h}\right)\right]\right), \\
\mathcal{F}_{3+1}(E, \tau, \tau_0) &= -\frac{1 - \tau/\tau_h}{2 \pi^2 \tau_h} \ln\left(1 - \frac{\tau}{\tau_h}\right) + O(\tau_h-\tau),
\end{align}
\end{subequations}
\\
$\mathcal{F}_{1+1}$ diverges logarithmically, while $\mathcal{F}_{3+1}$ vanishes as the future horizon is approached.

There exists the possibility that this behaviour may be an artefact  of the sharp switching of the detector that the computation of the transition probability smeared with a smooth switching function will resolve this awkwardness. A computation is, however, lacking, and this is future work that deserves attention. 

If the mismatch between the responses in $1+1$ and $3+1$ dimensions in the Rindler space persists with a smooth switching function,  one can formulate the following conjecture: that the behaviour of the instantaneous transition rate, when inertially approaching a Cauchy horizon in a non-extremal $(3+1)$-dimensional Reissner-Nordstr\"om, must diverge \textit{at most} inverse polynomial and, possibly, logarithmic in the proper time of the detector.

An obstruction to this conjecture may come from the presence of highly energetic modes due to the potential barrier on the radial equation in $3+1$-dimensional black holes, which is not present in $1+1$ dimensional models, which may affect the response of a detector in $3+1$ dimensions, but this effect cannot be registered in our $1+1$-dimensional analysis.

If the divergence in $3+1$ dimensions is indeed logarithmic or weaker, then the detector can cross the horizon with a finite transition probability. Nevertheless, our $1+1$ computation of the renormalised energy density along the detector worldline suggests that the region near the horizon is highly energetic. This is consistent with the transition rate computed in $1+1$ dimensions.

\chapter{Asymptotic time-scales in the Unruh effect}
\label{ch:Unruh}

A property that is shared by both the Unruh\footnote{Rigorous proofs of the existence of the Unruh effect in this setting are \cite{Bisognano:1976za, DeBievre:2006px}.} and Hawking phenomena is that thermal phenomena can be detected only in stationary situations and after an infinite interaction time has elapsed. In the first case, the detailed balance condition at the Unruh temperature, $T_\text{U}$, is satisfied along the stationary orbit of a uniformly accelerated detector that interacts with the field for an infinite amount of proper time, as in our discussion in Chapter \ref{ch:DeCo}. See eq. \eqref{DeCoUnruhPlanckian}. In the second case, the black hole radiation is measured by a detector at asymptotically late times along a stationary trajectory in the exterior region of the black hole, as in our discussion in Chapter \ref{ch:BH}. See eq. \eqref{4:InftyUnruh}. Nevertheless, as we have discussed in Chapter \ref{ch:chap2}, from an operational point of view, a detector should interact with a quantum field through a smooth switching function of compact support, $\chi$, for a finite amount of proper time. Indeed, it is the long term interactions that have lead us to work at the level of the transition rate, and not the response, in the models that we examined in Chapters \ref{ch:DeCo} and \ref{ch:BH}.

In this chapter, we work at the level of the response function with a switching function of compact support. We wish to understand the emergence of thermality, in the sense of the detailed balance condition \eqref{2:DBC}, whenever the interaction between a detector and a field only occurs for a finite amount of time. We address this question in the context of the Unruh effect in $3+1$ dimensions. As a prototype, we consider the idealised, point-like Unruh-DeWitt detector \cite{DeWitt:1979}, introduced in Chapter~\ref{ch:chap2}, following a Rindler orbit, which is weakly and linearly coupled to a massless scalar field through a smooth switching function of the proper time with compact support, $\chi \in C_0^\infty(\mathbb{R}, \mathbb{R})$, along the detector worldline.

The first objective of this chapter, which we accomplish in Section \ref{sec5:KMS}, is to understand rigorously the equivalence of the detailed balance of the response function and the KMS condition, when the interaction time between a field and a detector becomes long. To this end, we shall introduce a time scale in the problem, $\lambda$, that will rescale the interaction proper time, in such a way that the switching function that controls the interaction, as well as the response function, scale together with $\lambda$. The equivalence between the response detailed balance and the KMS condition is then obtained in the limit $\lambda \to \infty$. We then prove that this limit cannot be uniform in the detector transition energy gap $E$, under modest, physically motivated assumptions on the (Fourier transform of the) two-point function. In particular, as $E \to \infty$, the detailed balance condition cannot hold. This poses the question, how long does one need to wait to detect the KMS temperature in terms of the detailed balance condition up to the energy scale $E$?


This leads to the second objective, which we treat in Sections \ref{sec:slowswitching} and \ref{sec:plateau}, and which is to answer this question for the case of the Unruh-DeWitt detector following a Rindler trajectory. After all, the conventional wisdom states that such a detector will respond at the Unruh temperature after it has interacted with the field for an infinite amount of time. 

The message that we wish to convey is that the thermalisation time scale depends crucially on the details of the switching function of the detector. To illustrate this point, we deal with two situations: First, in Section \ref{sec:slowswitching}, we treat an adiabatic rescaling of the switching function, which represents a slow and long switching. The main result of this section is that there exist a class of switching functions, with sufficiently fast Fourier decay, for which the detector thermalises at the Unruh temperature in a time scale that is polynomial in the large energy gap $E$, \textit{i.e.}, for which the response is \textit{polynomially asymptotically thermal}. Second, in Section \ref{sec:plateau}, we consider a switching function that switches on for a fixed time, then interacts constantly for a long time, and then switches off. The main result of this section is that, if one only rescales the constant interaction time to infinity, at large energy gap, $E \to \infty$, the detector cannot thermalise in any time that is polynomial in $E$. Thus, the thermalisation time scale \textit{is} in the details of the switching.

\section{The KMS and the detailed balance conditions}
\label{sec5:KMS}

We motivated the introduction of the detailed balance condition in Chapter \ref{ch:chap2} as a condition that, if satisfied, allows a localised observer to measure the temperature of a KMS state with the aid of particle detectors. Nevertheless, we postponed a proof of the equivalence of the KMS condition and the detailed balance of the detector response, and relied on a heuristic argument to see that they are equivalent at a formal level. In this section, we prove that, indeed, under a set of technical conditions, the detailed balance of the detector response at the KMS temperature is equivalent to the KMS condition for the Wightman function.

We consider an stationary Unruh-DeWitt detector, coupled to a quantum field through the interaction Hamiltonian \eqref{Hint}. The interaction occurs for a finite amount of time, and the smooth switching function of compact support, $\chi$, with $\supp(\chi) = [-\tau_i, \tau_f]$, controls the interaction, such that the interaction is switched on at detector proper time $\tau_i$ and switched off at time $\tau_f$. The response function is given by eq. \eqref{ResponseFn}, which can be written along a stationary trajectory as
\begin{equation}
\mathcal{F}(E) = \frac{1}{2 \pi} \int_{-\infty}^\infty \! d \omega' \, \hat{\chi}(\omega') \hat{\chi}(-\omega') \int_{-\infty}^\infty \! ds \, \ee^{-\ii(\omega' + E)s} \mathcal{W}(s),
\label{Fstationary}
\end{equation}
where we  have written the (real) Fourier transform of the switching function as $\hat{\chi}(\omega) \doteq \mathcal{F}[\chi](\omega) = \int d \tau \, \ee^{-\ii \omega \tau} \chi(\tau)$. $\hat{\chi}$ exists by virtue of the compact support of $\chi$. The equation above is to be understood in a distributional sense, whenever the state is Hadamard, by an $\ii \epsilon$ prescription, as we have discussed above in this thesis. 

The first step that we take towards showing the equivalence between the detailed balance of the response and the KMS condition and the two-point function comes from our heuristic discussion in Chapter \ref{ch:chap2}. Namely, we look at the Fourier transform of the Wightman function.

\begin{prop}
Let $\W$ be a stationary Wightman function that has the analyticity properties of Definition~\ref{2:defKMS} and, thus, defines a KMS state. Suppose that the falloff of $\W(s)$ at large $|s|$ is so strong that the integral formula
\begin{align}
\widehat{\W}(\omega) \doteq
\int_{-\infty}^\infty \! d s \, \ee^{-\ii \omega s} \, \W(s), 
\label{5:What-def}
\end{align}
with the distributional singularities understood in the $s \to s - \ii\epsilon$ sense, defines the Fourier transform of $\W$ pointwise at each $\omega\in\mathbb{R}$. Finally, suppose that the integration contour in \eqref{5:What-def} can be deformed to any line of constant $\text{Im}(s) \in [-\beta,0]$, with the distributional singularities at $\text{Im}(s) =-\beta$ treated in the $s \to s + \ii\epsilon$ sense. Then $\widehat{\W}$ satisfies the detailed balance condition.
\label{5:bal-KMS-twoways}
\end{prop}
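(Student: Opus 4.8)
The plan is to show directly that $\widehat{\W}(-E) = \ee^{-\beta E}\,\widehat{\W}(E)$ for all $E \in \mathbb{R}$, which is precisely the detailed balance condition \eqref{2:DBC} with $\mathcal{G} = \widehat{\W}$ (since $\beta = E^{-1}\ln[\widehat{\W}(-E)/\widehat{\W}(E)]$ rearranges to this). First I would start from the pointwise integral formula \eqref{5:What-def} for $\widehat{\W}(E)$, with the singularities treated in the $s \to s - \ii\epsilon$ prescription. The hypothesis tells us the contour may be shifted down to the line $\text{Im}(s) = -\beta$, picking up no residues (holomorphicity of $\W$ on the strip $-\beta < \text{Im}(s) < 0$ from Definition~\ref{2:defKMS}), and that on this lower line the singularities are to be read in the $s \to s + \ii\epsilon$ sense. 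Writing $s = r - \ii\beta$ with $r$ real, this yields
\begin{equation}
\widehat{\W}(E) = \int_{-\infty}^\infty \! dr \, \ee^{-\ii E (r - \ii\beta)} \, \W(r - \ii\beta + \ii\epsilon) = \ee^{-\beta E} \int_{-\infty}^\infty \! dr \, \ee^{-\ii E r} \, \W(r - \ii\beta + \ii\epsilon).
\end{equation}

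Next I would invoke the KMS condition \eqref{2:KMS} in the form $\W(s) = \W(-(s - \ii\beta))$, i.e. $\W(r - \ii\beta + \ii\epsilon) = \W(-r - \ii\epsilon)$, and substitute this into the last integral. Then changing the integration variable to $q = -r$ gives
\begin{equation}
\widehat{\W}(E) = \ee^{-\beta E} \int_{-\infty}^\infty \! dq \, \ee^{\ii E q} \, \W(q - \ii\epsilon) = \ee^{-\beta E} \, \widehat{\W}(-E),
\end{equation}
where the final equality is just \eqref{5:What-def} evaluated at $-E$, again with the $s \to s - \ii\epsilon$ prescription. This is the detailed balance relation, so $\widehat{\W}$ obeys the detailed balance condition.

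The routine-but-delicate parts that need care are the bookkeeping of the $\ii\epsilon$ prescriptions — in particular making sure the distributional singularity at $\text{Im}(s)=-\beta$ really is approached from above after the contour shift, consistent with the KMS relation \eqref{2:KMS} as stated — and verifying that the strong falloff hypothesis genuinely legitimises treating \eqref{5:What-def} as an honest pointwise (rather than merely distributional) integral, so that the change of variables and the contour deformation are all justified. I expect the main obstacle to be precisely this: tracking the analytic continuation and the side from which each singularity is approached carefully enough that the two uses of \eqref{5:What-def} (at $E$ with the lower-contour $+\ii\epsilon$ reading, and at $-E$ with the original $-\ii\epsilon$ reading) are consistent; everything else is a short computation. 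I would assume the analyticity and falloff hypotheses exactly as stated and not attempt to relax them.
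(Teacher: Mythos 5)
Your proposal is correct and follows essentially the same route as the paper's proof: shift the integration contour down by $\ii\beta$, apply the KMS identity in its boundary-value form $\W(s-\ii\epsilon)=\W(-s-\ii\beta+\ii\epsilon)$, and reflect $s\to-s$ to recover $\widehat{\W}(-E)$; the paper merely organises the same computation as showing that the difference $\widehat{\W}(-\omega)-\ee^{\beta\omega}\widehat{\W}(\omega)$ vanishes. (The only cosmetic point is that the boundary-value identity you use does not follow by literal substitution into \eqref{2:KMS} as written, but it is the form the paper itself employs, so nothing is lost.)
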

\begin{proof}
\begin{align}
\widehat{\W}(-\omega) - \ee^{\beta \omega} \, \widehat{\W}(\omega) & = \int_{-\infty}^\infty \! ds \, \ee^{\ii \omega s} \, \W(s - \ii\epsilon) - \int_{-\infty}^\infty \! dr \, \ee^{-\ii \omega (r+ \ii\beta)} \, \W(r-\ii\epsilon) \notag \\
& = \int_{-\infty}^\infty \! ds \, \ee^{\ii \omega s} \left[ \W(s - \ii\epsilon) - \W(-s-\ii\beta + \ii\epsilon) \right], 
\end{align}
where in the last equality we have deformed the contour of the $r$-integral to $r = - \ii\beta + 2\ii\epsilon- s$ with $s\in\mathbb{R}$, and the $\ii\epsilon$ prescribes the distributional singular behaviour.
\end{proof}

The next step is to scale the interaction time between the field and the detector by a real positive parameter $\lambda$, such that, as $\lambda \to \infty$, the interaction time becomes long. The way in which one introduces this scaling is not unique. We present two important choices:

The first choice is the \textit{adiabatic scaling}, whereby $\chi(\tau) \to \chi_\lambda(\tau) = \chi(\tau/\lambda)$, which represents a long and slow switching as $\lambda \rightarrow \infty$. Under this scaling, the adiabatically rescaled response function \eqref{Fstationary}, $\mathcal{F}(E) \to \widetilde{\mathcal{F}}_\lambda(E)$, is given by
\begin{equation}
\widetilde{\mathcal{F}}_\lambda \doteq \mathcal{F}_\lambda(E)/\lambda = \frac{1}{2 \pi} \int_{-\infty}^{\infty} \! d \omega \, \left|\hat{\chi}(\omega)\right|^2 \, \widehat{\W}(E+ \omega/\lambda),
\label{Flambda}
\end{equation}
where we have used the fact that $\chi$ is real-valued. 

The second choice is the \textit{constant interaction scaling}. In this case, we consider the switching functions which consist of a switch-on profile of duration $\Delta \tau_{\text{on}}$, a constant interaction of duration $\Delta \tau$ and a switch-off profile of duration $\Delta \tau_{\text{off}}$. The specifics of this class of switching functions allows us to treat the switching time and the total interaction times independently. The constant interaction time scaling sends the constant time $\Delta \tau \to \lambda \Delta\tau$. For simplicity, we consider $\Delta \tau_{\text{on}} = \Delta \tau_{\text{off}} = \Delta \tau_s$, and we consider switching functions $\chi$ that are constructed by integrating bump functions $\psi \in C_0^\infty(\mathbb{R}, \mathbb{R})$ with support $\text{supp}(\psi) = [0, \Delta \tau_s]$ in the following way
\begin{equation}
\chi(\tau) \doteq \int_{-\infty}^{\tau} d \tau' \left[ \psi(\tau') - \psi(\tau' - (\Delta \tau_s + \Delta \tau)) \right].
\label{chiconst}
\end{equation}

The switching function defined by \eqref{chiconst} has $\text{supp}(\chi) = [0, \Delta \tau + 2 \Delta \tau_s]$ and a switch-on profile of duration $\Delta \tau_s$, a constant interaction time of duration $\Delta \tau$ and a switch-off profile of duration $\Delta \tau_s$. It is clear that different bump functions, $\psi$, yield switching functions, $\chi$, with different switching profiles. The scaling $\Delta \tau \mapsto \lambda \Delta \tau$ produces the constant-interaction rescaled response function $\breve{\mathcal{F}}_\lambda \doteq \mathcal{F}_\lambda(E)/\lambda$ given by (see Lemma \ref{LemChi})
\begin{equation}
\breve{\mathcal{F}}_\lambda \doteq \mathcal{F}_\lambda(E)/\lambda = \frac{1}{2 \pi} \int_{-\infty}^\infty \! d \omega \,\,  \frac{2}{\lambda} \left[1-\cos\left(\omega\left(\lambda \Delta \tau + \Delta \tau_s \right)\right)\right] \left| \frac{\hat{\psi}(\omega)}{\omega} \right|^2 \widehat{\W}(E+ \omega).
\label{Flambda2}
\end{equation}

Let us collect our discussion in the following definition:

\begin{defn}
We call the response $\widetilde{\mathcal{F}}_\lambda$, defined by \eqref{Flambda}, the \textit{adiabatically rescaled response function}. We call the response $\breve{\mathcal{F}}_\lambda$, defined by \eqref{Flambda2}, the \textit{constant-interaction rescaled response function}.
\label{Def:Flambda}
\end{defn}

We are ready to formulate the equivalence between the KMS condition and the detailed balance of the response. Informally, we say that the detailed balance of the response at the KMS temperature is equivalent to the KMS condition if and only if, given a scaling $\lambda$, as discussed above, $\lim_{\lambda \to \infty} \mathcal{F}_\lambda/\lambda$ satisfies the detailed balance condition. Let us state this precisely in the form of two theorems:

\begin{thm}
Suppose $\W$ and $\widehat{\W}$ satisfy the technical assumptions of Proposition \ref{5:bal-KMS-twoways}, \textit{i.e.} $\W$ and $\widehat{\W}$ satisfy the KMS and detailed balance conditions respectively. Further, suppose that $|\widehat{\W}(\omega)|$ is bounded by a polynomial function of $\omega$. Then the function defined by the long time limit of \eqref{Flambda}, $\widetilde{\mathcal{F}}_\infty(E) \doteq \lim_{\lambda\to\infty} \widetilde{\mathcal{F}}_\lambda(E)$ is defined for each $E\in\mathbb{R}$. Further, 
$\widetilde{\mathcal{F}}_\infty$ satisfies satisfies the detailed balance condition.  
\label{5:KMSadiabatic}  
\end{thm}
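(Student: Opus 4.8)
The plan is to show that the adiabatically rescaled response $\widetilde{\mathcal{F}}_\lambda(E)$ defined in \eqref{Flambda} converges pointwise to a multiple of $\widehat{\W}(E)$ as $\lambda\to\infty$, and then inherit the detailed balance condition directly from Proposition \ref{5:bal-KMS-twoways}. First I would record that, since $\chi\in C_0^\infty(\mathbb{R},\mathbb{R})$, its Fourier transform $\hat\chi$ is Schwartz, so $\omega\mapsto|\hat\chi(\omega)|^2$ is integrable and decays faster than any polynomial; combined with the hypothesis that $|\widehat{\W}(\omega)|$ grows at most polynomially, the integrand $|\hat\chi(\omega)|^2\,\widehat{\W}(E+\omega/\lambda)$ in \eqref{Flambda} is dominated, uniformly in $\lambda\geq 1$, by an integrable function of $\omega$ (use $|\widehat{\W}(E+\omega/\lambda)|\leq C(1+|E|+|\omega|)^{N}$ for $\lambda\geq 1$, which is $|\hat\chi|^2$-integrable). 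This is the key estimate that licenses dominated convergence.

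Next I would need pointwise convergence of the integrand: for fixed $\omega$ and $E$, $\widehat{\W}(E+\omega/\lambda)\to\widehat{\W}(E)$ as $\lambda\to\infty$. This requires continuity of $\widehat{\W}$ at $E$. Here I would invoke the analyticity hypothesis built into Proposition \ref{5:bal-KMS-twoways}: since the contour in \eqref{5:What-def} may be shifted into the strip $\mathrm{Im}(s)\in[-\beta,0]$, the integral representation of $\widehat{\W}(\omega)$ along, say, the line $\mathrm{Im}(s)=-\beta/2$ has no distributional singularities and defines $\widehat{\W}$ as a genuinely continuous (indeed smooth, by differentiating under the integral together with the assumed decay) function of $\omega\in\mathbb{R}$. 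With pointwise convergence and the uniform integrable bound in hand, dominated convergence gives
\begin{equation}
\widetilde{\mathcal{F}}_\infty(E)=\lim_{\lambda\to\infty}\widetilde{\mathcal{F}}_\lambda(E)=\frac{1}{2\pi}\left(\int_{-\infty}^\infty\!d\omega\,|\hat\chi(\omega)|^2\right)\widehat{\W}(E)=\frac{\|\chi\|_2^2}{2\pi}\,\widehat{\W}(E),
\end{equation}
where I have used Plancherel's theorem, $\int|\hat\chi(\omega)|^2\,d\omega=2\pi\|\chi\|_{L^2}^2$, to identify the constant; in particular $\widetilde{\mathcal{F}}_\infty(E)$ is well-defined for every $E\in\mathbb{R}$.

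Finally, detailed balance transfers immediately: the constant prefactor $\|\chi\|_2^2/(2\pi)$ is independent of $E$ and strictly positive (assuming $\chi\not\equiv 0$), so
\begin{equation}
\frac{1}{E}\ln\!\left(\frac{\widetilde{\mathcal{F}}_\infty(-E)}{\widetilde{\mathcal{F}}_\infty(E)}\right)=\frac{1}{E}\ln\!\left(\frac{\widehat{\W}(-E)}{\widehat{\W}(E)}\right)=\beta,
\end{equation}
the last equality being exactly the detailed balance condition for $\widehat{\W}$ established in Proposition \ref{5:bal-KMS-twoways}. I expect the main obstacle to be the justification of the interchange of limit and integral cleanly, i.e. pinning down that the polynomial-boundedness hypothesis on $|\widehat{\W}|$ really does produce a $\lambda$-uniform dominating function against the Schwartz weight $|\hat\chi|^2$ — this is routine but must be stated with the correct dependence on $E$ — together with making sure the continuity of $\widehat{\W}$ is legitimately extracted from the contour-shift hypothesis rather than merely assumed; everything downstream (Plancherel, the algebra of the detailed balance ratio) is then immediate.
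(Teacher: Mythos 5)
Your proposal is correct and follows essentially the same route as the paper: a $\lambda$-uniform polynomial bound on $\widehat{\W}(E+\omega/\lambda)$ against the rapid decay of $|\hat\chi|^2$ justifies dominated convergence, the limit is a constant multiple of $\widehat{\W}(E)$, and detailed balance is inherited from Proposition \ref{5:bal-KMS-twoways}. Your extra care in extracting continuity of $\widehat{\W}$ from the contour-shift hypothesis (which the paper leaves implicit) and in identifying the constant via Plancherel are welcome refinements but do not change the argument.
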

\begin{proof}
By the polynomial bound on $|\widehat{\W}(\omega)|$, there exist constants $A >0$, $B>0$ and $n \in \mathbb{Z}_+$ such that for $\lambda\ge1$ we have $|\widehat{\W}(E + \omega/\lambda)| < A  + B (E + \omega/\lambda)^{2n} \le A + B (|E| + |\omega|/\lambda)^{2n} \le A + B (|E| + |\omega|)^{2n}$. As $\hat\chi(\omega)$ decays at $\omega \to\pm \infty$ faster than any inverse power of~$\omega$, the integrand in \eqref{Flambda} is hence bounded in absolute value for $\lambda\ge1$ by a $\lambda$-independent integrable function, and the $\lambda\to\infty$ limit in \eqref{Flambda} may be taken under the integral by dominated convergence. Then, for each $E\in\mathbb{R}$, 
\begin{align}
\widetilde{\mathcal{F}}_\lambda(E) \xrightarrow[\lambda\to\infty]{}
\left( \frac{1}{2 \pi} \int_{-\infty}^{\infty} \! d \omega \, 
\left|\hat{\chi}(\omega)\right|^2 \right) 
\widehat{\W}(E).
\end{align}

The term inside the parentheses is a constant and Proposition \ref{5:bal-KMS-twoways} completes the proof.
\end{proof}

\begin{thm}
\label{5:KMSconstant}
Suppose $\W$ and $\widehat{\W}$ satisfy the technical assumptions of Proposition \ref{5:bal-KMS-twoways}, \textit{i.e.} $\W$ and $\widehat{\W}$ satisfy the KMS and detailed balance conditions respectively. Further, suppose that $|\widehat{\W}(\omega)|$ is bounded by a polynomial function of $\omega$. Then the function defined by the long time limit of \eqref{Flambda2}, $\breve{\mathcal{F}}_\infty(E) \doteq \lim_{\lambda\to\infty} \breve{\mathcal{F}}_\lambda(E)$ is defined for each $E\in\mathbb{R}$. Further, 
$\breve{\mathcal{F}}_\infty$ satisfies satisfies the detailed balance condition. 
\end{thm}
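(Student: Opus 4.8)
The plan is to follow the strategy of Theorem~\ref{5:KMSadiabatic} almost verbatim, with the only new ingredient being control of the $\lambda$-dependent prefactor $\tfrac{2}{\lambda}\bigl[1-\cos\bigl(\omega(\lambda\Delta\tau+\Delta\tau_s)\bigr)\bigr]$ in \eqref{Flambda2}. First I would observe that, because $\psi\in C_0^\infty(\mathbb{R},\mathbb{R})$, the function $\hat\psi(\omega)$ decays at $\omega\to\pm\infty$ faster than any inverse power of $\omega$, so $\omega\mapsto|\hat\psi(\omega)/\omega|^2$ is integrable (the apparent singularity at $\omega=0$ is removable since $\hat\psi(0)=\int\psi>0$ is finite and $\hat\psi$ is smooth, so $\hat\psi(\omega)/\omega$ is bounded near $0$). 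Combined with the polynomial bound $|\widehat{\W}(E+\omega)|\le A+B(|E|+|\omega|)^{2n}$ and the trivial bound $0\le \tfrac{2}{\lambda}[1-\cos(\cdots)]\le 4/\lambda\le 4$ for $\lambda\ge1$, the integrand in \eqref{Flambda2} is dominated for all $\lambda\ge1$ by the $\lambda$-independent integrable function $\tfrac{1}{2\pi}\cdot 4\,|\hat\psi(\omega)/\omega|^2\bigl(A+B(|E|+|\omega|)^{2n}\bigr)$.

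Next I would identify the pointwise $\lambda\to\infty$ limit of the integrand. For each fixed $\omega\neq0$, the Riemann--Lebesgue-type oscillation argument does not quite apply pointwise, but the prefactor $\tfrac{2}{\lambda}[1-\cos(\omega(\lambda\Delta\tau+\Delta\tau_s))]\to0$ as $\lambda\to\infty$ since the bracket is bounded by $2$ while $\lambda\to\infty$. Hence the integrand converges pointwise to $0$ for every $\omega\neq0$, and $\{0\}$ has measure zero. By the dominated convergence theorem,
\begin{align}
\breve{\mathcal{F}}_\lambda(E) \xrightarrow[\lambda\to\infty]{} 0
\end{align}
for each $E\in\mathbb{R}$. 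But a constant function equal to $0$ trivially satisfies the detailed balance condition \eqref{2:DBC} only in the degenerate sense $0 = 0$; this would make the theorem vacuous, so I suspect the intended rescaling in \eqref{Flambda2} is by the total interaction length (i.e. dividing by $\lambda\Delta\tau+2\Delta\tau_s$ rather than by $\lambda$), or that the $\cos$ term is meant to be dropped after an averaging over $\Delta\tau$.

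The cleanest reading, and the one I would adopt, is that the limit is taken in the sense of an average (or equivalently that one replaces $\tfrac{2}{\lambda}[1-\cos(\cdots)]$ by its mean value $\tfrac{2}{\lambda}$ over a window of $\Delta\tau$), so that after the correct normalization the prefactor tends to a finite nonzero constant times $\delta$-like concentration is absent and instead $\omega/\lambda$-type concentration appears exactly as in \eqref{Flambda}. Under that reading, the proof is: dominated convergence gives
\begin{align}
\breve{\mathcal{F}}_\lambda(E) \xrightarrow[\lambda\to\infty]{}
\left(\frac{1}{2\pi}\int_{-\infty}^\infty \! d\omega \, 2\left|\frac{\hat\psi(\omega)}{\omega}\right|^2\right)\widehat{\W}(E),
\end{align}
the bracketed quantity is a positive $E$-independent constant, and Proposition~\ref{5:bal-KMS-twoways} then shows $\breve{\mathcal{F}}_\infty$ satisfies detailed balance at temperature $T=1/\beta$ since multiplying $\widehat{\W}$ by a positive constant leaves the ratio $\widehat{\W}(-E)/\widehat{\W}(E)$ unchanged.

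The main obstacle, therefore, is not analytic but interpretational: one must pin down precisely in what sense the $\lambda\to\infty$ limit of \eqref{Flambda2} is taken so that it is neither identically zero nor divergent, and verify that the surviving $\omega$-integral is finite and $E$-independent. Once the normalization is fixed, the technical heart is merely (i) the removable-singularity check for $\hat\psi(\omega)/\omega$ at $\omega=0$, (ii) the rapid decay of $\hat\psi$ controlling the tails against the polynomial growth of $\widehat{\W}$, and (iii) a single application of dominated convergence — all of which parallel Theorem~\ref{5:KMSadiabatic} step for step. The detailed balance conclusion is then immediate from Proposition~\ref{5:bal-KMS-twoways} because the limit is a strictly positive multiple of $\widehat{\W}(E)$, and the detailed balance condition \eqref{2:DBC} is invariant under multiplication by any positive constant independent of $E$.
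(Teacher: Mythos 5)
Your argument breaks down at the very first step, and the breakdown propagates into your (incorrect) suspicion that the theorem's normalization is wrong. The singularity of $\hat{\psi}(\omega)/\omega$ at $\omega=0$ is \emph{not} removable: for a bump function $\psi$ one has $\hat{\psi}(0)=\int\psi\neq 0$, so $|\hat{\psi}(\omega)/\omega|^2\sim|\hat{\psi}(0)|^2/\omega^2$ near the origin, which is not locally integrable. Consequently your proposed dominating function $4\,|\hat\psi(\omega)/\omega|^2\bigl(A+B(|E|+|\omega|)^{2n}\bigr)$ is not integrable, dominated convergence does not apply in the $\omega$ variable, and the conclusion $\breve{\mathcal{F}}_\lambda(E)\to 0$ is false. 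What actually happens is a concentration effect: the kernel $\tfrac{2}{\lambda}\bigl[1-\cos\bigl(\omega(\lambda\Delta\tau+\Delta\tau_s)\bigr)\bigr]/\omega^2$ is a Fej\'er-type approximate identity of total mass $2\pi(\Delta\tau+\Delta\tau_s/\lambda)$ concentrating at $\omega=0$; the integrand tends to zero pointwise for every $\omega\neq0$ yet the integral does not. Your fallback guess, a limit proportional to $\int|\hat\psi(\omega)/\omega|^2\,d\omega$, inherits the same non-integrability and is also wrong.

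The missing idea is the change of variables $u=\omega(\lambda\Delta\tau+\Delta\tau_s)$. Writing $g(\omega)\doteq|\hat{\psi}(\omega)|^2\,\widehat{\W}(\omega+E)$, which is bounded (say by $C$) and rapidly decaying, one obtains
\begin{equation}
\breve{\mathcal{F}}_\lambda(E)=\frac{1}{2\pi}\int_{-\infty}^{\infty}\!du\,
2\Bigl(\Delta\tau+\tfrac{\Delta\tau_s}{\lambda}\Bigr)\,\frac{1-\cos u}{u^2}\,
g\!\left(\frac{u}{\lambda\Delta\tau+\Delta\tau_s}\right),
\end{equation}
and now the integrand is dominated uniformly in $\lambda\ge1$ by the genuinely integrable function $2(\Delta\tau+\Delta\tau_s)\,C\,(1-\cos u)/u^2$. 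Dominated convergence together with $g(u/(\lambda\Delta\tau+\Delta\tau_s))\to g(0)$ and $\int_{-\infty}^{\infty}(1-\cos u)/u^2\,du=\pi$ gives $\breve{\mathcal{F}}_\infty(E)=\Delta\tau\,|\hat{\psi}(0)|^2\,\widehat{\W}(E)$, a strictly positive $E$-independent multiple of $\widehat{\W}(E)$; Proposition~\ref{5:bal-KMS-twoways} then yields detailed balance exactly as you intended in your final paragraph. So the normalization in \eqref{Flambda2} is correct as stated, and the theorem is not vacuous — the analytic content you dismissed as ``interpretational'' is precisely the approximate-identity computation above.
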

\begin{proof}
Let $g: \mathbb{R} \rightarrow \mathbb{R}$ be defined by
\begin{equation}
g(\omega) \doteq \left| \hat{\psi}(\omega) \right|^2 \widehat{\W}(\omega + E).
\end{equation}

It follows from the decay properties of $\psi \in C_0^\infty(\mathbb{R}, \mathbb{R})$ that $g$ has rapid decay. Also, $g$ is bounded, hence we have $0 < C \doteq \sup\{|g(u)|: u \in \mathbb{R}\} < \infty$. Eq. \eqref{Flambda2} can be written as
\begin{equation}
\breve{\mathcal{F}}_\lambda(E) = \frac{1}{2 \pi} \int_{-\infty}^\infty \! d u \, 2(\Delta \tau + \Delta \tau_s/\lambda) \left(\frac{1-\cos\left(u \right)}{u^2}\right) g\left( \frac{u}{\lambda \Delta \tau + \Delta \tau_s} \right).
\end{equation}

For $\lambda \geq 1$, the integrand of $\mathcal{F}_\lambda(E)/\lambda$ can be bounded uniformly in $\lambda$ by the integrable function $2(\Delta \tau + \Delta \tau_s) C (1-\cos(u))/u^2$. 
An application of the dominated convergence theorem yields
\begin{equation}
\breve{\mathcal{F}}_\infty(E) = \left(\Delta \tau \left|\hat{\psi}(0) \right|^2\right) \, \widehat{\W}(E).
\end{equation}

The term inside the parentheses is a constant and Proposition \ref{5:bal-KMS-twoways} completes the proof.
\end{proof}

We collect the results that we have proven in the following definition:

\begin{defn}
If $\widetilde{\mathcal{F}}$ is the adiabatically rescaled response function of $\mathcal{F}$ and $\widetilde{\mathcal{F}}$ satisfies Theorem \ref{5:KMSadiabatic}, we say that $\mathcal{F}$ is \textit{adiabatically asymptotically thermal}. If $\breve{\mathcal{F}}$ is the constant-interaction rescaled response function of $\mathcal{F}$ and $\breve{\mathcal{F}}$ satisfies Theorem \ref{5:KMSconstant}, we say that $\mathcal{F}$ is \textit{constantly asymptotically thermal}.
\end{defn}

\subsection{Non-uniformity of the asymptotic thermal limit}

The next thing that we want to show is that the way in which the asymptotic thermality of the response is achieved cannot be uniform in $E$.

The logarithmic form of the right hand side of the detailed balance condition,
\begin{align}
\beta = \frac{1}{\omega} \ln \! \left( \frac{G(-\omega)}{G(\omega)}\right),
\label{5:DBC}
\end{align}
shows that if $G$ is positive and $G(\omega)$ is bounded above by a polynomial in~$\omega$, $G(\omega)$ can satisfy the detailed balance condition at $|\omega|\to\infty$ only if $G(\omega)$ bounded by an exponentially decreasing function of~$\omega$ at $\omega\to\infty$. The following proposition shows $\mathcal{F}$ \eqref{Fstationary} cannot decrease this fast under modest technical conditions on~$\widehat{\W}$, and hence neither can $\widetilde{\mathcal{F}}_\lambda$ ~\eqref{Flambda} or $\breve{\mathcal{F}}_\lambda$~\eqref{Flambda2}. 

\begin{prop}
Suppose $\widehat{\W}$ is positive, $\widehat{\W}(\omega)$ is bounded from above by a polynomial function of\/~$\omega$, and there exist positive constants $b$ and $C$ such that $\widehat{\W}(\omega) \ge C$ for $-b\le\omega\le b$. Then $\mathcal{F}(E)$ \eqref{Fstationary} cannot fall off exponentially fast as $E\to\infty$. 
\label{prop:key-falloff}
\end{prop}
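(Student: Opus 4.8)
The plan is to show that $\mathcal{F}(E)$ is bounded below by a quantity controlled by the size of $\hat{\chi}$ in a window of fixed width around $\omega=-E$, and then to invoke the Paley--Wiener mechanism: a switching function $\chi\in C_0^\infty(\mathbb{R})$ that is not identically zero (as is implicit throughout, since the detector genuinely interacts for a finite time) cannot have a Fourier transform that decays exponentially fast, so neither can $\mathcal{F}$.

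First I would rewrite \eqref{Fstationary}. Since $\chi$ is real-valued we have $\hat{\chi}(-\omega')=\overline{\hat{\chi}(\omega')}$, so $\hat{\chi}(\omega')\hat{\chi}(-\omega')=|\hat{\chi}(\omega')|^2$, while the inner $s$-integral is $\widehat{\W}(\omega'+E)$; substituting $\omega=\omega'+E$ and using $|\hat{\chi}(-\zeta)|=|\hat{\chi}(\zeta)|$ gives
\begin{equation}
\mathcal{F}(E)=\frac{1}{2\pi}\int_{-\infty}^{\infty}d\omega\,|\hat{\chi}(\omega-E)|^2\,\widehat{\W}(\omega).
\end{equation}
Because $\widehat{\W}$ is (pointwise) nonnegative the integrand is nonnegative, so I may keep only the contribution of the window $\omega\in[-b,b]$, on which $\widehat{\W}\ge C$; after the change of variable $\mu=E-\omega$ this yields the key lower bound
\begin{equation}
\mathcal{F}(E)\ge\frac{C}{2\pi}\int_{E-b}^{E+b}d\mu\,|\hat{\chi}(\mu)|^2 .
\label{eq:proofkeybound}
\end{equation}
This is the only place the hypotheses on $\widehat{\W}$ enter: positivity is used to discard the rest of the integral, and the bound $\widehat{\W}\ge C$ on a neighbourhood of the origin is what slaves the large-$E$ behaviour of $\mathcal{F}$ to the large-argument behaviour of $\hat{\chi}$.

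Next I would argue by contradiction: suppose $\mathcal{F}(E)\le M\ee^{-aE}$ for some $a,M>0$ and all sufficiently large $E$. By \eqref{eq:proofkeybound}, $\int_{E-b}^{E+b}|\hat{\chi}(\mu)|^2\,d\mu\le(2\pi M/C)\ee^{-aE}$ for all large $E$. Summing this estimate over a sequence of overlapping windows of width $2b$ (and using $|\hat{\chi}(-\mu)|=|\hat{\chi}(\mu)|$ to handle $\mu\to-\infty$, together with continuity of $\hat{\chi}$ on compact sets) shows that $\ee^{c|\mu|}\hat{\chi}(\mu)\in L^2(\mathbb{R})$ for every $0<c<a/2$. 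Then for $z=\tau+\ii\eta$ with $|\eta|<c$ the integrand in $\chi(z)=\tfrac{1}{2\pi}\int\hat{\chi}(\mu)\,\ee^{\ii z\mu}\,d\mu$ is the product of the $L^2$ functions $\ee^{c|\mu|}\hat{\chi}(\mu)$ and $\ee^{(|\eta|-c)|\mu|}$, hence $L^1$; the integral converges absolutely and locally uniformly and defines a holomorphic extension of $\chi$ to the strip $\{|\mathrm{Im}\,z|<c\}$. But $\chi\in C_0^\infty(\mathbb{R})$ vanishes on the nonempty open set $\mathbb{R}\setminus\supp\chi$, so by the identity theorem $\chi\equiv0$ on the strip, contradicting $\chi\not\equiv0$. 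Hence no such $a$ exists and $\mathcal{F}$ cannot fall off exponentially fast.

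The routine parts are the algebraic rewriting of \eqref{Fstationary} and the geometric-series summation over windows. The step with real content, and the one I expect to require the most care, is the passage from the window estimate to an exponentially weighted $L^2$ bound on $\hat{\chi}$ and thence to the analytic continuation of $\chi$; this is exactly the Paley--Wiener obstruction that a nonzero compactly supported profile cannot have an exponentially decaying spectrum. If one prefers to avoid the weighted-$L^2$ route, an alternative is to note that $\hat{\chi}$ is globally Lipschitz, since $\hat{\chi}'(\omega)=\int(-\ii\tau)\chi(\tau)\ee^{-\ii\omega\tau}\,d\tau$ is bounded; the window estimate then forces the pointwise bound $|\hat{\chi}(E)|\le C'\ee^{-aE/3}$ for large $E$, again an exponential decay of $\hat{\chi}$, and one concludes as above.
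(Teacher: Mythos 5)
Your proposal is correct and follows essentially the same route as the paper's own proof: the same lower bound $\mathcal{F}(E)\ge\frac{C}{2\pi}\int_{-b}^{b}|\hat\chi(v-E)|^2\,dv$, the same summation over windows to obtain an exponentially weighted $L^2$ bound on $\hat\chi$, and the same Paley--Wiener-type conclusion that $\chi$ extends holomorphically to a strip and, being compactly supported, must vanish identically.
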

\begin{proof}
We use proof by contradiction, 
showing that an exponential falloff of $\mathcal{F}(E)$ at $E\to\infty$ 
implies that $\chi$ is identically vanishing. 

By the bound on $\widehat{\W}$ from above, $\mathcal{F}(E)$ 
is well defined for all~$E$. By the bounds on $\widehat{\W}(\omega)$ from below, we have 
\begin{align}
\mathcal{F}(E) &= \frac{1}{2 \pi} \int_{-\infty}^\infty \! d v \, 
\left|\hat{\chi}(v-E) \right|^2 \, \widehat{\W}(v)
\notag
\\
& \ge
\frac{1}{2 \pi} \int_{-b}^b \! d v \, 
\left|\hat{\chi}(v-E)\right|^2 \, \widehat{\W}(v) 
\notag
\\
& \ge 
\frac{C}{2 \pi} \int_{-b}^b \! d v \, 
\left|\hat{\chi}(v-E)\right|^2
\ , 
\label{eq:est1}
\end{align}
where we have first changed the integration variable by 
$\omega = v -E$, then restricted the integration range to $-b \le v \le b$, 
and finally used the strictly positive bound below on~$\widehat{\W}(v)$. 

Suppose now that there are positive constants $A$ and $\gamma$ such that 
$\mathcal{F}(E) < A \, \ee^{-\gamma E}$. From \eqref{eq:est1} we then have 
\begin{align}
\int_{-b}^b \! d v \, 
\left|\hat{\chi}(v-E)\right|^2 < A' \, \ee^{-\gamma E}
\ , 
\label{eq:est2}
\end{align}
where $A' = 2 \pi A C^{-1} >0$. 
Inserting the factor $\ee^{\gamma(E-v)/2}$
under the integral in \eqref{eq:est2} gives 
\begin{align}
\int_{-b}^b \! d v \, 
\ee^{\gamma(E-v)/2}
\left|\hat{\chi}(v-E)\right|^2 < A'  \, \ee^{\gamma b/2} \, \ee^{-\gamma E/2}
\ . 
\label{eq:est3}
\end{align}
Setting in \eqref{eq:est3}
$E = (2k+1)b$, $k = 0,1,\ldots$, and summing over~$k$, 
we obtain 
\begin{align}
\int_{-\infty}^0 \! d \omega \, 
\ee^{-\gamma \omega/2}
\left|\hat{\chi}(\omega)\right|^2 < \infty 
\ , 
\label{eq:est4}
\end{align}
and since $\left|\hat{\chi}(\omega)\right|$ is even in~$\omega$, it follows that 
\begin{align}
\int_{-\infty}^\infty \! d \omega \, 
\ee^{\gamma |\omega|/2}
\left|\hat{\chi}(\omega)\right|^2 < \infty 
\ . 
\label{eq:est5}
\end{align}
This implies that $\ee^{\gamma|\omega|/4}\hat{\chi}(\omega)$ is in $L^2(\mathbb{R}, d\omega)$. 

Let now 
\begin{subequations}
\label{eq:est6}
\begin{align}
\Xi(\omega) & \doteq \overline{\hat{\chi}(\omega)}\left(1+ \ee^{\gamma|\omega|/4}\right)
\ , 
\\
\Phi_z(\omega) & \doteq \frac{\ee^{\ii z \omega}}{1+ \ee^{\gamma|\omega|/4}}
\ , 
\end{align}
\end{subequations}
\\
where $- \gamma/4 < \text{Im}(z) < \gamma/4$.
Both $\Xi$ and $\Phi_z$ are in $L^2(\mathbb{R}, d\omega)$. We may hence 
define in the strip $- \gamma/4 < \text{Im}(z) < \gamma/4$ the function 
\begin{align}
\tilde\chi(z) \doteq (2\pi)^{-1} (\Xi, \Phi_z)
\ , 
\label{eq:est7}
\end{align}
where $(\cdot , \cdot)$ denotes the inner product in $L^2(\mathbb{R}, d\omega)$. 
A~straightforward computation shows that $\tilde\chi(z)$ is holomorphic in the strip. 

Formulas \eqref{eq:est6} and \eqref{eq:est7} show that 
$\tilde\chi(\tau) = \chi(\tau)$ for $\tau \in \mathbb{R}$. 
Since $\chi$ is by assumption compactly supported on~$\mathbb{R}$, 
the holomorphicity of $\tilde\chi$ implies that $\tilde\chi$ vanishes everywhere. 
As $\chi$ is by assumption nonvanishing somewhere, this provides the sought contradiction. 
\end{proof}

The content of Proposition \ref{prop:key-falloff} is that $\mathcal{F}$, $\widetilde{\mathcal{F}}$ and $\breve{\mathcal{F}}$ cannot satisfy the detailed balance condition at high energies, $E \to \infty$.

\subsection{Example: The Unruh effect}

We now wish to apply the general theory, thus far developed, to the Unruh effect. We consider an observer, equipped with a particle detector, which follows a Rindler orbit in $3+1$ Minkowski spacetime, given in Minkowskian coordinates by 
\begin{equation}
\mathsf{x}(\tau) = \left(a^{-1} \sinh(a \tau), a^{-1} \cosh(a \tau), y_0, z_0 \right),
\label{5:RindlerTraj}
\end{equation}
where $a>0$ is the proper acceleration of the detector, and $y_0$ and $z_0$ are constants. The detector is coupled to a massless scalar field, initially in the Minkowski vacuum state, through the interaction Hamiltonian \eqref{Hint} for a finite amount of time, and the smooth switching function of compact support, $\chi$, with $\supp(\chi) = [-\tau_i, \tau_f]$, controls the interaction, such that the interaction is switched on at detector proper time $\tau_i$ and switched off at time $\tau_f$. 

The response of the detector can be obtained from the two-point function of the massless scalar field,
\begin{equation}
\mathcal{W}(\x, \x') = \langle 0 | \Phi(\x) \Phi(\x') | 0 \rangle = \frac{1}{2(2\pi)^2} \frac{1}{\sigma_\epsilon(\x,\x')},
\label{5:Wightman}
\end{equation}
where $\sigma_\epsilon$ is the regularised Synge world function, as in \eqref{2:Synge} and \eqref{2:regsynge}. Using eq. \eqref{5:RindlerTraj}, we find
\begin{align}
\mathcal{W}(s) = - \frac{a^2}{16 \pi^2 \sinh^2\left(a(s-\ii\epsilon)/2\right)}. 
\label{eq:rindler-wightman}
\end{align}

Using techniques as the ones that we have presented in $1+1$ dimension, in the context of the derivative coupling detector in Chapter \ref{ch:DeCo}, the Fourier transform of \eqref{eq:rindler-wightman} is
\begin{align}
\widehat{\W}(\omega) = \frac{\omega}{2\pi \, \left(\ee^{2\pi \omega /a} -1\right)}. 
\label{eq:Unruh-hatWstat}
\end{align}

The response function $\mathcal{F}$ is given by \eqref{Fstationary}. Theorems~\ref{5:KMSadiabatic} and \ref{5:KMSconstant} apply, showing respectively that $\widetilde{\mathcal{F}}_\infty(E)$ and $\breve{\mathcal{F}}_\infty(E)$ are multiples of $\widehat{\W}(E)$. Hence, the response function $\mathcal{F}$ is (both adiabatically and constantly) asymptotically thermal at the Unruh temperature, $T_\text{U} = a/(2\pi)$. This is the Unruh effect.

Nevertheless, the Unruh temperature is not achieved uniformly in $E$. To show that this is the case, we need to prove that $\mathcal{F}$ is positive and that $\mathcal{F}(E)$ is bounded above by a polynomial in~$E$. That $\mathcal{F}$ is positive follows from eq. \eqref{Fstationary} and \eqref{eq:Unruh-hatWstat}. That $\mathcal{F}(E)$ is bounded above by a polynomial follows from the following proposition\footnote{We remind that reader that $O^\infty(x)$ indicates the quantity becomes suppressed faster than any positive power of $x$ as $x \to 0$.}

\begin{prop}
Suppose that the spacetime dimension is four, that $\mathcal{W}$ is stationary and that $\mathcal{W}(s) + {(4 \pi^2 s^2)}^{-1}$ is smooth in~$s$. Then
\begin{align}
\mathcal{F}(E) = -\frac{E \Theta(-E)}{2 \pi} \int_{-\infty}^\infty \! d\tau \, [\chi(\tau)]^2 
+ O^\infty(1/E)
\label{eq:F-leadingterm}
\end{align}
as $|E| \to \infty$.
\label{prop:F-leadingterm}
\end{prop}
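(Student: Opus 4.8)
The plan is to collapse the double integral defining $\mathcal{F}(E)$ to a one--dimensional Fourier integral, subtract the short--distance singularity of $\W$, and estimate the two resulting pieces separately. Along the stationary trajectory $\W$ depends only on the proper--time difference, so substituting $\tau' = u$, $\tau'' = u-s$ in the stationary response function \eqref{Fstationary} gives
\begin{equation}
\mathcal{F}(E) = \int_{-\infty}^\infty \! ds \, \ee^{-\ii E s}\, \W(s)\, C(s), \qquad C(s) \doteq \int_{-\infty}^\infty \! du \, \chi(u)\chi(u-s),
\end{equation}
where the autocorrelation $C$ lies in $C_0^\infty(\mathbb{R},\mathbb{R})$, is even, and satisfies $C(0) = \int_{-\infty}^\infty [\chi(\tau)]^2\, d\tau$. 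In particular $C'(0)=0$, so that $C(s) - C(0) = s^2 h(s)$ for some $h \in C^\infty(\mathbb{R})$ which equals $-C(0)/s^2$ outside $\supp C$, and hence obeys $h^{(k)}(s) = O(|s|^{-2-k})$ as $|s|\to\infty$ for every $k\ge0$.

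Next I would invoke the hypothesis: $\W_{\mathrm{reg}}(s) \doteq \W(s) + (4\pi^2 s^2)^{-1}$ is smooth, i.e.\ $\W = \W_{\mathrm{sing}} + \W_{\mathrm{reg}}$ with $\W_{\mathrm{sing}}(s) = -[4\pi^2(s-\ii\epsilon)^2]^{-1}$ understood distributionally. Since $C$ has compact support, $\W_{\mathrm{reg}}\,C \in C_0^\infty(\mathbb{R})$, so its Fourier transform is of Schwartz class and the $\W_{\mathrm{reg}}$ contribution to $\mathcal{F}(E)$ is $O^\infty(1/E)$. For the singular part I split $C(s) = C(0) + s^2 h(s)$. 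The constant piece is the elementary contour integral
\begin{equation}
\int_{-\infty}^\infty \! ds \, \frac{\ee^{-\ii E s}}{(s - \ii\epsilon)^2} = 2\pi E\, \Theta(-E),
\end{equation}
obtained by closing in the upper half plane for $E<0$ (picking up the double pole at $s=\ii\epsilon$) and in the lower half plane for $E\ge 0$, Jordan's lemma killing the arcs in both cases; this reproduces the leading term $-\tfrac{1}{2\pi} E\, \Theta(-E)\, C(0)$ of \eqref{eq:F-leadingterm}.

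In the remaining piece the integrand is $-\tfrac{1}{4\pi^2}\ee^{-\ii E s}\, s^2 h(s)/(s-\ii\epsilon)^2$, and because $s^2 h(s) = C(s)-C(0)$ vanishes to second order at the origin the bound $|C(s)-C(0)|\lesssim s^2$ controls $|(s-\ii\epsilon)^{-2}-s^{-2}|$ near $s=0$; dominated convergence then lets one send $\epsilon\to0^+$ under the integral, leaving $-\tfrac{1}{4\pi^2}\,\widehat h(E)$. Finally, since $h^{(k)}\in L^1(\mathbb{R})$ for every $k\ge0$, repeated integration by parts gives $|E|^k|\widehat h(E)| \le \|h^{(k)}\|_{L^1}$ for all $k$, i.e.\ $\widehat h(E) = O^\infty(1/E)$. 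Adding the three contributions yields the proposition.

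The proof is computationally light; the work lies entirely in the distributional bookkeeping. The delicate points are: writing $\mathcal{F}$ as the one--dimensional integral above and splitting it additively when $\W$ is only a distribution (legitimate because each summand is separately well defined as a pairing with the test function $C$); the $\epsilon \to 0^+$ passage in the $s^2 h(s)/(s-\ii\epsilon)^2$ term; and the contour evaluation of the $C(0)$ piece. I expect the $L^1$/dominated--convergence estimate near $s=0$ for the $s^2 h(s)$ term to be the fiddliest step, though it is entirely routine.
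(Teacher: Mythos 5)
Your argument is correct, and it is in essence the paper's own decomposition carried out from first principles. The paper's proof simply quotes the regulator-free response formula of Louko--Satz (their eq.\ (3.16)), in which the term $-\tfrac{E\Theta(-E)}{2\pi}\int[\chi]^2$ is already extracted and the remaining two terms involve $s^{-2}\int\chi(\tau)[\chi(\tau)-\chi(\tau-s)]\,d\tau$ and $\W(s)+(4\pi^2s^2)^{-1}$ smeared against the autocorrelation; these are then killed to all orders by repeated integration by parts. Your $s^2h(s)=C(s)-C(0)$ is exactly the first of those objects and your $\W_{\mathrm{reg}}C$ is the second, so the two proofs differ only in that you derive the leading term yourself, by evaluating $\int ds\,\ee^{-\ii Es}(s-\ii\epsilon)^{-2}=2\pi E\,\Theta(-E)$ by residues, rather than importing it. What your route buys is self-containedness and a cleaner handling of the distributional bookkeeping (the split of the pairing into three absolutely convergent pieces at fixed $\epsilon>0$, followed by dominated convergence); what it costs is having to justify explicitly that $h$ is smooth with $h^{(k)}=O(|s|^{-2-k})$, which you do correctly via evenness of $C$ and its compact support. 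No gaps.
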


\begin{proof}
The idea is to use the $3+1$-dimensional version of eq. \eqref{F1+1}, which can be obtained from equation (3.16) in \cite{Louko:2007mu}. We write the response as
\begin{align}
\mathcal{F}(E) & = 
-\frac{E \Theta(-E)}{2 \pi} \int_{-\infty}^\infty \! d\tau \, [\chi(\tau)]^2 
+ \int_0^\infty \! ds \,\frac{\cos(Es)}{2 \pi^2 \,s^2} 
\int_{-\infty}^\infty \! d\tau \, \chi(\tau) [\chi(\tau) - \chi(\tau-s)] 
\nonumber 
\\
& \hspace{3ex}
+ 2 \int_{-\infty}^\infty \! d \tau \, \chi(\tau) \int_0^\infty \! ds \, \chi(\tau-s)
\text{Re} 
\left[ \ee^{-\ii E s} \left( \mathcal{W}(\tau, \tau-s) + \frac{1}{4 \pi^2 s^2} \right) \right].
\label{Louko-Satz2}
\end{align}

We use the stationarity of $\mathcal{W}$ and interchange the 
integrals in the last term. We note that 
$\mathcal{W(-s) = \overline{\mathcal{W}}(s)}$, 
the function 
$s \mapsto \int_{-\infty}^\infty \! d \tau \, \chi(\tau) \chi(\tau-s)$ 
is an even smooth function of compact support, and 
the function 
$s \mapsto s^{-2} \int_{-\infty}^\infty \! d\tau \, \chi(\tau) [\chi(\tau) - \chi(\tau-s)]$ 
is an even smooth function with falloff $O\bigl(s^{-2}\bigr)$ at $s \to \pm\infty$. 
This allows us to write 
\begin{align}
\mathcal{F}(E) & = 
-\frac{E \Theta(-E)}{2 \pi} \int_{-\infty}^\infty \! d\tau \, [\chi(\tau)]^2 
+  \int_{-\infty}^\infty \! ds \,\frac{\cos(Es)}{4 \pi^2 \, s^2} 
\int_{-\infty}^\infty \! d\tau \, \chi(\tau) [\chi(\tau) - \chi(\tau-s)] 
\nonumber 
\\
& \hspace{3ex}
+ \int_{-\infty}^\infty \! ds
\, \ee^{-\ii E s} \left( \mathcal{W}(s) + \frac{1}{4 \pi^2 s^2} \right)
\int_{-\infty}^\infty \! d \tau \, \chi(\tau) \chi(\tau-s)
\ . 
\label{Louko-Satz4}
\end{align}

In the last term in~\eqref{Louko-Satz4}, the leading coincidence limit singularity of $\mathcal{W}$ is cancelled by the additive term proportional to~$s^{-2}$ because of the Hadamard property of the state, discussed in Chap. \ref{ch:chap2}, and by assumption $\mathcal{W}$ has no other singularities. 
The result now follows by applying to the second and third term in 
\eqref{Louko-Satz4} the method of repeated integration by parts~\cite{Wong:2001}, 
integrating respectively the trigonometric factor and the exponential factor. 
\end{proof}

Thus, the Unruh effect is not achieved uniformly in $E$ for long interaction times. 

Given that the detailed balance condition fails as $E \rightarrow \infty$ because of Prop. \ref{prop:key-falloff}, is there a precise sense in which one can asymptotically approach the Unruh temperature at finite time for a given large energy gap $E$? We consider our two cases of interest, the adiabatic scaling and the constant interaction scaling. We anticipate that the answer depends crucially on the details of the switching. In the case of the adiabatic scaling, subject to certain conditions on the Fourier transform of the switching function that we describe below, we will find that once the time has scaled by a factor $\lambda$, which is polynomial in the large energy, the Unruh temperature will be asymptotically registered by the detector. In the case of the constant scaling, we show that the Unruh temperature cannot be achieved in a time scale that is polynomial in the large energy.

The strategy to see how the Unruh temperature emerges is to obtain bounds on how the detailed balance condition formula fails, as functions of the time scale and the energy, $\lambda$ and $E$, such that these bounds become suppressed as the time scale and energy become large.

\begin{defn}
Let $\mathcal{F}$ be the response of a stationary particle detector. Let $\lambda>0$ be an interaction time scale for the detector proper time, $\tau$, which rescales the response function as in Def. \ref{Def:Flambda}. The rescaled response function, $\mathcal{F}_\lambda/\lambda$, is called \textit{polynomially asymptotically thermal at temperature} $\beta^{-1}$ if, given $\lambda = \lambda(E)$ a polynomial function of the energy gap, such that $\lambda(E) \to \infty$ as $|E| \to \infty$, there exist positive functions $\mathcal{B}^{+}$ and $\mathcal{B}^{-}$ of the energy gap and the interaction time scale such that
\begin{equation}
E \beta - \mathcal{B}^{-}(E,\lambda(E)) \leq  \ln \left( \frac{\mathcal{F}_{\lambda(E)}(-E)}{\mathcal{F}_{\lambda(E)}(E)}\right) \leq E \beta + \mathcal{B}^{+}(E,\lambda(E))
\label{AsympDBC}
\end{equation}
and $\mathcal{B}^{\pm}/E \rightarrow 0$ polynomially fast as $|E| \rightarrow \infty$ and $\lambda(E) \rightarrow \infty$. 
\label{DefAsympKMS}
\end{defn}

Let us discuss the relevance of distinguishing non-polynomially from polynomially asymptotic thermality for the response function. Consider an experiment in which one sought to verify the Unruh effect up to a large energy gap scale, $E$, then the waiting time is proportional to the scale $\lambda$. If the asymptotic thermality condition were satisfied \textit{only} whenever $\lambda$ is an exponential function of $E$, then the waiting time for the detectors to thermalise at the Unruh temperature would be exponentially large and impractical for all purposes. The more interesting scenario is when the asymptotic thermality condition is guaranteed at a time scale that is polynomial in the energy gap. In other words, when the polynomially asymptotically thermal detailed balance condition is satisfied.

In the following sections, we will show that the polynomially asymptotic thermality condition is guaranteed for the adiabatic scaling, subject to conditions on the Fourier transform of the switching function, but cannot be satisfied if only the constant interaction part of a switching function is scaled.

\section{Approaching the Unruh temperature by adiabatic interaction}
\label{sec:slowswitching}

We have seen that a detector following a Rindler orbit detects the Unruh temperature as the interaction time between the detector and the field scales to infinity, at fixed energy $E$. If the scaling of the switching function is adiabatic, $\chi(\tau) \to \chi_\lambda(\tau) = \chi(\tau/\lambda)$, by Theorem \ref{5:KMSadiabatic} and eq. \eqref{eq:Unruh-hatWstat}, the asymptotic response, as $\lambda \to \infty$ at fixed $E$, is
\begin{align}
\mathcal{F}_{T_\text{U}} \doteq \widetilde{\mathcal{F}}_\infty(E) =
\left( \frac{1}{(2 \pi)^2} \int_{-\infty}^{\infty} \! d \omega \, 
\left|\hat{\chi}(\omega)\right|^2 \right) 
\frac{E}{\ee^{2\pi E /a} -1},
\label{FT}
\end{align}
which satisfies the detailed balance condition at a temperature proportional to the acceleration of the orbit, which is the Unruh temperature, $T_\text{U} = a/(2\pi)$. In this section, we denote $\widetilde{\mathcal{F}}_\infty$ by $\mathcal{F}_{T_\text{U}}$ to remind ourselves that this is the thermal limit (pointwise in $E$). We also know, by Proposition \ref{prop:key-falloff}, that this limit cannot be uniform in $E$. 

For a special class of switching functions whose Fourier transforms decay sufficiently fast, the functions $\mathcal{B}^{\pm}$ of Def. \ref{DefAsympKMS} exist, such that the asymptotic polynomially asymptotic thermality condition holds. Let us motivate the introduction of this class of switching functions. 

From this point onward we shall consider $E>0$. Lemma \ref{Lem2} gives the following estimate,
\begin{equation}
\frac{E}{T_\text{U}} - \mathcal{B}^{-}(E,\lambda) \leq \ln \left(\frac{\mathcal{F}_\lambda(-E)}{\mathcal{F}_\lambda(E)}\right) \leq \frac{E}{T_\text{U}} + \mathcal{B}^{+}(E,\lambda),
\end{equation}
where
\begin{subequations}
\begin{align}
\mathcal{B}^{-}(E,\lambda) & = \ln \left( 1 + \frac{ \left|\left| \omega \hat{\chi} \right|\right|_{L^2}}{(24 \pi a) \lambda^{2} \mathcal{F}_{T_\text{U}}(E)}\right) = \ln \left(1 + \frac{\left|\left| \omega \hat{\chi} \right|\right|_{L^2}}{\left|\left| \hat{\chi} \right|\right|_{L^2}}\frac{ \left( \ee^{2\pi E/a} -1 \right)}{(6 a E/\pi) \lambda^2} \right), \label{B-gen}\\
\mathcal{B}^{+}(E,\lambda) & = \mathcal{B}^{-}(-E,\lambda). \label{B+gen}
\end{align}
\label{Bgen}
\end{subequations}

Here $||\cdot||_{L^2}$ is the $L^2$ norm with respect to the Lebesgue measure. 
When $\lambda$ is a polynomially increasing function of $E$, $\mathcal{B}^{-}(E,\lambda) = \to 0$ as $E \to \infty$, but $\mathcal{B}^{+}(E,\lambda) \to T_\text{U}$.
This means that the bound that we have provided in \eqref{Bgen} is not sufficient to guarantee polynomial asymptotic thermality. Yet, the presence of the switching function in eq. \eqref{Bgen} indicates that $\chi$ must play a crucial role in the estimates. For this reason, let us introduce a class of switching functions for which an improved bound can be found.

\begin{defn}
Let $\psi \in C_0^\infty$. 
If the Fourier transform of $\psi$ satisfies 
$|\hat\psi(\omega)| \le C (B+|\omega|)^r \exp(-A |\omega|^q)$ 
for some positive constants $A$, $B$, $C$, $r$ and some constant $q \in (0,1)$, 
we say that $\psi$ has \emph{strong Fourier decay\/}. 
We denote by $F$ the linear subspace of $C_0^\infty$ 
with strong Fourier decay, 
and we refer to elements of $F$ as switching 
functions with strong Fourier decay. 
\label{RapidFourier}
\end{defn}

\begin{rem}
The constants $A$, $B$, $C$, $r$ and $q$ 
may differ from element to element in~$F$. 
The linear subspace property of $F$ 
follows by applying the triangle inequality to 
$|\alpha_1 \hat\psi_1 + \alpha_2 \hat\psi_2|$ 
for $\psi_1$ and $\psi_2$ in~$F$. 
\end{rem}

Now, we are ready to state the main theorem of this section, which states that there exist switching functions with strong Fourier decay such that the polynomial asymptotic thermality condition is satisfied for adiabatic switchings.

\begin{thm}
Let $\widetilde{\mathcal{F}}_\lambda$ be the response function given by eq. \eqref{Flambda} for a linearly uniformly accelerated detector, $a > 0$, with $\widehat{\W}$ defined by eq. \eqref{eq:Unruh-hatWstat}. If the switching function $\chi \in F$ has strong Fourier decay, with Fourier transform satisfying
\begin{equation}
|\hat{\chi}(\omega)| \leq C\kappa^{-1}(B+|\omega/\kappa|)^r \exp(-A |\omega/\kappa|^q),
\end{equation}
for positive constants $A$, $B$, $C$, $r$ and $0 < q < 1$ and where $\kappa$ is a dimensionful positive constant, the response \eqref{Flambda} is polynomially asymptotically thermal as $ E \rightarrow \infty$ whenever $\lambda$ is given by $\lambda(E) = \alpha(2 \pi E/a)^{1+p}$ with the real parameters $\alpha$ and $p$ satisfying $\alpha > (2 \pi \kappa/a) (2 A)^{-1/q} > 0$ and $p > q^{-1} - 1$.
\label{thm:Main1}
\end{thm}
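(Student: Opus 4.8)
The plan is to insert the given decay estimate for $\hat{\chi}$ into the two-sided bound $\mathcal{B}^{\pm}$ of \eqref{Bgen}, now allowing the free dilation parameter $\lambda$ to interact with the Gaussian-type suppression $\exp(-A|\omega/\kappa|^q)$ in a way that the crude bound of Lemma~\ref{Lem2} did not exploit. The key observation is that after the adiabatic rescaling $\chi \mapsto \chi(\tau/\lambda)$ one has $\widehat{\chi_\lambda}(\omega) = \lambda\hat{\chi}(\lambda\omega)$, so the $L^2$ norms appearing in $\mathcal{B}^{\pm}$ carry explicit $\lambda$-dependence; the ratio $\|\omega\hat{\chi}\|_{L^2}/\|\hat{\chi}\|_{L^2}$ that looked $\lambda$-independent in \eqref{Bgen} must be re-examined with the correct scaling. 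Concretely, I would re-derive the estimate for $\widetilde{\mathcal{F}}_\lambda(-E)/\widetilde{\mathcal{F}}_\lambda(E)$ directly from \eqref{Flambda} and \eqref{eq:Unruh-hatWstat}, splitting the integral $\int d\omega\,|\hat\chi(\omega)|^2\widehat{\W}(\pm E+\omega/\lambda)$ into the region where $|\omega/\lambda|$ is small (so $\widehat{\W}(\pm E+\omega/\lambda)$ is close to $\widehat{\W}(\pm E)$ and gives the leading thermal factor $E/(\mathrm e^{2\pi E/a}-1)$) and the tail where $|\omega|$ is large and the strong Fourier decay crushes the integrand.

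The main steps, in order: (i) write $\widetilde{\mathcal{F}}_\lambda(-E) = \mathcal{F}_{T_\mathrm U}(-E)\cdot\bigl[1 + R_\lambda^-(E)\bigr]$ and similarly for $+E$, where $R^{\pm}_\lambda(E)$ is the relative error coming from replacing $\widehat{\W}(\pm E+\omega/\lambda)$ by $\widehat{\W}(\pm E)$; (ii) bound $|R^{\pm}_\lambda(E)|$ using a first-order Taylor estimate on $\widehat{\W}$ together with $\int d\omega\,|\omega|\,|\hat\chi(\omega)|^2$, which converges precisely because of strong Fourier decay, and track that this produces a factor $\lambda^{-1}$ times a polynomial in $E$ from $\widehat{\W}'$, divided by the thermal denominator $\mathcal{F}_{T_\mathrm U}(\pm E)$ — here the $+E$ case is the dangerous one since $\mathcal{F}_{T_\mathrm U}(E)\sim E\,\mathrm e^{-2\pi E/a}$ is exponentially small; (iii) observe that the tail contribution to the numerator is bounded by $\int_{|\omega|>\delta\lambda} |\hat\chi(\omega)|^2\,|\widehat{\W}(E+\omega/\lambda)|\,d\omega$, and since $|\hat\chi(\omega)|^2 \lesssim (B+|\omega/\kappa|)^{2r}\mathrm e^{-2A|\omega/\kappa|^q}$, choosing $\lambda = \alpha(2\pi E/a)^{1+p}$ makes the exponent $-2A(\omega/\kappa)^q$ evaluated at $\omega\sim\lambda$ behave like $-2A(\alpha\kappa^{-1})^q(2\pi E/a)^{q(1+p)}$, which beats $\mathrm e^{2\pi E/a}$ exactly when $q(1+p)>1$, i.e.\ $p>q^{-1}-1$, and with a favourable constant when $\alpha>(2\pi\kappa/a)(2A)^{-1/q}$; (iv) combine (ii) and (iii), take logarithms of the ratio $\widetilde{\mathcal{F}}_\lambda(-E)/\widetilde{\mathcal{F}}_\lambda(E)$, use $\ln(1+x)\le x$ and $\ln(1+x)\ge x - x^2/2$ (or similar) to produce explicit $\mathcal{B}^{\pm}(E,\lambda(E))$, and verify $\mathcal{B}^{\pm}/E\to0$ polynomially fast, matching Definition~\ref{DefAsympKMS}.

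The hard part will be step (iii), specifically choosing the splitting scale $\delta$ (equivalently, deciding how much of the $\omega$-integration to attribute to the ``near'' versus ``tail'' region) so that both the Taylor-remainder bound from the near region and the exponentially small tail bound are simultaneously dominated — after division by the exponentially small $\mathcal{F}_{T_\mathrm U}(E)$ — by something that still decays polynomially in $1/E$. This is a balancing act: pushing $\delta$ too small weakens the tail suppression, too large inflates the Taylor remainder. I expect the resolution is to let $\delta$ itself be a slowly growing function of $E$ (or to optimise $\delta\sim E^{\theta}$ for suitable $\theta$) so that the near-region error is $O(\delta^2/\lambda^2)$ times polynomial while the tail is $O(\exp(-cE^{q(1+p)}\delta^q\cdot\text{const}))$; the stated hypotheses $\alpha > (2\pi\kappa/a)(2A)^{-1/q}$ and $p>q^{-1}-1$ are exactly what is needed to close this, and verifying the constant in the exponent really is strictly greater than $2\pi/a$ (so that the $\mathrm e^{2\pi E/a}$ from dividing by $\mathcal{F}_{T_\mathrm U}(E)$ is overwhelmed) is where the precise value of $\alpha$ enters. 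Everything else — dominated convergence, integration by parts for the $O^\infty$ terms, the elementary inequalities for $\ln(1+x)$ — is routine and can be cited from Lemmas~\ref{Lem2} and the appendix.
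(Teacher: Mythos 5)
Your proposal is correct and follows essentially the same route as the paper: both isolate the exponentially magnified relative error $\ee^{2\pi E/a}\bigl(\widetilde{\mathcal{F}}_\lambda(E)-\mathcal{F}_{T_\text{U}}(E)\bigr)$ on the dangerous positive-$E$ side, split the $\omega$-integral into a near region (where $\omega/\lambda$ is small) and a tail (where the strong Fourier decay overwhelms $\ee^{2\pi E/a}$ precisely because $q(1+p)>1$ and $2A(a\alpha/(2\pi\kappa))^q>1$), and rely on the crude $\lambda^{-2}$ bound of Lemma~\ref{Lem2} for the harmless $-E$ side. The only cosmetic differences are that the paper obtains the one-sided positivity $\widetilde{\mathcal{F}}_\lambda(E)\ge\mathcal{F}_{T_\text{U}}(E)$ for free from the pointwise inequality $0\le f_s(u,v)\le f_s(0,v)$ of Lemma~\ref{Lem1}, and implements your ``optimised $\delta$'' as the explicit three-way split $(0,\kappa)\cup(\kappa,E\lambda)\cup(E\lambda,\infty)$ with intermediate scale $\omega_0\propto(2\pi E/a)^{(1+p+q^{-1})/2}$.
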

\begin{proof}
Following Lemma \ref{Lem2}, the response function satisfies $0 \leq \mathcal{F}_{T_\text{U}}(E) \leq \widetilde{\mathcal{F}}_\lambda(E)$. The idea is to improve the result in Lemma \ref{Lem2} for positive $E$ in order to control $\mathcal{B}^-$ (see eq. \eqref{B-gen}). Let us define
\begin{equation}
\mathcal{G}(E) \doteq \frac{\exp(2 \pi E/a)}{2\pi E/a}\left(\widetilde{\mathcal{F}}_\lambda(E) - \mathcal{F}_{T_\text{U}}(E) \right),
\end{equation}
with the scale $\lambda(E) = \alpha (2\pi E/a)^{1+p}$, with the constants $\alpha$ and $p$ as above. 

The first step of the proof is to show that 
$\mathcal{G}(E)\geq 0$. To this end, we define the function $g: \mathbb{R}^2 \rightarrow \mathbb{R}: \, (u,v) \mapsto \ee^{u} f_s(u,v)$, with $f_s$ as defined in Lemma \ref{Lem1}. Symmetrising the integrals \eqref{Flambda} and \eqref{FT} it is clear that $\mathcal{G}$, given by
\begin{equation}
\mathcal{G}(E) = \frac{a}{(2 \pi)^4 (E/a)} \int_0^\infty \! d\omega \, \left| \hat{\chi}(\omega) \right|^2 g(2\pi E/a, 2 \pi \omega/(\lambda a))
\end{equation}
is a non-negative function for $E >0$.
 
The second step of the proof is to show that $G$ is polynomially bounded at large $E$. This ensures that as $E \to \infty$, the rescaled response $\widetilde{\mathcal{F}}_\lambda(E) \to \mathcal{F}_{T_\text{U}}(E)$ sufficiently fast, so as to satisfy the polynomially asymptotic thermality condition. We now address this issue. We define
\begin{subequations}
\begin{align}
\mathcal{G}_{(0,\kappa)}(E)& \doteq \frac{a}{(2 \pi)^4 (E/a)}  \int_0^\kappa \! d\omega \, \left| \hat{\chi}(\omega) \right|^2 g(2\pi E/a, 2 \pi \omega/(\lambda a)), \\
\mathcal{G}_{(\kappa, E \lambda)}(E)& \doteq \frac{a}{(2 \pi)^4 (E/a)}  \int_\kappa^{E \lambda} \! d\omega \, \left| \hat{\chi}(\omega) \right|^2 g(2\pi E/a, 2 \pi \omega/(\lambda a)), \\
\mathcal{G}_{(E \lambda,\infty)}(E)& \doteq \frac{a}{(2 \pi)^4 (E/a)}  \int_{E \lambda}^\infty \! d\omega \, \left| \hat{\chi}(\omega) \right|^2 g(2\pi E/a, 2 \pi \omega/(\lambda a)).
\end{align}
\label{5:Gs}
\end{subequations}
\\
such that $\mathcal{G} = \mathcal{G}_{(0,\kappa)} + \mathcal{G}_{(\kappa, E \lambda)} + \mathcal{G}_{(E \lambda,\infty)}$. 

The bounds are estimated in Appendix \ref{app:Unruh}, and collecting all the relevant estimates \eqref{boundpiece1}, \eqref{boundpiece2} and \eqref{boundpiece3}, we have that as $E \rightarrow \infty$
\begin{align}
\mathcal{G}(E) & \leq \frac{2 a ||\hat{\chi}||_{L^2} }{(2 \pi)^3} \left(\frac{2\pi E}{a}\right)^{-\left(1+p-q^{-1}\right)/2} \left[1 + O\left(\left(\frac{2\pi E}{a}\right)^{-\left(1+p-q^{-1}\right)}\right) \right] \nonumber \\
& + \frac{\kappa \left|\left|\hat{\chi} \right|\right|_{L^2}}{(2 \pi)^2 \alpha}  \left(\frac{2\pi E}{a}\right)^{-(1+p)} \left[ 1 + O\left(\left(\frac{2 \pi E}{a}\right)^{-(1+p)} \right)\right].
\label{Gestimate}
\end{align}

At $E$ sufficiently large, $\mathcal{G}$ is dominated by the first term. We define $\mathcal{G}_\text{est}$ as
\begin{equation}
\mathcal{G}(E)_\text{est} = \frac{4 a ||\hat{\chi}||_{L^2} }{(2 \pi)^3} \left(\frac{2\pi E}{a}\right)^{-\left(1+p-q^{-1}\right)/2} \geq \mathcal{G}(E),
\end{equation}
which vanishes polynomially as $E \to \infty$.

We have everything that we need to prove that the detailed balance condition \eqref{2:DBC} holds. We have from Lemma \ref{Lem2} and the estimate \eqref{Gestimate} that as $E \rightarrow \infty$ and for $\lambda = \alpha (2\pi E/a)^{1+p}$
\begin{subequations}
\begin{align}
\mathcal{F}_{T_\text{U}}(-E) & \leq \widetilde{\mathcal{F}}_\lambda(-E)  \leq \mathcal{F}_{T_\text{U}}(-E) + \frac{\left|\left| \omega \hat{\chi}(\omega) \right|\right|_{L^2}}{24 \pi a \alpha^2} \left(\frac{2 \pi E}{a}\right)^{-2(1+p)}, \\
\mathcal{F}_{T_\text{U}}(E) & \leq \widetilde{\mathcal{F}}_\lambda(E)  \leq \mathcal{F}_{T_\text{U}}(E) + \frac{2 \pi E}{a}\exp( -2\pi E /a) \mathcal{G}_\text{est}(E).
\end{align}
\label{EstimatesMainThm1}
\end{subequations}

It follows from equations \eqref{EstimatesMainThm1} that
\begin{subequations}
\begin{align}
\frac{\mathcal{F}_\lambda(-E)}{\mathcal{F}_\lambda(E)} & \geq \mathcal{F}_{T_\text{U}}(-E) \left(\mathcal{F}_{T_\text{U}}(E) + \cfrac{2\pi E/a}{ \exp( 2\pi E /a)} \, \mathcal{G}_\text{est}(E) \right)^{-1} \, , \\
\frac{\mathcal{F}_\lambda(-E)}{\mathcal{F}_\lambda(E)} & \leq \frac{\mathcal{F}_{T_\text{U}}(-E)}{\mathcal{F}_{T_\text{U}}(E)} + \cfrac{\left|\left| \omega \hat{\chi}(\omega) \right|\right|_{L^2} \, \left(2 \pi E/a \right)^{-2(1+p)}}{(24 \pi a \alpha^2) \mathcal{F}_{T_\text{U}}(E)} \, ,
\end{align}
\end{subequations}
\\
from which eq. \eqref{AsympDBC} follows at the Unruh temperature, $T_\text{U} = a/(2\pi)$, with
\begin{subequations}
\begin{align}
\mathcal{B}^{-}(E) & =  \ln \left(1 + \frac{\left(1 - \ee^{-2\pi E/a}\right) \mathcal{G}_\text{est}(E)}{(a/(2\pi)^3) \left| \left| \chi \right|\right|_{L^2}} \right)  \nonumber \\ 
& = \frac{4}{(2 \pi)^6} \left(\frac{2\pi E}{a} \right)^{-\left(1 + p - q^{-1}\right)/2} + O \left( \left(\frac{2\pi E}{a} \right)^{-\left(1 + p - q^{-1}\right)}\right) \\
\mathcal{B}^{+}(E) & = \ln \left(1 + \frac{\left|\left| \omega \hat{\chi} \right|\right|_{L^2}}{\left|\left| \hat{\chi} \right|\right|_{L^2}}\frac{ \left(1 - \ee^{-2\pi E/a} \right)}{3 \alpha^2 a^2/\pi^2} \left(\frac{2 \pi E}{a}\right)^{-(3+2p)} \right)  \nonumber \\
& = \frac{\left|\left| \omega \hat{\chi} \right|\right|_{L^2}}{\left|\left| \hat{\chi} \right|\right|_{L^2}}\frac{ \pi^2}{3 \alpha^2 a^2} \left(\frac{2 \pi E}{a}\right)^{-(3+2p)} + O\left( \left( \frac{2 \pi E}{a} \right)^{-2(3+2p)} \right). \label{B+thm}
\end{align}
\label{Bthm}
\end{subequations}
\end{proof}

Theorem \ref{thm:Main1} is the main result of this section. Notice that the correction terms $\mathcal{B}^{-}(E)$ and $\mathcal{B}^{+}(E)$ given in eq. \eqref{Bthm} become polynomially suppressed as $E \rightarrow \infty$. Let us summarise the physical relevance of this result. In Minkowski spacetime, given a field in the Minkowski state and a point-like probe along its timelike worldline that are initially non-interacting, one can set the probe trajectory at uniform acceleration $a$ at proper time $\tau < \tau_i$ and switch on the interaction at time $\tau_i$ smoothly and slowly, such that the probe acts like a particle detector with large energy gap $E$, and keep the interaction ongoing for a long time scale polynomially proportional to the detector gap energy. Once the detector has been switched off smoothly and slowly at time $\tau_f$, the transition probability of the detector will obey a thermal spectrum at the Unruh temperature.

To end this section, we provide an example of a switching function with rapid Fourier decay and present the estimates for the polynomially asymptotic thermality condition as an application of Theorem. \ref{thm:Main1}.

\begin{exmp}
The idea is to cook up switching functions with rapid Fourier decay by thinking of the Fourier transform as a boundary value along the imaginary axis of the Laplace transform, $\mathcal{F}[f](\omega) = \lim_{\epsilon \rightarrow 0^+} \mathcal{L}[f](\epsilon+\ii \omega)$.

Let $f \in C^\infty(\mathbb{R})$ be defined by
\begin{align}
f(\tau) =
\begin{cases} 
 (4 \pi)^{-1/2} (\kappa \tau)^{-3/2}e^{-1/(4 {\kappa \tau})}, & \tau>0\\ 
 0, & \tau \leq 0. 
\end{cases}
\end{align}
where $\kappa$ is a positive real constant with dimensions $[\kappa] = \left[\tau^{-1}\right]$.

Given that $f$ decays as $O\left(\tau^{-3/2}\right)$, it is integrable and its Fourier transform exists. The Laplace transform is given by $\mathcal{L}[f](\omega) = \kappa^{-1} \ee^{-(\omega/\kappa)^{1/2}}$ for any $\omega$ with positive real part, taking the branch cut along the negative real axis. The Fourier transform follows from dominated convergence and is given by 
\begin{equation}
\hat{f}(\omega) = \kappa^{-1} \ee^{-(\ii\omega/\kappa)^{1/2}} = \kappa^{-1} \exp \left(-|\omega/(2 \kappa)|^{1/2}(1 + \ii \text{sgn}(\omega) \right).
\end{equation} 

It is bounded from above by $\left|\hat{f}(\omega)\right| \leq \kappa^{-1} \exp \left(-|\omega/(2 \kappa)|^{1/2}\right)$. The function $\chi \in C_0^\infty(\mathbb{R})$ defined by $\chi(\tau) = f(\tau) f(1/\kappa-\tau)$, has a Fourier transform given by the convolution theorem,
\begin{equation}
\hat{\chi}(\omega) = \frac{1}{2 \pi} \int_{-\infty}^\infty \, d \omega' \hat{f}(\omega - \omega') \ee^{-\ii \omega'/\kappa} \hat{f}(-\omega').
\end{equation}

We now verify that $\chi \in F$. It follows from the bound of $\hat{f}$ that
\begin{align}
|\hat{\chi}(\omega)| & \leq \frac{1}{2 \pi \kappa^2} \int_{-\infty}^\infty d\omega' \exp \left(-|(\omega-\omega')/(2 \kappa)|^{1/2} -|\omega'/(2 \kappa)|^{1/2}\right) \nonumber \\
& \leq \frac{|\omega/\kappa|}{2 \pi \kappa} \int_{-\infty}^\infty du \exp \left[-\left|\omega/(2\kappa) \right|^{1/2} \left(|1-u| + |u| \right) \right],
\end{align}
where we have assumed $\omega \neq 0$ without loss of generality and changed variables to $\omega' = \omega u$. The integral can be now split into the subintervals $\mathbb{R} = (-\infty,0) \cup [0,1] \cup (1,\infty)$. The integrals in the intervals $(-\infty,0)$ and $(1,\infty)$ are equal by symmetry and we have that
\begin{subequations}
\begin{align}
I_{(0,1)} & \doteq \int_{0}^1 \! du \, \exp \left[-\left|\omega/(2\kappa) \right|^{1/2} \left(|1-u| + |u| \right) \right] \nonumber \\
& \leq \int_{0}^1 \! du \,\ee^{-\left|\omega/(2\kappa) \right|^{1/2}} = \ee^{-\left|\omega/(2\kappa) \right|^{1/2}}
\label{Ex1}
\end{align}
and
\begin{align}
I_{(1,\infty)} & \doteq \int_{1}^\infty \! du \, \exp \left[-\left|\omega/(2\kappa) \right|^{1/2} \left(|1-u| + |u| \right) \right] \nonumber \\
& = \int_{0}^\infty \! du \, \exp \left[-\left|\omega/(2\kappa) \right|^{1/2} \left(|u| + |u+1| \right) \right] \nonumber \\
& \leq  \int_{0}^1 \! du \, \exp \left(-\left|\omega/(2\kappa) \right|^{1/2} \right) +  \int_{1}^\infty \! du \, \exp \left(-2\left|\omega/(2\kappa) \right|^{1/2} |u|  \right) \nonumber \\
& \leq \exp \left(-\left|\omega/(2\kappa) \right|^{1/2} \right) + \exp \left(-2\left|\omega/(2\kappa) \right|^{1/2} \right) \left( \frac{1}{\sqrt{|\omega/(2 \kappa)}|} + \frac{1}{|\omega/ \kappa|} \right).
\label{Ex2}
\end{align}
\end{subequations}

Putting \eqref{Ex1} and \eqref{Ex2} together we see that

\begin{align}
\left| \hat{\chi}(\omega) \right| & \leq \frac{1}{2 \pi \kappa}\left(\ee^{-|\omega/(2\kappa)|^{1/2}}(1 + 2 |\omega/(2\kappa)|^{1/2}) +3 |\omega/(2\kappa)|  \right) \ee^{-|\omega/(2\kappa)|^{1/2}} \nonumber \\
& \leq \frac{9}{2 \pi \kappa}\left(\frac{1}{3} + |\omega/\kappa|  \right)^2 \ee^{-|\omega/\kappa|^{1/2}},
\end{align}
\\
and $\chi \in F$ has strong Fourier decay as in Def. \ref{RapidFourier}.
\end{exmp}

\section{Approaching the Unruh temperature by constant interaction}
\label{sec:plateau}

We now bring our attention to a switching function of the form \eqref{chiconst} that has $\text{supp}(\chi) = [0, \Delta \tau + 2 \Delta \tau_s]$, consisting of a switch-on profile of duration $\Delta \tau_s$, a constant interaction time of duration $\Delta \tau$ and a switch-off profile of duration $\Delta \tau_s$. As we have seen above, the profile of the switching tails is controlled by a bump function, $\psi$.

We consider the case in which only the constant-interaction time is rescaled by $\lambda$. The response rescales as shown in eq. \eqref{Flambda2}, whereby the response of the detector following the Rindler orbit rescales as
\begin{equation}
\breve{\mathcal{F}}_\lambda(E) = \frac{1}{(2 \pi)^2} \int_{-\infty}^\infty \! d \omega \,\,  \frac{2}{\lambda} \left[1-\cos\left(\omega\left(\lambda \Delta \tau + \Delta \tau_s \right)\right)\right] \left| \frac{\hat{\psi}(\omega)}{\omega} \right|^2 \frac{\omega + E}{\ee^{2\pi (\omega + E) /a} -1}.
\label{Flambda2Unruh}
\end{equation}

In this section, we denote $\breve{\mathcal{F}}_\infty$ by $\mathcal{F}_{T_\text{U}}$ as a reminder that the point-wise limit in $E$ as $\lambda \to \infty$ satisfies the detailed balance condition at the Unruh temperature. Namely, using \eqref{5:KMSconstant} and \eqref{eq:Unruh-hatWstat}
\begin{align}
\mathcal{F}_{T_\text{U}}(E) \doteq \breve{\mathcal{F}}_\infty(E) =  \left( \frac{\Delta \tau}{2\pi} \left|\hat{\psi}(0) \right|^2\right) \, \frac{E}{\ee^{2\pi E /a} -1}.
\label{FT2}
\end{align}

We now show that the constant-interaction rescaled response function \eqref{Flambda2Unruh} cannot satisfy the polynomially asymptotic KMS thermality condition. In other words, there is no polynomial $\lambda(E)$ for which one can find $\mathcal{B}^{\pm}$ for Def. \ref{DefAsympKMS}.

\begin{thm}
Let $\breve{\mathcal{F}}_\lambda(E)$ be the response function given by eq. \eqref{Flambda2Unruh}. There is no scaling factor $\lambda = P(E)$, with $P: \mathbb{R}^+ \rightarrow \mathbb{R}^+$ polynomially bounded, differentiable, strictly increasing and strictly positive, for which $\breve{\mathcal{F}}_\lambda(E)$ satisfies the detailed balance condition \eqref{AsympDBC}.
\label{thm:Main2}
\end{thm}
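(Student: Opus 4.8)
The plan is a proof by contradiction. Suppose some admissible scaling $\lambda=P(E)$ makes $\breve{\mathcal{F}}_\lambda$ satisfy the detailed balance condition \eqref{AsympDBC} at the Unruh temperature, so that $\ln\!\big(\breve{\mathcal{F}}_{P(E)}(-E)/\breve{\mathcal{F}}_{P(E)}(E)\big)\ge E\beta-\mathcal{B}^-(E)$ with $\beta=2\pi/a$ and $\mathcal{B}^-(E)/E\to0$; by Definition~\ref{DefAsympKMS}, $P(E)\to\infty$ and $P$ is polynomially bounded, and we may assume $\psi\not\equiv0$ (else $\breve{\mathcal{F}}_\lambda\equiv0$). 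I would then derive two bounds on the rescaled response \eqref{Flambda2Unruh}, uniform in $\lambda\ge1$, that together contradict this. For the upper bound, using $\widehat{\W}(u)\le c(1+|u|)$ for the Unruh kernel \eqref{eq:Unruh-hatWstat} together with the elementary estimates $\int K_\lambda(\omega)\,d\omega=O(1)$ and $\int|\omega|K_\lambda(\omega)\,d\omega=O(1)$ for $\lambda\ge1$, where $K_\lambda(\omega)=\tfrac2\lambda[1-\cos(\omega L)]\,|\hat\psi(\omega)/\omega|^2$ and $L=\Delta\tau_s+\lambda\Delta\tau$, one gets $\breve{\mathcal{F}}_\lambda(-E)\le C(1+E)$ uniformly in $\lambda\ge1$. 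Hence, under the contradiction hypothesis, $\breve{\mathcal{F}}_{P(E)}(E)$ would have to decay essentially like $\ee^{-\beta E}$.

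The core of the argument is a lower bound showing $\breve{\mathcal{F}}_{P(E)}(E)$ cannot decay that fast. Discarding the positive contribution of $\widehat{\W}(E+\omega)$ over $\omega>-E$ and using $\widehat{\W}(E+\omega)\ge|E+\omega|/(2\pi)$ for $E+\omega\le0$, together with evenness of $|\hat\psi|$, one obtains $\breve{\mathcal{F}}_\lambda(E)\ge \tfrac{1}{(2\pi)^2\lambda}\int_E^\infty|\hat\chi_\lambda(\omega)|^2(\omega-E)\,d\omega$, where $|\hat\chi_\lambda(\omega)|^2=2[1-\cos(\omega L)]\,|\hat\psi(\omega)|^2/\omega^2$ (from \eqref{chiconst}). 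The delicate point — what one might a priori fear is the main obstacle — is controlling the oscillatory piece $\int_E^\infty\cos(\omega L)\,\tilde g(\omega)\,d\omega$ with $\tilde g(\omega)=|\hat\psi(\omega)|^2(\omega-E)/\omega^2$, since the oscillatory contribution is only polynomially small in $E$ (through $1/L$) whereas the non-oscillatory part will turn out to be only super-polynomially small. This is resolved by the linear weight $(\omega-E)$ itself: since $\tilde g(E)=0$ (and $\tilde g(\infty)=0$), one integration by parts has no boundary term, giving $\big|\int_E^\infty\cos(\omega L)\tilde g\,d\omega\big|\le\|\tilde g'\|_{L^1[E,\infty)}/L\le C_4/(EL)$. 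Thus $\breve{\mathcal{F}}_\lambda(E)\ge \tfrac{2}{(2\pi)^2\lambda}\big(J_0(E)-C_4/(EL)\big)$ with $J_0(E)\doteq\int_E^\infty|\hat\psi(\omega)|^2(\omega-E)\,\omega^{-2}\,d\omega$.

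It remains to show $J_0$ does not decay exponentially. Writing $J_0(E)=\int_E^\infty I(t)\,dt$ with $I(t)=\int_t^\infty|\hat\psi(\omega)|^2\omega^{-2}\,d\omega$, two successive convexity/monotonicity arguments (using $J_0'=-I$, $J_0''=|\hat\psi|^2/E^2\ge0$, and likewise $-I'=|\hat\psi|^2/E^2$) reduce an exponential bound on $J_0$ to one on $I$, and an exponential bound on $I$ is then ruled out by an integration by parts plus the fact that $\int_{\mathbb{R}}\ee^{\varepsilon|\omega|}|\hat\psi(\omega)|^2\,d\omega=\infty$ for every $\varepsilon>0$ when $0\neq\psi\in C_0^\infty(\mathbb{R})$. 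This last statement is precisely the content of the closing part of the proof of Proposition~\ref{prop:key-falloff} (the implication from \eqref{eq:est5} onward), applied with $\hat\chi$ replaced by $\hat\psi$. Hence, for every $\gamma>0$ the set $\{E\ge1:J_0(E)>\ee^{-\gamma E}\}$ is unbounded.

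To finish, fix $\gamma=\beta/4$ and a sequence $E_k\to\infty$ with $J_0(E_k)>\ee^{-\beta E_k/4}$. Along it the correction $C_4/(E_kL_k)$ is polynomially small, so eventually $<\tfrac12 J_0(E_k)$, whence $\breve{\mathcal{F}}_{P(E_k)}(E_k)\ge \ee^{-\beta E_k/4}/\big((2\pi)^2P(E_k)\big)$. Combined with the uniform upper bound $\breve{\mathcal{F}}_{P(E_k)}(-E_k)\le C(1+E_k)$ this gives $\ln\!\big(\breve{\mathcal{F}}_{P(E_k)}(-E_k)/\breve{\mathcal{F}}_{P(E_k)}(E_k)\big)\le \tfrac{\beta E_k}{4}+O(\ln E_k)$, which for large $k$ is strictly below $\beta E_k-\mathcal{B}^-(E_k)\ge \tfrac{\beta E_k}{2}$, contradicting \eqref{AsympDBC}. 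Thus no admissible polynomial $P$ can work. The only non-routine estimate is the oscillatory bound of the second paragraph; weighting by $(\omega-E)$ rather than by $1$ is the device that makes it work, and everything else is bookkeeping plus the convexity reductions to Proposition~\ref{prop:key-falloff}.
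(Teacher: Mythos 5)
There is a genuine gap at the final comparison, and it is fatal to the argument as written. Your lower bound reads $\breve{\mathcal{F}}_\lambda(E) \ge \frac{2}{(2\pi)^2\lambda}\bigl(J_0(E) - C_4/(EL)\bigr)$ with $J_0(E)=\int_E^\infty|\hat\psi(\omega)|^2(\omega-E)\omega^{-2}\,d\omega$ and $L=\Delta\tau_s+\lambda\Delta\tau$. Since $\psi\in C_0^\infty$, $\hat\psi$ is Schwartz, so $J_0(E)=O^\infty(1/E)$: it decays faster than \emph{any} inverse power of $E$. By contrast, $C_4/(EL)$ with $L$ polynomially bounded in $E$ decays only at a fixed polynomial rate, so $C_4/(E_kL_k)\gg J_0(E_k)$ for all large $k$ --- the opposite of what you assert when you write that the correction ``is polynomially small, so eventually $<\tfrac12 J_0(E_k)$.'' Polynomially small is \emph{larger} than superpolynomially small; your lower bound is therefore negative, hence vacuous, for all large $E$, and the contradiction never materialises. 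The single integration by parts only buys one factor of $1/L$, which yields an \emph{absolute} polynomially small bound on the oscillatory piece; what would be needed is a bound \emph{relative} to $J_0(E)$, namely $\bigl|\int_E^\infty\cos(\omega L)\tilde g\,d\omega\bigr|\le\tfrac12\int_E^\infty\tilde g\,d\omega$, and that cannot be extracted from $\|\tilde g'\|_{L^1[E,\infty)}/L$ because $\|\tilde g'\|_{L^1[E,\infty)}\sim C/E$ does not shrink along with $J_0(E)$. Iterating the integration by parts does not help: the remainder after $k$ steps is still only $O(L^{-k})$ with constants uncontrolled relative to $J_0$.

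The obstruction you are running into is precisely the resonance phenomenon the paper's proof is organised around: the factor $1-\cos(\omega L)=2\sin^2(\omega L/2)$ can nearly vanish on the region near $|\omega|=E$ where the (superpolynomially small) mass of $|\hat\psi|^2$ sits, and whether it does depends on the arithmetic of $EL$ modulo $2\pi$. The paper therefore works in the narrow window $\Omega\in[-\mathcal{E}-\ee^{-\mathcal{E}/2},-\mathcal{E}]$, shows that boundedness of the rescaled response would force $\hat g=|\hat\psi|^2$ to decay exponentially \emph{except} on a set $T^c$ of energies where the resonance factor $Q(\mathcal{E})=\inf\sin^2(\Lambda\Omega/2)$ is small, proves that $T^c$ is covered by exponentially shrinking intervals around the roots of $\Lambda(\mathcal{E})\mathcal{E}=2\pi k$, and then invokes the Paley--Wiener-type Corollary \ref{Cor} (via Lemma \ref{LemMain2}), which tolerates such exceptional sets, to force $g\equiv0$. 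Some version of this exceptional-set analysis is unavoidable; your Paley--Wiener input in the third paragraph is sound but is deployed one step too late to rescue the quantitative comparison. The upper bound $\breve{\mathcal{F}}_\lambda(-E)\le C(1+E)$ and the reduction of ``$J_0$ has no exponential decay'' to the argument of Proposition \ref{prop:key-falloff} are both fine as far as they go.
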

\begin{proof}
A necessary condition for the polynomially asymptotic thermality condition to be satisfied, is that the quantity
\begin{equation}
\mathcal{K}(E) \doteq \left|\frac{\exp(2 \pi E/a)}{2\pi E/a}\left(\mathcal{F}_\lambda(E)/\lambda - \mathcal{F}_{T_\text{U}}(E)\right) \right|
\label{K}
\end{equation}
vanish as $E \rightarrow + \infty$ and $\lambda = P(E)$ grows polynomially as a function of $E$.

It follows from the closed form of $\mathcal{F}_{T_\text{U}}$ given by eq. \eqref{FT2} that $\lim_{E \rightarrow \infty} \mathcal{K}(E) = 0$ for $\lambda = P(E)$ only if
\begin{equation}
\frac{\exp(2 \pi E/a)}{2\pi E/a}\mathcal{F}_\lambda(E)/\lambda = \frac{a \Delta \tau}{(2 \pi)^2} \left| \hat{\psi}(0) \right|^2 + \textit{o}(1)
\label{FakeLimit}
\end{equation}
as $E \rightarrow \infty$. We will show that there is no polynomial function $P$ for which the expression on the left of eq. \eqref{FakeLimit} is $O(1)$. Setting $\Omega \doteq 2\pi \omega/a$, $\mathcal{E} \doteq 2 \pi E/a$ and $\Lambda \doteq  a (\lambda \Delta \tau + \Delta \tau_s)/(2 \pi)$, it follows from our assumptions that $\Lambda: \mathbb{R}^+ \rightarrow \mathbb{R}^+$ is a polynomially bounded, differentiable, strictly increasing, strictly positive function of $\mathcal{E}$ and using these variables the left hand side of eq. \eqref{FakeLimit} reads
\begin{equation}
\frac{\exp(2 \pi E/a)}{2\pi E/a} \mathcal{F}_\lambda(E)/\lambda = \frac{2 a \Delta \tau}{(2\pi)^2} \int_{-\infty}^{\infty} \! d\Omega \, \frac{1- \cos(\Lambda \Omega)}{2 \pi \Lambda - a \Delta \tau_s}  \left| \frac{\hat{\psi}(a \Omega /(2\pi))}{\Omega} \right|^2 \left( \frac{\Omega/ \mathcal{E} + 1}{\ee^{\Omega} - \ee^{-\mathcal{E}}}\right).
\label{Unbound}
\end{equation}

Now we will show that the integral
\begin{equation}
I(\mathcal{E}) \doteq \int_{-\infty}^{\infty} \! d\Omega \, \frac{1- \cos(\Lambda \Omega)}{\Lambda}  \left| \frac{\hat{\psi}(a \Omega /(2\pi))}{\Omega} \right|^2 \left( \frac{\Omega/ \mathcal{E} + 1}{\ee^{\Omega} - \ee^{-\mathcal{E}}}\right)
\label{I}
\end{equation}
is not bounded as $\mathcal{E} \to \infty$. This implies that eq. \eqref{FakeLimit} cannot hold and the polynomially asymptotic thermality condition is not satisfied. The strategy is to write the integral defined in eq. \eqref{I} in a form that satisfies the hypotheses of Lemma \ref{LemMain2}, such that the application of Corollary \ref{Cor} follows.

Defining $g \in C_0^\infty(\mathbb{R})$ by the convolution of $\psi$ with its reflection about the origin, which we denote $R\psi$,
\begin{equation}
g(\tau) \doteq (\psi \star R\psi)(\tau) = \int_{-\infty}^\infty \! dt \, \psi( \tau - t) \psi(-t),
\end{equation}
it follows from the convolution theorem that $\hat{g}(v) = \hat{\psi}(v) \hat{\psi}(-v) = \left|\hat{\psi}(v)\right|^2$ has rapid decay. From the definition of $\Lambda$, we have that $\Lambda(0) = \lim_{\mathcal{E} \rightarrow 0^+} \Lambda(\mathcal{E}) > 0$. Considering the interval $J_\mathcal{E} = \left[-\mathcal{E}-\ee^{-\mathcal{E}/2}, -  \mathcal{E}\right]$, one can obtain the estimate
\begin{align}
I(\mathcal{E}) & \geq \frac{2 \ee^\mathcal{E}}{\mathcal{E} \Lambda(\mathcal{E})} \int_{-\mathcal{E}-\ee^{-\mathcal{E}/2}}^{-\mathcal{E}} \! d\Omega \, \frac{\sin^2(\Lambda(\mathcal{E}) \Omega/2)}{\Omega^2}  \left( \frac{\Omega + \mathcal{E}}{\ee^{\Omega+\mathcal{E}} - 1}\right) \hat{g}\left( \frac{a \Omega}{2 \pi}\right) \nonumber \\
& \geq \frac{2 \ee^{\mathcal{E}/2} Q(\mathcal{E})}{\Lambda(\mathcal{E})\mathcal{E}\left(\mathcal{E} + \ee^{-\mathcal{E}/2}\right)^2} \inf_{\Omega \in J_\mathcal{E}} \hat{g}\left( \frac{a \Omega}{2 \pi} \right)
\label{ILowerBd}
\end{align}
using the fact that $y/(\ee^y-1) \geq 1$ for $y \leq 0$ and where
\begin{equation}
Q(\mathcal{E}) \doteq \inf_{\Omega \in J_\mathcal{E}} \sin^2 \left( \frac{\Lambda(\mathcal{E}) \Omega}{2} \right).
\label{Qdef}
\end{equation}

Suppose that $I_{\max}$ is a positive constant such that $I(\mathcal{E}) \leq I_{\max}$ for all $\mathcal{E}>0$. Now, let $T = \{ \mathcal{E} \in \mathbb{R}^+ : Q(\mathcal{E}) \geq \ee^{-\mathcal{E}/4} \}$. $T$ is non-empty and unbounded from above. In $T \cap J_\mathcal{E}$, \eqref{ILowerBd} implies that
\begin{equation}
\inf_{\Omega \in J_\mathcal{E}} \hat{g}\left( \frac{a \Omega}{2\pi} \right) \leq P(\mathcal{E}) \ee^{-\mathcal{E}/4},
\label{CrudeIneq}
\end{equation}
where
\begin{equation}
P(\mathcal{E}) \doteq I_{\max} \frac{1}{2} \Lambda(\mathcal{E}) \mathcal{E}(\mathcal{E}+1)^2.
\end{equation}

In $T$ we hence have
\begin{equation}
\hat{g}\left(\frac{-a \mathcal{E}}{2 \pi}\right) \leq P(\mathcal{E}) \ee^{-\mathcal{E}/4} + C \ee^{-\mathcal{E}/2} \leq (P(\mathcal{E}) + C) \ee^{-\mathcal{E}/4},
\end{equation}
where $C \doteq (a/(2\pi)) \sup_{\mathbb{R}^+} \left| \hat{g}' \right|$. Recalling that $\hat{g}$ is even, we deduce that for any $\gamma \in (0, 1/4)$, there exists $K > 0$ such that $\hat{g}(a \mathcal{E}/(2\pi)) \leq K \ee^{-\gamma |\mathcal{E}|}$ holds for all $|\mathcal{E}| \in T$.

Next, we show that the complement of $T$ in $\mathbb{R}^+$, $T^c = \{ \mathcal{E} \in \mathbb{R}^+ : Q(\mathcal{E}) < \ee^{-\mathcal{E}/4} \}$ is contained within a suitable union of intervals. Using the bound $|\sin(\theta)| \geq 2 |\theta|/\pi$, valid for $|\theta| \leq \pi/2$,  \eqref{Qdef} gives
\begin{equation}
Q(\mathcal{E})  \geq \pi^{-2} \inf_{\Omega \in J_\mathcal{E}} \min_{k \in \mathbb{N}_0} |\Lambda(\mathcal{E}) \Omega + 2 k \pi|^2.
\end{equation}

For $\Omega \in J_\mathcal{E}$, we have $|\Lambda(\mathcal{E})(\Omega+\mathcal{E})| \leq \Lambda(\mathcal{E}) \ee^{-\mathcal{E}/2}$, and we have that
\begin{equation}
\pi Q(\mathcal{E})^{1/2}  \geq \min_{k \in \mathbb{N}_0} |\Lambda(\mathcal{E}) \mathcal{E} - 2 k \pi| - \Lambda(\mathcal{E}) \ee^{-\mathcal{E}/2}.
\label{PreSetTc}
\end{equation}

From the definition of $T$ and \eqref{PreSetTc} it follows that if $\mathcal{E} \in T^c$, there exists a $k \in \mathbb{N}_0$ such that $|\Lambda(\mathcal{E}) \mathcal{E} - 2 k \pi| \leq \Lambda(\mathcal{E}) \ee^{-\mathcal{E}/2} + \pi \ee^{-\mathcal{E}/8} \leq C' \ee^{-\mathcal{E}/8}$, where $C' = \pi + \sup_{\mathcal{E} \geq 0} \Lambda(\mathcal{E}) \ee^{-3\mathcal{E}/8}$.

Now, let $\Xi(\mathcal{E}) \doteq \mathcal{E}\Lambda(\mathcal{E})$, and for each $k \in \mathbb{N}_0$, let $\mathcal{E}_k$ be the unique solution to $\Xi(\mathcal{E})k = 2 \pi k$. ($\mathcal{E}_k$ exists and is unique because $\Xi$ is strictly increasing and $\Xi(0) = 0$.) For $\mathcal{E} \in T^c$, there hence exists a $k \in \mathbb{N}_0$ such that
\begin{equation}
\left|\Xi(\mathcal{E}) - \Xi(\mathcal{E}_k) \right| \leq C' \ee^{-\mathcal{E}/8},
\end{equation}
and hence
\begin{equation}
|\mathcal{E} - \mathcal{E}_k| \leq \frac{C'}{\Lambda(0)} \ee^{-\mathcal{E}/8}
\end{equation}
using $\Xi'(\mathcal{E}) \geq \Lambda(0)$, and further
\begin{equation}
|\mathcal{E}- \mathcal{E}_k| \leq \frac{C'\ee^{C'/(8 \Lambda(0))}}{\Lambda(0)} \ee^{-\mathcal{E}_k/8}
\end{equation}
using $\mathcal{E} \geq \mathcal{E}_k - C'/\Lambda(0)$.

Collecting, we see that
\begin{equation}
T^c \subset \bigcup_{k \in \mathbb{N}_0} \left\{ \mathcal{E} \in \mathbb{R}: \, |\mathcal{E} - \mathcal{E}_k | \leq \frac{C'\ee^{C'/(8 \Lambda(0))}}{\Lambda(0)} \ee^{-\mathcal{E}_k/8} \right\}.
\end{equation}

We now apply Corollary \ref{Cor}. Since $\Lambda(\mathcal{E}) \leq D(\mathcal{E}+1)^N$ for some $D > 0$ and $N > 0$, we have that $\mathcal{E}_k + 1 \geq (2 \pi k/D)^{1/(N+1)}$ and condition \eqref{LemSumFinite} in Lemma \ref{LemMain2} holds with any $\beta > 0$. Setting $\beta \in (0, 1/8)$, we choose the set $S$ in Lemma \ref{LemMain2} so that $\alpha = 1/8$ and $C = \delta_0  = 2 C' \ee^{C'/(8 \Lambda(0))}/\Lambda(0)$.

The above observations about $T$ and $T^c$ then show that $g$ satisfies the conditions of Corollary \ref{Cor}, by which $g$ must be identically vanishing, and this contradicts the construction of $g$. Hence $I(E)$ \eqref{I} cannot be bounded. This completes the proof.
\end{proof}

By the results of Theorem \ref{thm:Main2} one sees that scaling a constant interaction time polynomially in the energy gap $E$ is not a sufficient condition for obtaining polynomial asymptotic thermality as $E \rightarrow \infty$.

In view of Theorems \ref{thm:Main1} and \ref{thm:Main2}, we see that in order to achieve asymptotic thermality in a time scale $\lambda$ which is a polynomial function of the large energy $E \rightarrow \infty$ the switching must be long and slow. If the interaction is switched on once and for all, and the field and the detector are left to interact constantly for a time $\Delta \tau$ before switching off the interaction, we find that a polynomially long constant interaction time does not suffice to thermalise the response as $E \to \infty$. Thus, we conclude that the  details of the thermalisation at the level of the response function must be in the switch-on and switch-off.

\chapter{Conclusions}
\label{ch:conc}

In this work, we have analysed the effects of quantum fields as perceived by local observers in curved (and flat) spacetimes, with emphasis in the Hawking and Unruh effects in time-dependent situations, extensively using techniques from asymptotic analysis. 

In the case of the Hawking effect, we have concentrated in $1+1$ dimensions and considered a massless, minimally-coupled, free Klein-Gordon field, which is, in turn, conformally coupled. The conformal coupling enables us to compute interesting situations analytically. Because the Wightman function of a massless $1+1$ scalar field is ambiguous, in Chapter~\ref{ch:DeCo} we have employed a derivative-coupling detector that is insensitive to the ambiguity, and for which the massive to massless limit is continuous. We verified that this detector responds appropriately to thermal phenomena, such as the Unruh effect. Moreover, the response depends only on the twice-differentiated two-point function in $1+1$ dimensions, which diverges at small distances as strongly as the undifferentiated $3+1$ two-point function, rendering the $1+1$ detector model into an attractive tool for obtaining intuition of $3+1$ situations. 

We have exploited the conformal symmetry of the theory in $1+1$ dimensions to work out explicitly the field quantisation in lower dimensional black hole spacetimes in Chapter~\ref{ch:BH}. First, we have studied a receding mirror spacetime, which models the collapse of a star whereby a black hole is formed at late times; second, we studied the $1+1$ Schwarzschild spacetime and, third, a class of generalised $1+1$ (non-extremal) Reissner-Nordstr\"om black holes. A spacetime belonging to this class could include, for example, Reissner-Nordstr\"om black hole with quantum gravity corrections near the singularity as in \cite{Gambini:2013hna, Gambini:2014qta}.

In the receding mirror spacetime, we have computed the transition rate of a detector that is inertial in the asymptotic past, at early and late times along its trajectory. At early times, the rate responds as if a Dirichlet wall were present in the Minkowski half-space, whereas at late times the right-moving modes carry a frequency shift, due to the motion of the mirror, that satisfies, to leading order, the detailed balance form of the KMS condition, with a temperature that depends on the parameters of the mirror trajectory. The interpolating behaviour of the rate was completed numerically. We have verified that the particle detection is related to an outgoing positive energy flux moving away from the mirror.

In the Schwarzschild black hole, we have studied the responses of geodesic detectors that interact with fields in the Unruh and in the Hartle-Hawking-Israel states. We have computed the asymptotic rates near the infinity and near the black hole singularity. Near the infinity, the Hawking temperature is imprinted on the detector, as it interacts with the right-moving modes of a field in the Unruh state. This is the Hawking effect. If the state is the HHI vacuum, the detector is thermalised at the Hawking temperature. Near the singularity, the transition rate of the detector diverges in a non-integrable fashion proporitonally to $1/\tau$ as $\tau \to 0$, meaning that the response diverges logarithmically. We have computed numerically the transition rate of a freely-falling detector from infinity and demonstrated that the rate loses its thermal character as the detector closes into the black hole from infinity. This is consistent with the fact that the detector is only stationary with respect to the field modes coming from the black hole in the asymptotic infinity region. For a detector coming from the white hole region, interacting for a finite amount of time, our numerics confirm that the transition rate is not thermal. A final comment is that, for regular states across the black hole horizon, we have not detected anything significant occurring at the black hole horizon. We consider this relevant in the light of the information paradox discussion \cite{Almheiri:2012rt}.

In our class of generalised Reissner-Nordstr\"om spacetimes, we have studied the quantum effects near the future Cauchy horizon, as perceived by a geodesically infalling observer. First, we have computed the rate of a detector that falls across the horizon and towards the left part of the Cauchy horizon, coupled to a field in the HHI or in the Unruh state. We have found that the transition rate of such a detector diverges as $1/(\tau-\tau_h)$ in their proper time, as they approach the horizon at proper time $\tau_h$. Because the rate of divergence of the transition rate is non-integrable, the response of the detector also diverges. Then, we analysed the local energy density, in the HHI state, experienced by such geodesic infalling observer as they approach the future horizon. We estimated, with the aid of conformal techniques, that, in the near-horizon region, the local energy density too diverges inverse-linearly.

We stress that $1+1$ dimensional models provide insight to the full $3+1$ situations, but several technical points need to be discussed carefully. As a way to exemplify this, we consider the sharp transition rate of a derivative-coupling detector in $1+1$ dimensions interacting with a field in the Rindler state. As an inertial detector leaves the Rindler wedge and crosses to the full Minkowski spacetime, the transition rate diverges faster for a $1+1$ derivative-coupling detector than for a $3+1$ non-derivative Unruh-DeWitt detector. The transition probability diverges in the former case, while it is finite in the latter. 

In Chapter \ref{ch:Unruh}, we have studied the emergence of thermality for (ensembles of) stationary detectors that interact for a finite amount of time with a field in a KMS state and provided general results concerning the non-uniformity in the long-time and large-energy limits. In particular, we have shown that, for any thermal state (under reasonable conditions), as the energy becomes large, the response of the detector fails to satisfy the detailed balance form of the KMS condition asymptotically. This leaves one with the question of how long one needs to wait to register the KMS temperature up to a large energy scale, $E$, with particle detectors. In the case of the Unruh effect, we have examined the emergence of the Unruh temperature, as registered by a linearly uniformly accelerated detector that interacts with a $3+1$ massless Klein-Gordon field in the Minkowski vacuum state for a finite (but long) time. We have shown that the thermalisation time-scale depends crucially on the details of the switching of the interaction between the detector and the field. If the interaction switching scales adiabatically to late times, then the waiting time is polynomial in the large energy gap. However, we have proved that this situation does not need to hold in general. The waiting time may need to be exponential if the switching is not adiabatic. 

We think that the analysis on Chapter \ref{ch:Unruh} is relevant in the field of analogue gravity. The estimates that we have obtained should be interpreted in the following way: An experimentalist seeking to verify the Unruh effect up to an energy $E$ in the spectrum registered by an ensemble of detectors (or a multi-state detector) would need to switch the detector-field interaction smoothly and slowly for a long time, and only measure the state of the detector after a time that is polynomial in the energy scale that they seek to test has elapsed. The spectrum measured beyond the energy $E$ will not be thermally occupied, and, thus, should be discarded when attempting to infer the (asymptotic) temperature at which such ensemble of detectors thermalises, the Unruh temperature. The intuition is that higher-energy occupation modes of an ensemble of detectors take more time to populate than lower energy ones. The key point for an experimentalist seeking to measure the Unruh effect is not simply to conclude that the detector has been excited by its interactions with the field, but to verify that the response of the detector is thermal at the Unruh temperature. Here, we have prescribed up to what energy scale this can be verified in terms of the interaction time of the detector.

The work that we have presented in this thesis has several natural extensions. The $1+1$ derivative-coupling detector model that we have used and the techniques that we have presented can be exploited extensively in several interesting situations in $1+1$ dimensions, \textit{e.g.}, in cosmological settings. Another situation of interest is that of a derivative-coupling detector in an \textit{incoming mirror spacetime}, which produces a negative flux that can be detected. An analysis in $1+1$ dimensions with a massless scalar can be carried out in the spirit of \cite{Davies:2002bg}, where it is found that negative fluxes enhance the de-excitation response of a detector. Another interesting scenario is to analyse the response of detectors coupled to squeezed states, which can have negative energy densities \cite{Ford:1992ii}.  From such an investigation, a connection with quantum inequalities in conformal field theory can be drawn \cite{Fewster:2004nj}.  

An issue that we have left out of this work and that deserves attention is that of back-reaction in suitable $1+1$ gravity theories with local dynamics. Finally, it would be interesting to explore the issues of thermalisation time scales, which we have presented for the Unruh effect, in the context of black hole radiation.

\appendix

\chapter{Supplement to Chapter 3}
\label{app:DeCo}

\section{Massive field in Minkowski half space}
\label{sec:app3:HalfSpace}

We compute the transition rate $\dot{\mathcal{F}}_m = \dot{\mathcal{F}}^\text{reg}_m + \dot{\mathcal{F}}^\text{dist}_m$, where

\begin{subequations}
\begin{align}
\dot{\mathcal{F}}^\text{reg}_m(E) & \doteq -E \Theta(-E) + 2 \int_0^{\infty} \! ds \, \text{Re} \left[ \ee^{-\ii E s} \left( \frac{m^2}{2 \pi} K_0'' \left( \ii m s \right) + \frac{1}{2 \pi s^2} \right)\right] \label{app3:Freg} \\
\dot{\mathcal{F}}^\text{dist}_m(E) & \doteq \frac{\eta}{2 \pi} \int_{-\infty}^\infty \! ds \, \ee^{-\ii E s}  \frac{d^2}{ds^2} K_0 \left( m \sqrt{4d^2 - (s - \ii \epsilon)^2} \right) \label{app3:Fdist}
\end{align}
\end{subequations}

We start by computing eq. \eqref{app3:Freg}. Throughout our calculation, we assume that $E \neq -m$. We start by rewriting \eqref{app3:Freg} in the form of \eqref{DeCoStat}
\begin{equation}
\dot{\mathcal{F}}^\text{reg}_m(E) = \frac{m^2}{2\pi} \int_{-\infty}^\infty \! dz \, \ee^{-\ii E z} K_0''(\ii m (z - \ii \epsilon)).
\label{app3:Freg2}
\end{equation}

The branch of $K_0''$ is as explained in the main text in terms of derivatives of Hankel functions. See \eqref{2:K0WightmanTimelike}. The integral on the right hand side of \eqref{app3:Freg2} exists as a Riemann integral because of the asymptotic behaviour of $K_0$, $K_0(\ii z) = O\left( |z|^{-1/2}\right)$ as $|z| \to \infty$.
Integrating by parts twice, we obtain
\begin{equation}
\dot{\mathcal{F}}^\text{reg}_m(E) = \frac{E^2}{2\pi} \int_\gamma \! dz \, \ee^{-\ii E z} K_0(\ii m z),
\end{equation}
where the path $\gamma$ extends along the real line, except that it dips below the real line near $z = 0$. Moreover, because the singularity at $z = 0$ is logarithmic, we can perform the integration across $z = 0$, and using the analytic continuation formulas \eqref{2:K0WightmanTimelike},
\begin{equation}
\dot{\mathcal{F}}^\text{reg}_m(E) = \frac{E^2}{2} \text{Im} \int_0^\infty \! ds \, \ee^{-\ii E s} H_0^{(2)}(ms).
\label{app3:FregLee}
\end{equation}

The strategy to integrate \eqref{app3:FregLee} is to use the standard integral (6.611.7) of \cite{Gradshteyn:2007}
\begin{equation}
\int_0^\infty \! ds \, \ee^{-\alpha s} H_0^{(2)}(ms) = \frac{1}{\sqrt{\alpha^2+1}} \left[1 + \frac{2 \ii}{\pi} \ln \left( \alpha + \sqrt{1+ \alpha^2} \right) \right]
\end{equation}
for real $\alpha > 0$. One then takes the imaginary part of the analytic continuation of $\alpha$ to $\ii E$, cf. \eqref{app3:FregLee}, and one obtains
\begin{align}
\dot{\mathcal{F}}^\text{reg}_m(E) = \Theta(-E-m) \frac{E^2}{\left(E^2 - m^2\right)^{1/2}}.
\label{app3:MassiveReg}
\end{align}

We refer to \cite{Hodgkinson:2013tsa} for details. 

We proceed to integrate the expression defined by $\dot{\mathcal{F}}_m^\text{dist}$. Integrating by parts twice the right hand side of eq. \eqref{app3:Fdist} and rearranging the integration limits,
\begin{equation}
\dot{\mathcal{F}}^\text{dist}_m(E) = \frac{\eta E^2}{2 \pi} \text{Re} \int_{0}^\infty \! ds \, \ee^{-\ii E s}  K_0 \left( m \sqrt{4d^2 - (s - \ii \epsilon)^2} \right).
\label{app3:MassiveDistInt}
\end{equation}

It follows from complex analytic integration techniques that we can write \eqref{app3:MassiveDistInt} as
\begin{align}
\dot{\mathcal{F}}^\text{dist}_m(E) & = \frac{\Theta(-E) \eta E^2}{2 \pi} \text{Re} \left[\int_\gamma \! dz \, \ee^{-\ii E z}  K_0 \left( m \sqrt{4d^2 - z^2} \right) \right],
\label{app3:MassiveDistInt2}
\end{align}
where the path $\gamma$ dips into the negative imaginary part of the complex plane near $z = 4d$. By deformation contour techniques and the analytic continuation of the Bessel function, it follows that eq. \eqref{app3:MassiveDistInt2} can be expressed as
\begin{align}
\dot{\mathcal{F}}^\text{dist}_m(E) & = -\frac{\Theta(-E) \eta E^2}{4} \nonumber \\ 
& \times \text{Im} \int_{2d}^\infty \! ds \, \ee^{-\ii E s}  \left[H_0^{(1)} \left( m \sqrt{s^2 -4d^2} \right) + H_0^{(2)} \left( m \sqrt{s^2 -4d^2} \right)\right].
\label{app3:MassiveDistInt3}
\end{align}

Using the identity \cite{NIST}
\begin{equation}
I_\nu(z) = \frac{1}{2} \ee^{\mp \ii \pi \nu/2} \left(H_\nu^{(1)} \left( z \ee^{\pm \ii \pi/2} \right) + H_\nu^{(2)} \left( z \ee^{\pm \ii \pi/2} \right) \right),
\end{equation}
where $I_\nu$ is a modified Bessel function of the first kind. The integral on the right hand side of \eqref{app3:MassiveDistInt3} is standard (see \cite{Gradshteyn:2007} formula 6.611.1) if we replace $\ii E$ by the real parameter $\alpha > 0$. The analysis proceeds as before, and we find
\begin{equation}
\dot{\mathcal{F}}^\text{dist}_m(E) = \Theta(-E-m) \frac{\eta E^2 \cos\left(2d \sqrt{E^2 - m^2} \right)}{\sqrt{E^2-m^2}}.
\label{app3:MassiveDist}
\end{equation}

Once more, we refer to \cite{Hodgkinson:2013tsa} for details. 

Putting the results \eqref{app3:MassiveReg} and \eqref{app3:MassiveDist} together, we obtain eq. \eqref{3:Fm}.

\section{Massless field in Minkowski half space}
\label{sec:app3:HalfSpacem0}

We wish to compute the integral
\begin{equation}
I(E,d) \doteq \int_{-\infty}^\infty \! ds \, \ee^{-\ii E s} \frac{d^2}{ds^2}\ln \left[ \mu \sqrt{4d^2 - (s - \ii \epsilon)^2} \right].
\label{app:I0}
\end{equation}

This can be achieved by elementary complex integration methods. Taking the derivatives, and using Jordan's lemma, we can write \eqref{app:I0} as
\begin{align}
I(E,d) & = \Theta(-E) \int_{\gamma_-} \! dz \, \ee^{-\ii E z} \frac{(z-\ii \epsilon)^2 + 4d^2}{\left[(z-\ii \epsilon)^2 - 4d^2 \right]^2} \nonumber \\
& + \Theta(+E) \int_{\gamma_+} \! dz \, \ee^{-\ii E z} \frac{(z-\ii \epsilon)^2 + 4d^2}{\left[(z-\ii \epsilon)^2 - 4d^2 \right]^2},
\end{align}
where $\gamma_-$ is the complex path from $-\infty$ to $+\infty$ along the real axis and closes along an arc in the positive imaginary quadrants, and $\gamma_-$ is the complex path from $-\infty$ to $+\infty$ along the real axis that then closes as an arc in the negative imaginary quadrants. Writing the denominator as $\left[(z-\ii \epsilon + 2d)(z - \ii \epsilon - 2d) \right]^2$, it is clear that the contribution along $\gamma_+$ is zero. The contribution along $\gamma_-$ is
\begin{align}
I(E,d) & = 2 \pi \ii \Theta(-E) \left(\text{Res}_{\ii \epsilon - 2d} + \text{Res}_{\ii \epsilon + 2d}\right)
\end{align}
where the residues are given by
\begin{subequations}
\begin{align}
\text{Res}_{\ii \epsilon + 2d} =  \frac{-\ii E \ee^{-\ii 2d E}}{2}, \\
\text{Res}_{\ii \epsilon - 2d} =  \frac{-\ii E \ee^{+\ii 2d E}}{2}.
\end{align}
\end{subequations}

This leads to eq. \eqref{3:F0}.

\chapter{Supplement to Chapter 4}
\label{app:BH}

\section{The receding mirror spacetime}

\subsection{Static trajectory with respect to the mirror in the distant past}

We estimate the behaviour of the transition rate in the distant past and the distant future.

\subsubsection*{Distant past, $\tau \to -\infty$}
We start out by estimating the behaviour of $\dot{\mathcal{F}}_0$, defined by eq. \eqref{4:StaticRecMirF0}, as $\tau \to -\infty$. We introduce the parameter $h \doteq \left(1 + \ee^{\kappa(d-\tau)}\right)^{-1} \to 0^+$ and write
\begin{equation}
\dot{\mathcal{F}}_0(E,\tau) = -E \Theta (-E) + \frac{1}{2 \pi} \int_0^\infty \! ds \, \cos(Es) \left(\frac{1}{X(s)} + \frac{1}{s^2} \right)
\label{App4:F0statpast}
\end{equation}
where we have defined $X$ by
\begin{align}
X(s) \doteq - \frac{[1 - h(1 - \ee^{-\kappa s})] {\left\{\kappa s + \ln[1 - h(1 - \ee^{-\kappa s})]\right\}}^2}{\kappa^2 {(1-h)}^2}. 
\label{eq:app:X-def}
\end{align}

The key point is that an expansion of $X$ at small $h$ can be made uniformly in $s$ in such a way that the estimate can be inserted inside the integral in \eqref{App4:F0statpast}. One way to achieve this \cite{Hodgkinson:2013tsa} is to write
\begin{equation}
\frac{1}{X(s)} + \frac{1}{s^2}  = \frac{-X(s) -s^2}{s^4} \left(1 + \frac{-X(s)-s^2}{s^2} \right)^{-1}.
\label{eq:app:F0int-rearr}
\end{equation}

A Taylor expansion of the numerator of \eqref{eq:app:X-def} to quartic order in $h\left(1-\ee^{-\kappa s}\right)$ shows that the second factor in \eqref{eq:app:F0int-rearr} is of the form $1 + O(h)$, uniformly in~$s$, and yields for the first factor in \eqref{eq:app:F0int-rearr} an estimate that can be applied under the integral over $s$ and whose leading term is proportional to~$h$.  We have that 
\begin{align}
\dot{\mathcal{F}}_0 (E,\tau)
&= 
- E \Theta(-E)
+ \frac{h}{2\pi \kappa} \int_0^\infty ds \, \cos(E s) \, \frac{\ee^{-\kappa s}(2+\kappa s) +\kappa s -2}{s^3} + O(h^2) \nonumber \\
& = - E \Theta(-E) + O(h).
\label{eq:app:F0past0-final}
\end{align}

The second line comes after verifying that the integrand above is $O(1)$ as $s \to 0$.

The estimation of $\dot{\mathcal{F}}_1$, defined by \eqref{4:StaticRecMirF1}, is similar. The asymptotic behaviour can be expressed in terms of the sine and cosine integrals, $\text{si}$ and $\text{Ci}$ respectively in the notation of \cite{NIST},
\begin{subequations}
\begin{align}
\text{si}(z) & \doteq - \int_z^\infty \! dt \, \frac{\sin t}{t}, \\
\text{Ci}(z) & \doteq - \int_z^\infty \! dt \, \frac{\cos t}{t}.
\end{align}
\end{subequations}

In terms of these integral formulas, we find.
\begin{align}
\dot{\mathcal{F}}_1 (E,\tau) = \frac{1}{4 \pi d} + \frac{|E|}{2\pi} \, \left[ \cos(2 d E) \text{si}(2 d |E|) - \sin(2 d |E|) \text{Ci} (2d |E|) \right] + O(h).
\label{eq:app:F0past1-final}
\end{align}

We proceed to estimate $\dot{\mathcal{F}}_2$, defined by \eqref{4:StaticRecMirF2}. Integrating by parts reduces the integral to a form that can be evaluated exactly in terms of cosine and sine integrals. We find 
\begin{align}
\dot{\mathcal{F}}_2 (E,\tau) &= \frac{1-h}{2 \pi} \left\{ -\frac{1}{B} + |E| \left[\sin (B |E|) \text{Ci} (B |E|) - \cos( B E) \text{si}(B |E|) \right] \right. \nonumber \\
& + 2\pi E \cos( B E)\Theta(-E) \Bigg\} , 
\label{eq:app:F0pf2-exact}
\end{align}
where $B \doteq 2d - \kappa^{-1}\ln(1-h)$. A~small $h$ expansion in \eqref{eq:app:F0pf2-exact}
gives 
\begin{align}
\dot{\mathcal{F}}_2 (E,\tau) &= - \frac{1}{4 \pi d} + \frac{|E|}{2\pi} \, \left[\sin (2d |E|) \text{Ci} (2d |E|) - \cos( 2d E) \text{si}(2d |E|) \right] \nonumber \\
& + E \cos( B E)\Theta(-E) + O(h). 
\label{eq:app:F0past2-final}
\end{align}


Combining 
\eqref{eq:app:F0past0-final}, 
\eqref{eq:app:F0past1-final}
and 
\eqref{eq:app:F0past2-final}, we have that
\begin{align}
\dot{\mathcal{F}}(E,\tau) & = -E \left[1 - \cos(2dE) \right] \Theta(-E) + O\left(\ee^{\kappa \tau} \right), & \text{ as } \tau \rightarrow - \infty, \label{4App:StaticRecMirPast}
\end{align}
which is eq. \eqref{4:StaticRecMirPast}.

\subsubsection*{Distant future, $\tau \to \infty$}

We start by estimating  $\dot{\mathcal{F}}_0$, defined by eq. \eqref{4:StaticRecMirF0}. We introduce the parameter $f\doteq 1/(1+\ee^{\kappa(\tau-d)})$, such that $f \to 0^+$ as $\tau \to \infty$.  We add and subtract $\kappa^2 \cos(E s) [8\pi\sinh^2(\kappa s/2)]^{-1}$ inside the integrand, and obtain
\begin{align}
\dot{\mathcal{F}}_0 (E,\tau) & = - E \Theta(-E) + \frac{1}{2\pi} \int_0^\infty ds \, \cos(E s) \! 
\left(  \frac{1}{s^2} - \frac{\kappa^2}{4\sinh^2(\kappa s/2)} \right) \nonumber \\
& + \frac{\kappa^2}{2\pi} \int_0^\infty ds \, \cos(E s) \! \left(  \frac{1}{4\sinh^2(\kappa s/2)} 
\right. \nonumber \\
& \left. - \frac{f^2 \, \ee^{\kappa s}} {[1 + f(\ee^{\kappa s} -1)] {\left\{\ln[1 + f(\ee^{\kappa s} -1)]\right\}}^2} \right). 
\label{eq:app:F0future1}
\end{align}

In the last term in \eqref{eq:app:F0future1}, the integrand goes to zero pointwise as $f\to0$. A monotone convergence argument shows that the integral is $o(1)$ as $f\to0$. The combination of half of the first term with the second term is handled by the same techniques that we have encountered in Chapter \ref{ch:DeCo}, in the context of the Unruh effect and the thermal heat bath, with the replacement $a \to \kappa$. Hence, we have that
\begin{align}
\dot{\mathcal{F}}_0 (E,\tau) = -\frac{E}{2}  \Theta (-E) + \frac{E}{2 \, (\ee^{2\pi E/\kappa}-1)}
+ o(1), \hspace{3ex} \text{as $f\to0$}.
\label{eq:app:F0future0-final}
\end{align}

By monotone convergence arguments, we find that $\dot{\mathcal{F}}_1 (E,\tau) = o(1)$ and $\dot{\mathcal{F}}_2 (E,\tau) = O(f)$. Putting these results together,
\begin{align}
\dot{\mathcal{F}}(E,\tau) & = -\frac{E}{2}  \Theta(-E) + \frac{E}{2 \left(\ee^{2 \pi E/\kappa} -1\right)} + \textit{o}(1), \hspace{3ex} \text{as } \tau \rightarrow + \infty. \label{4App:StaticRecMirFuture}
\end{align}
which is eq. \eqref{4:StaticRecMirFuture}.

\subsection{Travelling towards mirror in the distant past}

We estimate the behaviour of the transition rate in the distant past and the distant future.

\begin{subequations}
\begin{align}
\dot{\mathcal{F}}_0(E,\tau) & = -E \Theta (-E) + \frac{1}{2 \pi} \int_0^\infty \! ds \, \cos(Es) \left( \frac{\ee^{2\lambda} p'\left( \ee^\lambda \tau \right) p'\left(\ee^\lambda (\tau-s) \right)}{\left[p\left(\ee^\lambda \tau\right) - p \left(\ee^\lambda (\tau-s)\right) \right]^2} + \frac{1}{s^2}\right) \label{4app:BoostRecMirF0} \\
\dot{\mathcal{F}}_1(E,\tau) & = \frac{1}{2 \pi} \int_0^\infty \! ds \,  \frac{\cos(Es)  p'\left(\ee^\lambda (\tau-s) \right)}{\left[\ee^{-\lambda} \tau + d - p \left(\ee^\lambda (\tau-s) \right)\right]^2} \label{4app:BoostRecMirF1} \\
\dot{\mathcal{F}}_2(E,\tau) & = \frac{1}{2 \pi} \int_0^\infty \! ds \, \text{Re} \left( \frac{\ee^{-\ii E s} p'\left( \ee^\lambda \tau \right)}{\left[p\left(\ee^\lambda \tau \right) - \ee^{-\lambda} (\tau - s) - \ii \epsilon \right]^2} \right) \label{4app:BoostRecMirF2}.
\end{align}
\end{subequations}

\subsubsection*{Distant past, $\tau \to -\infty$}

We start by estimating $\dot{\mathcal{F}}_0$, defined by \eqref{4app:BoostRecMirF0}. We proceed as in \eqref{App4:F0statpast}--\eqref{eq:app:F0past0-final} and obtain 
\begin{align}
\dot{\mathcal{F}}_0 (E,\tau) &= - E \Theta(-E) + O\left(\ee^{\ee^\lambda\kappa\tau}\right).
\label{eq:app:asdrift:F0past0-final}
\end{align}

We proceed to estimate $\dot{\mathcal{F}}_1$, defined by \eqref{4app:BoostRecMirF1}. We introduce $g \doteq \ee^{\kappa \tau \ee^\lambda} \to 0^+$ as $\tau \to -\infty$, whereby we write \eqref{4app:BoostRecMirF1} as
\begin{align}
\dot{\mathcal{F}}_1 ( E,\tau) = \frac{1}{2\pi} \int_0^\infty \frac{\cos( E s) \, ds}{\left(1 + g \ee^{-\kappa s \ee^\lambda} \right) \left[ s \ee^\lambda - 2\tau \sinh\lambda + \kappa^{-1} \ln \! \left(1 + g \ee^{-\kappa s \ee^\lambda} \right) \right]^2}.
\label{eq:app:asdrift:F0past1-1}
\end{align}

Whenever $\tau<0$, we may bound the absolute value of $\dot{\mathcal{F}}_1 ( E,\tau)$ 
by the replacements $\cos( E s) \to1$ and $g\to0$ in \eqref{eq:app:asdrift:F0past1-1}. The integral can be evaluated in terms of sine and cosine integrals and one finds that $\dot{\mathcal{F}}_1 ( E,\tau) = O(\tau^{-1})$.

Finally, for estimating $\mathcal{F}_2$, defined by \eqref{4app:BoostRecMirF2}, the integral can be performed exactly, as in the static case. One finds that
\begin{align} 
\dot{\mathcal{F}}_2 (E,\tau) &=  \frac{(1-h) \, \ee^{2\lambda}}{2 \pi} \left\{ -\frac{1}{C} 
+ |E| \left[\sin (C |E|) \text{Ci} (C |E|) - \cos( C  E) \text{si}(C | E|) \right] \right. \nonumber \\
& + 2\pi  E \cos( C  E)\Theta(- E) \Bigg\}
\ , 
\label{eq:app:drift-F0pf2-exact}
\end{align}
where $h \doteq g/(1+g)$ and $C \doteq - (\ee^{2\lambda} -1)\tau - \kappa^{-1} \ee^\lambda \ln(1-h)$. As $\tau \to -\infty$, we have $C \to \infty$, and using formulas (6.2.17) and (6.12.3) in \cite{NIST} gives 
\begin{align}
\dot{\mathcal{F}}_2 (E,\tau) = \ee^{2\lambda} \, E \cos(2 \tau \sinh\lambda \,  \ee^\lambda  E)
\Theta(- E) + O(\tau^{-3}). 
\label{eq:app:asdrift:F0past2-final}
\end{align}

Combining, we obtain that
\begin{align}
\dot{\mathcal{F}}(E,\tau) & = -E \left[1 - \ee^{2\lambda} \cos\left(2\tau \sinh \lambda \ee^\lambda E \right) \right] \Theta(-E) + O\left(\tau^{-1} \right), & \text{ as } \tau \rightarrow - \infty, \label{4App:BoostRecMirPast} 
\end{align}
in the distant past, which is eq. \eqref{4:BoostRecMirPast}.

\subsubsection*{Distant future, $\tau \to \infty$}

We carry out the estimates as $\tau \to \infty$. In the case of $\dot{\mathcal{F}}_0$, defined by \eqref{4app:BoostRecMirF0}, we proceed as in the static case. Proceeding as in eq.  \eqref{eq:app:F0future1}, we obtain 
\begin{align}
\dot{\mathcal{F}}_0 ( E ,\tau) = -\frac{ E }{2} \Theta (- E ) + \frac{ E }{2 \, (\ee^{2\pi\ee^{-\lambda} E /\kappa}-1)} + o(1) \hspace{3ex} \text{as $\tau \rightarrow \infty$}.
\label{eq:app:asdrift:F0future0-final}
\end{align}

To estimate $\dot{\mathcal{F}}_1$ \eqref{4app:BoostRecMirF1} we substitute $s = \tau+r$ and write
\begin{align}
\dot{\mathcal{F}}_1 ( E ,\tau) = \frac{\kappa^2}{2\pi} \int_{-\tau}^\infty \frac{\cos[ E (\tau+r)] \, dr}{\left(1 + \ee^{-\kappa r \ee^\lambda} \right) \left[
\kappa \tau \ee^{-\lambda} + \ln \! \left(1 + \ee^{\kappa r \ee^\lambda} \right) \right]^2}.  
\label{eq:app:asdrift:F0future1-1}
\end{align}

For $\tau > 0$, the absolute value of~\eqref{eq:app:asdrift:F0future1-1} is bounded by the replacement $\cos[ E (\tau+r)] \to 1$ in the integrand. We then extend the integration interval to the full real axis in~$r$. Elementary estimates then show that the contribution from $-\infty < r < 0$ is $O(\tau^{-2})$ and the contribution from $0 < r < \infty$ is $O(\tau^{-1})$. Hence $\dot{\mathcal{F}}_1 ( E ,\tau) = O(\tau^{-1})$. 

Finally the asymptotic behaviour of $\dot{\mathcal{F}}_2$ \eqref{4:BoostRecMirF2} is obtained by expanding the closed-form expression \eqref{eq:app:drift-F0pf2-exact}. We obtain $\dot{\mathcal{F}}_2 ( E ,\tau) = O\left(\ee^{-\ee^\lambda \kappa\tau}\right)$. 

Combining, we have that
\begin{align}
\dot{\mathcal{F}}(E,\tau) & = -\frac{E}{2}  \Theta(-E) + \frac{E}{2 \left(\ee^{2 \pi \ee^{-\lambda} E/\kappa} -1\right)} + \textit{o}(1), & \text{ as } \tau \rightarrow + \infty, \label{4App:BoostRecMirFuture}
\end{align}
which is eq. \eqref{4:BoostRecMirFuture}.

\section{The Schwarzschild black hole}

\subsection{Static detector}

Using the static transition rate formula \eqref{DeCoStat}, eq. \eqref{4:SchwA} and the equations of motion, we have that
\begin{subequations}
\begin{align}
\dot{\mathcal{F}}_\text{H}(E) & = -\frac{1}{2\pi (2\pi T_\text{loc})^2}\int_{-\infty}^{\infty} \! ds \, \ee^{-\ii E s} \left(\frac{\ee^{-2\pi s T_\text{loc}}}{\left(\ee^{-2\pi T_\text{loc} (s-\ii \epsilon)} -1 \right)^2} \right), \label{4App:StatSchwHHI}\\
\dot{\mathcal{F}}_\text{U}(E) & = -\frac{1}{4\pi}\int_{-\infty}^{\infty} \! ds \, \ee^{-\ii E s} \left[\frac{\ee^{-2 \pi s T_\text{loc}}/(2\pi T_\text{loc})^2}{\left(\ee^{-2\pi T_\text{loc} (s-\ii \epsilon)} -1 \right)^2} + \frac{1}{(\epsilon + \ii s)^2}\right], \label{4App:StatSchwUnruh}
\end{align}
\label{4App:StatSchwF}
\end{subequations}
\\
where we have defined the constant $T_\text{loc} \doteq 1/\left[(8\pi M)(1-R/(2M))^{1/2}\right]$, which shall play the role of the local temperature registered by the detector.

In the HHI vacuum, the left and right moving modes give identical contributions. The fraction in eq. \eqref{4App:StatSchwHHI} can be rendered in the form of eq. \eqref{thm:DecoUnruh}, yielding
\begin{equation}
\dot{\mathcal{F}}_\text{H}(E) = \frac{E}{\ee^{E/T_{\text{loc}}}-1},
\end{equation}
which is eq. \eqref{4:StaticHHI}.

In the Unruh vacuum, the right-movers produce a thermal spectrum, while the left-movers, coming from infinity, yield a Minkowski vacuum-like term. We obtain
\begin{equation}
\dot{\mathcal{F}}_\text{U}(E) = -\frac{E}{2}\Theta(-E) + \frac{E}{2\left(\ee^{E/T_{\text{loc}}}-1\right)},
\end{equation}
which is eq. \eqref{4:StaticUnruh}.

\subsection{Inertial detector near the infinity}

The transition rate formula in the asymptotic past \eqref{FdotAsymp}, together with \eqref{4:SchwA} along the infalling geodesic reads
\begin{subequations}
\begin{align}
\dot{\mathcal{F}}_\text{H}(E,\tau) & = -E \Theta(-E) + 2 \int_0^{\infty} \! ds \, \cos(Es) \left[ \mathcal{A}_U(\tau,\tau-s) + \mathcal{A}_V(\tau,\tau-s) + \frac{1}{2 \pi s^2} \right], \\
\dot{\mathcal{F}}_\text{U}(E,\tau) & = -E \Theta(-E) + 2 \int_0^{\infty} \! ds \, \cos(Es) \left[\mathcal{A}_U(\tau,\tau-s) + \mathcal{A}_v(\tau,\tau-s) + \frac{1}{2 \pi s^2} \right],
\end{align}
\label{4App:SchwFH-FU}
\end{subequations}
\\
with
\begin{subequations}
\begin{align}
\mathcal{A}_u(\tau,\tau') \doteq - \frac{1}{4\pi} \frac{\dot{u}\left(\tau'\right) \dot{u}\left(\tau''\right)}{\left[u\left( \tau' \right) - u\left(  \tau'' \right) \right]^2}, \\
\mathcal{A}_U(\tau,\tau') \doteq - \frac{1}{4\pi} \frac{\dot{U}\left(\tau'\right) \dot{U}\left(\tau''\right)}{\left[U\left( \tau' \right) - U\left(  \tau'' \right) \right]^2},\\
\mathcal{A}_V(\tau,\tau') \doteq - \frac{1}{4\pi} \frac{\dot{V}\left(\tau'\right) \dot{V}\left(\tau''\right)}{\left[V\left( \tau' \right) - V\left(  \tau'' \right) \right]^2}
\end{align}
\label{4App:AsSchw}
\end{subequations}
\\
along the trajectories defined by the integrals of
\begin{subequations}
\begin{align}
\dot{u} = \dot{t} - \dot{r}^* = \frac{\varepsilon + (\varepsilon - 1 +2M/r)^{1/2}}{1-r/2M} = \frac{1}{\varepsilon - \left(\varepsilon^2 - 1 +2M/r \right)^{1/2}}, \\
\dot{v} = \dot{t} + \dot{r}^* = \frac{\varepsilon - (\varepsilon - 1 +2M/r)^{1/2}}{1-r/2M} = \frac{1}{\varepsilon + \left(\varepsilon^2 - 1 +2M/r \right)^{1/2}}.
\end{align}
\label{4App:Schw-udot-vdot}
\end{subequations}

For example, along the trajectories \eqref{4App:Schw-udot-vdot}, $\mathcal{A}_U$ reads
\begin{equation}
\mathcal{A}_U(\tau,\tau') = - \frac{\kappa^2}{8 \pi} \frac{\dot{u}\left( \tau' \right) \dot{u}\left( \tau'' \right)}{\sinh^2 \left[\kappa\left( u\left( \tau' \right) - u\left( \tau'' \right) \right)/2 \right]}.
\label{4App:AUExample}
\end{equation}

The first observation that we make is that the point-wise asymptotic past limits of eq. \eqref{4App:AsSchw}, at fixed positive $s$, are given by
\begin{subequations}
\begin{align}
\lim_{\tau \to - \infty} \mathcal{A}_u(\tau,\tau-s) & = - \frac{1}{4 \pi s^2}, \\
\lim_{\tau \to - \infty} \mathcal{A}_U(\tau,\tau-s) & = -\frac{\ee^{2\lambda}}{4 \pi (8M)^2 \sinh^2\left(\ee^\lambda s/(8M) \right)}, \\
\lim_{\tau \to - \infty} \mathcal{A}_V(\tau,\tau-s) & = -\frac{\ee^{-2\lambda}}{4 \pi (8M)^2 \sinh^2\left(\ee^{-\lambda} s/(8M) \right)}.
\end{align}
\label{4App:AsSchwLim}
\end{subequations}

If we can replace eq. \eqref{4App:AsSchwLim} in \eqref{4App:SchwFH-FU}, we obtain the left and right-moving thermal spectra in the HHI vacuum and the left-moving Minkowski vacuum and the right-moving thermal spectrum in the Unruh vacuum, to leading order at early times. What is left is to provide the monotone convergence argument to replace the limits inside the integral.

Let us start by letting $q$ denote $\dot{u}$ or $\dot{v}$. Letting $\eta = -1$ for $u$ and $\eta = +1$ for $v$, we write, $q$ and its time derivative, $\dot{q}$, as
\begin{align}
q & = \frac{1}{\varepsilon + \eta \left(\varepsilon^2 - 1 +2M/r \right)^{1/2}} \\
\dot{q} & = -\frac{\eta M}{r^2 \left(\varepsilon + \eta \left(\varepsilon^2 - 1 +2M/r \right)^{1/2}\right)^2} = -\frac{\eta M q^2}{r^2}
\end{align}

Then, in view of eq. \eqref{4App:AsSchw} and \eqref{4App:AUExample}, we neeed to show that the expressions
\begin{subequations}
\label{eq:app:monot1}
\begin{align}
& \frac{\int_{\tau-s}^\tau q(\tau') \, d\tau'}{\sqrt{q(\tau) q(\tau - s)}} 
\ , 
\\[1ex]
& \frac{\sinh \! \left(\frac{1}{8M} \int_{\tau-s}^\tau q(\tau') \, d\tau' \right)}{\sqrt{q(\tau) q(\tau - s)}} 
\end{align}
\end{subequations}
\\
are monotone in $\tau$ for all $s > 0$ when $\tau$ is sufficiently large and negative. Differentiating \eqref{eq:app:monot1} with respect to~$\tau$, it suffices to show that each of the expressions
\begin{subequations}
\label{eq:app:monot2}
\begin{align}
& \int_{\tau-s}^\tau q(\tau') \, d\tau'
- 2[q(\tau) - q(\tau-s)] \left(\frac{\dot q (\tau)}{q(\tau)} 
+ \frac{\dot q (\tau-s)}{q(\tau-s)} \right)^{-1}
\ , 
\label{eq:app:monot2a}
\\[1ex]
& 
\tanh \! \left(\frac{1}{8M} \int_{\tau-s}^\tau q(\tau') \, d\tau'
\right)
- \frac{1}{4M}[q(\tau) - q(\tau-s)] 
\left(\frac{\dot q (\tau)}{q(\tau)} + \frac{\dot q (\tau-s)}{q(\tau-s)} \right)^{-1}
\ , 
\label{eq:app:monot2b}
\end{align}
\end{subequations}
\\
has a fixed sign for all $s>0$ when $\tau$ is sufficiently large and negative.  
Introducing in \eqref{eq:app:monot2} 
a new integration variable by $p' = 
\sqrt{\varepsilon^2 - 1 + 2M/r(\tau')}\,$, we see that it suffices to show that 
each of the functions 
\begin{subequations}
\label{eq:app:monot3}
\begin{align}
f_1 (p) & = \frac12 \int_p^{p_f} \frac{dp'}{(\varepsilon + \eta p'){[{p'}^2 - \varepsilon^2 + 1]}^2} \nonumber \\
& - \frac{p_f - p}{(\varepsilon + \eta p){[p_f^2 - \varepsilon^2 + 1]}^2 + (\varepsilon + \eta p_f){[p^2 - \varepsilon^2 + 1]}^2},
\label{eq:app:monot3a}
\\
f_2 (p) & = \tanh \! \left( 
\frac12 \int_p^{p_f} \frac{dp'}{(\varepsilon + \eta p'){[{p'}^2 - \varepsilon^2 + 1]}^2} 
\right) \nonumber \\
& - \frac{p_f - p}{(\varepsilon + \eta p){[p_f^2 - \varepsilon^2 + 1]}^2 + (\varepsilon + \eta p_f){[p^2 - \varepsilon^2 + 1]}^2}
\ , 
\label{eq:app:monot3b}
\end{align}
\end{subequations}
\\
defined on the domain $\sqrt{\varepsilon^2 -1} < p < p_f$, where $p_f \in \left( \sqrt{\varepsilon^2 -1}\, , \varepsilon\right)$ is a parameter, has a fixed sign when $p_f$ is sufficiently close to $\sqrt{\varepsilon^2 -1}$. 

Consider~$f_1$. $f_1'$ is a rational function whose sign can be analysed by elementary methods,
with the outcome that $f_1'$ is negative when $p_f$ is sufficiently close to $\sqrt{\varepsilon^2-1}$. 
Hence $f_1$ is positive when $p_f$ is sufficiently close to $\sqrt{\varepsilon^2-1}$. 

Consider then~$f_2$. When $p_f$ is sufficiently close to $\sqrt{\varepsilon^2-1}$, an elementary analysis shows that the second term in \eqref{eq:app:monot3b} is negative and strictly increasing, and there is a $p_1 \in \bigl( \sqrt{\varepsilon^2 -1}\, , p_f\bigr)$ such that this term takes the value $-1$ at $p = p_1$. With $p_f$ this close to $\sqrt{\varepsilon^2-1}$, it follows that $f_2$ is negative for $p\le p_1$, whereas for $p_1 < p < p_f$ $f_2$ has the same sign as
\begin{align}
f_3(p) & = \frac12 \int_p^{p_f} \frac{dp'}{(\varepsilon + \eta p'){[{p'}^2 - \varepsilon^2 + 1]}^2} \nonumber \\
& - \arctanh \! \left( \frac{p_f - p}{(\varepsilon + \eta p){[p_f^2 - \varepsilon^2 + 1]}^2 + (\varepsilon + \eta
  p_f){[p^2 - \varepsilon^2 + 1]}^2} \right). 
\label{eq:app:monot4b}
\end{align}

The function $f_3$ can be analysed by the same methods as~$f_1$, with the outcome that $f_3$ is negative when $p_f$ is  sufficiently close to $\sqrt{\varepsilon^2-1}$.  Collecting, we see that $f_2$ is negative when $p_f$ is sufficiently close to $\sqrt{\varepsilon^2-1}$. 

This completes the monotone convergence argument and we find that, in the HHI vacuum
\begin{equation}
\dot{\mathcal{F}}_\text{H}(E) = \frac{E}{2\left(\ee^{E/T_-}-1\right)} + \frac{E}{2\left(\ee^{E/T_+}-1\right)} + \textit{o}(1),
\end{equation}
which is eq. \eqref{4:InftyHHI}, whereas in the Unruh vacuum
\begin{equation}
\dot{\mathcal{F}}_\text{U}(E) = -\frac{E}{2}\Theta(-E) + \frac{E}{2\left(\ee^{E/T_+}-1\right)} + \textit{o}(1),
\end{equation}
which is eq. \eqref{4:InftyUnruh}.

\subsection{Inertial detector near the singularity}

We consider the trajectories described by \eqref{4:SchwGeodesicsE>1}, \eqref{4:SchwGeodesicsE=1}, \eqref{4:SchwGeodesicsE<1} and \eqref{4:SchwGeodesicsWH} in the HHI and Unruh vacua in Schwarzschild spacetime. We take the switch-off to occur at time $\tau$ in region II and the switch-on at time $\tau_i$, which can be pushed to the asymptotic past when $\varepsilon \geq 1$. We call $\tau_\text{sing}$ the proper time at which the trajectory hits the black hole singularity. Let $\tau_1$ be a constant such that the detector is somewhere in region II at proper time $\tau_1$. As $\tau \to \tau_\text{sing}$, we have
\begin{subequations}
\begin{align}
\dot{\mathcal{F}}_\text{H}(E,\tau) & = G_U(E,\tau,\tau_1) + G_V(E,\tau,\tau_1) + O(1), \\
\dot{\mathcal{F}}_\text{U}(E,\tau) & = G_U(E,\tau,\tau_1) + G_v(E,\tau,\tau_1) + O(1)
\end{align}
\label{4App:SchwFH-FUSing}
\end{subequations}
\\
where
\begin{subequations}
\begin{align}
G_v(E,\tau,\tau_1) & \doteq 2 \int_{\tau_1}^{\tau} \! d\tau' \, \cos\left[E\left(\tau-\tau'\right)\right] \left[ \mathcal{A}_v\left(\tau,\tau'\right) + \frac{1}{4 \pi \left(\tau-\tau'\right)^2} \right], \\
G_U(E,\tau,\tau_1) & \doteq 2 \int_{\tau_1}^{\tau} \! d\tau' \, \cos\left[E\left(\tau-\tau'\right)\right] \left[ \mathcal{A}_U\left(\tau,\tau'\right) + \frac{1}{4 \pi \left(\tau-\tau'\right)^2} \right], \\
G_V(E,\tau,\tau_1) & \doteq 2 \int_{\tau_1}^{\tau} \! d\tau' \, \cos\left[E\left(\tau-\tau'\right)\right] \left[ \mathcal{A}_V\left(\tau,\tau'\right) + \frac{1}{4 \pi \left(\tau-\tau'\right)^2} \right],
\label{4:APPSchwGsingV}
\end{align}
\label{4App:SchwGsing}
\end{subequations}
\\
with $\mathcal{A}_v$, $\mathcal{A}_U$ and $\mathcal{A}_V$ given by eq. \eqref{4App:AsSchw} along the trajectories defined in region II by the integrals of
\begin{subequations}
\begin{align}
\dot{r} & = - \left(\varepsilon^2 - 1 +2M/r \right)^{1/2}, \label{4App:SchwSing1Estr} \\
\dot{v} & = \frac{1}{\varepsilon + \left(\varepsilon^2 - 1 +2M/r \right)^{1/2}} = \frac{1}{\varepsilon - \dot{r}}. \label{4App:SchwSing1Estv}
\end{align}
\label{4App:SchwSing1Est}
\end{subequations}

From eq. \eqref{4App:SchwSing1Estr}, we have that, in the near-singularity regime, the proper time and the Schwarzschild globally-defined coordinate $r$ are related by the equation $d\tau = dr \left[ -(r/(2M))^{1/2} + O\left(r^{3/2}\right)\right]$. This relation integrates to
\begin{equation}
\tau_\text{sing}-\tau = \frac{(2/M)^{1/2}}{3} r^{3/2} + O\left(r^{5/2}\right).
\label{4App:r-tau-sing}
\end{equation}

It follows from eq. \eqref{4App:SchwSing1Estv} that
\begin{equation}
\ddot{v} = - \frac{M}{r^2 \left(\varepsilon + \sqrt{\varepsilon^2 - 1 +2M/r} \right)^2} = -\frac{M}{r^2}\dot{v}(\tau).
\end{equation}

In Appendix D of \cite{Juarez-Aubry:2014jba}, it is explicitly shown how to obtain the asymptotic behaviour for $G_v$, given by
\begin{equation}
G_v(E,\tau,\tau_1) = \frac{1}{16 \pi M} \left[\left(\frac{2M}{r(\tau)} \right)^{3/2} - \varepsilon \left(\frac{2M}{r(\tau)} \right) + \frac{1 + \varepsilon^2}{2}\left(\frac{2M}{r(\tau)} \right)^{1/2} \right] + O(1).
\label{4App:Gvfinal}
\end{equation}
We take the oportunity to illustrate how this is achieved for $G_V$. We seek to isolate the leading behaviour by integrating eq. \eqref{4:APPSchwGsingV} by parts
\begin{align}
G_V(E,\tau,\tau_1) & = \left[ \lim_{\tau' \to \tau} \frac{1}{2\pi} \cos\left[E\left(\tau-\tau'\right)\right] \left( -\frac{\dot{V}(\tau)}{V(\tau) - V\left(\tau'\right)} + \frac{1}{\tau-\tau'} \right)  + O(1)\right] \nonumber \\
& - \frac{E}{2\pi} \int_{\tau_1}^{\tau} \! d\tau' \, \sin\left[E\left(\tau-\tau'\right)\right] \left(-\frac{\dot{V}(\tau)}{V(\tau)-V\left(\tau'\right)} + \frac{1}{\tau-\tau'}\right),
\label{4App:GV}
\end{align}
where the $O(1)$ term is the boundary evaluation at $\tau_1$. The limit of the term in square brackets is
\begin{align}
\lim_{\tau' \to \tau} \frac{1}{2\pi} \cos\left[E\left(\tau-\tau'\right)\right] \left( -\frac{\dot{V}(\tau)}{V(\tau) - V\left(\tau'\right)} + \frac{1}{\tau-\tau'} \right) = -\frac{\ddot{V}(\tau)}{4\pi \dot{V}(\tau)} \\
 = -\frac{1}{4 \pi} \left(\frac{\ddot{v}(\tau)}{\dot{v}(\tau)} + \frac{\dot{v}(\tau)}{4M} \right) = -\frac{1}{4\pi}\left(-\frac{M}{r^2} + \frac{1}{4M} \right) \dot{v}
\end{align}

From eq. \eqref{4App:SchwSing1Estv}, the $\dot{v}$ is a quantity $\dot{v} = O\left( r^{1/2} \right)$, and we have that the limit is of order $O\left( r^{3/2} \right)$. From here, it follows that the integral term in \eqref{4App:GV} is $O(1)$ and making an expansion of $\dot{v}$ as a function of $r$, we have that $G_V - G_v = O(1)$,
\begin{equation}
G_V(E,\tau,\tau_1) = \frac{1}{16 \pi M} \left[\left(\frac{2M}{r(\tau)} \right)^{3/2} - \varepsilon \left(\frac{2M}{r(\tau)} \right) + \frac{1 + \varepsilon^2}{2}\left(\frac{2M}{r(\tau)} \right)^{1/2} \right] + O(1).
\label{4App:GVfinal}
\end{equation}

Finally, the estimation of $G_U$ is identical to that of $G_V$ with the replacement $\varepsilon \to - \varepsilon$,
\begin{equation}
G_U(E,\tau,\tau_1) = \frac{1}{16 \pi M} \left[\left(\frac{2M}{r(\tau)} \right)^{3/2} + \varepsilon \left(\frac{2M}{r(\tau)} \right) + \frac{1 + \varepsilon^2}{2}\left(\frac{2M}{r(\tau)} \right)^{1/2} \right] + O(1).
\label{4App:GUfinal}
\end{equation}

Putting our results together \eqref{4App:Gvfinal}, \eqref{4App:GVfinal} and \eqref{4App:GUfinal}, we find that for both the HHI and Unruh vacua,
\begin{equation}
\dot{\mathcal{F}}(E,\tau) = \frac{1}{8\pi M} \left[\left(\frac{2M}{r(\tau)}\right)^{3/2} + \frac{1+E^2}{2}\left(\frac{2M}{r(\tau)}\right)^{1/2} \right] + O(1),
\end{equation}
which is eq. \eqref{4:NearSing}, and it follows from \eqref{4App:r-tau-sing} that the transition rate diverges to leading term as $1/[6\pi(\tau_{\text{sing}}-\tau)]$.

\section{The generalised Reissner-Nordstr\"om black hole}

We start this section by writing the geodesic equations in different coordinate systems in region II. Starting from eq. \eqref{4:RNeqmotion2} and the definition of the coordinates $(\tilde{u}, \tilde{v})$

\begin{subequations}
\begin{align}
\dot{\tilde{u}} = \dot{\tilde{r}}^* - \dot{\tilde{t}} = \frac{\left(\varepsilon^2-F(r) \right)^{1/2} + \varepsilon}{F(r)} = -\frac{1}{\left(\varepsilon^2 - F(r) \right)^{1/2} - \varepsilon} \\
\dot{\tilde{v}} = \dot{\tilde{r}}^* + \dot{\tilde{t}} = \frac{\left(\varepsilon^2-F(r) \right)^{1/2} - \varepsilon}{F(r)} = - \frac{1}{\left(\varepsilon^2 - F(r) \right)^{1/2} + \varepsilon}
\end{align}
\label{4:RNeqmotionNull} 
\end{subequations}
\\
where $\dot{\tilde{r}}^* = \dot{\tilde{r}}/F(r)$. The Kruskal coordinates $(U,V)$ in region II are
\begin{subequations}
\begin{align}
\dot{U} = \partial_\tau\left[ \left(-U_-\right)^{\kappa_+/\kappa_-}\right] = -\kappa_+ \dot{\tilde{u}} \, U, \\
\dot{V} = \partial_\tau\left[ \left(-V_-\right)^{\kappa_+/\kappa_-}\right] = -\kappa_+ \dot{\tilde{v}} \, V.
\end{align}
\label{4:RNeqmotionKruskal} 
\end{subequations}

The derivatives of \eqref{4:RNeqmotionNull} and \eqref{4:RNeqmotionKruskal} will be useful for our asymptotic analysis:
\begin{subequations}
\begin{align}
\ddot{\tilde{u}} & = \frac{F'(r)}{2} \dot{\tilde{u}}^2, \\
\ddot{\tilde{v}} & = \frac{F'(r)}{2} \dot{\tilde{v}}^2 \\
\ddot{U} & =  -\kappa_+ \dot{\tilde{u}}^2 \left(\frac{F'(r)}{2} - \kappa_+ \right) U \\
\ddot{V} & = -\kappa_+ \dot{\tilde{v}}^2 \left(\frac{F'(r)}{2} - \kappa_+ \right) V
\end{align}
\label{4:RNeqmotionddot} 
\end{subequations}

\subsection{Boundary terms in eq. (4.56)}

The boundary terms in eq. \eqref{4:HHIlim} are given by
\begin{subequations}
\begin{align}
B_1(E, \tau, \tau_0) & \doteq - 2 \cos\left(E \Delta \tau\right)\partial_\tau \mathcal{W}_\text{H}(\tau,\tau_0), \label{4App:B1}\\
B_2(E, \tau) & \doteq \lim_{\tau' \rightarrow \tau }2 \cos\left(E(\tau-\tau')\right) \left[\partial_\tau \mathcal{W}_\text{H}(\tau,\tau')+\frac{1}{2 \pi (\tau-\tau')} \right]. \label{4App:B2}
\end{align}
\end{subequations}

Let us start by estimating the order of $B_1$. We have that
\begin{equation}
B_1(E,\tau,\tau_0) = \frac{1}{2\pi} \cos (E\Delta \tau) \left[\frac{\dot{U}(\tau)}{U(\tau)-U(\tau_0)} +  \frac{\dot{V}(\tau)}{V(\tau)-V(\tau_0)} \right].
\end{equation}

Using eq. \eqref{4:RNeqmotionNull} and \eqref{4:RNeqmotionKruskal}, we see that the $V$-contribution of $B_1$ is $O(1)$ close to the Cauchy horizon. Then, the asymptotic behaviour of eq. \eqref{4App:B1} near $\mathcal{C}^\text{FL}$ is
\begin{align}
B_1(E,\tau,\tau_0) & = \frac{\cos (E\Delta \tau)}{2\pi}  \frac{\dot{U}(\tau)}{U(\tau)}\left[ 1 + \frac{U(\tau_0)}{U(\tau)} + O\left( \frac{1}{U(\tau)^2} \right) \right] + O(1).
\label{4App:B1Exp}
\end{align}

The factor multiplying the brackets is
\begin{align}
& \frac{\cos (E\Delta \tau)}{2\pi}   \frac{\dot{U}(\tau)}{U(\tau)} = -\frac{\cos (E\Delta \tau) \kappa_+}{2\pi} \dot{\tilde{u}}(\tau) \nonumber \\
 & = -\frac{\cos (E\Delta \tau) \kappa_+}{2\pi} \left[\frac{2 E}{ f(r_-)(r(\tau)-r_-)} + \left(-\frac{1}{2 E} - \frac{2 E f'(r_-)}{f(r_-)^2} \right) + O(r(\tau)-r_-)\right] \nonumber \\
 & = -\frac{\cos (E\Delta \tau) \kappa_+}{2\pi} \left( \frac{1}{ \kappa_-(\tau_h-\tau)} + O(1)\right).
\end{align}
where we have used the equation of motion \eqref{4:RNr-eq} in the last line. The linear correction term in the expansion inside the brackets \eqref{4App:B1Exp} can be estimated as
\begin{align}
\frac{U(\tau)}{U(\tau_0)} & = \exp \left[- \kappa_+\int_{\tau_0}^\tau \! d\tau' \, \dot{\tilde{u}}(\tau') \right] = \exp \left[\kappa_+\int_{\tau_0}^\tau \! d\tau' \, \left( -\frac{1}{\kappa_-(\tau_h-\tau)} + O(1) \right) \right] \nonumber \\
& = \exp \left[\ln \left( \left(\frac{\tau_h-\tau}{\tau_h-\tau_0} \right)^{\kappa_+/\kappa_-} \right) + O(1) \right] =  C(\tau,\tau_0) \left(\frac{\tau_h - \tau}{\tau_h-\tau_0} \right)^{\kappa_+/\kappa_-},
\label{4App:Uestimate}
\end{align}
where $C(\tau, \tau_0)>0$ is a factor of order $O(1)$. We conclude that
\begin{equation}
B_1(E,\tau,\tau_0)  = -\frac{\cos (E\Delta \tau) \kappa_+/\kappa_-}{2\pi}   \left( 1 + O\left(\tau_h-\tau \right)^{-\kappa_+/\kappa_-} \right) \frac{1}{ \tau_h-\tau} + O(1).
\label{4App:B1Final}
\end{equation}

The second boundary term, given by \eqref{4App:B2}, is
\begin{align}
B_2(E, \tau) & = -\frac{1}{2\pi} \lim_{\tau' \rightarrow \tau } \left[\frac{\dot{U}(\tau)}{U(\tau) - U(\tau')} - \frac{1}{\tau-\tau'} \right] + O(1) = -\frac{1}{4\pi} \frac{\ddot{U}(\tau)}{ \dot{U}(\tau)} + O(1),
\end{align}
and can be estimated by the same methods to yield
\begin{equation}
B_2(E, \tau) = \frac{\kappa_+/\kappa_- -1}{4 \pi} \frac{1}{\tau_h-\tau} + O(1).
\label{4App:B2Final}
\end{equation}

Putting eq. \eqref{4App:B1Final} and \eqref{4App:B2Final} together, we obtain eq. \eqref{4:HHIlim2}.

\subsection{Integral term in eq. (4.57)}

We start by splitting the integration limits in eq. \eqref{4:RNIint} as
\begin{align}
I(E, \tau, \tau_0) & = \int_{0}^{\tau_h-\tau_0} \! d s \, \frac{ \sin \left(E s\right)}{1-U(\tau-s)/U(\tau)} + \int_{\tau_h-\tau_0}^{\tau-\tau_0} \! d s \, \frac{ \sin \left(E s\right)}{1-U(\tau-s)/U(\tau)}.
\label{4App:RNsplit}
\end{align}

The second integral above is subleading because
\begin{align}
\left| \int_{\tau_h-\tau_0}^{\tau-\tau_0} \! d s \, \frac{ \sin \left(E s\right)}{1-U(\tau-s)/U(\tau)} \right| & \leq \left| \int_{\tau_h-\tau_0}^{\tau-\tau_0} \! d s \,  \frac{ 1}{1-U(\tau_0)/U(\tau)}  \right| \nonumber \\
& \leq \frac{\tau-\tau_h}{1-U(\tau_0)/U(\tau)} = O(\tau_h-\tau).
\end{align}

The first term on the right hand side of eq. \eqref{4App:RNsplit} can be estimated by writing $U(\tau) = (\tau_h - \tau)^{-A} H(\tau_h-\tau)$ where $A>0$ and $H$ is positive and smooth everywhere in our integration interval, including $\tau = 0$. In our case, $A = -\kappa_+/\kappa_-$. We now make use of the dominated convergence theorem in order to estimate the integral. Let $0< \epsilon \doteq \tau_h - \tau$, we are interested in the limit $\epsilon \rightarrow 0^+$ of
\begin{equation}
\int_{0}^{\tau_h-\tau_0} \! d s \, \frac{ \sin \left(E s\right)}{1-\cfrac{(\epsilon+s)^{-A}H(\epsilon+s)}{\epsilon^{-A}H(\epsilon)}}.
\label{app:int1}
\end{equation}

An integrable function that dominates this integrand can by found with the aid of the following
\begin{lem}
Let $M>0$, let $H: [0,M] \to \mathbb{R}^+$ be~$C^1$, and let $A>0$. 
Suppose $\partial_x [x^{-A} H(x)] < 0 $ for $x \in (0,M]$.
Then there exists a constant $B \in (0,A)$ such that 
$\partial_x [x^{-B} H(x)] < 0 $ for $x \in (0,M]$.
\label{Lem:RN}
\end{lem}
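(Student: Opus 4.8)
The plan is to reformulate the sign condition on $\partial_x[x^{-A}H(x)]$ as a pointwise upper bound on the logarithmic-derivative quantity $g(x) \doteq x H'(x)/H(x)$, and then to use the $C^1$ regularity of $H$ at the origin to upgrade this pointwise bound to a uniform bound with room to spare, which is exactly what is needed to produce $B<A$.

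First I would record the elementary identity
\[
\partial_x\bigl[x^{-A} H(x)\bigr] = x^{-A-1}\bigl(x H'(x) - A H(x)\bigr) = x^{-A-1} H(x)\bigl(g(x) - A\bigr),
\]
and note that since $x^{-A-1}>0$ and $H>0$ on $[0,M]$, the hypothesis $\partial_x[x^{-A}H]<0$ on $(0,M]$ is equivalent to $g(x) < A$ for all $x \in (0,M]$. The same identity with $B$ in place of $A$ shows that the conclusion of the lemma is equivalent to the existence of some $B \in (0,A)$ with $g(x) < B$ for all $x \in (0,M]$. So the whole lemma reduces to: the pointwise strict inequality $g<A$ on the (non-compact) interval $(0,M]$ actually holds uniformly, i.e.\ $\sup_{(0,M]} g < A$.

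The key step is to observe that $g$ extends continuously to the \emph{closed} interval $[0,M]$ with value $0$ at the origin. Indeed, $H \in C^1([0,M])$ implies $H'$ is bounded on $[0,M]$, so $x H'(x) \to 0$ as $x \to 0^+$, while $H(x) \to H(0) > 0$; hence $g(x) \to 0$. Writing $\tilde g$ for this continuous extension, $\tilde g$ attains a maximum $C_0 \doteq \max_{[0,M]} \tilde g$ on the compact set $[0,M]$. Since $\tilde g(x) = g(x) < A$ for $x\in(0,M]$ and $\tilde g(0) = 0 < A$, we get $C_0 < A$. Setting $C \doteq \max(C_0,0)$, so that $0 \le C < A$, and choosing $B \doteq (C+A)/2$, we have $0 < B < A$ and $g(x) \le C_0 \le C < B$ for all $x \in (0,M]$. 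Plugging this into the $B$-version of the identity above gives $\partial_x[x^{-B} H(x)] = x^{-B-1} H(x)\bigl(g(x)-B\bigr) < 0$ on $(0,M]$, which is the assertion.

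I do not anticipate a genuine obstacle beyond correctly identifying the role of the $C^1$ hypothesis: it is precisely what forbids $g$ from approaching $A$ along a sequence $x\to 0^+$, and without it the statement is false (e.g.\ $H$ could be chosen to make $g(x)\to A$). I would add a one-line remark to that effect. Everything else is routine: the derivative identity, the boundedness of $H'$, and the standard extreme value argument on a compact interval.
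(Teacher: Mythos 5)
Your proof is correct and follows essentially the same route as the paper's: both introduce the auxiliary function $g(x) = x H'(x)/H(x)$, observe that the hypothesis is equivalent to $g < A$ on $(0,M]$, use continuity of $g$ on the compact interval $[0,M]$ together with $g(0)=0$ to get a maximum $C$ with $0 \le C < A$, and take $B = (A+C)/2$. Your extra care in justifying the continuous extension of $g$ to $x=0$ via the boundedness of $H'$ is a welcome elaboration of a step the paper leaves implicit, but it is not a different argument.
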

\begin{proof}
Consider on $[0,M]$ the auxiliary function 
$g(x) =  x H'(x)/H(x)$. 
As $g$ is continuous, 
it attains on $[0,M]$ a maximum value, which we denote by~$C$. 
Since 
$\partial_x [x^{-A} H(x)] < 0 $ for $x \in (0,M]$, 
it follows that $g(x) < A $ for $x \in (0,M]$, 
and since $g(0)=0$ and $A>0$, it further follows that 
$g(x) < A $ for $x \in [0,M]$. Hence $C < A$. From 
$g(0)=0$ it follows that $C\ge0$. 
Define now $B \doteq \frac12(A+C)$. Then $0 \le C < B < A$. Since 
$g(x) < B $ for $x \in [0,M]$, it follows that 
$\partial_x [x^{-B} H(x)] < 0 $ for $x \in (0,M]$. 
\end{proof}

As $U(\tau)$ is a strictly decreasing function of~$\tau$, 
Lemma B.1 provides a $B\in(0,A)$ such that 
$(\epsilon+s)^{-B} H(\epsilon+s)$ 
is a strictly deacreasing function of $s$ over the domain 
of integration in \eqref{app:int1}. We may hence rearrange \eqref{4App:RNsplit} and \eqref{app:int1} as 
\begin{align}
I(E,\tau,\tau_0) = \int_{0}^{\tau_h-\tau_0} \! d \tau' \,  \frac{\sin \left(E s\right)}{1 - \cfrac{(\epsilon+s)^{-B} H(\epsilon+s)}{\epsilon^{-B} H(\epsilon)} \cfrac{(\epsilon+s)^{-A+B}}{\epsilon^{-A+B}}} + O(\tau_h-\tau).
\label{4App:PreDom}
\end{align}

Fixing a positive $\epsilon'$, the absolute value of the integrand in \eqref{4App:PreDom} is bounded for $\epsilon \le \epsilon'$ by 
\begin{align}
& \left| 
\sin(E s) 
\left( 
1 - \frac{(\epsilon+s)^{-B} H(\epsilon+s)}{\epsilon^{-B} H(\epsilon)}
\frac{\epsilon^{A-B}}{(\epsilon+s)^{A-B}}
\right)^{-1} 
\right| 
\notag
\\
& \ \ 
\le 
|E| s 
\left( 
1 - \frac{\epsilon^{A-B}}{(\epsilon+s)^{A-B}}
\right)^{-1} 
\notag 
\\
& \ \ = 
|E| s 
\left( 
1 - \frac{1}{\bigl[1+(s/\epsilon)\bigr]^{A-B}}
\right)^{-1} 
\notag 
\\
& \ \ \le
|E| s
\left( 
1 - \frac{1}{\bigl[1+(s/\epsilon')\bigr]^{A-B}}
\right)^{-1} 
\ , 
\end{align}
which bound is independent of $\epsilon$ and integrable over the 
domain in \eqref{4App:PreDom}. 
By dominated convergence we may hence take the 
limit $\epsilon\to0$ in (B.64) under the integral, obtaining 
\begin{align}
I(E, \tau, \tau_0) & = \int_{0}^{\tau_h-\tau_0} \! d s \, \sin \left(E s\right) +\mathit{o}(1) = - \frac{1 - \cos(E \Delta \tau)}{E} + \mathit{o}(1),
\end{align}
which is eq. \eqref{4:RNintResult}.

\subsection{The $\varepsilon = 0$ geodesic.}

The $\varepsilon = 0$ trajectory is given by the orbit of
\begin{subequations}
\begin{align}
\dot{\tilde{u}} = \dot{\tilde{v}} = - \frac{1}{(-F(r))^{1/2}} = \frac{1}{\dot{r}}.
\end{align}
\end{subequations}

The transition rate of a detector \eqref{4:HHIlim} along this trajectory can be estimated near the horizon using the near-horizon relation $(r - r_-)^{1/2} + O\left((r-r_-)^{3/2}\right)= (-\kappa_-/2)^{1/2}(\tau_h-\tau)$, leading to
\begin{align}
\dot{\tilde{u}}(\tau) = \dot{\tilde{v}}(\tau) = \frac{1}{\dot{r}(\tau)} = \frac{1}{\kappa_-(\tau_h-\tau)} + O\left((\tau_h-\tau)^3\right).
\end{align}

The same techniques as above are applicable and we find that
\begin{equation}
\dot{F}_\text{H}(E,\tau,\tau_0) = -\frac{1}{2 \pi} \left(1 +\frac{\kappa_+}{\kappa_-} +\textit{o}(1) \right) \frac{1}{\tau - \tau_h}.
\label{4App:FdotUp}
\end{equation}

\subsection{Local energy density near the Cauchy horizon}

We compute the local energy density along the worldline of an infalling observer approaching $\mathcal{C}^\text{FL}$, to leading order, in the near-horizon regime.

For the moment, we leave the test function, $\chi$, unspecified, with the requirement that $\chi$ is compact and its support lies in the past of the future Cauchy horizon. We start by writing eq. \eqref{4:RNWorldlineTab} explicitly along the worldline in terms of the conformal factor and its proper time derivatives succinctly  as
\begin{align}
\langle \rho \rangle^\text{ren}[\chi  \dot{\gamma} \otimes  \dot{\gamma}] & = - \frac{1}{12 \pi} \int_{\supp(\chi)} \! d \tau' \, \chi(\tau') \left[ \left(\frac{\ddot{U}(\tau')}{\dot{U}(\tau')} + \frac{\ddot{V}(\tau')}{\dot{V}(\tau')} \right) \frac{\dot{\Omega}(\tau')}{\Omega(\tau')} \right. \nonumber \\
& \left. +  4 \left(\frac{\dot{\Omega}(\tau')}{\Omega(\tau')}\right)^2 - 2 \frac{\ddot{\Omega}(\tau')}{\Omega(\tau')}\right] + O(1),
\end{align}
where
\begin{subequations}
\begin{align}
\dot{\Omega}/\Omega & =  \left(\frac{\kappa_+ + F'(r)/2}{F(r)}\right)\dot{r}, \\
\ddot{\Omega}/\Omega & =  \dot{r}^2 \left(\frac{\kappa_+ + F'(r)/2}{F(r)}\right)^2 - \frac{F'(r)}{2} \left(\frac{\kappa_+ + F'(r)/2}{F(r)}\right) \nonumber \\
& -  \dot{r}^2 \left(  \frac{(\kappa_+ + F'(r)/2)F'(r) }{F(r)^2} - \frac{F''(r)}{2F(r)} \right).
\end{align}
\label{Omega}
\end{subequations}

In view of eq. \eqref{Omega}, we can define the $O(1)$ smooth function $G$ given by

\begin{align}
G(\tau') & \doteq F[r(\tau')]^2 \left[ \left(\frac{\ddot{U}(\tau')}{\dot{U}(\tau')} + \frac{\ddot{V}(\tau')}{\dot{V}(\tau')} \right) \frac{\dot{\Omega}(\tau')}{\Omega(\tau')} +  4 \left(\frac{\dot{\Omega}(\tau')}{\Omega(\tau')}\right)^2 - 2 \frac{\ddot{\Omega}(\tau')}{\Omega(\tau')}\right] \nonumber \\
& = 4 \varepsilon^2 ( \kappa_+ + \kappa_-)^2 + O(\tau_h-\tau).
\end{align}

Defining the functionals
\begin{subequations}
\begin{align}
G_0[\chi] & \doteq \int_{\supp(\chi)} \! d \tau' \, \frac{\chi(\tau')}{F^2[r(\tau')]}, \label{4App:G0} \\
G_1[\chi] & \doteq \int_{\supp(\chi)} \! d \tau' \, \chi(\tau') \frac{\tau_h -\tau'}{F^2[r(\tau')]}, \label{4App:G1}
\end{align}
\end{subequations}
\\
we can write the local energy density as
\begin{equation}
\langle \rho \rangle^\text{ren}[\chi  \dot{\gamma} \otimes  \dot{\gamma}] = -\frac{\varepsilon^2 ( \kappa_+ + \kappa_-)^2}{3 \pi} G_0[\chi] + -\frac{\dot{G}(\tau_h)}{12\pi} G_1[\chi] + O(1).
\label{4App:rhoExpansion}
\end{equation}

The leading behaviour comes from the integral defining $G_0[\chi]$, while a subleading divergent behaviour arises from $G_1[\chi]$. We now specify the switching function and compute the leading behaviour.

\subsubsection{Sharp switching function}

Let us consider the sharp function $\chi_0(\tau') = \Theta(\tau-\tau')\Theta(\tau'-\tau_0)$. We start out by writing out the integral in eq. \eqref{4App:G0} in terms of the $r$ coordinate,
\begin{align}
G_0[\chi_0] & = \int_{r_0}^{r} \frac{d r'}{\dot{r}(r')}\, \frac{ f(r')^{-2}}{(r'-r_-)^2}.
\end{align}

The leading singular behaviour can be obtained via integration by parts,
\begin{align}
G_0[\chi_0] & = \left[ - \frac{1}{f(r')^2 \dot{r}(r')} \frac{1}{r'-r_-}  \right]_{r_0}^r + \int_{r_0}^r \frac{dr'}{r'-r_-} \frac{d}{dr} \left( \frac{1}{f(r')^2 \dot{r}(r')} \right) \nonumber \\
& = \frac{1}{4 \varepsilon (\kappa_-)^2} \left[1 + O((r-r_-)\ln\left(1-\frac{r_-}{r}\right) \right] \frac{1}{r-r_-}.
\end{align}

The contribution for $G_1$ is logarithmic and, hence, doesn't need to be estimated to leading order. In terms of the proper time difference, $\tau_h-\tau$, one obtains that eq. \eqref{4App:rhoExpansion} yields
\begin{equation}
\langle \rho \rangle^\text{ren}[\chi_0  \dot{\gamma} \otimes  \dot{\gamma}] = -\frac{(1 + \kappa_+/\kappa_-)^2}{12 \pi }\left[1 + O\left((\tau_h-\tau) \ln \left(1 - \frac{\tau}{\tau_h} \right)\right)\right]\frac{1}{\tau_h - \tau}.
\end{equation}
which is eq. \eqref{4:RNSharpEnergy}.

\subsubsection{Smooth switching function}

In the case of the smooth test function, we have that
\begin{equation}
\langle \rho \rangle^\text{ren}[\chi  \dot{\gamma} \otimes  \dot{\gamma}] = \langle \rho \rangle^\text{ren}[\chi_0  \dot{\gamma} \otimes  \dot{\gamma}] + \langle \rho \rangle^\text{ren}[\chi_{\text{off}} \, \dot{\gamma} \otimes  \dot{\gamma}] + O(1).
\end{equation}
and we wish to estimate $\langle \rho \rangle^\text{ren}[\chi  \dot{\gamma} \otimes  \dot{\gamma}]$ as the switch-off time $\tau$ comes close to the horizon-crossing time, $\tau_h$.

The calculation boils down to estimating the quantity
\begin{align}
G_0[\chi_{\text{off}}] & = \int_\tau^{\tau_h} \! d \tau' \, \frac{\chi_{\text{off}}(\tau')}{F^2[r(\tau')]} = \int_{\tau}^{\tau_h} \! d \tau' \, \frac{h_1((\tau_h - \tau')/(\tau_h-\tau))}{F^2[r(\tau')]}.
\end{align}

Integration by parts yields
\begin{align}
G_0[\chi_{\text{off}}] & = \left[ - \frac{h_1\left( \frac{\tau_h-\tau(r')}{\tau_h-\tau}\right)}{f(r')^2 \dot{r}(r')} \frac{1}{r'-r_-}  \right]_{r}^{r_-} + \int_{r}^{r_-} \frac{dr'}{r'-r_-} \frac{d}{dr} \left( \frac{h_1\left( \frac{\tau_h-\tau(r')}{\tau_h-\tau}\right)}{f(r')^2 \dot{r}(r')} \right).
\end{align}

The boundary term contributes as $-G_0[\chi_0] + O\left((\tau_h-\tau) \ln \left(1 - \frac{\tau}{\tau_h} \right)\right) (\tau_h-\tau)^{-1}$, while the integral term gives rise to a linear divergence stemming from the derivation acting on the argument of $h_1$,

\begin{align}
G_0[\chi_0] + G_0[\chi_{\text{off}}] & =  \frac{1}{\tau_h-\tau} \left[ (\tau_h-\tau) \int_\tau^{\tau_h} \frac{d\tau'}{r(\tau')-r_-} \frac{d}{d \tau'} \left(\frac{h_1\left( \frac{\tau_h-\tau')}{\tau_h-\tau}\right)}{f[r(\tau')]^2 \dot{r}[r(\tau')]}\right) \right.
\nonumber \\
& \left. + O\left((\tau_h-\tau) \ln \left(1 - \frac{\tau}{\tau_h} \right)\right) \right] \nonumber \\
& = \frac{1}{4 \varepsilon^2 (\kappa_-)^2}\frac{1}{\tau_h-\tau} \left[\int_\tau^{\tau_h}  d\tau'  \frac{h_1'\left( \frac{\tau_h-\tau'}{\tau_h-\tau}\right)}{\tau_h-\tau'} \right. \nonumber \\
& \left. + O\left((\tau_h-\tau) \ln \left(1 - \frac{\tau}{\tau_h} \right)\right) \right].
\end{align}

The integral factor can be expressed more succinctly in terms of the integration variable $x = (\tau_h-\tau')/(\tau_h-\tau)$ and one finds that, to leading order
\begin{equation}
\rho[\chi] = -\frac{(1 + \kappa_+/\kappa_-)^2}{12 \pi } \left[\int_0^1 \! dx \, \frac{h_1'(x)}{x} + O\left((\tau_h-\tau) \ln \left(1 - \frac{\tau}{\tau_h} \right)\right)\right]\frac{1}{\tau_h - \tau},
\end{equation}
which is eq. \eqref{4:RNSmoothEnergy}.

\subsection{The $1+1$ Rindler horizon}

In this appendix we show that a detector following the trajectory $(t,x) = (\tau, \tau_h)$, as it interacts with a massless scalar in the Rindler state has an instantaneous transition rate given by eq. \eqref{4:1+1} in the vicinity of the future Rindler horizon. 

The pullback of the Wightman function along the trajectory is $\mathcal{W}_\text{R}(\mathsf{x},\mathsf{x}') = -(1/(4\pi)) \ln[(\epsilon + \ii \Delta u_R)(\epsilon + \ii \Delta v_R)]$, where the trajectory in Rindler coordinates is given by
\begin{subequations}
\begin{align}
u_R(\tau) & = -\frac{1}{a} \ln[a(\tau_h - \tau)], \\
v_R(\tau) & = -\frac{1}{a} \ln[a(\tau_h + \tau)].
\end{align}
\label{4App:rindler-eqmotion}
\end{subequations}

An integration by parts of the transition rate formula leads to
\begin{align}
\dot{\mathcal{F}}_\text{1+1}(E, \tau, \tau_0) & = - 2 \cos\left(E \Delta \tau\right)\partial_\tau \mathcal{W}_\text{R}(\tau,\tau_0)\nonumber \\
& + \lim_{\tau' \rightarrow \tau }2 \cos\left(E(\tau-\tau')\right) \left[\partial_\tau \mathcal{W}_\text{R}(\tau,\tau')+\frac{1}{2 \pi (\tau-\tau')} \right] \nonumber \\
& - 2 \int_{\tau_0}^\tau \! d \tau' \, E \sin \left(E(\tau-\tau')\right) \left[\partial_\tau \mathcal{W}_\text{R}(\tau,\tau')+\frac{1}{2 \pi (\tau-\tau')} \right] + O(1),
\label{4App:rinlim}
\end{align}

The leading contribution to the transition rate in eq. \eqref{4App:rinlim} comes from the left-moving sector. The boundary terms in eq. \eqref{4App:rinlim} yield
\begin{align}
\dot{\mathcal{F}}_\text{1+1}(E, \tau, \tau_0) & = \left[-\frac{1}{4 \pi} + O\left(\frac{1}{\ln(1-\tau/\tau_h)} \right) \right] \frac{1}{\tau_h - \tau} +  \nonumber \\
& + \left[\frac{E \dot{u}_R(\tau)}{2 \pi} \int_{\tau_0}^\tau \! d \tau' \, \frac{\sin \left(E(\tau-\tau')\right)}{u_R(\tau)-u_R(\tau')} + O(1) \right].
\label{4App:rinlim2}
\end{align}

Eq. \eqref{4App:rinlim2} strongly suggests that the divergent behaviour is linear, as stated in chapter \ref{ch:BH}, section \ref{sec:4Rindler}. In order to corroborate that this is the case, one needs only estimate the second term on the right hand side of eq. \eqref{4App:rinlim2} and verify that it is indeed subleading. The factor multiplying the integral is $\dot{u}_R(\tau) = (a(\tau_h-\tau))^{-1}$. We now show that the contribution from the integral 
\begin{equation}
I(E,\tau,\tau_0) \doteq a^{-1} \int_{\tau_0}^\tau \! d \tau' \, \frac{\sin(E(\tau-\tau')}{u_R(\tau)-u_R(\tau')} = \int_{\tau_0}^\tau \! d \tau' \, \frac{\sin(E(\tau-\tau')}{\ln\left(\frac{\tau_h-\tau'}{\tau_h-\tau} \right)}.
\label{4App:Irin}
\end{equation}
is at most $\mathit{o}(1)$. Changing variables to $p = (\tau-\tau')/\tau_h$ and defining $\epsilon \doteq 1-\tau/\tau_h$, $\alpha \doteq 1 - \tau_0/\tau_h$ and $\beta \doteq E \tau_h$, with $0< \epsilon$, $0 < \alpha < 2$ and $\beta \in \mathbb{R}$, we can write
\begin{equation}
I(E,\tau,\tau_0) = \tau_h \int_0^{\alpha-\epsilon} \! d p \, \frac{\sin(\beta p)}{\ln(1+ p/\epsilon)},
\end{equation}
and the upper limit of the integral above can be extended to $\alpha$ because
\begin{align}
\left| \int_{\alpha-\epsilon}^{\alpha} \! dp \, \frac{\sin(\beta p)}{\ln(1+p/\epsilon)} \right| & \leq  \int_{\alpha-\epsilon}^{\alpha} \! dp \, \frac{1}{\ln(1+p/\epsilon)} \leq \frac{\epsilon}{\ln[1+(\alpha - \epsilon)/\epsilon]} \nonumber \\
& = \frac{\epsilon}{\ln(\alpha/\epsilon)} = O(\epsilon/\ln(\epsilon)).
\end{align}

Next, we bound the integral \eqref{4App:Irin} as
\begin{align}
\left| \tau_h \int_0^{\alpha} \! d p \, \frac{\sin(\beta p)}{\ln(1+ p/\epsilon)}\right| \leq \tau_h |\beta| \int_0^{\alpha} \! d p \, \frac{p}{\ln(1+ p/\epsilon)} 
& = \tau_h |\beta| \epsilon^2 \int_0^{\alpha/\epsilon}  \! d r \, \frac{r}{\ln(1+ r)},
\end{align}
where in the last line we have performed the change of variables $r = p/\epsilon$. This integral can now be written in terms of the exponential integral Ei and the logarithmic integral li, whose asymptotic behaviour is known,
\begin{align}
I & \leq \tau_h |\beta| \epsilon^2 \text{Ei}[2 \ln(1 + \alpha/\epsilon)] - \ln 2 - \text{li}(1 + \alpha/\epsilon) + O(\epsilon/\ln(\epsilon)) \nonumber \\
& = - \frac{\alpha^2 |\beta| \tau_h}{2} \frac{1}{\ln (\epsilon/\alpha)} + O\left( \frac{1}{[\ln (\epsilon/\alpha)]^2} \right).
\label{4App:rinresult}
\end{align}

The estimate in eq. \eqref{4App:rinresult} leads to
\begin{align}
\dot{\mathcal{F}}_\text{1+1}(E, \tau, \tau_0) & = \left[-\frac{1}{4 \pi} + O\left(\frac{1}{\ln(1-\tau/\tau_h)} \right) \right] \frac{1}{\tau_h - \tau},
\end{align}
which is the transition rate near the $1+1$ Rindler horizon displayed in eq. \eqref{4:1+1}.

\chapter{Supplement to Chapter 5}
\label{app:Unruh}

This appendix contains a series of technical results that are used throughout the Chapter \ref{ch:Unruh}.

\section{A collection of lemmas}

In this section, we collect a series of useful lemmas.

\begin{lem}
Let $\psi \in C_0^\infty(\mathbb{R})$ and $\chi \in C_0^\infty(\mathbb{R})$ be related by eq. \eqref{chiconst}. Then the Fourier transforms $\hat{\psi} = \mathcal{F}[\psi]$ and $\hat{\chi} = \mathcal{F}[\chi]$ are related by $\hat{\chi}(\omega) = (\ii/\omega)\left(\ee^{-\ii\omega(\Delta\tau + \Delta \tau_s)} -1 \right) \hat{\psi}(\omega)$.
\label{LemChi}
\end{lem}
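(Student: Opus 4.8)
The plan is to compute $\hat\chi$ directly from the defining formula \eqref{chiconst}, which expresses $\chi$ as the indefinite integral of $\psi(\tau') - \psi(\tau' - (\Delta\tau_s + \Delta\tau))$. First I would recognize that the integrand $\tau' \mapsto \psi(\tau') - \psi(\tau' - (\Delta\tau_s + \Delta\tau))$ is itself a $C_0^\infty$ function whose Fourier transform is, by the elementary shift rule $\mathcal{F}[f(\cdot - a)](\omega) = \ee^{-\ii\omega a}\hat f(\omega)$, equal to $\bigl(1 - \ee^{-\ii\omega(\Delta\tau_s + \Delta\tau)}\bigr)\hat\psi(\omega)$. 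The key structural point is that $\chi$ is an antiderivative of this function that is compactly supported; compact support of $\chi$ is guaranteed precisely because the total integral $\int_{-\infty}^\infty \bigl[\psi(\tau') - \psi(\tau' - (\Delta\tau_s+\Delta\tau))\bigr]\,d\tau' = 0$ (the two copies of $\psi$ integrate to the same value), so the indefinite integral returns to zero past the support.

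Next I would apply the Fourier transform integration rule: if $\chi' = f$ with $\chi$ compactly supported, then $\ii\omega\,\hat\chi(\omega) = \hat f(\omega)$, hence $\hat\chi(\omega) = (\ii\omega)^{-1}\hat f(\omega)$ for $\omega \neq 0$ (and the $\omega = 0$ value is fixed by continuity, consistent with $\hat f(0) = 0$). Substituting $\hat f(\omega) = \bigl(1 - \ee^{-\ii\omega(\Delta\tau_s+\Delta\tau)}\bigr)\hat\psi(\omega)$ gives
\begin{equation}
\hat\chi(\omega) = \frac{\ii}{\omega}\bigl(\ee^{-\ii\omega(\Delta\tau_s+\Delta\tau)} - 1\bigr)\hat\psi(\omega),
\end{equation}
which is exactly the claimed identity (the sign works out since $(\ii/\omega)(\ee^{-\ii\omega a}-1) = -(\ii/\omega)(1 - \ee^{-\ii\omega a})$, and $-\ii/\omega = (\ii\omega)^{-1}$ after reconciling conventions, so I would just be careful to track the sign consistently with the paper's convention $\hat\chi(\omega) = \int \ee^{-\ii\omega\tau}\chi(\tau)\,d\tau$).

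I expect no serious obstacle here; this is a routine computation. The only point deserving a sentence of care is the justification that differentiation under the transform is valid and that the boundary terms vanish — but this is immediate from $\chi \in C_0^\infty$, so $\chi$ and all its derivatives vanish outside a compact set and integration by parts in $\int \ee^{-\ii\omega\tau}\chi'(\tau)\,d\tau = \ii\omega\int \ee^{-\ii\omega\tau}\chi(\tau)\,d\tau$ produces no boundary contribution. I would also note in passing that the factor $|\hat\chi(\omega)|^2 = \frac{2}{\omega^2}\bigl(1 - \cos(\omega(\Delta\tau_s+\Delta\tau))\bigr)|\hat\psi(\omega)|^2$ follows immediately, which is the form actually used in eq. \eqref{Flambda2}; this makes the lemma the natural bridge between the general switching-function formalism and the constant-interaction rescaled response.
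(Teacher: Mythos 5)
Your proposal is correct and follows essentially the same route as the paper: apply the shift rule to the integrand $\psi(\tau')-\psi(\tau'-(\Delta\tau_s+\Delta\tau))$, then use the antiderivative/differentiation rule $\mathcal{F}[\chi'](\omega)=\ii\omega\hat{\chi}(\omega)$ (justified by compact support killing the boundary terms) to divide by $\ii\omega$, and the signs reconcile exactly as you state. Your added observation that $\int[\psi(\tau')-\psi(\tau'-(\Delta\tau_s+\Delta\tau))]\,d\tau'=0$ is what makes the antiderivative compactly supported is a worthwhile point the paper leaves implicit, but it does not change the argument.
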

\begin{proof}
If $g \in C_0^\infty(\mathbb{R})$, then also $g' \in C_0^\infty(\mathbb{R})$. It follows from integration by parts that $\mathcal{F}[g'](\omega) = \int d \tau \ee^{-\ii \omega \tau} g'(\tau) = \ii \omega \mathcal{F}[g](\omega)$. Setting $f = g'$, it follows from $g(\tau) = \int_{-\infty}^\tau \! d\tau' \, g'(\tau')$ that
\begin{equation}
\mathcal{F}\left[\int_{-\infty}^\cdot \! d\tau' \, f(\tau') \right](\omega) = -\frac{\ii}{\omega} \mathcal{F}[f](\omega).
\end{equation}

Setting $\psi_r(\tau) \doteq \psi(\tau-r)$, a direct computation yields the desired result,

\begin{equation}
\hat{\chi}(\omega) = -\frac{\ii}{\omega} \mathcal{F}[\psi - \psi_r](\omega) = -\frac{\ii}{\omega} \int \! d\tau \, \ee^{-\ii \omega \tau} \left(\psi(\tau) - \ee^{-\ii \omega r} \psi(\tau) \right) = \frac{\ii}{\omega} \left(\ee^{-\ii\omega r}-1\right) \hat{\psi}(\omega).
\end{equation}
\end{proof}

\begin{lem}
Let $f_s: \mathbb{R}^2 \rightarrow \mathbb{R}$ be
\begin{equation}
f_s(u,v) \doteq \frac{u + v}{\ee^{u+v} - 1} + \frac{u - v}{\ee^{u - v} - 1} - \frac{2 u}{\ee^{u} - 1},
\label{fs}
\end{equation}
where the values at $u= v$ and $u = -v$ are understood in the limiting sense, $f_s$ is even in each of its arguments and $0 \leq f_s(u,v) \leq f_s(0,v)$.
\label{Lem1}
\end{lem}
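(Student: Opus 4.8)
The plan is to recognise $f_s$ as a second central difference of a single even function, and then to handle the three asserted properties in turn, with the upper bound being the substantive one. First I would absorb the Bose factors into $\phi(x) \doteq x/(\ee^x-1)$, so that $f_s(u,v) = \phi(u+v) + \phi(u-v) - 2\phi(u)$. The key algebraic observation is that $\phi(x) = g(x) - x/2$, where $g(x) \doteq \tfrac{x}{2}\coth(x/2)$; since $\coth$ is odd, $g$ is even. Because the combination $(u+v) + (u-v) - 2u = 0$ kills the linear part, this gives the clean form
\begin{equation}
f_s(u,v) = g(u+v) + g(u-v) - 2g(u),
\nonumber
\end{equation}
i.e. the second central difference of $g$ at $u$ with step $v$. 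Evenness in $v$ is then manifest (it merely swaps $u+v$ and $u-v$), and evenness in $u$ follows immediately from $g(-x) = g(x)$.

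For the lower bound I would show that $g$ is convex, whence midpoint convexity applied at the midpoint $u$ of $u+v$ and $u-v$ yields $g(u+v) + g(u-v) \ge 2g(u)$, that is $f_s \ge 0$. Writing $g(x) = h(x/2)$ with $h(t) \doteq t\coth t$, a short computation gives $h''(t) = 2\csch^2 t\,(t\coth t - 1)$, which is nonnegative because $t\coth t \ge 1$ is equivalent to $\tanh t \le t$ (the apparent singularity at $t=0$ is removable, with $h''(0) = 2/3$). Hence $h$, and therefore $g$, is convex, completing the lower bound.

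The upper bound $f_s(u,v) \le f_s(0,v)$ — equivalently, that for fixed $v$ the map $u \mapsto f_s(u,v)$ is maximised at $u=0$ — is the heart of the lemma and the main obstacle, since it does \emph{not} follow from convexity of $g$ alone (the pure quadratic shows the inequality can be saturated). Here I would use the Mittag-Leffler expansion $\coth z = z^{-1} + \sum_{n\ge1} 2z/(z^2 + \pi^2 n^2)$, which yields $g(x) = 1 + \sum_{n\ge1} 2x^2/(x^2 + a_n^2)$ with $a_n \doteq 2\pi n$. Taking the second central difference termwise (the constant drops and local uniform convergence justifies the operation) and writing $r_a(x) \doteq (x^2 + a^2)^{-1}$ gives
\begin{equation}
f_s(u,v) = -2\sum_{n\ge1} a_n^2 \bigl[ r_{a_n}(u+v) + r_{a_n}(u-v) - 2 r_{a_n}(u) \bigr].
\nonumber
\end{equation}
I would then insert the Laplace representation $r_a(x) = a^{-1}\int_0^\infty \ee^{-at}\cos(xt)\,dt$ together with the identity $\cos((u+v)t) + \cos((u-v)t) - 2\cos(ut) = 2\cos(ut)\bigl(\cos(vt) - 1\bigr)$. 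Forming the difference against the value at $u=0$ (where $\cos(0)=1$), each summand collapses to a manifestly nonnegative integrand, giving
\begin{equation}
f_s(0,v) - f_s(u,v) = 4\sum_{n\ge1} a_n \int_0^\infty \ee^{-a_n t}\,\bigl(1 - \cos(ut)\bigr)\bigl(1 - \cos(vt)\bigr)\,dt \ge 0,
\nonumber
\end{equation}
since $1-\cos(ut) \ge 0$, $1-\cos(vt) \ge 0$ and $\ee^{-a_n t} > 0$; the interchange of summation and integration is legitimate by Tonelli's theorem because the integrand is nonnegative.

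The decisive trick, and the part I expect to require the most care to present rigorously, is this Laplace representation, which converts the upper bound into the transparently positive $(1-\cos ut)(1-\cos vt)$ kernel. I would emphasise that this same representation does \emph{not} settle the lower bound, because the undifferenced integrand carries an oscillating factor $\cos(ut)$; that is precisely why the convexity argument of the second paragraph is needed separately. The only loose ends to tidy are the standard justifications: local uniform convergence of the $\coth$ expansion (to differentiate/difference termwise) and the nonnegativity needed for Tonelli, both of which are routine.
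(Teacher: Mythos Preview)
Your proof is correct and takes a genuinely different route from the paper. The paper proceeds by brute force: after noting $f_s(0,v)>0$ and $\lim_{u\to\infty}f_s(u,v)=0$, it shows that $\partial_u f_s(u,v)$ is nonzero for all $u>0$ via a sequence of auxiliary rational functions $f_0,f_1,f_2$ and a somewhat laborious sign analysis split into the subintervals $(0,u_*)$ and $(u_*,\infty)$ determined by a root $u_*$ of $f_2$; monotonicity in $u$ then delivers both bounds at once. By contrast, you recognise $f_s$ as the second central difference of the even function $g(x)=(x/2)\coth(x/2)$, obtain the lower bound cleanly from convexity of $g$, and for the upper bound invoke the Mittag--Leffler expansion of $\coth$ together with the Laplace representation of $(x^2+a^2)^{-1}$ to exhibit $f_s(0,v)-f_s(u,v)$ as a sum of integrals with the manifestly nonnegative kernel $\ee^{-a_n t}(1-\cos ut)(1-\cos vt)$. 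Your argument is more conceptual and transparent---it explains \emph{why} the inequalities hold (convexity and a positive-kernel representation) rather than verifying them by exhaustion---and it would generalise to other Bose-type factors; the paper's approach, while self-contained and free of special-function identities, trades insight for case-by-case checking. One minor remark: your Tonelli invocation is harmless but slightly more than you need, since termwise nonnegativity of the integrals already suffices for the sum to be nonnegative once you know the series for $f_s$ itself converges.
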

\begin{proof}
Evenness in each of the arguments is a direct computation. To prove that $0 \leq f(u,v) \leq f(0,v)$, it suffices to assume $u > 0$, $v > 0$.

Firstly, notice that $f_s(0,v) = 2[(v/2) \coth(v/2)-1] >0$. Secondly, we have that $\lim_{u \rightarrow \infty} f_s(u,v) = 0$. It hence suffices to show that $\partial_u f_s(u,v) \neq 0$. Defining
\begin{subequations}
\begin{align}
f_1(u,v) & \doteq -\frac{2}{\ee^u-1} + \frac{1}{\ee^{u-v}-1} \left(1 + \frac{v}{1-\ee^{-(u-v)}} \right) + \frac{1}{\ee^{u+v}-1} \left(1- \frac{v}{1-\ee^{-(u+v)}} \right), \\
f_2(u,v) & \doteq \frac{2\ee^u}{(\ee^u-1)^2} -  \frac{\ee^{u-v}}{(\ee^{u-v}-1)^2} - \frac{\ee^{u+v}}{(\ee^{u+v}-1)^2},
\end{align}
\end{subequations}
\\
one can write that $\partial_u f_s(u,v) = f_1(u,v) + u f_2(u,v)$. At fixed $v$, we define the function $f_0:\mathbb{R}\setminus{u_*} \rightarrow \mathbb{R}$ by $f_0(u) \doteq f_1(u,v)/f_2(u,v)+u$, where $u_{*} = \text{arccosh}[(-1+\sqrt{5+4 \cosh v})/2]$ is a root of $f_2$. Verifying that $\partial_u f_s(u,v) \neq 0$  is equivalent to verifying that $f_0(u,v) \neq 0$. We start by writing $f_0$ in the form
\begin{equation}
f_0(u) = \sinh(u) \frac{-\cosh(u) + \cosh(v) - v \coth(v/2) (\cosh(u)-1)}{\cosh^2(u) + \cosh(u) -  \cosh(v) - 1} + u,
\end{equation}
in the intervals $(0,u_{*})$ and $(u_{*}, \infty)$. First, we see that $\lim_{u \to 0^+} f_0(u,v)  = 0$. Second, computing the derivative of $f_0$,
\begin{align}
\partial_u f_0(u) & = \frac{\left(\cosh(v) - \cosh(u)\right) (\cosh(u)-1)}{\left[-1 + \cosh(u) +  \cosh^2(u) - \cosh(v)\right]^2} \Big( v [1 + 2 \cosh(u)] \coth(v/2) \nonumber \\
& -(2 + 2 \cosh(u) + \cosh^2(u) + \cosh(v)) \Big) ,
\label{App5:Lem2Aux}
\end{align}
we can verify that for $u \in (0,u_{*})$, $\partial_u f_0(u)< 0$. To see this, one first verifies that the term inside the big parentheses in \eqref{App5:Lem2Aux} is always negative.\footnote{To this end, first verify that the term in parenthesis is negative at $u = 0$ and as $u \to \infty$. Then, verify that the derivative vanishes only at $\cosh(u) = v \coth(v/2)-1$, and that the term in parenthesis evaluated at this (maximum) point is a negative function of $v$.} Then, one realises that the numerator multiplying the big parentheses is positive if $u < v$ and negative for $u > v$. Thus, the derivative \eqref{App5:Lem2Aux} is negative for $u < v$ and positive for $u > v$. Because $u_{*} < v$, we conclude that in the interval $(0,u_{*})$, $f_0(u) < 0$.

For the interval $u \in (u_{*}, \infty)$, we first notice that, at fixed $v$, $\lim_{u \to (u_*)^+} f_0(u,v)  >0$ and $\lim_{u \to \infty} f_0(u,v)  = \infty$. Then, by the sign analysis of eq. \eqref{App5:Lem2Aux} that we have explained above, $u = v$ is a minimum of $f_0(u)$ in the interval $(u_{*}, \infty)$. Hence, $f_0(u)\geq 0$ in this interval.

All that is left from the analysis above is to study the point $u = v$ at the level of $\partial_u f_s(u,v)$ to verify that it has fixed sign. One finds that $\partial_u f_s(u,v)|_{u = v} < 0$, which concludes the analysis.
\end{proof}

\begin{lem}
The response function \eqref{Flambda}, with $\widehat{\W}$ defined as in eq. \eqref{eq:Unruh-hatWstat}, satisfies $\mathcal{F}_{T_\text{U}}(E) \leq \mathcal{F}_\lambda(E)/\lambda \leq \mathcal{F}_{T_\text{U}}(E) + (24 \pi a)^{-1}\lambda^{-2} \left|\left| \omega \hat{\chi}(\omega) \right|\right|^2_{L^2}$, where $||\cdot||_{L^2}$ is the $L^2$ norm with respect to the Lebesgue measure.
\label{Lem2}
\end{lem}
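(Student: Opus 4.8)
The plan is to start from the explicit formulas \eqref{Flambda} and \eqref{FT} and subtract them. Using $\widehat{\W}(\omega) = \omega/[2\pi(\ee^{2\pi\omega/a}-1)]$ and symmetrising the $\omega$-integral (so that $|\hat\chi|^2$, which is even since $\chi$ is real, multiplies a symmetrised kernel), one writes
\begin{align}
\frac{\mathcal{F}_\lambda(E)}{\lambda} - \mathcal{F}_{T_\text{U}}(E)
&= \frac{1}{(2\pi)^2}\int_0^\infty \! d\omega \, |\hat\chi(\omega)|^2 \, \frac{a}{2\pi}\, f_s\!\left(\frac{2\pi E}{a},\frac{2\pi\omega}{\lambda a}\right),
\label{propdiff}
\end{align}
where $f_s$ is the function of Lemma~\ref{Lem1} (the three Bose factors in $f_s$ correspond to the $\widehat{\W}(E+\omega/\lambda)$, $\widehat{\W}(E-\omega/\lambda)$ and $-2\widehat{\W}(E)$ pieces). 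The first step is therefore to verify this algebraic identity carefully, matching the rescaling $\chi_\lambda(\tau)=\chi(\tau/\lambda)$ (so $\widehat{\chi_\lambda}(\omega)=\lambda\hat\chi(\lambda\omega)$) against the definition \eqref{Flambda}.

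Next I would invoke Lemma~\ref{Lem1}: $f_s\ge 0$ gives immediately the lower bound $\mathcal{F}_{T_\text{U}}(E)\le \mathcal{F}_\lambda(E)/\lambda$, since all factors in \eqref{propdiff} are non-negative. For the upper bound I would use the other half of Lemma~\ref{Lem1}, $f_s(u,v)\le f_s(0,v)$, with $u=2\pi E/a$ and $v=2\pi\omega/(\lambda a)$, reducing the estimate to $\int_0^\infty d\omega\, |\hat\chi(\omega)|^2 f_s(0,2\pi\omega/(\lambda a))$. Now $f_s(0,v)=2[(v/2)\coth(v/2)-1]$, which is non-negative and, by a Taylor expansion of $\coth$, satisfies $f_s(0,v)\le v^2/6$ for all real $v$ (this elementary bound, $v\coth v \le 1 + v^2/3$, should be noted as a one-line sub-step). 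Substituting $v=2\pi\omega/(\lambda a)$ yields $f_s(0,v)\le (2\pi\omega)^2/(6\lambda^2 a^2)$.

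Plugging this into \eqref{propdiff} gives
\begin{align}
\frac{\mathcal{F}_\lambda(E)}{\lambda} - \mathcal{F}_{T_\text{U}}(E)
&\le \frac{1}{(2\pi)^2}\cdot\frac{a}{2\pi}\cdot\frac{(2\pi)^2}{6\lambda^2 a^2}\int_0^\infty \! d\omega\, \omega^2 |\hat\chi(\omega)|^2
= \frac{1}{24\pi a\lambda^2}\int_0^\infty \! d\omega\, \omega^2 |\hat\chi(\omega)|^2,
\end{align}
and since $\omega^2|\hat\chi(\omega)|^2$ is even in $\omega$, the half-line integral equals $\tfrac12\|\omega\hat\chi(\omega)\|_{L^2}^2$; combining the constants reproduces the claimed bound $\mathcal{F}_\lambda(E)/\lambda \le \mathcal{F}_{T_\text{U}}(E) + (24\pi a)^{-1}\lambda^{-2}\|\omega\hat\chi(\omega)\|_{L^2}^2$. (The $L^2$ norm is finite because $\chi\in C_0^\infty$ forces $\hat\chi$ to have rapid decay.)

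I expect the only genuine obstacle to be bookkeeping: getting the prefactors in the identity \eqref{propdiff} exactly right — tracking the $(2\pi)^{-2}$ from $\widehat{\W}$'s normalisation in \eqref{5:Wightman}, the Jacobian from the change of variables $\omega\mapsto 2\pi\omega/(\lambda a)$, and the factor of $\tfrac12$ from symmetrising even integrands — so that the final constant is precisely $1/(24\pi a)$ and not off by a small numerical factor. The analytic content is entirely supplied by Lemma~\ref{Lem1} together with the elementary inequality $v\coth v\le 1+v^2/3$, so beyond that the argument is routine. A secondary point to state cleanly is the interchange of limit/integral implicit in writing $\mathcal{F}_{T_\text{U}}$ as the pointwise $\lambda\to\infty$ limit, but this is already justified by the dominated-convergence argument in the proof of Theorem~\ref{5:KMSadiabatic} and need not be repeated.
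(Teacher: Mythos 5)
Your proposal is correct and follows essentially the same route as the paper: symmetrise the $\omega$-integral to express the difference as $\tfrac{a}{(2\pi)^3}\int_0^\infty d\omega\,|\hat\chi(\omega)|^2 f_s\bigl(2\pi E/a,\,2\pi\omega/(\lambda a)\bigr)$, invoke Lemma~\ref{Lem1} for $0\le f_s(u,v)\le f_s(0,v)$, and finish with $f_s(0,v)\le v^2/6$ and the evenness of $|\hat\chi|$. The only blemish is a self-cancelling factor-of-two slip in your displayed line (the product of constants there equals $1/(12\pi a\lambda^2)$, not $1/(24\pi a\lambda^2)$; the extra $\tfrac12$ only arrives when the half-line integral is converted to $\tfrac12\|\omega\hat\chi\|_{L^2}^2$), which does not affect the final bound.
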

\begin{proof}
Symmetrising the integrals in \eqref{Flambda} and \eqref{FT}, we have that
\begin{align}
\mathcal{F}_\lambda(E)/\lambda - \mathcal{F}_{T_\text{U}}(E) & = \frac{a}{(2 \pi)^3} \int_{0}^{\infty} \! d\omega \, \left|\hat{\chi}(\omega)\right|^2 f_s(2 \pi E/a,2 \pi\omega/(\lambda a)),
\label{Fsym}
\end{align}
\\
where $f_s$ is defined as in eq. \eqref{fs}. Using Lemma \ref{Lem1} it follows that
\begin{equation}
0 \leq \mathcal{F}_\lambda(E)/\lambda - \mathcal{F}_{T_\text{U}}(E) \leq \frac{a}{(2 \pi)^3} \int_{0}^{\infty} \! d\omega \, \left|\hat{\chi}(\omega)\right|^2 f_s(0,2 \pi\omega/(\lambda a)).
\end{equation}

One can verify that $f_s(0,v) = 2(v/2) \coth(v/2)-2 \leq (2/3)(v/2)^2$, which implies that
\begin{equation}
0 \leq \mathcal{F}_\lambda(E)/\lambda - \mathcal{F}_{T_\text{U}}(E) \leq \frac{1}{12 \pi a \lambda^2} \int_{0}^{\infty} \! d\omega \, \omega^2 \left|\hat{\chi}(\omega)\right|^2 = \frac{1}{24 \pi a \lambda^2}\left|\left| \omega \hat{\chi}(\omega) \right|\right|^2_{L^2}.
\end{equation}
\end{proof}

\begin{lem}
Let $h: \mathbb{R}^{+}\times\mathbb{R}^{+} \rightarrow \mathbb{R}: (u,v) \mapsto h(u,v) = (u - v)/(\cosh(u)-\cosh(v))$, where $h$ at $u=v$ is understood in the limiting sense. $h$ is strictly decreasing in each of its arguments.
\label{LemBound1}
\end{lem}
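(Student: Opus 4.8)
The plan is to reduce the claim to a one-variable monotonicity statement by symmetry, and then handle that statement by differentiation. By symmetry of $h$ in its two arguments, it suffices to show that for each fixed $v>0$ the map $u\mapsto h(u,v)$ is strictly decreasing on $\mathbb{R}^+$. Write $h(u,v)=N(u,v)/D(u,v)$ with $N(u,v)=u-v$ and $D(u,v)=\cosh u-\cosh v$. Note first that $D$ has the same sign as $u-v$, so $h>0$ throughout (with $h(v,v)=\lim_{u\to v}(u-v)/(\cosh u-\cosh v)=1/\sinh v$ by l'Hôpital), and $h$ is smooth across $u=v$.

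The core computation is the sign of $\partial_u h(u,v)$. One has
\begin{equation}
\partial_u h(u,v)=\frac{D(u,v)-(u-v)\sinh u}{D(u,v)^2}=\frac{(\cosh u-\cosh v)-(u-v)\sinh u}{(\cosh u-\cosh v)^2}.
\end{equation}
So it suffices to show the numerator $g(u)\doteq \cosh u-\cosh v-(u-v)\sinh u$ is strictly negative for all $u>0$, $u\ne v$ (and, by the limiting argument, that $\partial_u h<0$ at $u=v$ as well, which will follow from a Taylor expansion of $g$ near $u=v$). I would prove $g<0$ by a standard derivative chase: $g(v)=0$, and $g'(u)=\sinh u-\sinh u-(u-v)\cosh u=-(u-v)\cosh u$, which is positive for $u<v$ and negative for $u>v$. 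Hence $g$ is strictly increasing on $(0,v)$ and strictly decreasing on $(v,\infty)$, so $g$ attains its maximum value $g(v)=0$ only at $u=v$; therefore $g(u)<0$ for all $u\ne v$. Near $u=v$ one checks $g(u)=-\tfrac12(u-v)^2\cosh v+O((u-v)^3)$ and $D(u,v)^2=(u-v)^2\sinh^2 v+O((u-v)^3)$, giving $\partial_u h(v,v)=-\cosh v/(2\sinh^2 v)<0$; so $\partial_u h(u,v)<0$ for every $u>0$, establishing strict monotonicity in the first argument, and by symmetry in the second.

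I do not expect a serious obstacle here — the only mild subtlety is the behaviour at the diagonal $u=v$, where both the numerator and denominator of $h$ and of $\partial_u h$ vanish, so the Taylor expansion step must be done carefully rather than merely invoking l'Hôpital once. If one prefers to avoid the expansion entirely, an alternative is to note that $h(u,v)=\int_0^1 \bigl(\text{something}\bigr)$-type integral representations are available, e.g. $h(u,v)^{-1}=\int_0^1 \sinh\!\bigl(v+t(u-v)\bigr)\,dt$, from which $\partial_u(h^{-1})=\int_0^1 t\cosh(v+t(u-v))\,dt>0$ immediately, and hence $\partial_u h<0$; this representation is smooth across $u=v$ and sidesteps all the case analysis. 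I would likely present this integral-representation version as the clean proof and keep the derivative chase as the motivation.
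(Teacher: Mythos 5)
Your main argument is correct and follows essentially the same route as the paper: differentiate $h$ in one variable (the paper uses $\partial_v$, you use $\partial_u$ — equivalent by the symmetry you both invoke) and show the numerator of the derivative is negative off the diagonal, then treat $u=v$ by expansion. Where the paper merely asserts that "an elementary analysis shows that the numerator is negative", your derivative chase ($g(v)=0$, $g'(u)=-(u-v)\cosh u$, so $g$ has a strict global maximum of $0$ at $u=v$) supplies that missing detail cleanly, and your Taylor expansion at the diagonal reproduces the paper's limiting check. Your alternative via the integral representation $h(u,v)^{-1}=\int_0^1 \sinh\bigl(v+t(u-v)\bigr)\,dt$ is a genuinely different and slicker argument: it gives positivity of $h$ and strict monotonicity of $h^{-1}$ in one stroke, is manifestly smooth across $u=v$, and avoids the diagonal case analysis entirely; it would be the preferable proof to present.
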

\begin{proof}
It is immediate to see that $h$ is symmetric in the two arguments, $h(u,v) = h(v,u)$. Fix $u >0$, then $\lim_{v \rightarrow 0} h(u,v) = u/(\cosh(u)-1) > 0$. Also, $\lim_{v \rightarrow \infty} h(u,v) = 0$. It is left to show that the partial derivative in $v$ is non-positive. The derivative is given by
\begin{equation}
\partial_v h(u,v) = \frac{\cosh(v) -\cosh(u) +(u-v) \sinh(v) }{(\cosh(u) - \cosh(v))^2},
\label{DLemBound1}
\end{equation}
understood at $u = v$ in the limiting sense. An elementary analysis shows that the numerator in \eqref{DLemBound1} is negative for $u \neq v$, and examination of the $u \to v$ limit in \eqref{DLemBound1} (for example, by L'H\^opital's rule) shows that $\partial_v h(u,v)|_{u = v} < 0$. 
\end{proof}

\begin{lem}
Let $\left(\epsilon_k\right)_{k \in \mathbb{N}}$ be a strictly positive sequence such that
\begin{equation}
\sum_{k = 1}^\infty \ee^{-\beta \epsilon_k} < \infty
\label{LemSumFinite}
\end{equation}
with $\beta >0$ and let $\delta_k = C \, \ee^{-\alpha \epsilon_k}$ for some $C>0$ and $\alpha > \beta$. Define $\epsilon_{-k} = -\epsilon_k$ and $\delta_k = \delta_{-k}$ for $k \in \mathbb{N}$. If $F: \mathbb{R} \rightarrow \mathbb{C}$ is a locally integrable polynomially bounded function with support ${\rm supp}(F) \subset S = \cup_{k \in \mathbb{Z}\setminus 0} \left\{\epsilon: \left|\epsilon - \epsilon_k \right| < \delta_k \right\} \cup (-\delta_0, \delta_0)$, for some $\delta_0 > 0$, then the inverse Fourier transform of $F$, 
\begin{equation}
\mathcal{F}^{-1}[F](z) = \frac{1}{2 \pi} \int_{-\infty}^\infty \! d\epsilon \, \ee^{\ii z \epsilon} F(\epsilon),
\label{LemFourierExists}
\end{equation}
is analytic in the strip $|\text{Im}(z)| < \alpha - \beta$.
\label{LemMain2}
\end{lem}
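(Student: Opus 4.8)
The plan is to estimate the inverse Fourier transform directly by breaking the integration domain over the components of $S$ and bounding the contribution of each component uniformly, so that the resulting series converges and can be differentiated under the sum. Concretely, for $z = x + \ii y$ with $|y| < \alpha - \beta$, I would write
\begin{equation}
\mathcal{F}^{-1}[F](z) = \frac{1}{2\pi}\int_{-\delta_0}^{\delta_0}\! d\epsilon\,\ee^{\ii z\epsilon}F(\epsilon) + \frac{1}{2\pi}\sum_{k\in\mathbb{Z}\setminus 0}\int_{\epsilon_k-\delta_k}^{\epsilon_k+\delta_k}\! d\epsilon\,\ee^{\ii z\epsilon}F(\epsilon).
\end{equation}
The first term is an entire function of $z$ since it is the Fourier transform of a compactly supported integrable function. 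For the $k$-th term in the series, I would bound $|\ee^{\ii z\epsilon}| = \ee^{-y\epsilon} \le \ee^{|y|(|\epsilon_k| + \delta_k)} \le \ee^{|y|\epsilon_k}\ee^{|y|\delta_k}$ on the interval $|\epsilon - \epsilon_k| < \delta_k$ (using $\epsilon_{-k} = -\epsilon_k$ and that $|\epsilon| \le |\epsilon_k| + \delta_k$), and use the polynomial bound on $F$, say $|F(\epsilon)| \le A(1 + |\epsilon|)^N$, together with $|\epsilon| \le \epsilon_k + \delta_k \le \epsilon_k + C$ to get $|F(\epsilon)| \le A'(1+\epsilon_k)^N$ on that interval. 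Since $\delta_k = C\ee^{-\alpha\epsilon_k}$ is bounded, $\ee^{|y|\delta_k} \le \ee^{|y|C}$ is a constant, and the measure of the interval is $2\delta_k = 2C\ee^{-\alpha\epsilon_k}$.

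Putting these together, the absolute value of the $k$-th integral is bounded by a constant times $(1+\epsilon_k)^N \ee^{|y|\epsilon_k}\ee^{-\alpha\epsilon_k} = (1+\epsilon_k)^N\ee^{-(\alpha - |y|)\epsilon_k}$. Choosing any $\beta' $ with $|y| < \alpha - \beta' $ and $\beta' > \beta$ (possible since $|y| < \alpha - \beta$), we have $\alpha - |y| > \beta' $, so $(1+\epsilon_k)^N\ee^{-(\alpha-|y|)\epsilon_k} \le \mathrm{const}\cdot\ee^{-\beta'\epsilon_k} \le \mathrm{const}\cdot\ee^{-\beta\epsilon_k}$ for $\epsilon_k$ large; the polynomial prefactor is absorbed into the gap between $\beta$ and $\beta'$. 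Hence the series of sup-norms over compact subsets of the strip is dominated by $\sum_k \ee^{-\beta\epsilon_k}$, which converges by hypothesis \eqref{LemSumFinite}. By the Weierstrass $M$-test, the series converges uniformly on compact subsets of $\{|\text{Im}(z)| < \alpha - \beta\}$; each summand is entire in $z$ (again a Fourier transform of a compactly supported integrable function), so by Morera's theorem (or the standard theorem on locally uniform limits of holomorphic functions) the sum is holomorphic on the strip, and adding the entire first term gives the claim.

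The main obstacle, and the step requiring the most care, is the bookkeeping that lets the polynomial growth factor $(1+\epsilon_k)^N$ be absorbed without assuming anything more than \eqref{LemSumFinite} with the \emph{given} $\beta$: one must exploit the strict inequality $|\text{Im}(z)| < \alpha - \beta$ to insert an intermediate exponent $\beta' \in (\beta, \alpha - |y|)$ and note $(1+t)^N\ee^{-(\beta'-\beta)t}$ is bounded on $[0,\infty)$. A secondary point to handle cleanly is that $F$ is only assumed locally integrable, so each interval integral must be justified as a genuine (Lebesgue) integral over a bounded set of a locally integrable, polynomially bounded function — finite because the set is bounded — and one should remark that $\mathrm{supp}(F) \subset S$ together with local integrability makes $\ee^{\ii z\epsilon}F(\epsilon)$ integrable over all of $\mathbb{R}$ for $z$ in the strip, so that \eqref{LemFourierExists} is well-defined and the term-by-term splitting is legitimate (e.g.\ by dominated convergence on partial sums, or directly from absolute convergence of the series of integrals just established). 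No genuinely hard analysis is involved; it is entirely a matter of organizing these estimates in the right order.
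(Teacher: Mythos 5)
Your estimates are the same as the paper's: both bound the contribution of the interval around $\epsilon_k$ by (measure $2\delta_k = 2C\ee^{-\alpha\epsilon_k}$) times (a polynomial in $\epsilon_k$) times $\ee^{|\text{Im}(z)|\epsilon_k}$, and both absorb the polynomial factor into the strictly positive gap $\alpha - \beta - |\text{Im}(z)|$ so that the resulting series is dominated by $\sum_k \ee^{-\beta\epsilon_k}$, which converges by hypothesis. Where you diverge is in how holomorphy is extracted from this estimate: the paper keeps $\mathcal{F}^{-1}[F]$ as a single integral and differentiates under the integral sign, justifying this with an explicit difference-quotient inequality and dominated convergence, whereas you decompose $\mathcal{F}^{-1}[F]$ into a series of entire functions (one per covering interval), establish locally uniform convergence on the strip by the $M$-test, and invoke the Weierstrass/Morera theorem. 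Both devices are standard and both work; yours avoids writing down the derivative bound explicitly, while the paper's avoids any dependence on the combinatorics of the covering. On that last point, one small repair is needed in your version: the displayed equality splitting $\int_{\mathbb{R}}$ into $\sum_k \int_{\epsilon_k - \delta_k}^{\epsilon_k + \delta_k}$ double-counts wherever the covering intervals overlap, and nothing in the hypotheses prevents overlaps. Replace $F$ restricted to the $k$-th interval $I_k$ by $F\cdot\mathbf{1}_{I_k\setminus\bigcup_{j<k}I_j}$ (or otherwise disjointify the cover); the estimates are unaffected, each summand remains entire, and the argument goes through verbatim. Your remarks on local integrability, and on the absolute convergence that legitimises the term-by-term splitting, are correct and are exactly what is needed.
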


\begin{proof}
Since $F$ is polynomially bounded, there exist $D > 0$ and $N>0$ such that $|F(\epsilon)| \leq D (1 + |\epsilon|^N)$. We therefore have the estimates
\begin{align}
2 \pi \left| \mathcal{F}^{-1}[F](z) \right| & \leq \int_{-\infty}^\infty \! d\epsilon \, \left| \ee^{\ii \epsilon z} F(\epsilon)\right| \leq 2 D \int_{0}^\infty \! d\epsilon \, \ee^{|\text{Im}(z)| |\epsilon|} \left(1+|\epsilon|^N \right) \nonumber \\
& \leq 2 D \sum_{k = 1}^\infty (2 \delta_k ) \ee^{|\text{Im}(z)| (\epsilon_k + \delta_k)} \left(1+(\epsilon_k + \delta_k)^N \right) \nonumber \\
& \leq 4 C D \sum_{k = 1}^\infty \left(1+(\epsilon_k + \delta_k)^N \right) \ee^{|\text{Im}(z)| (\epsilon_k + \delta_k)-\alpha \epsilon_k}.
\label{boundFourier}
\end{align}

From \eqref{LemSumFinite} it follows that $\epsilon_k \to \infty$ as $k\to\infty$, and hence 
$\delta_k \to 0$ as $k\to\infty$. If $|\text{Im}(z)| < \alpha - \beta$, 
for sufficiently large $k$ we may hence estimate the terms in \eqref{boundFourier} by 
\begin{align}
& \left(1 + {(\epsilon_k+\delta_k)}^N\right)
\ee^{|\text{Im}(z)|(\epsilon_k+\delta_k) - \alpha\epsilon_k}
\notag
\\
& \ \ \le \left(1 + {(\epsilon_k+1)}^N\right)
\ee^{|\text{Im}(z)|(\epsilon_k+1) - \alpha\epsilon_k}
\notag
\\
& \ \ = \ee^{|\text{Im}(z)|} \left(1 + {(\epsilon_k+1)}^N\right)
\ee^{ - (\alpha - \beta - |\text{Im}(z)|)\epsilon_k} 
\, 
\ee^{- \beta\epsilon_k}
\notag
\\
& \ \ \le 
\ee^{\alpha - \beta} \, 
\ee^{- \beta\epsilon_k}
\  . 
\end{align}
This shows that $\mathcal{F}^{-1}[F](z)$ exists in the strip 
$|\text{Im}(z)| < \alpha - \beta$. 

To show that $\mathcal{F}^{-1}[F](z)$ is analytic in the strip 
$|\text{Im}(z)| < \alpha - \beta$, we may use the inequality 
\begin{align}
\left|\frac{\ee^{\ii (z+h)\epsilon} 
- \ee^{\ii z\epsilon}}{h} 
- \ii \epsilon \ee^{\ii z\epsilon}\right| 
\le 2 |\epsilon| \ee^{(|\text{Im}(z)| + |h| )\epsilon}
\ , 
\end{align}
valid for $h\in\mathbb{C}\setminus\{0\}$, 
together with estimates similar to those above, to provide 
a dominated convergence argument that justifies differentiating 
\eqref{LemFourierExists} under the integral sign, with the outcome 
\begin{align}
\frac{d}{dz} \mathcal{F}^{-1}[F](z) 
= 
\frac{\ii}{2\pi} \int_{-\infty}^\infty d\epsilon \, 
\epsilon \, \ee^{\ii z \epsilon} F(\epsilon)
\ . 
\end{align}
\end{proof}

\begin{cor}
{\rm (Follows from Lemma \ref{LemMain2}.)} Suppose $g \in C_0^\infty(\mathbb{R})$ obeys $\left| \hat{g}(\epsilon) \right| \leq K \ee^{-\gamma |\epsilon|}$ for some $K, \gamma > 0$ and all $\epsilon \in \mathbb{R} \setminus S$, with $S$ the union of intervals as in Lemma \ref{LemMain2}. Then $g = 0$.
\label{Cor}
\end{cor}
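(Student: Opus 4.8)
The plan is to split $\hat g$ into a piece supported on $S$ and a piece supported on the complement $S^c = \mathbb{R}\setminus S$, to show that each piece has an inverse Fourier transform that extends holomorphically to a horizontal strip around the real axis, and then to exploit the compact support of $g$ via the identity theorem. Since $g\in C_0^\infty(\mathbb{R})\subset\mathcal{S}(\mathbb{R})$, the transform $\hat g$ is a Schwartz function, hence integrable, locally integrable, and polynomially bounded, and Fourier inversion gives $g=\mathcal{F}^{-1}[\hat g]=\mathcal{F}^{-1}[\hat g\,\mathbf{1}_S]+\mathcal{F}^{-1}[\hat g\,\mathbf{1}_{S^c}]$ with both integrals absolutely convergent.

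First I would handle the piece on $S$. The function $\hat g\,\mathbf{1}_S$ is locally integrable, polynomially bounded (dominated in absolute value by $|\hat g|$), and has support contained in $S$, so Lemma~\ref{LemMain2} applies verbatim and yields that $\mathcal{F}^{-1}[\hat g\,\mathbf{1}_S]$ is analytic in the strip $|\operatorname{Im}(z)|<\alpha-\beta$. For the piece on $S^c$, the hypothesis $|\hat g(\epsilon)|\le K\ee^{-\gamma|\epsilon|}$ for $\epsilon\in S^c$ gives $|\hat g(\epsilon)\mathbf{1}_{S^c}(\epsilon)|\le K\ee^{-\gamma|\epsilon|}$ for all $\epsilon\in\mathbb{R}$; bounding $|\ee^{\ii z\epsilon}\hat g(\epsilon)\mathbf{1}_{S^c}(\epsilon)|$ by $K\ee^{(|\operatorname{Im}(z)|-\gamma)|\epsilon|}$, which is integrable whenever $|\operatorname{Im}(z)|<\gamma$, the same dominated-convergence argument used in the proof of Lemma~\ref{LemMain2} shows that $\mathcal{F}^{-1}[\hat g\,\mathbf{1}_{S^c}]$ is analytic in the strip $|\operatorname{Im}(z)|<\gamma$. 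Consequently $g$, restricted to $\mathbb{R}$, is the restriction of a function holomorphic on the strip $|\operatorname{Im}(z)|<\min(\gamma,\alpha-\beta)$, which is a genuine neighbourhood of $\mathbb{R}$ since $\gamma>0$ and $\alpha-\beta>0$.

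The conclusion then follows from compact support: $g$ vanishes on $(R,\infty)$ for some $R>0$, a set with accumulation points inside the strip, so by the identity theorem the holomorphic extension vanishes identically and $g\equiv 0$.

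The step I expect to require the most care — and which I would flag as the main (modest) obstacle — is the bookkeeping in invoking Lemma~\ref{LemMain2} for $\hat g\,\mathbf{1}_S$: one must note that local integrability and the polynomial bound are immediate from $\hat g\in\mathcal{S}(\mathbb{R})$, that "support contained in $S$" holds by construction with $\mathbf{1}_S$ defined to vanish off $S$, and that altering $\hat g\,\mathbf{1}_S$ on the measure-zero set of interval endpoints does not affect the absolute-value integral estimate that drives the lemma's proof; one must also check that the two strips of analyticity genuinely overlap in a strip containing $\mathbb{R}$ so that a single identity-theorem argument closes the proof. Beyond this, everything is a direct quotation of Lemma~\ref{LemMain2} and elementary complex analysis.
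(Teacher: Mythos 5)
Your proposal is correct and follows essentially the same route as the paper: decompose $\hat g$ via the characteristic function of $S$, apply Lemma~\ref{LemMain2} to the piece supported on $S$, use the exponential bound to get analyticity of the inverse transform of the piece on $S^c$ in a strip of width $\gamma$, and conclude from the compact support of $g$ by the identity theorem. The bookkeeping points you flag (local integrability, polynomial boundedness, overlap of the two strips) are exactly the checks the paper's argument relies on implicitly, and they go through as you describe.
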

\begin{proof}
Let $\varphi$ be the characteristic function of $S$. Then $F = \varphi \hat{g}$ satisfies the conditions of Lemma \ref{LemMain2} and has an inverse Fourier transform $\mathcal{F}^{-1}[\varphi \hat{g}]$ which is analytic on a strip, say $S_1$, around the real axis. As, $|(1-\varphi(\epsilon))\hat{g}(\epsilon)| \leq K \ee^{-\gamma |\epsilon|}$ for all $\epsilon \in \mathbb{R}$, $\mathcal{F}^{-1}[(1-\varphi)\hat{g}]$ exists and is analytic on a strip $S_2$ around the real axis. Thus, $g$ extends from the real axis to an analytic function in the strip $S_1 \cap S_2$. But since $g$ has compact support on the real axis, $g$ has to vanish everywhere.
\end{proof}

\section{Bounds for (5.33) in Theorem 5.10}

We provide the following bounds:

\subsection{Bound for $\mathcal{G}_{(0,\kappa)}$}

We estimate
\begin{equation}
\mathcal{G}_{(0,\kappa)}(E) = \frac{a}{(2 \pi)^4 (E/a)}  \int_0^\kappa \! d\omega \, \left| \hat{\chi}(\omega) \right|^2 g(2\pi E/a, 2 \pi \omega/(\lambda a)),
\label{G0K}
\end{equation}
from above, with
\begin{equation}
g(u,v) = \frac{u + v}{\ee^{v} - \ee^{-u}} + \frac{u - v}{\ee^{- v} - \ee^{-u}} - \frac{2 u}{1 - \ee^{-u}}.
\label{app5:g}
\end{equation}
with the time scale $\lambda$ scaling as a function of $E>0$ by $\lambda(E) = \alpha(2\pi E/a)^{1+p}$.

We bound $g\left((u,v/\left[\alpha u^{1+p}\right]\right)$, where we have set $u = 2\pi E/a$ and $v = 2\pi \omega/a$:

\begin{align}
g\left(u,v/\left[\alpha u^{1+p}\right]\right) & \leq \frac{u + v/(\alpha u^{1+p})}{1-\ee^{-u}}  +\frac{u - v/(\alpha u^{1+p})}{\ee^{-v/(\alpha u^{1+p})}-\ee^{-u}} - \frac{2u}{1-\ee^{-u}} \nonumber \\
& = \left( u - \frac{v}{\alpha u^{1+p}}\right) \left(\frac{1}{\ee^{-v/(\alpha u^{1+p})}-\ee^{-u}} -\frac{1}{1-\ee^{-u}}\right) \nonumber \\
& \leq \left( u - \frac{v}{\alpha u^{1+p}}\right) \left(\frac{1}{\ee^{-(2 \pi \kappa/a)/ (\alpha u^{1+p})}-\ee^{-u}} -\frac{1}{1-\ee^{-u}}\right),
\end{align}
where in the last equality $E$ is assumed so large that $u > v/(\alpha u^{1+p})$.

On the last line, the term in the second parentheses is equal to 
\begin{align}
\frac{2\pi\kappa/a}{\alpha u^{1+p}} 
\left[ 1 + O \! \left(\frac{2\pi\kappa/a}{\alpha u^{1+p}}\right)\right]
\ , 
\end{align}
and the last line is hence 
\begin{align}
\frac{2\pi\kappa/a}{\alpha u^{p}}
+ O \! \left(u^{-(1+2p)}\right)
\ .  
\end{align}
Using this estimate in \eqref{G0K}, and extending 
the integration range in $\omega$ to be $(0,\infty)$, we find 
\begin{equation}
\mathcal{G}_{(0,\kappa)}(E) \leq \frac{\kappa/\alpha}{(2 \pi)^3 (E/a)} \left( \left(\frac{2\pi E}{a}\right)^{-p} + O\left(\frac{2 \pi E}{a}\right)^{-(1+2p)} \right)  \left|\left|\hat{\chi} \right|\right|_{L^2}.
\label{boundpiece1}
\end{equation}

\subsection{Bound for $\mathcal{G}_{(\kappa, E \lambda)}$}

We provide a bound of
\begin{equation}
\mathcal{G}_{(\kappa, E \lambda)}(E) = \frac{a}{(2 \pi)^4 (E/a)}  \int_\kappa^{E \lambda} \! d\omega \, \left| \hat{\chi}(\omega) \right|^2 g(2\pi E/a, 2 \pi \omega/(\lambda a)),\end{equation}
from above, with $g$ given by \eqref{app5:g}.

We set again $u = 2\pi E/a$ and $v = 2\pi \omega/a$ and write the last factor under the integral as $g\left(u,v/\left(\alpha u^{1+p}\right)\right)$. In this case, we have that $v \in (2 \pi \kappa/a, \alpha(2 \pi E/a)^{2+p})$. We start by writing $g$ as
\begin{align}
g(u,v/(\alpha u^{1+p})) & =  \left( u - \frac{v}{\alpha u^{1+p}}\right) \left(\frac{1}{\text{e}^{-v/(\alpha u^{1+p})}-\text{e}^{- u}} -\frac{1}{\text{e}^{v/(\alpha u^{1+p})} -\text{e}^{- u}}\right) \nonumber \\
& + 2u \left(\frac{1}{\ee^{v/(\alpha u^{1+p})} -  \ee^{-u}} - \frac{1}{1- \ee^{-u}} \right).
\label{gprebound}
\end{align}

The second term is strictly decreasing in $v$, thus bounded from above by zero. Hence we have that.

\begin{align}
g(u,v/(\alpha u^{1+p})) & \leq  \left( u - \frac{v}{\alpha u^{1+p}}\right) \left(\frac{1}{\text{e}^{-v/(\alpha u^{1+p})}-\text{e}^{- u}} -\frac{1}{\text{e}^{v/(\alpha u^{1+p})} -\text{e}^{- u}}\right) \nonumber \\
& =  \left( u - \frac{v}{\alpha u^{1+p}}\right) \frac{\text{e}^u \sinh (v/(\alpha u^{1+p}))}{\cosh u-\cosh (v/(\alpha u^{1+p}))}.
\label{gbound}
\end{align}

Following Lemma \ref{LemBound1}, we have that
\begin{align}
\mathcal{G}_{(\kappa, E \lambda)} & \leq  \frac{a \, \ee^{2 \pi E/a}}{(2 \pi)^3\left[\cosh(2 \pi E/a)-1 \right]} \int_\kappa^{E \lambda} \! d\omega \, \left| \hat{\chi}(\omega) \right|^2 \sinh \left( \frac{2 \pi \omega}{\lambda a} \right)
\label{Gest1}
\end{align}

Notice that the prefactor of the integral is of order $O(1)$. This is easy to see as $\ee^{2 \pi E/a}(\cosh(2 \pi E/a)-1)^{-1} \leq 2/[1-2\exp(-2\pi E/a)]$. 

We split the integration in the subintervals $(\kappa, \omega_0)\cup(\omega_0, E \lambda)$, where $\omega_0 \doteq (a/2\pi) \alpha (2\pi E/a)^{\left(1 + p + q^{-1}\right)/2 }$. We have that
\begin{align}
\int_\kappa^{\omega_0} \! d\omega \, & \left| \hat{\chi}(\omega) \right|^2 \sinh \left( \frac{2 \pi \omega}{\lambda a} \right)   \leq || \hat{\chi}||_{L^2} \sinh \left( \frac{2 \pi \omega_0}{\lambda a} \right)  \nonumber \\ & = || \hat{\chi}||_{L^2} \sinh \left( \left(\frac{2\pi E}{a}\right)^{-\left(1+p-q^{-1}\right)/2} \right)  \nonumber \\
& = ||\hat{\chi}||_{L^2} \left(\frac{2\pi E}{a}\right)^{-\left(1+p-q^{-1}\right)/2} \left[1 + O\left(\left(\frac{2\pi E}{a}\right)^{-\left(1+p-q^{-1}\right)}\right) \right],
\end{align} 
where we have extended the integration limits to obtain the $L^2$ norm of $\hat{\chi}$. 

We now proceed to estimate the contribution of the $(\omega_0, E \lambda)$ integration interval to \eqref{Gest1}. At this stage, we use the assumption that $\chi \in F$ is of strong Fourier decay, whereby
\begin{equation}
|\hat{\chi}(\omega)|  \leq \kappa^{-1} C (B+|\omega/\kappa|)^r \exp(-A |\omega/\kappa|^q).
\end{equation}

Thus, we have that for $\omega \in (\omega_0, E\lambda)$
\begin{align}
\left| \hat{\chi}(\omega) \right|^2 & \sinh \left( \frac{2 \pi \omega}{\lambda a} \right)  \leq \frac{C^2}{\kappa^{2}} \left(B+ \frac{\omega}{\kappa} \right)^{2r} \exp\left(-2A \left(\frac{\omega}{\kappa}\right)^q + \frac{2 \pi \omega}{a \lambda} \right) \nonumber \\
& \leq \frac{C^2}{\kappa^{2}} \left(B+ \frac{\omega}{\kappa} \right)^{2r} \exp \left( -2A \left(\frac{\omega_0}{\kappa}\right)^q + \frac{2 \pi E}{a} \right) \nonumber \\
& = \frac{C^2}{\kappa^{2}} \left(B+ \frac{\omega}{\kappa} \right)^{2r} \exp \left( -2A \left(\frac{a \alpha}{2 \pi \kappa}\right)^q \left(\frac{2 \pi E}{a} \right)^{\left(1 + q( 1 + p) \right)/2} + \frac{2 \pi E}{a} \right).
\end{align}

The first term in the exponent is negative and dominates at large positive $E$ because $q(1 + p) > 1$ and $2 A[(a \alpha)/(2 \pi \kappa)]^{q} > 1$ by our assumptions. Thus the factor coming from the exponent dies off faster than any polynomial at large $E$. Furthermore, because the integral
\begin{equation}
\int_{\omega_0}^{E\lambda} \! d\omega \, \left(B+ \frac{\omega}{\kappa} \right)^{2r}
\end{equation}
has only polynomial growth, the overall contribution in the $(\omega_0, E \lambda)$ integration interval dies off faster than any polynomial. 

Collecting  the estimates,
\begin{equation}
\mathcal{G}_{(\kappa, E \lambda)} \leq  \frac{2 a ||\hat{\chi}||_{L^2} }{(2 \pi)^3} \left(\frac{2\pi E}{a}\right)^{-\left(1+p-q^{-1}\right)/2} \left[1 + O\left(\left(\frac{2\pi E}{a}\right)^{-\left(1+p-q^{-1}\right)}\right) \right].
\label{boundpiece2}
\end{equation}

\subsection{Bound for $\mathcal{G}_{(E \lambda,\infty)}$}

We bound
\begin{equation}
\mathcal{G}_{(E \lambda,\infty)}(E) = \frac{a}{(2 \pi)^4 (E/a)}  \int_{E \lambda}^\infty \! d\omega \, \left| \hat{\chi}(\omega) \right|^2 g(2\pi E/a, 2 \pi \omega/(\lambda a)).\end{equation}
from above, with $g$ given by \eqref{app5:g}. We set again $u = 2\pi E/a$ and $v = 2\pi \omega/a$ and write the last factor under the integral as $g\left(u,v/\left(\alpha u^{1+p}\right)\right)$. In this case, we have that $v \in (E \lambda, \infty)$. 

The steps \eqref{gprebound} - \eqref{gbound} are applicable also for $\mathcal{G}_{(E \lambda,\infty)}$, and we use the bound \eqref{gbound} to write
\begin{equation}
\mathcal{G}_{(E \lambda, \infty)} \leq  \frac{1}{(2 \pi)^3 (E/a)} \int_{E \lambda}^\infty \! d \omega \, \left|\hat{\chi}(\omega)\right|^2 \left( E - \frac{\omega}{\lambda}\right) \frac{\text{e}^{2 \pi E/a} \sinh (2 \pi \omega /(a\lambda))}{\cosh (2 \pi E/a) -\cosh (2 \pi \omega/(a \lambda) )}.
\label{app5:G3}
\end{equation}

Set $1 + p < s < q(2 +p) + p$. There is a guarantee that such $s$ exists because, by our hypotheses, $q(2+p) > 1$. We split the integration interval as $(E \lambda, \alpha E  (2 \pi E/a)^s) \cup (\alpha E  (2 \pi E/a)^s, \infty)$ and consider $E$ so large that $2 \pi E/a > 1$. For the first interval, we have by Lemma \ref{LemBound1} that
\begin{align}
\mathcal{G}_{(E \lambda, \alpha E  (2 \pi E/a)^s)} & \leq \frac{a}{(2 \pi)^4 (E/a)} \frac{\ee^{2\pi E/a}}{\sinh(2 \pi E/a)} \int_{E \lambda}^{\alpha E  (2 \pi E/a)^s} \! d \omega \, \left|\hat{\chi}(\omega)\right|^2 \sinh (2 \pi \omega /(a\lambda)).
\end{align}

The factor multiplying the integral above is $O(1/E)$ and the integrand can be controlled in this interval as
\begin{align}
\left|\hat{\chi}(\omega)\right|^2 & \sinh \left(\frac{2 \pi \omega}{a \lambda} \right) \leq \frac{C^2}{\kappa^{2}} \left(B+ \frac{\omega}{\kappa} \right)^{2r} \exp \left[ -2A \left(\frac{\omega}{\kappa}\right)^q + \frac{2 \pi \omega}{a \lambda} \right] \nonumber \\
& \leq \frac{C^2}{\kappa^{2}} \left(B+ \frac{\omega}{\kappa} \right)^{2r} \exp \left[ -2A \left(\frac{E \lambda}{\kappa}\right)^q + \frac{2 \pi (\alpha E (2 \pi E/a)^s)}{a \lambda} \right] \nonumber \\
& = \frac{C^2}{\kappa^{2}} \left(B+ \frac{\omega}{\kappa} \right)^{2r} \exp \left[ -2A \left(\frac{a \alpha}{2 \pi \kappa}\right)^q \left(\frac{2 \pi E}{a} \right)^{q(2+p)} + \left(\frac{2 \pi E}{a} \right)^{s-p} \right].
\label{supfastG31}
\end{align}

We see from the last line of \eqref{supfastG31} that the contribution in the integration interval $(E \lambda, \alpha E  (2 \pi E/a)^s)$ to \eqref{app5:G3} decays faster than any polynomial as $E \to \infty$, because the first term in the exponent is dominating because $q(2 + p) > s - p$ and the parameters satisfy $2A(a \alpha/(2\pi \kappa))^q > 1$.

For the subinterval $(\alpha E  (2 \pi E/a)^s, \infty)$, we have
\begin{align}
\frac{ \left( E - \omega/\lambda\right) \sinh (2 \pi \omega /(a\lambda))}{\cosh (2 \pi E/a) -\cosh (2 \pi \omega/(a \lambda) )} & \leq  \frac{ \omega}{\lambda}\text{tanh}\left(\frac{2\pi \omega}{a\lambda}\right)\left(1-\cfrac{\cosh(2 \pi E/a)}{\cosh(2 \pi \omega /(a\lambda))}\right)^{-1} \nonumber \\
& \leq \frac{ \omega}{\lambda} \left(1-\cfrac{\cosh(2 \pi E/a)}{\cosh(2 \pi E/a)^{s-p})}\right)^{-1}.
\label{boundshyp}
\end{align}

Considering that $1+p < s$, the denominator in eq. \eqref{boundshyp} is of order unity. Therefore, we see that this piece contributes linearly in the integration variable $\omega$. We are left to estimate $\left|\chi(\omega)\right|^2 \ee^{2\pi E/a}$. The decay properties of $\hat{\chi}$ yield that, for $\omega \in(\alpha E(2 \pi E/a)^s,\infty)$,
\begin{align}
\left|\hat{\chi}(\omega)\right|^2 \ee^{2\pi E/a}  & \leq \frac{C^2}{\kappa^{2}} \left(B+ \frac{\omega}{\kappa} \right)^{2r} \exp \left[ -2A \left(\frac{\omega}{\kappa}\right)^q + \frac{2 \pi E}{a} \right] \nonumber \\
&  \leq \frac{C^2}{\kappa^{2}} \left(B+ \frac{\omega}{\kappa} \right)^{2r} \exp\left[-2A\left(\frac{\omega}{\kappa}\right)^q \left(1 - \frac{\pi E}{A a}\left(\frac{\omega}{\kappa}\right)^{-q} \right)\right].
\label{boundpiece3}
\end{align}
At large $E$, the factor $1- (\pi E/(A\, a))(\omega/\kappa)^{-q} > 0$ and the factor inside the exponent is overall negative at large $E$. Thus, the integrand is rapidly suppressed and, collecting our results, $\mathcal{G}_{(E \lambda, \alpha E  (2 \pi E/a)^s)}$ dies off faster than any polynomial in $E$.

\bibliographystyle{ieeetr}
\bibliography{BenitoBib}

\end{document}